%TODO: Wird die Präordnung eine Ordnung auf let-floating Äquivalenzklassen?
%TODO: Algorithmus für Supercombinators mit graphical Bodies: Findet dort eine used-variables-Analyse statt?
%TODO: Freie Variablen in einem Graphen sind andere als die freien Variablen an der entsprechenden Stelle im Term
%TODO: Bisimulationskollaps auf Superkombinatorgraphen ist einfach und entspricht dem Kollaps des entsprechenden Terms

\documentclass[preprint,nocopyrightspace]{sigplanconf}

%-------------------------------------------------------
% packages  
%-------------------------------------------------------
\usepackage{amsfonts,amsmath,amsthm,amscd,amssymb,float}
\usepackage[german,british,english]{babel}
\usepackage{stmaryrd}
\usepackage{comment}
\usepackage{xspace}
\usepackage{color,graphicx}
\usepackage{mathpartir}
\usepackage[utf8]{inputenc}
\usepackage{newunicodechar}
\usepackage{bussproofs}
\usepackage{hyperref}
\usepackage{MnSymbol}
\usepackage{microtype}
\usepackage{tikz}
\usepackage[inline]{enumitem}
\usepackage{adjustbox}
\usepackage{tkz-berge}
\usepackage{verbatimbox}
\usepackage{textpos}
\usepackage{multicol}

\usetikzlibrary{shapes,positioning,decorations.markings,arrows}

\tikzstyle{->>>} =
  [shorten >= 0.3mm,
   decoration =
     {markings,
      mark=at position -2.4mm with {\arrow{>}},
      mark=at position -1.2mm with {\arrow{>}},
      mark=at position 1 with {\arrow{>}}},
   postaction={decorate}]

\pgfdeclaredecoration{funbisim}{final}{
  \state{final}[width=\pgfdecoratedpathlength]{
    \draw[->]
      (0,\pgfdecorationsegmentamplitude+0.1mm) -- +(\pgfdecoratedpathlength,0);
    \draw[-]
      (0,-\pgfdecorationsegmentamplitude-0.1mm) -- +(\pgfdecoratedpathlength,0);}}
\tikzset{
  funbisim/.style={
    decoration={funbisim, amplitude=0.25ex},
    decorate,
    funbisim options/.style={#1}    
  }}

\pgfdeclaredecoration{bisim}{final}{
  \state{final}[width=\pgfdecoratedpathlength]{
    \draw[<->]
      (0,\pgfdecorationsegmentamplitude+0.1mm) -- +(\pgfdecoratedpathlength,0);
    \draw[-]
      (0,-\pgfdecorationsegmentamplitude-0.1mm) -- +(\pgfdecoratedpathlength,0);}}
\tikzset{
  bisim/.style={
    decoration={bisim, amplitude=0.25ex},
    decorate,
    bisim options/.style={#1}    
  }}

% Translation pictures
\makeatletter
\pgfdeclareshape{transbox}{
  \inheritsavedanchors[from=rectangle]
  \inheritanchor[from=rectangle]{center}
  \inheritanchor[from=rectangle]{north}
  \inheritanchor[from=rectangle]{south}
  \inheritanchor[from=rectangle]{west}
  \inheritanchor[from=rectangle]{east}
  \inheritbehindbackgroundpath[from=rectangle]
  \inheritbackgroundpath[from=rectangle]
  \inheritbeforebackgroundpath[from=rectangle]
  \inheritbehindforegroundpath[from=rectangle]
  \inheritforegroundpath[from=rectangle]
  \inheritbeforeforegroundpath[from=rectangle]
%  \anchor{succ1}{\southwest \pgf@x=28pt}
%  \anchor{succ2}{\southwest \pgf@x=43pt}
%  \anchor{succ3}{\southwest \pgf@x=52pt}
%  \anchor{succ4}{\southwest \pgf@x=61pt}
%  \anchor{var1}{\northeast \pgf@x=6pt}
%  \anchor{var2}{\northeast \pgf@x=16pt}
}
\makeatother

\newcommand\tikzscale{1}

\newcommand\transpicture[1]{\vcentered{\scalebox{\tikzscale}{\begin{tikzpicture}[>=stealth,node distance=4mm] #1 \end{tikzpicture}}}}
\newcommand\rbpicture[1]{\vcentered{\begin{tikzpicture}[>=stealth,node distance=9mm,inner sep=0.5pt] #1 \end{tikzpicture}}}

\newcommand\ltgnode[3][]{\node[#1,draw,shape=circle,inner sep=0.1mm,minimum size=5mm](#2){\large #3}}

\newcommand{\addPrefix}[3][]{\node[node distance=1mm,#1,left=of #2.north]{\inMath{\scriptstyle(#3)}}}
\newcommand{\addPos}[2]{\node[node distance=1.2mm,right=of #1.north]{\inMath{\scriptstyle #2}}}
\newcommand{\addPrefixswphantom}[3][]{\node[node distance=1mm,#1,right=of #2.south]{\inMath{\scriptstyle\phantom{(#3)}}}}

\newcommand{\translation}[3]{\translationif{#1}{#2}{#3}{}{}}
\newcommand{\rbedge}[4][]{\Edge[style={->,thin,#1},label={#2}](#3)(#4)}

\newcommand{\transname}[1]{\vspace{-1ex}\noindent{#1}:\hspace*{0.5ex}}
\newcommand{\translationif}[5]{\transname{#1} \transpicture{#2}\vcentered{$\overset{#4}{\underset{#5}{\implies}}$}\transpicture{#3}}

\newcommand{\translationifarrowspacing}[6]{\transname{#1} \transpicture{#2}\vcentered{$\overset{#5}{\underset{#6}{#3}\implies{#3}}$}\transpicture{#4}}

\newcommand{\transinit}[1]{\hspace*{0ex}\hfill\transpicture{#1}\hfill}
\newcommand{\transstep}[2]{$\implies_{#1}$\hspace*{-1cm}\hfill\transpicture{#2}\hfill}
\newcommand{\transbreak}{\\[3.5mm]}

\newcommand{\fusionstep}[4]{&{#1}:&  
                            & &       
                            \begin{gathered}[c]
                                \transpicture{#2}
                                \\[-1.75ex]
                                \rotatebox{270}{$\Rightarrow$}
                                \\[0.75ex]
                                \transpicture{#3}
                            \end{gathered}
                            & 
                            \hspace*{3ex}\Longleftarrow
                            & &
                              \transpicture{#4}
                            }
\newcommand{\fusionstepplus}[6]{&{#1}:&  
                            & &      
                            \begin{gathered}[c]
                                %\phantom{\scalebox{\tikzscale}{#1}}
                                %\\
                                \transpicture{#2}
                                %\\
                                %\scalebox{\tikzscale}{#4}
                                \\[-1.75ex]
                                \rotatebox{270}{$\Rightarrow$}
                                \\[0.75ex]
                                \transpicture{#3}
                            \end{gathered}
                            & 
                            \hspace*{3ex}\Longleftarrow
                            & &
                            \begin{gathered}[c]
                              {\mbox{}\hspace*{20ex}\scalebox{\tikzscale}{#6}}
                              \\
                              \transpicture{#5}
                              \\
                              \phantom{\scalebox{\tikzscale}{#6}}
                            \end{gathered}  
                            }
\newcommand{\fusiontwostep}[5]{&{#1}:&  
                            & &     
                            \begin{gathered}[c]
                                \transpicture{#2}
                                \\[-1.75ex]
                                \rotatebox{270}{$\Rightarrow$}
                                \\[0.75ex]
                                \transpicture{#3}
                                \\[-1.75ex]
                                \rotatebox{270}{$\Rightarrow$}
                                \\[0.75ex]
                                \transpicture{#4}
                            \end{gathered}
                            & 
                            \hspace*{3ex}\Longleftarrow
                            & &
                              \transpicture{#5}
                            }
\newcommand{\fusionthreestep}[6]{&{#1}:&  
                            & &     
                            \begin{gathered}[c]
                                \transpicture{#2}
                                \\[-1.75ex]
                                \rotatebox{270}{$\Rightarrow$}
                                \\[0.75ex]
                                \transpicture{#3}
                                \\[-1.75ex]
                                \rotatebox{270}{$\Rightarrow$}
                                \\[0.75ex]
                                \transpicture{#4}
                                \\[-1.75ex]
                                \rotatebox{270}{$\Rightarrow$}
                                \\[0.75ex]
                                \transpicture{#5}
                            \end{gathered}
                            & 
                            \hspace*{3ex}\Longleftarrow
                            & &
                              \transpicture{#6}
                            }

\newcommand{\linetwofusiononezerosteps}[7]{&{#1}:&  
                            & &       
                            \begin{gathered}[c]
                                \transpicture{#2}
                                \\[-1.75ex]
                                \rotatebox{270}{$\Rightarrow$}
                                \\[0.75ex]
                                \transpicture{#3}
                            \end{gathered}
                            \hspace*{3ex}\Longleftarrow\hspace*{3ex}
                              \transpicture{#4}
                            & \hspace*{3ex}\phantom{\Longleftarrow}\hspace*{3ex}
                            & &   
                            \begin{gathered}[c]
                                %\transpicture{#5}
                                %\\[-1.75ex]
                                %\rotatebox{270}{$\Rightarrow$}
                                %\\[0.75ex]
                                \transpicture{#6}
                            \end{gathered}
                            \hspace*{3ex}\Longleftarrow\hspace*{1ex}
                              \transpicture{#7}  
                            }

\hyphenation{homo-mor-phism ana-ly-sis}

%footnote without anchor
\makeatletter
\def\blfootnote{\xdef\@thefnmark{}\@footnotetext}
\makeatother

% translations
\newcommand{\rulestranslambdaletreccaltolhotgs}{{\cal R}}%{{\cal R}_{\classlhotgs}}
\newcommand{\rulestranslambdaletreccaltoltgs}{{\cal R}_{\snlvarsucc}}

\newcommand{\rulestranslambdaletreccaltolhotgsgenerated}{$\rulestranslambdaletreccaltolhotgs$\nb-ge\-ne\-ra\-ted} % R-generated
 % R-generated with minimal prefix lengths
 % R_S-generated

% General general

\newcommand\inMath[1]{\ensuremath{#1}\xspace}
\newcommand\vcentered[1]{\raisebox{-0.5\height}{#1}}

\newcommand{\nb}{\nobreakdash}

\newcommand{\scdots}{\hspace*{1pt}\cdots\hspace*{1.25pt}}

\newcommand{\myparagraph}[1]{\vspace{0.45ex}\noindent\emph{#1}.}
\newcommand{\myparagraphbf}[1]{\vspace{0.45ex}\noindent\emph{\textbf{{#1}.}}}

\newcommand\sep[1]{&#1&}

% General mathematics
\newcommand{\funin}{\mathrel{:}}
\newcommand{\funap}[2]{#1({#2})}
\newcommand{\bfunap}[3]{{#1}({#2},\hspace*{0.5pt}{#3})}
\newcommand{\indap}[2]{#1_{#2}}

\newcommand{\sdefdby}{{:=}}
\newcommand{\defdby}{\mathrel{\sdefdby}}
\newcommand{\length}[1]{{\left|{#1}\right|}}

\newcommand{\sgraphsize}{\text{\normalfont size}}
\newcommand{\graphsize}{\funap{\sgraphsize}}
\newcommand{\termsize}{\length}%{\funap{\ssize}}

\newcommand{\simage}{\textrm{im}}
\newcommand{\image}{\funap{\simage}}

\newcommand{\factor}[2]{{#1}\hspace*{-1pt}/_{\hspace*{-1pt}{#2}}}

% labels for description environments
\makeatletter
\def\namedlabel#1#2{\begingroup
    #2%
    \def\@currentlabel{#2}%
    \phantomsection\label{#1}\endgroup
}
\makeatother
% example
% \begin{description}%[style=multiline, labelwidth=1.5cm]
%     \item[\namedlabel{itm:rule1}{Rule 1}] Everything is easy with \LaTeX
%     \item[\namedlabel{itm:rule2}{Rule 2}] Sometimes it is not that easy\\
%         $\to$ \ref{itm:rule1} applies
% \end{description}

% Functions
\newcommand\vs[1]{\mathit{vs}({#1})} % extracts the vertices out of a prefix
\newcommand\vszero[1]{\mathit{v}({#1})} % extracts the vertices out of a prefix

\newcommand\partialTo\rightharpoonup
 % graph interpretation
 % graph interpretation

%\newcommand{\scoll}{\text{\sf coll}}
\newcommand{\scoll}{{{\text{\small\textbar}}\hspace{-0.73ex}\downarrow}}
\newcommand{\coll}[1]{{#1}\hspace{0.17ex}{\scoll}}

% Relations

% \newcommand{\sbisim}{\sim}
% \newcommand{\bisim}{\mathrel{\sbisim}}

% more mathematical notation
\newcommand\tuple[1]{\langle #1 \rangle}
\newcommand\tuplespace{\hspace*{0.5pt}}
\newcommand\pair[2]{\tuple{#1, \tuplespace #2}}

\newcommand{\sfuncomp}{\circ}
\newcommand{\scompfuns}[2]{{#1}\mathrel{\sfuncomp}{#2}}
\newcommand{\compfuns}[2]{\funap{\scompfuns{#1}{#2}}}

\newcommand{\eqclofwrt}[2]{\indap{[{#1}]}{\hspace*{-0.5pt}#2}}

\newcommand{\existsst}[2]{\exists{#1}.\;{#2}}

% words
\newcommand{\emptyword}{\epsilon}

\newcommand{\wordsover}[1]{{#1^*}}

\newcommand{\mcdots}{\mathrel{\cdots}}
\newcommand{\mcdotsspace}{\mathrel{\cdots\,}}

\newcommand{\srestrictfunto}[2]{{#1}{\mid}_{#2}}
\newcommand{\restrictfunto}[2]{\funap{\srestrictfunto}}

\newcommand{\sidfun}{\text{\normalfont id}}

\newcommand{\sidfunon}{\indap{\sidfun}}

\newcommand{\srestrictto}[2]{{#1}\!\mid_{#2}}
\newcommand{\restrictto}[2]{\funap{\srestrictto}}

\newcommand{\niks}{}

\newcommand{\slogand}{{\wedge}}
\newcommand{\logand}{\mathrel{\slogand}}

% vectors

\newcommand{\subap}[2]{#1_{#2}}
\newcommand{\supap}[2]{#1^{#2}}
\newcommand{\bpap}[3]{{#1}_{#2}^{#3}}
\newcommand{\pbap}[3]{{#1}^{#2}_{#3}}

% greek letters
\let\oldlambda\lambda
\renewcommand\lambda{\inMath\oldlambda}
\newunicodechar{λ}{{\tt\lambda}}
\let\oldalpha\alpha
\renewcommand\alpha{\inMath\oldalpha}
\let\oldmu\mu
\renewcommand\mu{\inMath\oldmu}

% Lambda Calculi
% 
 % bold lambda
 % bold adbmal

\newcommand{\lambdaletreccal}{\txtlambdaletreccal} %{\inMath{\indap{\blambda}{\txtletrec}}}

% letrec
\newcommand{\ssin}{\ensuremath{\mathsf{in}}}

\newcommand{\sslet}{\ensuremath{\mathsf{let}}}

\newcommand{\letin}[2]{\txtlet\;{#1}\;\txtin\;{#2}}%{\sslet\;{#1}\;\ssin\;{#2}}

\newcommand{\txtin}{{\text{\sf in}}}
\newcommand{\txtlet}{{\text{\sf let}}}
\newcommand{\txtletrec}{\text{\normalfont\sf letrec}}
\newcommand{\txtrec}{\text{\sf rec}}

\newcommand{\txtlambdacal}{\ensuremath{\lambda}}
\newcommand{\txtlambdabhcal}{\ensuremath{\lambda_{\blackhole}}}
\newcommand{\txtinflambdacal}{\ensuremath{\lambda^{\hspace*{-1pt}\infty}}}
\newcommand{\txtinflambdabhcal}{\ensuremath{\lambda^{\hspace*{-1pt}\infty}_{\blackhole}}}
\newcommand{\txtlambdaletreccal}{\ensuremath{\indap{\lambda}{\txtletrec}}}
\newcommand{\txtlambdaletrecbhcal}{\ensuremath{\indap{\lambda}{\txtletrec,\blackhole}}}
\newcommand{\txtlambdamucal}{\ensuremath{\indap{\lambda}{\mu}}}

\newcommand{\abindgroup}{B}

\newcommand{\abindgroupi}{\indap{\abindgroup}}
\newcommand{\abindgroupacc}{B'}

\newcommand{\abindgroupacci}{\indap{\abindgroupacc}}

\newcommand{\abindgroupbar}{\bar{B}}

\newcommand{\abindgroupbari}{\indap{\abindgroupbar}}

\newcommand{\arecvar}{f}
\newcommand{\brecvar}{g}
\newcommand{\crecvar}{h}
\newcommand{\arecvari}{\indap{\arecvar}}
\newcommand{\brecvari}{\indap{\brecvar}}

\newcommand{\llrecvars}{{\cal R}}

\newcommand{\bfundef}[2]{{#1},\,{#2}}
\newcommand{\tfundef}[3]{{#1},\,{#2},\,{#3}}

% bindgroup sub-, superset, and union

\newcommand{\sbgsupseteq}{\supseteq'}
\newcommand{\bgsupseteq}{\mathrel{\sbgsupseteq}}
\newcommand{\sbgunion}{\cup'}
\newcommand{\bgunion}{\mathrel{\sbgunion}}

% rules unfolding
%
\newcommand{\unfrulelabs}{\sslabs}
\newcommand{\unfrulelapp}{\sslapp}
\newcommand{\unfruleletin}{\sslet\_\ssin}
\newcommand{\unfruleletrec}{\sslet\text{-rec}}
\newcommand{\unfrulegarbage}{\text{gc}}
\newcommand{\unfruletighten}{\text{tighten}}
\newcommand{\unfruleblackhole}{\blackhole}

% derivation rules
% 
\newcommand{\derivrulelabs}{\sslabs}
\newcommand{\derivrulelapp}{@}
\newcommand{\derivrulelet}{\txtlet}
\newcommand{\derivrulerec}{\txtrec}
\newcommand{\derivrulenlvar}{\snlvar}
\newcommand{\derivrulenlvarsucc}{\snlvarsucc}

% text abbreviations
\newcommand{\alphaconversion}{$\alpha$\nb-con\-ver\-sion}

\newcommand{\bindcaptchains}{bin\-ding--cap\-tu\-ring chains}
\newcommand{\betareduction}{$\beta$\nb-re\-duc\-tion}
\newcommand{\eagscope}{eager-scope}
\newcommand{\lambdacalculus}{$\lambda$\nb-calculus}

\newcommand{\lambdaterm}{$\lambda$\nb-term}
\newcommand{\lambdaterms}{\lambdaterm{s}}

\newcommand{\inflambdabhterm}{$\lambda_{\bh}^{\hspace*{-1.5pt}\infty}$\nb-term}
\newcommand{\inflambdabhterms}{\inflambdabhterm{s}}
\newcommand{\lambdavariable}{$\lambda$\nb-va\-ri\-able}
\newcommand{\lambdavariables}{\lambdavariable{s}}
\newcommand{\lambdaabstraction}{$\lambda$\nb-ab\-strac\-tion}
\newcommand{\lambdaabstractions}{\lambdaabstraction{s}}

\newcommand{\lambdaexpressions}{\lambdaexpressions}
\newcommand{\inflambdaterm}{$\lambda^{\hspace*{-1.5pt}\infty}$\hspace*{-2pt}\nb-term}
\newcommand{\inflambdaterms}{\inflambdaterm{s}}
\newcommand{\lambdaletrecterm}{$\txtlambdaletreccal$\nb-term}
\newcommand{\lambdaletrecterms}{\lambdaletrecterm{s}}
\newcommand{\lambdaletreccalterm}{$\txtlambdaletreccal$\nb-term}
\newcommand{\lambdaletreccalterms}{\lambdaletreccalterm{s}}
\newcommand{\lambdaletreccalexpression}{$\txtlambdaletreccal$\nb-ex\-pres\-sion}

\newcommand{\lambdamuterm}{$\txtlambdamucal$\protect\nb-term}
\newcommand{\lambdamuterms}{\lambdamuterm{s}}
\newcommand{\nondeterminism}{non-deter\-mi\-nism}
\newcommand{\picalculus}{$\pi$\nb-cal\-cu\-lus}

\newcommand{\termgraph}{term graph}    %{{term\-graph}} 
\newcommand{\termgraphs}{term graphs}  %{term\-graphs}
    %{Term\-graph}
\newcommand{\Termgraphs}{Term graphs}  %{Term\-graphs}
\newcommand{\termdashgraph}{term-graph}

\newcommand{\slappvertex}{$\sslapp$\nb-ver\-tex}

\newcommand{\snlvarvertex}{$\snlvar$\nb-ver\-tex}

\newcommand{\slabsvertices}{$\sslabs$\nb-ver\-ti\-ces}

\newcommand{\snlvarsuccvertex}{$\snlvarsucc$\nb-ver\-tex}
\newcommand{\snlvarsuccvertices}{$\snlvarsucc$\nb-ver\-ti\-ces}

\newcommand{\letbinding}{$\txtlet$\nb-bin\-ding}
\newcommand{\letbindings}{\letbinding{s}}

\newcommand{\letfloating}{let-floa\-ting}

\newcommand{\letrecexpression}{$\txtletrec$\nb-ex\-pres\-sion}

\newcommand{\inpart}{$\txtin$-part}

\newcommand{\lambdalifting}{lambda-lif\-ting}

\newcommand{\letexpression}{$\txtlet$\nb-ex\-pres\-sion}
\newcommand{\letexpressions}{$\txtlet$\nb-ex\-pres\-sions}

\newcommand{\CSE}{CSE}

\newcommand{\unfoldingequivalent}{un\-fol\-ding-equi\-va\-lent}
\newcommand{\unfoldingequivalence}{un\-fol\-ding equi\-va\-lence}

\newcommand{\alphaequivalence}{$\alpha$\nb-equi\-va\-lence}

\newcommand{\sinvAck}{\alpha}
\newcommand{\invAck}{\funap{\sinvAck}}

% lambda term graphs
\newcommand{\lambdahotg}{$\lambda$\nb-ho-\termdashgraph{}}
\newcommand{\lambdahotgs}{\lambdahotg{s}}
\newcommand{\Lambdahotg}{Lambda higher-order \termgraph{}}
\newcommand{\Lambdahotgs}{\Lambdahotg{s}}
\newcommand{\alhotg}{{\cal G}}
\newcommand{\alhotgi}[1]{{\cal G}_{#1}}

\newcommand{\alhotreetg}{{\cal T}\hspace*{-1.75pt}r}
\newcommand{\alhotreetgi}[1]{\alhotreetg_{#1}}

\newcommand{\alaphotgi}[1]{{\cal G}_{#1}}

\newcommand{\lambdatg}{$\lambda$\nb-\termdashgraph{}{}}
\newcommand{\lambdatgs}{\lambdatg{s}}
\newcommand{\Lambdatg}{Lambda \termgraph{}}
\newcommand{\Lambdatgs}{\Lambdatg{s}}
\newcommand{\altg}{G}
\newcommand{\altgi}{\indap{\altg}}
\newcommand{\altgacc}{G'}

\newcommand{\altgdacc}{G''}

% %\newcommand{\lambdahotg}{$\lambda$\nb-ho-\termdashgraph{}}
% \newcommand{\alhotg}{{\cal G}}
% \newcommand{\alhotgi}[1]{{\cal G}_{#1}}
% \newcommand{\alhotgacc}{{\cal G}'}
% \newcommand{\alhotgacci}[1]{{\cal G}'_{#1}}
% \newcommand{\lhotg}{\funap{\alhotg}}
% 
% %\newcommand{\lambdaaphotg}{$\lambda$\nb-ap-ho-\termdashgraph{}}
% \newcommand{\alaphotg}{{\cal G}}
% \newcommand{\alaphotgi}[1]{{\cal G}_{#1}}
% \newcommand{\laphotg}{\funap{\alaphotg}}
% % 
% %\newcommand{\lambdatg}{$\lambda$\nb-\termdashgraph{}}
% \newcommand{\altg}{G}
% \newcommand{\altgi}{\indap{\altg}}

% Sets
 % labelled transition systems

%\newcommand\transitions{\inMath{\rightarrowtail}}

 % transition graphs

%\newcommand\lambdaletreccal{\inMath{\boldsymbol{\lambda}_{\sletrec}}}

% Set elements

\newcommand\state{\inMath{s}}
 % transition graph

% labels

% term construction

\newcommand{\sslabs}{\lambda}
\newcommand{\slabs}[1]{\sslabs{#1}}
\newcommand{\labs}[2]{\slabs{#1}.\,{#2}}
\newcommand{\sslapp}{@}
\newcommand{\slapp}{\hspace*{1.5pt}}
\newcommand{\lapp}[2]{{#1}\slapp{#2}}

% termgraphs
%\newcommand\asig{\Sigma}

\newcommand\svar{0}

\newcommand{\asig}{\Sigma}

\newcommand{\sarity}{{\mit ar}}
\newcommand{\arity}{\funap{\sarity}}

%{\mathsf{f}}
%{\mathsf{g}}
%{\mathsf{h}}

%
%\newcommand{\labsbind}[1]{\slabs{#1}}
%
%\newcommand{\lbind}[1]{\slabs{#1}}
%
%\newcommand{\lvars}{{\cal X}}
%\newcommand{\llrecvars}{{\cal R}}
%

%
\theoremstyle{plain}
\newtheorem{theorem}{Theorem}[section]
\newtheorem{lemma}[theorem]{Lemma}

\newtheorem{proposition}[theorem]{Proposition}

\theoremstyle{definition}
\newtheorem{example}[theorem]{Example}
\newtheorem{definition}[theorem]{Definition}

\newtheorem{remark}[theorem]{Remark}
%\newtheorem{question}[theorem]{Question}
%\newtheorem{convention}[theorem]{Convention}
%\newtheorem{problem}[theorem]{Problem}
%\newtheorem{openproblem}[theorem]{Open Problem}
%

%% Figures
%
\newcommand{\fig}[1]{\includegraphics[scale=0.81]{figs/#1.pdf}}
\newcommand{\figsmall}[1]{\includegraphics[scale=0.78]{figs/#1.pdf}}
%
%
%

%
%
%% nameless versions
\newcommand{\snlvar}{\inMath{\mathsf{0}}}
\newcommand{\nlvar}{\snlvar}
\newcommand{\snlvarsucc}{\inMath{\mathsf{S}}}
\newcommand{\nlvarsucc}[1]{\snlvarsucc\hspace*{0.5pt}{#1}}
\newcommand{\indir}{\inMath{\text{\small\textbar}}}
\newcommand{\nodef}{\inMath{\text{?}}}

\newcommand{\nllabsfo}{\funap{\sslabs}}
\newcommand{\snllappfo}{@}
\newcommand{\nllappfo}{\bfunap{\snllappfo}}

\newcommand{\snlvarsuccfo}{\snlvarsucc}
\newcommand{\nlvarsuccfo}{\funap{\snlvarsuccfo}}
\newcommand{\snlvarsuccfoi}[1]{\snlvarsucc^{#1}}
\newcommand{\nlvarsuccfoi}[1]{\funap{\snlvarsuccfoi{#1}}}
\newcommand{\sletCRS}[1]{\inMath{\textsf{let}_{#1}}}
\newcommand{\srecinCRS}[1]{\inMath{\textsf{rec-in}_{#1}}}

\newcommand{\letrecinCRS}[3]{\funap{\sletCRS{#1}}{\absCRS{#2}{\funap{\srecinCRS{#1}}{#3}}}}

\newcommand{\allter}{L}
\newcommand{\bllter}{P}
\newcommand{\cllter}{Q}
\newcommand{\allteri}{\indap{\allter}}

\newcommand{\allteracc}{L'}

\newcommand{\allteracci}{\indap{\allteracc}}

% vectors

%
%\newcommand{\alltertilde}{\tilde{L}}
%\newcommand{\blltertilde}{\tilde{P}}
%\newcommand{\clltertilde}{\tilde{Q}}
%\newcommand{\alltertildei}{\indap{\alltertilde}}
%\newcommand{\blltertildei}{\indap{\blltertilde}}
%\newcommand{\clltertildei}{\indap{\clltertilde}}
%

%
%
%\newcommand{\brecvar}{g}
%\newcommand{\crecvar}{h}
%\newcommand{\arecvari}{\indap{\arecvar}}
%\newcommand{\brecvari}{\indap{\brecvar}}
%\newcommand{\crecvari}{\indap{\crecvar}}
%
%\newcommand{\sareachrel}{\rightarrowtail}
%\newcommand{\areachrel}{\mathrel{\sareachrel}}
%
%
%

% adbmal (ap.bib)
\makeatletter
\def\a#1{\reflectbox{$\m@th#1{\lambda}$}}
\def\adbmal{\inMath{\mathpalette{\a}{}}}
\makeatother
\newcommand{\flabs}[2]{\inMath{(#1)\hspace*{1pt}{#2}}}
\newcommand{\femptylabs}[1]{(*{[]})\hspace*{0.5pt}{#1}}

\newcommand{\vvb}[3]{#1 ^{#2}[{#3}]}
\newcommand{\vv}[2]{#1 ^{#2}}
\newcommand{\vvemptyb}[2]{#1 ^{#2}[{\hspace*{1pt}}]}
\newcommand{\vb}[2]{{#1}[{#2}]}

\newcommand\stdprefix{\vvb{x_0}{v_0}{B_0}\;\mcdots\;\vvb{x_n}{v_n}{B_n}}
\newcommand{\spcup}{\vec{\cup}} % union of prefixes / binding groups
\newcommand{\pcup}{\mathrel{\spcup}}

\newcommand\vertices[2]{\averti{#1}\cdots\averti{#2}}

\newcommand{\avar}{x}
\newcommand{\bvar}{y}
\newcommand{\cvar}{z}

\newcommand{\aavar}{a}
\newcommand{\bbvar}{b}
\newcommand{\ccvar}{c}
\newcommand{\ddvar}{d}

% vectors

\newcommand\ps{\vec{p}}
\newcommand\psix[1]{\vec{p}_{#1}}

\newcommand{\avari}[1]{\avar_{#1}}

\newcommand{\ater}{M}

\newcommand{\ateri}[1]{\ater_{#1}}

\newcommand{\aiter}{M}

\newcommand{\aiteri}[1]{\aiter_{#1}}

\newcommand{\freevars}{\funap{\textit{FV}}}

\newcommand{\slocfreevars}{\textit{loc-FV}}
\newcommand{\locfreevars}[1]{\funap{\slocfreevars}}

\newcommand{\slocfreerecvars}{\textit{loc-RV}}
\newcommand{\locfreerecvars}[1]{\funap{\slocfreerecvars}}

\newcommand{\sglobfreevars}{\textit{FV}}
\newcommand{\globfreevars}[1]{\funap{\sglobfreevars}}

\newcommand{\sreachfreevars}{\textit{RFV}} %{{\textit{reach-FV}}
\newcommand{\reachfreevars}[1]{\funap{\sreachfreevars}}

\newcommand{\sTer}{\mathit{Ter}}
\newcommand{\Ter}{\funap{\sTer}}

%
%%\newcommand{\sTerone}{{\mit Ter}_{\hspace*{-0.5pt}1}}
%%\newcommand{\Terone}{\funap{\sTerone}}
%%\newcommand{\sTeronelab}{\widehat{\textit{Ter}_{\hspace*{-0.5pt}1}}}
%%\newcommand{\Teronelab}{\funap{\sTeronelab}}
%%\newcommand{\siTerone}{{\sTerone^{\infty}}}
%%\newcommand{\iTerone}{\funap{\siTerone}}
%\newcommand{\sTerlab}{\widehat{\textit{Ter}}}
%\newcommand{\Terlab}{\funap{\sTerlab}}
%
%
%\newcommand{\sPreTer}{{\mit PreTer}}
%\newcommand{\PreTer}{\funap{\sPreTer}}
%\newcommand{\siPreTer}{{\sPreTer^{\infty}}}
%\newcommand{\iPreTer}{\funap{\siPreTer}}
%
%%\newcommand{\sPreTerone}{{\mit PreTer}_{\hspace*{-0.5pt}1}}
%%\newcommand{\PreTerone}{\funap{\sPreTerone}}
%%\newcommand{\sPreTeronelab}{\widehat{\textit{PreTer}_{\hspace*{-0.5pt}1}}}
%%\newcommand{\PreTeronelab}{\funap{\sPreTeronelab}}
%\newcommand{\sPreTerlab}{\widehat{\textit{PreTer}}}
%\newcommand{\PreTerlab}{\funap{\sPreTerlab}}
%%\newcommand{\siPreTerone}{{\sPreTerone^{\infty}}}
%%\newcommand{\iPreTerone}{\funap{\siPreTerone}}
%
%
%
%
%
%
%% Mu-(Lambda-)expressions
%% 
%\newcommand{\smuabs}{\mu}
%\newcommand{\muabs}[2]{\smuabs{#1}.\hspace*{0.5pt}{#2}}
%
%
%
%% HRS
%
%\newcommand{\aHRSsig}{\asig}
%\newcommand{\HRS}{\inMath{\text{HRS}}}
% 
\newcommand{\sametavar}{X}
\newcommand{\sbmetavar}{Y}
\newcommand{\scmetavar}{Z}
\newcommand{\sametavari}{\indap{X}}

\newcommand{\scmetavari}{\indap{Z}}

\newcommand{\cmetavar}{\funap{\scmetavar}}
\newcommand{\ametavari}[1]{\funap{\sametavari{#1}}}

\newcommand{\cmetavari}[1]{\funap{\scmetavari{#1}}}
%
%\newcommand{\sstopsymb}{\bullet}
%\newcommand{\stopsymb}{\funap{\sstopsymb}}
%\newcommand{\stopconstsymb}{\bullet}
%
%\newcommand{\aconst}{c}
%\newcommand{\aconstname}{\mathsf{c}}
%\newcommand{\bconstname}{\mathsf{d}}
%\newcommand{\cconstname}{\mathsf{e}}
%
%\newcommand{\aconstnamei}{\subap{\aconstname}}
%
%
%%% Special HRSs
%%
%%\newcommand{\slabsHRS}{\mathsf{abs}}
%%\newcommand{\labsHRS}{\slabsHRS\hspace*{1.5pt}\labs}
%%\newcommand{\slappHRS}{\mathsf{app}}
%%\newcommand{\lappHRS}[1]{\slappHRS\hspace*{1.5pt}{#1}}
%
%%\newcommand{\sflabsHRS}[1]{\mathsf{pre}_{#1}}
%%\newcommand{\flabsHRS}[3]{\sflabsHRS{#1}\hspace*{1.5pt}\labs{#2}{#3}}
%%\newcommand{\femptylabsHRS}[1]{\sflabsHRS{0}\hspace*{1.5pt}{#1}}
%
%%\newcommand{\llunfHRS}{R_{\sunf}}
%%\newcommand{\sllunfHRSred}{\sredb{\sunf}}
%%\newcommand{\llunfHRSred}{\mathrel{\sllunfHRSred}}
%
%%\newcommand{\RegrHRS}{\textit{Reg}_r}
%%\newcommand{\RegHRS}{\textit{Reg}}
%%\newcommand{\stRegHRS}{\textit{Reg}^{+}}
%
%%\newcommand{\RegrletrecHRS}{\textit{Reg}_r}
%%\newcommand{\RegletrecHRS}{\textit{Reg}_{\sletrec}}
%%\newcommand{\stRegletrecHRS}{\textit{Reg}^{+}_{\sletrec}}
%
%%\newcommand{\siglcHRS}{\asig_{\lambda}}
%%\newcommand{\sigllcHRS}{\asig_{\lambda_l}}
%%\newcommand{\siglpcHRS}{\asig_{(\lambda)}}
%
%
%% CRSs
%%
\newcommand{\absCRS}[2]{[{#1}]\hspace*{1pt}{#2}}
\newcommand{\TRS}{TRS}

\newcommand{\CRS}{CRS}
\newcommand{\CRSs}{\CRS{s}}
\newcommand{\iCRS}{\inMath{\text{iCRS}}}

\newcommand{\HRS}{HRS}
\newcommand{\HRSs}{\HRS{s}}

\newcommand{\CRSnotation}{\CRS\nb-no\-ta\-tion}
\newcommand{\subst}[3]{{#1}[{#2}\defdby{#3}]}
%
%
%% Special CRSs
\newcommand{\slabsCRS}{\mathsf{abs}}
\newcommand{\labsCRS}[2]{\funap{\slabsCRS}{\absCRS{#1}{#2}}}
\newcommand{\slappCRS}{\mathsf{app}}
\newcommand{\lappCRS}[2]{\funap{\slappCRS}{#1,#2}}
\newcommand{\sred}{{\to}}
\newcommand{\red}{\mathrel{\to}}

\newcommand{\sunfold}{\text{\normalfont unf}}
\newcommand{\sunfoldred}{\indap{\sred}{\sunfold}}
\newcommand{\unfoldred}{\mathrel{\sunfoldred}}

% converse versions

%\newcommand{\sunfolddepthconvred}[1]{\sconvred^{#1}_{\sunfold}}
%\newcommand{\unfolddepthconvred}[1]{\mathrel{\sunfolddepthconvred{#1}}}
%
%
%\newcommand{\sunfoldstratred}[1]{\indap{\sred}{\unfold,{#1}}}
%\newcommand{\unfoldstratred}[1]{\mathrel{\sunfoldstratred{#1}}}
%
%
\newcommand{\sunfoldinfred}{\indap{\sinfred}{\sunfold}}
\newcommand{\unfoldinfred}{\mathrel{\sunfoldinfred}}

%
% graph representation
%
\newcommand{\unfsem}[1]{\llbracket{#1}\rrbracket_{\hspace*{-0.1pt}{\txtinflambdacal}}}
\newcommand{\sunfsem}{\unfsem\cdot}
\newcommand{\unfsembh}[1]{\llbracket{#1}\rrbracket_{\hspace*{-0.1pt}{\txtinflambdabhcal}}}
\newcommand{\sunfsembh}{\unfsembh{\cdot}}
 %{\llbracket{#2}\rrbracket_{#1}}

\newcommand{\graphsemC}[2]{\llbracket{#2}\rrbracket_{#1}}
\newcommand{\graphsem}[1]{\llbracket{#1}\rrbracket}

\newcommand{\sgraphsemC}[1]{\graphsemC{#1}{\cdot}}

\newcommand{\noSsh}{\inMath{\normalfont\textsf{min}}}%{\inMath{\normalfont\textsf{no}\snlvarsucc\textsf{sh}}}
\newcommand{\graphsemCmin}[2]{\graphsemC{#1}{#2}^{\noSsh}}
\newcommand{\sgraphsemCmin}[1]{\graphsemCmin{#1}{\cdot}}

% readback
\newcommand{\sreadbackC}[1]{\textsf{rb}_{#1}}

\newcommand{\sreadback}{{\normalfont \textsf{rb}}}
\newcommand{\readback}{\funap{\sreadback}}

\newcommand{\sTree}{\mathit{Tree}}
\newcommand{\Tree}{\funap{\sTree}}
% 
% 
% 
%
%% partial unfolding
%\newcommand{\spUnf}{{\cal U'}}  %{\text{Unf}}
%\newcommand{\pUnf}{\funap{\sUnf}}
%
%% compress operation
%\newcommand{\scompressregnf}{{\downarrow_{\scompressreg}}}
%\newcommand{\compressregnf}[1]{{#1}\scompressregnf}
%\newcommand{\scompressregnfred}{{\snfred_{\scompressreg}}}
%\newcommand{\compressregnfred}{\mathrel{\scompressregnfred}}
%
%\newcommand{\scompressstregnf}{{\downarrow_{\scompressstreg}}}
%\newcommand{\compressstregnf}[1]{{#1}\scompressstregnf}
%\newcommand{\scompressstregnfred}{{\snfred_{\scompressstreg}}}
%\newcommand{\compressstregnfred}{\mathrel{\scompressstregnfred}}
%
%\newcommand{\sCpr}{\text{\normalfont Cpr}}
%\newcommand{\Cpr}{\funap{\sCpr}}
%
%
%
%% infinite rewrite relations
%% 
\newcommand\thsp{-1.85ex} % three-headed arrow spacing
\newcommand\threeheadrightarrow{\twoheadrightarrow\hspace*\thsp\twoheadrightarrow}

\newcommand{\sinfred}{\threeheadrightarrow}

\newcommand{\saequivrel}{\sim}

\definecolor{azure}{rgb}{0.94,1.00,1.00}
\definecolor{blue}{rgb}{0,0,0.5}
\definecolor{brown}{rgb}{.75,.25,.25}
\definecolor{cyan}{rgb}{0.25,0.88,0.82}
\definecolor{chocolate}{rgb}{0.82,0.41,0.12}
\definecolor{darkcyan}{rgb}{0.5,0,1}
\definecolor{darkgreen}{rgb}{0,0.39,0}
\definecolor{darkmagenta}{rgb}{0.5,0,0.5}
\definecolor{firebrick}{RGB}{175,25,25}
\definecolor{forestgreen}{rgb}{0.13,0.55,0.13}
\definecolor{lightcyan}{rgb}{0.88,1.00,1.00}
\definecolor{lightpink}{rgb}{1.00,0.71,0.76}
\definecolor{lightyellow}{rgb}{1.00,1.00,0.88}
\definecolor{lightgoldenrod}{rgb}{0.83,0.97,0.51}
\definecolor{lightgoldenrodyellow}{rgb}{0.98,0.98,0.82}
\definecolor{lightskyblue}{rgb}{0.53,0.81,0.98}
\definecolor{moccasin}{rgb}{1.00,0.89,0.71}
\definecolor{magenta}{rgb}{1,0,1}
\definecolor{navyblue}{rgb}{0,0,0.5}
\definecolor{orange}{rgb}{1.0,0.65,0.0}
\definecolor{orangered}{rgb}{1.0,0.27,0.0}
\definecolor{palegreen}{rgb}{0.60,0.98,0.60}
\definecolor{powderblue}{rgb}{0.69,0.88,0.90}
\definecolor{purple}{rgb}{1,0.5,1}
\definecolor{royalblue}{RGB}{65,105,225}
\definecolor{mediumblue}{RGB}{0,0,205}
\definecolor{cornflowerblue}{RGB}{100,149,237}
\definecolor{springgreen}{rgb}{0.0,1.0,0.5}
\definecolor{turquoise}{rgb}{0.25,0.88,0.82}
\definecolor{snow}{rgb}{1.00,0.98,0.98}
\definecolor{tan}{rgb}{0.82,0.71,0.55}
\definecolor{red}{rgb}{1,0,0}

%
%% orders of magnitude
%%
\newcommand{\sbigO}{O}
\newcommand{\bigO}{\funap{\sbigO}}
\newcommand{\atg}{G}
\newcommand{\atgi}{\indap{\atg}}
\newcommand{\atgacc}{G'}
\newcommand{\atgacci}{\indap{\atgacc}}
\newcommand{\atgdacc}{G''}
\newcommand{\atgdacci}{\indap{\atgdacc}}
  
% sharing order
% 

% FROM ltgs

%
%% sets
\newcommand{\descsetexpmid}{\mathrel{\vert}}

\newcommand{\descsetexp}[2]{\left\{{#1}\descsetexpmid{#2}\right\}}

\newcommand{\setexp}[1]{\left\{{#1}\right\}}

\newcommand{\factorset}[2]{{#1}/_{#2}}

%
%\newcommand{\card}[1]{\left|{#1}\right|}
%

%

% 
% 
%\newcommand{\arel}{R}
%\newcommand{\brel}{S}
%
%% sequence
%\newcommand{\sequence}[2]{\{{#1}\}_{#2}}
%
%\newcommand{\enumsequence}[1]{\langle{#1}\rangle}
%
%
%% segment notation
%\newcommand{\redsuccs}[2]{{({#2}\,{#1}})}
%\newcommand{\redpreds}[2]{({#1}\,{#2})}
 %
%

%
%
%\newcommand{\commentfootnote}[1]{\textcolor{red}{$\boldsymbol{\star}$}\footnote{\textcolor{firebrick}{#1}}}
%
%
\newcommand{\nats}{\mathbb{N}}

\newcommand{\bpos}{q}

%

%

%

%
%
%\newcommand{\apreter}{s}
%\newcommand{\bpreter}{t}
%\newcommand{\aipreter}{s}
%\newcommand{\bipreter}{t}
%
%\newcommand{\apreteri}{\indap{\apreter}}
%\newcommand{\bpreteri}{\indap{\bpreter}}
%\newcommand{\aipreteri}{\indap{\aipreter}}
%\newcommand{\bipreteri}{\indap{\bipreter}}
%

%
%
%
%\newcommand{\avarpos}{\supap{\avar}}
%\newcommand{\bvarpos}{\supap{\bvar}}
%\newcommand{\cvarpos}{\supap{\cvar}}
%%
%\newcommand{\avaripos}{\bpap{\avar}}
%\newcommand{\bvaripos}{\bpap{\bvar}}
%\newcommand{\cvaripos}{\bpap{\cvar}}
%
%

% % term graph
% % 
% \newcommand{\atg}{G}
% \newcommand{\atgi}{\indap{\atg}}
% \newcommand{\atgacc}{G'}
% \newcommand{\atgacci}{\indap{\atgacc}}
% 
% \newcommand{\atgiso}{\boldsymbol{G}}
%   
% \newcommand{\sinterpret}{I}
% \newcommand{\interpret}{\funap{\sinterpret}}  

% % sharing order
% % 
% \newcommand{\sshle}{\indap{\le}{\text{\it sh}}}
% \newcommand{\shle}{\mathrel{\sshle}}
% 
% \newcommand{\ssheq}{\indap{=}{\text{\it sh}}}
% \newcommand{\sheq}{\mathrel{\ssheq}}
% 
% 
% \newcommand{\lapplabsdepthtext}{$@\sslabs$\nb-depth}
% \newcommand{\lapplabsdepth}{\bfunap{d_{@\sslabs}}}
% \newcommand{\lapplabsheighttext}{$@\sslabs$\nb-height}
% \newcommand{\lapplabsheightsub}{\bfunap{{\mit{hs}_{@\sslabs}}}}

% orders

% term graphs
\newcommand{\verts}{V}
\newcommand{\vertsof}{\funap{\verts\!}}
\newcommand{\svlab}{\mathit{lab}}
\newcommand{\vlab}{\funap{\svlab}}
\newcommand{\svargs}{\mathit{args}}
\newcommand{\vargs}{\funap{\svargs}}
\newcommand{\sroot}{\mathit{r}}
\newcommand{\vertsi}{\indap{V}}

\newcommand{\svlabi}{\indap{\mathit{lab}}}
\newcommand{\vlabi}[1]{\funap{\svlabi{#1}}}
\newcommand{\svargsi}{\indap{\mathit{args}}}
\newcommand{\vargsi}[1]{\funap{\svargsi{#1}}}
\newcommand{\srooti}{\indap{\mathit{r}}}

\newcommand{\avert}{v}
\newcommand{\bvert}{w}

\newcommand{\averti}{\indap{\avert}}
\newcommand{\bverti}{\indap{\bvert}}

\newcommand{\bvertbp}{\bpap{\bvert}}

\newcommand{\avertacc}{v'}
\newcommand{\bvertacc}{w'}

\newcommand{\bvertacci}{\indap{\bvertacc}}

\newcommand{\blackhole}{\bullet}
\newcommand{\bh}{\blackhole}

%{\bpap{\asig}{\lambda}{\prime}}
\newcommand{\siglambdabh}{\pbap{\asig}{\sslabs}{\blackhole}}
\newcommand{\siglambdaSbh}{\pbap{\asig}{\sslabs}{\snlvarsucc,\blackhole}}

\newcommand{\siglambdacalCRS}{\asig_{\txtlambdacal}}
\newcommand{\siglambdabhcalCRS}{\asig_{\txtlambdabhcal}}
\newcommand{\siglambdaletreccalCRS}{\asig_{\txtlambdaletreccal}}
\newcommand{\siglambdaletrecbhcalCRS}{\asig_{\txtlambdaletrecbhcal}}

\newcommand{\stgsucc}{{\rightarrowtail}}
\newcommand{\tgsucc}{\mathrel{\stgsucc}}
\newcommand{\stgsucci}{\indap{\rightarrowtail}}
\newcommand{\tgsucci}[1]{\mathrel{\stgsucci{#1}}}

%

% homomorphism, functional bisimulation, bisimulation, isomorphism
% 
\newcommand{\sahom}{h}
\newcommand{\ahom}{\funap{\sahom}}
\newcommand{\sahomext}{{\sahom}^*}

\newcommand{\saiso}{i}

% iso
\newcommand{\siso}{{\simeq}}
\newcommand{\iso}{\mathrel{\siso}}

\newcommand{\sbisim}{%
    \setbox0=\hbox{\kern-.1ex{$\leftrightarrow$}\kern-.1ex}
    \setbox1=\vbox{\hbox{\raise .1ex \box0}\hrule}%
    \inMath{\mathrel{\hbox{\kern.1ex\box1\kern.1ex}}}
  }
\newcommand{\bisim}{\mathrel{\sbisim}}

\newcommand{\sbisims}{\supap{\sbisim}}

\newcommand{\sbisimS}{\sbisims{\snlvarsucc}}

\newcommand{\sbisimsubscript}{
    \setbox0=\hbox{\kern-.1ex{$\leftrightarrow$}\kern-.1ex}
    \setbox1=\vbox{\hbox{\raise .1ex \box0}\hrule}%
    \inMath{\mathrel{\hbox{\scalebox{0.75}{\box1}}}}
  }

\newcommand{\sbisimsubscripts}[1]{{\sbisimsubscript^{#1}}}
\newcommand{\sbisimsubscriptS}{\sbisimsubscripts{\snlvarsucc}}

% %  \newcommand{\sfunbisim}{%
% %     \setbox0=\hbox{\kern-.1ex{$\rightarrow$}\kern-.1ex}
% %     \setbox1=\vbox{\hbox{\raise .1ex \box0}\hrule}%
% %     {\hbox{\kern.1ex\box1\kern.1ex}}
% %   }
\newcommand{\sinvfunbisim}{%
    \setbox0=\hbox{\kern-.1ex{$\leftarrow$}\kern-.1ex}
    \setbox1=\vbox{\hbox{\raise .1ex \box0}\hrule}%
    {\hbox{\kern.1ex\box1\kern.1ex}}
  }

\newcommand{\sfunbisim}{%
    \setbox0=\hbox{\kern-.1ex{$\rightarrow$}\kern-.1ex}
    \setbox1=\vbox{\hbox{\raise .1ex \box0}\hrule}%
    {\hbox{\kern.1ex\box1\kern.1ex}}
  }
\newcommand{\funbisim}{\mathrel{{\sfunbisim}}}

\newcommand{\sfunbisims}{\supap{\sfunbisim}}

\newcommand{\sfunbisimS}{\sfunbisims{\snlvarsucc}}
\newcommand{\funbisimS}{\mathrel{\sfunbisimS}}

\newcommand{\sconvfunbisim}[1][]{%
    \setbox0=\hbox{\kern-.1ex{$\leftarrow$}\kern-.1ex}
    \setbox1=\vbox{\hbox{\raise .1ex \box0}\hrule}%
    \mathrel{\hbox{\kern.1ex\box1\kern.1ex}}
  }
\newcommand{\convfunbisim}{\mathrel{\sconvfunbisim}}

\newcommand{\sconvfunbisims}{\supap{\sconvfunbisim}}

\newcommand{\sconvfunbisimS}{\sconvfunbisims{\snlvarsucc}}

\newcommand{\sSfunbisim}{\sfunbisim^{\snlvarsucc}}
\newcommand{\Sfunbisim}{\mathrel{\sSfunbisim}}

\newcommand{\Sunsharing}{$\snlvarsucc$\nb-un\-sha\-ring}
\newcommand{\sSunsh}{\textsf{\normalfont unsh}_{\snlvarsucc}}

\newcommand{\abisim}{R}

% higher-order term graphs

%

%\newcommand{\sabspreforgetfullaphotgsi}[1]{{\sabspre\mathit{F}_{#1}}\hspace*{-0.1em}^{(\sslabs)}}
%\newcommand{\abspreforgetfullaphotgsi}[1]{\funap{\sabspreforgetfullaphotgsi{#1}}}

\newcommand{\classlhotgs}{{\cal H}}%{\supap{{\cal H}}{(\sslabs)}}
\newcommand{\classlhotreetgs}{\indap{\classlhotgs}{T}} %{\cal HT}

%{\pbap{\cal H}{\sslabs}}

%\newcommand{\classeaglaphotgsi}[1]{{}^{\seag}\subap{\cal H}{#1}{\hspace*{-0.1em}}^{(\sslabs)}}
% isomorphism class variants

%\newcommand{\classlaphotgsisoi}[1]{\subap{\boldsymbol{\cal H}}{{#1}}{\hspace*{-0.1em}}^{(\boldsymbol{\sslabs})}}

%\newcommand{\classeaglaphotgsisoi}[1]{{}^{\seag}\subap{\boldsymbol{\cal H}}{{#1}}{\hspace*{-0.1em}}^{(\boldsymbol{\sslabs})}}

\newcommand{\classltgs}{{\cal T}} %{\supap{{\cal T}}{(\sslabs)}}
\newcommand{\classeagltgs}{\subap{{\cal T}}{\text{\normalfont eag}}}

% isomorphism class variants

% liberalized LTGs

% isomorphism class variants

% isomorphism class variants

%\newcommand{\slhotgstolaphotgsi}{\indap{A}}
%\newcommand{\lhotgstolaphotgsi}[1]{\funap{\slhotgstolaphotgsi{#1}}}
%\newcommand{\slaphotgstolhotgsi}{\indap{B}}
%\newcommand{\laphotgstolhotgsi}[1]{\funap{\slaphotgstolhotgsi{#1}}}
% isomorphism class variants
%\newcommand{\slhotgsisotolaphotgsisoi}{\indap{\textit{\textbf{A}}}}
%\newcommand{\lhotgsisotolaphotgsisoi}[1]{\funap{\slhotgsisotolaphotgsisoi{#1}}}
%\newcommand{\slaphotgsisotolhotgsisoi}{\indap{\textit{\textbf{B}}}}
%\newcommand{\laphotgsisotolhotgsisoi}[1]{\funap{\slaphotgsisotolhotgsisoi{#1}}}

\newcommand{\slhotgstoltgs}{{\cal HT}}
\newcommand{\lhotgstoltgs}{\funap{\slhotgstoltgs}}
\newcommand{\sltgstolhotgs}{{\cal TH}}
\newcommand{\ltgstolhotgs}{\funap{\sltgstolhotgs}}

% %isomorphism class variants
% \newcommand{\slaphotgsisotoltgsisoij}[2]{\textit{\textbf{G}}_{#1,#2}}
% \newcommand{\laphotgsisotoltgsisoij}[2]{\funap{\slaphotgsisotoltgsisoij{#1}{#2}}}
% \newcommand{\sltgsisotolaphotgsisoij}[2]{\boldsymbol{\cal G}_{#1,#2}}
% \newcommand{\ltgsisotolaphotgsisoij}[2]{\funap{\sltgsisotolaphotgsisoij{#1}{#2}}}

\newcommand{\sabspre}{P}
\newcommand{\abspre}{\funap{\sabspre}}
\newcommand{\sabsprei}{\indap{P}}
\newcommand{\absprei}[1]{\funap{\sabsprei{#1}}}
\newcommand{\sabspreacc}{P'}

\newcommand{\sabspredacc}{P''}

\newcommand{\absprefix}{ab\-strac\-tion-pre\-fix}
\newcommand{\absprefixes}{ab\-strac\-tion-pre\-fixes}

\newcommand{\absprefixfunction}{\absprefix\ function}
\newcommand{\absprefixfunctions}{\absprefix\ functions}

\newcommand{\apre}{p}

\newcommand{\prele}{\le}

% classes
% 
\newcommand{\aclass}{{\cal K}}

% graphs, digraphs etc
% 

\newcommand{\aspantree}{T}

\newcommand{\aspantreedacc}{\aspantree''}

% appendix: tool output

\newcommand{\toolfig}[1]{\includegraphics[scale=0.4]{tool-output/{{#1}}}}

\newcommand\tooltip[3]{
\begin{textblock}{15}#1
  \fbox{\parbox{#2}{#3}}
\end{textblock}
}

% arguments: number, name, strategy
\newcommand\showcase[5]{
\begin{textblock}{15}#4
  \toolfig{#3/#1_#2.l.dfa}
\end{textblock}
\ifx\\#5\\
\else
\begin{textblock}{15}#5
  \toolfig{#3/#1_#2.l.dfa.min}
\end{textblock}
\fi
\verbfilebox{tool-output/#3/#1_#2.l.log}
\hspace{-0.5cm}
\scalebox{0.73}[0.8]{\theverbbox}
}

%\PassOptionsToPackage{linktocpage}{hyperref}

\begin{document}

\setlength\itemsep{-0.5ex}

\titlebanner{Maximal Sharing in the Lambda Calculus with \txtletrec} 

\title{Maximal Sharing in the Lambda Calculus with \txtletrec\hspace*{1.5pt}\footnotemark}

%TODO:
% übersetzung auf ho-tgs mittels übersetzung nach ap-ho-tgs formulieren
% ap-ho-tgs haben schöne farben
% in der ho-tg übersetzung steckt die analyse-information drin, nahe an der implementierung.

\authorinfo{Clemens Grabmayer}
           {Dept.\ of Computer Science, VU University Amsterdam\\
            %PO Box 80126, 3508 TC Utrecht, 
            de Boelelaan 1081a, 1081 HV Amsterdam}
           {c.a.grabmayer@vu.nl}
\authorinfo{Jan Rochel}
           {%Department of Information \& Computing Sciences,\\ Utrecht University,
            Dept.\ of Computing Sciences, Utrecht University\\
            %PO Box 80089, 3508 TB Utrecht
            Princetonplein 5, %De Uithof
            3584 CC Utrecht, The Netherlands}
           %{Dept.\ of Inform.\ and Computing Sciences, Utrecht University\\ 
           {jan@rochel.info}

% \authorinfo{\phantom{Clemens Grabmayer}}
%            {\phantom{Department of Philosophy, Utrecht University}}
%            {\phantom{clemens@phil.uu.nl}}
% \authorinfo{\phantom{Jan Rochel}}
%            {\phantom{Department of Computing Sciences, Utrecht University}}
%            {\phantom{jan@rochel.info}}
\conferenceinfo{ICFP 2014}{September 1--3  2014, Gothenburg, Sweden.}
\copyrightyear{2014}
\copyrightdata{[COPYRIGHTDATA]}

\maketitle

%---------------
\begin{abstract} 
  Increasing sharing in programs is desirable to compactify the code, and to
  avoid duplication of reduction work at run-time, thereby speeding up
  execution.
  We show how a maximal degree of sharing can be obtained for programs
  expressed as terms in 
  the lambda calculus with $\txtletrec$.  
  We introduce a notion of `maximal compactness' for \lambdaletreccalterms\
  among all terms with the same infinite unfolding.
  Instead of defined purely syntactically, this notion is based 
  on a graph semantics. 
  \lambdaletreccalterms\ are interpreted as first-order \termgraphs\
  so that \unfoldingequivalence\ between terms is preserved and reflected 
  through bisimilarity of the \termgraph\ interpretations.
  %
%   Instead of on a purely syntactic analysis, this notion is based 
%   on a graph semantics that interprets \lambdaletreccalterms\ as first-order \termgraphs,
%   and preserves and reflects \unfoldingequivalence\
%   of \lambdaletreccalterms\ through bisimilarity of the \termgraph\ interpretations. 
  Compactness of the \termgraphs\ can then be compared via functional bisimulation.
 
  We describe practical and efficient methods for the following two problems:
  transforming a \lambdaletreccalterm\ into a maximally compact form;
  and deciding whether two
  \lambdaletreccalterms\ are \unfoldingequivalent.  
  The transformation of a \lambdaletreccalterm~$\allter$ into
  maximally compact form $\allteri{0}$ proceeds in three steps:  
  \begin{enumerate*}[label=(\roman*)]
  \item translate $\allter$ into its
        \termgraph\ $\atg = \graphsem{\allter}\,$;
  \item compute the maximally shared form of
  $\atg$ as its bisimulation collapse $\atgi{0}\,$;
  \item read back a \lambdaletreccalterm~$\allteri{0}$ from the \termgraph\
  $\atgi{0}$ with the property $\graphsem{\allteri{0}} = \atgi{0}$.
  \end{enumerate*}
  This guarantees that $\allteri{0}$ and $\allter$ have the same unfolding, 
  and that $\allteri{0}$ exhibits maximal sharing.
  
  The procedure for deciding 
  whether two given \lambdaletreccalterms~$\allteri{1}$ and $\allteri{2}$ are \unfoldingequivalent\ 
  computes their \termgraph\ interpretations $\graphsem{\allteri{1}}$ and $\graphsem{\allteri{2}}$,
  and checks whether these term graphs are bisimilar.
  
  For illustration,  we also provide a readily usable implementation. 
\end{abstract}

\footnotetext{  
  \hspace{-0.4em}%
    This work was supported by NWO in the framework of the
    project \emph{Realising Optimal Sharing (ROS)}, project number 612.000.935.
%     This work was carried out partly in the course of the
%     NWO~project \emph{Realising Optimal Sharing (ROS)}, project number 612.000.935.
    }

\category{D.3.3}{Language constructs and features}{Recursion}
\category{F.3.3}{Studies of Programming Constructs}{Functional constructs}

\terms
functional programming, compiler optimisation

\keywords
Lambda Calculus with \txtletrec, unfolding semantics, subterm sharing, maximal sharing,
higher-order \termgraphs\

%---------------------
\section{Introduction}
  \label{sec:intro}
%---------------------

Explicit sharing in pure functional programming languages
is typically expressed by means of the \txtletrec\ construct,
which facilitates cyclic definitions.
The \lambdacalculus\ with \txtletrec, \lambdaletreccal\
forms a syntactic core of these languages, and it can be viewed as their abstraction.
As such \lambdaletreccal\ is well-suited as a test bed for developing
program transformations in functional programming languages. 
This certainly holds for the transformation presented here
that has a strong conceptual motivation, is justified by a form of semantic reasoning,
and is best described first for an expressive, yet minimal language.

%but depends little on syntactic language specifics.
% that have a strong conceptual motivation  are justified by semantic reasoning,
% but depend little on syntactic language specifics.
% In particular this applies to program transformations that are justified by
% some kind of semantic reasoning as opposed to merely syntactic
% transformations.
%that have a strong conceptual motivation, but depend little on syntactic specifics of the functional language. 

% We investigate sharing in \lambdaletreccal,
% the extension of the \lambdacalculus\ with the $\txtletrec$-construct
% for expressing recursion and explicit sharing.
% % It arises by extending the \lambdacalculus\ with the 
% % \txtletrec-construct by which subterms can be shared explicitly.
% \lambdaletreccal{} can be viewed both as the core and as an abstraction of pure
% functional programming languages like Haskell.
% In particular \lambdaletreccal\ can be used to reason about
% programs and program transformations in these languages.

%----------
\subsection{Expressing sharing and infinite \lambdaterms}
  \label{sec:intro:subsec:sharing:lambdaletreccal}
%----------

For the programmer the $\txtletrec$-construct offers the possibility to write a
program more compactly by utilising subterm sharing. \letrecexpression{s} bind
subterms to variables; these variables then denote occurrences of the
respective subterms and can be used anywhere inside of the \letrecexpression\ 
(also recursively).
%Thereby 
% Instead of repeating a subterm multiple times, a
% single definition can be given and referenced from multiple positions. 
In this way, instead of repeating a subterm multiple times, 
a single definition can be given which is then referenced from multiple positions.

We will denote the $\txtletrec$-construct here by $\txtlet$ as in Haskell. 

%expressions in
%\lambdaletreccal\ are frequently viewed as finite representations of their
%unfolding semantics: the infinite \lambdaterm\ that is obtained  by completely
%unfolding all occurring recursive definitions, the \letbinding{s} in the
%expression.

% For the programmer, subterm sharing via $\txtletrec$ is a way to represent 
% programs in a more compact way. Instead of %writing out
%                                            repeating equivalent subterms
% multiple times, one can give one definition and then reference it by name from
% all use sites.

\begin{example}%[{syntactical coincidence of subterms}]
               \label{ex:cse}
  Consider the \lambdaterm~$\lapp{(\labs{x}{x})}{(\labs{x}{x})}$
  with two occurrences of the subterm $\labs{x}{x}$.
  These occurrences can be shared with as result the \lambdaletreccalterm\ 
  $(\letin{id=\labs{x}{x}}{\lapp{id}{~id}})$.
\end{example}

\noindent
As \letbindings\ permit definitions with cyclic dependencies, terms in \lambdaletreccal\ are able
to finitely denote infinite \lambdaterms\ (for short: \inflambdaterms).
The \inflambdaterm\ $\aiter$ represented by a \lambdaletreccalterm\ $\allter$
can be obtained by a typically infinite process in which the \letbindings\ in $\allter$
are unfolded continually with $\aiter$ as result in the limit.
Then we say that $\aiter$ is the \emph{infinite unfolding} of $\allter$, or
that $\aiter$ is the denotation of $\allter$ in the \emph{unfolding semantics},
indicated symbolically by $\aiter = \unfsem{\allter}$. 

\begin{example}%[{\unfoldingequivalent\ \lambdaletreccalterms\ without syntactical coincidences}]
               \label{ex:fix}
  For the \lambdaletreccalterms~$\allter$ and $\bllter$
  and the \inflambdaterm~$\aiter$:
\begin{align*}
  &
  \begin{aligned}
  \allter &\defdby \labs{f}{\letin{r=\lapp{f}{r}}{r}} \\
  \bllter &\defdby \labs{f}{\letin{r=\lapp{f}{(\lapp{f}{r})}}{r}}
  \end{aligned}
  &
  \aiter  &\defdby  \labs{f}{\lapp{f}{(\lapp{f}{(\dots)})}}
\end{align*}
it holds that both $\allter$ and $\bllter$
(which represent fixed-point combinators)
have $\ater$ as their infinite unfolding:
$\unfsem{\allter} = \unfsem{\bllter}  = \aiter$.
\end{example}

$\allter$ and $\bllter$ in this example are `unfolding equivalent'. 
Note that $\allter$ represents $\aiter$ in a more compact way than $\bllter$. 
It is intuitively clear that there is no \lambdaletreccalterm\ that
represents $\aiter$ more compactly than $\allter$.
So $\allter$ can be called a `maximally shared form' of~$\bllter$ (and of $\ater$).

We address, and efficiently solve, the problems of computing the maximally shared form
of a \lambdaletreccalterm, and of determining whether two
\lambdaletreccalterms\ are \unfoldingequivalent. 
Note that these notions are based on the static unfolding semantics.
We \emph{do not consider}
any dynamic semantics based on evaluation by \betareduction\ or otherwise.

\subsection{Recognising potential for sharing}
  \label{sec:intro:subsec:our:methods}
%-----------------------------------------------

A general risk for compilers of functional programs is 
``[to construct] multiple instances of the same expression, rather than sharing a single copy of them.
  This wastes space because each instance occupies separate storage, and it wastes time because the
  instances will be reduced separately. This waste can be arbitrarily large, [\ldots]''
  (\cite[p.243]{peyt:jone:1987}).
%
% TODO (Jan): Will ich so nicht stehen lassen
Therefore practical compilers increase sharing, and do so typically
for supercombinator translations of programs (such as fully-lazy \lambdalifting).
% Optimisation transformations used in compilers for increasing sharing are typically carried out for supercombinator
% translations of programs (such as fully-lazy \lambdalifting). 
Thereby two goals are addressed:
to increase sharing based on a syntactical analysis of the `static' form of the program; 
and to prevent splits into too many supercombinators
   when an anticipation of the program's `dynamic' behaviour
   is able to conclude that no sharing at run-time will be gained.
% in order to identify when sharing is unnecessary (e.g.\ too many supercombinators are produced).

A well-known method for the `static' part is
common subexpression elimination (CSE) \cite{chit:1997:CS-uncommon}. 
For the `dynamic' part, a predictive syntactic program analysis has been proposed 
for fine-tuning sharing of partial applications in supercombinator translations \cite{gold:huda:1987}.
% Such methods predict execution
% behaviour in order to fine-tune sharing of partial applications in optimising
% supercombinator translations \cite{gold:huda:1987}.  

We focus primarily on the `static' aspect of introducing sharing. 
We provide a conceptual solution that substantially extends CSE. %one that generalises CSE. 
But instead of maximising sharing for a supercombinator translation of a program,
we carry out the optimisation on the program itself (the \lambdaletrecterm). 
And instead of applying a purely syntactical program analysis,
%our approach is based on 
we use a \termgraph\ semantics for \lambdaletreccalterms.

\subsection{Approach based on a \termgraph\  semantics}
%\subsection{Bisimulation proof method using \termgraph\  representations}
  \label{sec:intro:subsec:approach}
%------------------------------------

We develop a combination of techniques 
for realising maximal sharing in \lambdaletreccalterms.
For this we proceed in four steps:
\lambdaletreccalterms\ are interpreted as higher-order
\termgraphs; the higher-order \termgraphs\ are implemented as first-order \termgraphs;
maximally compact versions of such \termgraphs\ can be computed by standard algorithms;
\lambdaletreccalterms\ that represent compacted \termgraphs\ (or in fact arbitrary ones) can be
retrieved by a `readback' operation.   

In more detail, the four essential ingredients are the following: 
  \renewcommand{\descriptionlabel}[1]{\hspace{\labelsep}{\normalfont{#1}}}
\begin{description}
   \item[(1)] 
     A \emph{semantics}~$\sgraphsemC{\classlhotgs}$ for
     %A \jan{\emph{higher-order semantics}}~$\sgraphsemC{\classlhotgs}$
     interpreting \lambdaletreccalterms\ as \emph{higher-order \termgraphs},
     which are first-order \termgraphs\ enriched with a feature for describing
     binding and scopes.
     We call this specific kind of higher-order \termgraphs\
     `\lambdahotgs'. 
\end{description} 
%
% Note that addition of sharing is complicated by the presence
% of variable bindings, and so the binding structure is recorded here. 
% The variable binding structure must be respected by any addition of sharing,
% and therefore it is accounted for conceptually here.  
The variable binding structure is recorded in this \termgraph\ concept 
because it must be respected by any addition of sharing. 
The \termgraph\ interpretation %illustrates, and 
                              adequately represents 
sharing as expressed by a \lambdaletreccalterm.
It is not injective: a \lambdahotg\ typically is the interpretation
of various \lambdaletrecterms. 
%Different 
% Two \lambdaletreccalterms\ can have the same \lambdahotg\ as their interpretation.
% Such representations make sharing expressed by a \lambdaletreccalterm\
% visible and express it in a `purer' form than the terms do.
Different degrees of sharing as expressed %by means of \lambdaletreccal\
                                          by \lambdaletrecterms\
can be compared via the \lambdahotg\ interpretations by a sharing preorder, 
which is defined as the existence of a homomorphism (functional bisimulation).

While comparing higher-order \termgraphs\
via this preorder
% via an appropriate notion of bisimilarity 
is %effectively 
   computable in principle,
standard algorithms do not apply.  
Therefore efficient solvability of the compactification problem and the comparison problem  
is, from the outset, not guaranteed. 
For this reason we devise a first-order %\jan{variant} 
                                        implementation of \lambdahotgs:
\begin{description}
    \item[(2)]
      An \emph{interpretation}~$\slhotgstoltgs$ of \lambdahotgs\ into a
      specific kind of \emph{first-order \termgraphs}, which we call `\lambdatgs'. 
      It preserves and reflects the sharing preorder.
%     
%     By $\classltgs$ we designate the class of \lambdatgs,
%     and by $\sgraphsemCzero{}{}$ the \termgraph\ semantics
%     that maps \lambdaletreccalterms\ via the representation as \lambdahotgs\ in \ref{ingredients:item:representation} to \lambdatgs.  
\end{description}    
% We designate by $\classltgs$ the class of \lambdatgs.
% and by $\sgraphsemCzero{}{}$ the \termgraph\ semantics
% that maps \lambdaletreccalterms\ via the representation as \lambdahotgs\ in \ref{ingredients:item:representation} to \lambdatgs.
%
$\slhotgstoltgs$ %~\ref{ingredients:item:implementation} 
reduces bisimilarity between \lambdahotgs\ (higher-order) to bisimilarity between \lambdatgs\
(first-order), and %therefore 
                   facilitates:
\begin{description}
  \item[(3)]\label{ingredients:item:bisimilarity:methods}
    The use of standard methods for \emph{checking} bisimilarity and for
    computing the bisimulation \emph{collapse} of \lambdatgs. %first-order \termgraphs.
%    providing us with algorithms for computing compactified versions
%    and deciding unfolding equivalence of \lambdatgs.
    Via $\slhotgstoltgs$ also the analogous problems for \lambdahotgs\ can be solved.
    %Extends to \lambdahotgs\ via~$\slhotgstoltgs$. %~\ref{ingredients:item:ltgs}.
\end{description} 
% By $\scoll$ we designate the bisimulation collapse mapping on \lambdatgs\ in~$\classltgs\!$.$\,$
%
% Notwithstanding the efficiency of methods dealing with
% bisimilarity and bisimulation collapse for process graphs (labelled transition systems),
% the situation is even easier here.  
\Termgraphs\ can be represented as deterministic process graphs (labelled transition systems),
and even as deterministic finite-state automata (DFAs).
That is why it is possible to apply efficient algorithms for state minimisation and language equivalence of~DFAs.

Finally, an operation to return from \termgraphs\ to \lambdaletreccalterms:      
\begin{description}
  \item[(4)]
    A \emph{readback} function $\sreadback$ from \lambdatgs\ to \lambdaletreccalterms\ that,
    for every \lambdatg~$\altg$, computes a \lambdaletreccalterm\ $\allter$
    %out 
    from the set of \lambdaletrecterms\ that have $\altg$ as their %first-order
    interpretation via $\sgraphsemC{\classlhotgs}$ and $\slhotgstoltgs$
    (i.e.\ a \lambdaletrecterm\ for which it holds that $\lhotgstoltgs{\graphsemC{\classlhotgs}{\allter}} = \atg$).
\end{description}

%-----------------------------------------
\subsection{Methods and their correctness}
  \label{sec:intro:subsec:methods}
%-----------------------------------------

On the basis of the concepts above we develop efficient methods for
introducing maximal sharing, and for checking unfolding equivalence, of
\lambdaletreccalterms, as sketched below. 

In describing these methods, we use the following notation:
\begin{description}
  \item[$\classlhotgs\:$:] 
     class of \lambdahotgs, the image of the semantics~$\sgraphsemC{\classlhotgs}\,$; %\ref{ingredients:item:lhotgs};
  \item[$\classltgs\:$:] 
     class of \lambdatgs, the image of the interpretation~$\slhotgstoltgs\,$;  %;\ref{ingredients:item:ltgs};
 \item[$\sgraphsemC{\classltgs} \defdby \scompfuns{\slhotgstoltgs}{\sgraphsemC{\classlhotgs}}\:$:]
    first-order \termgraph\ semantics for \lambdaletreccalterms; % \ref{ingredients:item:lhotgs} and \ref{ingredients:item:ltgs}:
%    maps a \lambdaletreccalterm\ via representation~$\sgraphsemC{\classlhotgs}$, %representation \ref{ingredients:item:lhotgs}, 
%                                                    obtaining a \lambdahotg\ in $\classlhotgs$,
%    and via interpretation~$\slhotgstoltgs$ %interpretation \ref{ingredients:item:ltgs} 
%            to a \lambdatg\ in $\classltgs$;
  \item[$\scoll\:$:] 
    bisimulation collapse on $\classlhotgs$ and $\classltgs$;
  \item[$\sreadback\:$:]
    readback mapping from \lambdatgs\ to \lambdaletreccalterms.   
\end{description}
%
% The class of \lambdahotgs\ in the image of the representation~\ref{ingredients:item:lhotgs}
% is designated by  $\classlhotgs$;
% the class of \lambdatgs\ in the image of the interpretation~\ref{ingredients:item:ltgs}
% by $\classtgs$. 
% % The class of \lambdatgs\ in the image of the interpretation~\ref{ingredients:item:ltgs}
% % is denoted by $\classltgs$. 
% 
% By $\sgraphsemCzero{}{}$ we denote the \termgraph\ semantics for \lambdaletreccalterms\
% that maps a \lambdaletreccalterms\ via its representation as a \lambdahotg\ (obtained via \ref{ingredients:item:lhotgs}) 
% to a \lambdatgs\ (produced via \ref{ingredients:item:ltgs}).
% By $\scoll$ we designate the bisimulation collapse mapping on \lambdatgs\ in~$\classltgs$
% implemented using~\ref{ingredients:item:bisimilarity:methods}. 
%
We obtain the following methods (for illustrations, see Fig.~\ref{fig:methods}):
\begin{itemize}\setlength{\itemsep}{0.1ex}
  \item[$\smalltriangleright$] 
    \emph{Maximal sharing}:
      for a given \lambdaletreccalterm, a maximally shared form
      can be obtained by collapsing its first-order \termgraph\ interpretation,
      and then reading back the collapse:
      $\scompfuns{\scompfuns{\sreadback}{\scoll}}{\sgraphsemC{\classltgs}}\,$
      %\jan{of graph representation, bisimulation collapse, and readback}:
%      \begin{enumerate}[label=\arabic*.]\setlength{\itemsep}{-0.1ex}
%        \item Build the first-order graph representation $\altg = \graphsem{\allter}$ of $\allter$.
%        \item Compactify the \termgraph\ $\altg$ using the collapse $\scoll$.
%        \item Read back the compactified graph $\altgi{0} = \coll{\altg}$
%          to obtain the maximally shared form $\allteri{0} = \readback{\altgi{0}}$ of $\allter$.
%      \end{enumerate} 
      %
  \item[$\smalltriangleright$] 
    \emph{Unfolding equivalence}: 
    for given \lambdaletreccalterms~$\allter$ and $\bllter$,
    it can be decided whether $\unfsem{\allter} = \unfsem{\bllter}$ by checking
    whether their \termgraph\ interpretations $\graphsemC{\classltgs}{\allter}$ and $\graphsemC{\classltgs}{\bllter}$ are bisimilar.
    % whether $\unfsem{\allter} = \unfsem{\bllter}$ holds as follows:
    % by checking their graph representations for bisimilarity:
%    \begin{enumerate}\setlength{\itemsep}{-0.1ex}
%      \item Build the graph representations 
%        $\graphsem{\allter}$ of $\allter$ and $\graphsem{\bllter}$ of $\bllter$.
%      \item Check whether $\graphsem{\allter}$ and $\graphsem{\bllter}$
%        are bisimilar. 
%        If so, conclude that $\unfsem{\allter} = \unfsem{\bllter}$ holds, and otherwise $\unfsem{\allter} \neq \unfsem{\bllter}$.
%    \end{enumerate}
%     by testing their \termgraph\ representations 
%     $\graphsemCzero{\classltgs}{\allter}$ and $\graphsemCzero{\classltgs}\bllter$
%     for bisimilarity to determine whether
%     $\graphsemCzero{\classltgs}\allter \bisim \graphsemCzero{\classltgs}\bllter$
%     holds. 
\end{itemize}
\begin{figure}[t]
\begin{align*}
\begin{tikzpicture}[>=stealth]
\matrix[row sep=0.5cm,column sep=0.7cm,ampersand replacement=\&]{
  \node(L){$\allter$};\&
  \node(G){$\alhotg$};\&
  \node(g){$\altg$};
  \\
  \node[xshift=3mm](M){$\ater$};
  \\
  \node(L0){$\allteri{0}$};\&
  \node(G0){$\alhotgi{0}$};\&
  \node(g0){$\altgi{0}$};
  \\
};
\draw[|->] (L) to node[left]{$\sunfsem$} (M);
\draw[|->] (L0) to node[left]{$\sunfsem$} (M);
\draw[|->] (L) to node[above]{$\;\;\;\;\sgraphsemC{\classlhotgs}$} (G);
\draw[|->] (G) to node[above]{$\slhotgstoltgs\;\;\,$} (g);
\draw[bend left=40,|->,thick] (L) to node[above]{$\sgraphsemC{\classltgs}$} (g);
\draw[|->] (L0) to node[above]{$\;\;\;\;\sgraphsemC{\classlhotgs}$} (G0);
\draw[|->] (G0) to node[above]{$\slhotgstoltgs\;\;$} (g0);
\draw[bend left=32,|->,thick] (g0) to node[below]{$\sreadback$} (L0);
\draw[bend left=41,|->] (L0) to node[above]{$\sgraphsemC{\classltgs}\hspace*{5ex}\mbox{}$} (g0);
\draw[funbisim] (G) to node[right]{$\scoll$} (G0);
\draw[funbisim,very thick] (g) to node[right]{$\scoll$} (g0);
\end{tikzpicture}
& \hspace*{3ex}
\begin{tikzpicture}[>=stealth]
\matrix[row sep=0.5cm,column sep=0.7cm,ampersand replacement=\&]{
  \node(L1){$\allteri1$};\&
  \node(G1){$\alhotgi1$};\&
  \node(g1){$\altgi1$};
  \\
  \node[xshift=3mm](M){$\ater$};
  \\
  \node(L2){$\allteri{2}$};\&
  \node(G2){$\alhotgi{2}$};\&
  \node(g2){$\altgi{2}$};
  \\
};
\draw[|->] (L1) to node[left]{$\sunfsem$} (M);
\draw[|->] (L2) to node[left]{$\sunfsem$} (M);
\draw[|->] (L1) to node[above]{$\;\;\;\;\sgraphsemC{\classlhotgs}$} (G1);
\draw[|->] (G1) to node[above]{$\slhotgstoltgs\;\;$} (g1);
\draw[|->] (L2) to node[above]{$\;\;\;\;\sgraphsemC{\classlhotgs}$} (G2);
\draw[|->] (G2) to node[above]{$\slhotgstoltgs\;\;$} (g2);
\draw[bisim] (G1) to (G2);
\draw[bisim,very thick] (g1) to (g2);
\draw[bend left=40,|->,thick] (L1) to node[above]{$\sgraphsemC{\classltgs}$} (g1);
\draw[bend left=40,|->,thick] (L2) to node[above]{$\sgraphsemC{\classltgs}\hspace*{6ex}\mbox{}$} (g2);
\draw[color=white,bend left=35,|->] (g2) to node[below]{$\sreadback$} (L2); % phantom for alignment
%\draw[color=white,bend left,|->] (g2) to node[below, near end]{$\sreadback$} (L2); % phantom for alignment
\end{tikzpicture}   
\end{align*}
  \caption{\label{fig:methods}
           Component-step build-up of the methods for
           computing a maximally shared form $\allteri{0}$ of a \lambdaletreccalterm~$\allter$ (left),
           and for deciding unfolding equivalence of \lambdaletreccalterms~$\allteri{1}$ and $\allteri{2}$
           via bisimilarity~$\sbisim$ (right).}     
\end{figure}
\begin{figure}
\hspace{-2ex}
\begin{tikzpicture}
\matrix[row sep=-4mm,column sep=0.3cm]{
  \node(L1){$\labs{f}{\letin{r=\lapp{f}{r}}{r}}$}; &&
  \node(L2){$\labs{f}{\letin{r=\lapp{f}{(\lapp{f}{r})}}{r}}$};\\[5ex]
&\node(M){$\labs{f}{\lapp{f}{(\lapp{f}{(\dots)})}}$}; \\
\node(G1){\figsmall{fix_eff}}; && \node(G2){\figsmall{fix_big}};\\
};
\draw[funbisim,very thick] (G2) to node[above]{$\scoll$} (G1);
\draw[|->] (L1) to node[above,near end]{$\sunfsem$} (M);
\draw[|->,thick] (G1.100) to node[left]{$\sreadback$} (L1.235);
\draw[|->] (L2) to node[above,near end]{$\sunfsem$} (M);
\draw[|->] (L1) to node[right]{$\sgraphsemC{\classltgs}$} (G1);
\draw[|->,thick] (L2) to node[right]{$\sgraphsemC{\classltgs}$} (G2);
\end{tikzpicture}
\caption{\label{fig:ex:compact}
Computing a maximally compact version of the term $\bllter$ from Ex.~\ref{ex:fix} (right)
by using composition of \termgraph\ semantics $\sgraphsemC{\classltgs}$, collapse $\scoll$, and readback $\sreadback$,
yielding the term $\allter$ (left).} 
\end{figure}
See Fig.~\ref{fig:ex:compact} for an illustration of the application of the
maximal sharing method to the \lambdaletreccalterms~$\allter$ and $\bllter$ from Ex.~\ref{ex:fix}. 

The correctness of these methods hinges on the fact that 
the \termgraph\ translation and the readback satisfy the following properties: 
\begin{enumerate}[label=(P\arabic*)]\setlength{\itemsep}{0.1ex}
\item{} \label{methods:properties:correctness}
      \lambdaletreccalterms~$\allter$ and $\bllter$ have the same infinite unfolding 
      if and only if the \termgraphs\ $\graphsemC{\classltgs}{\allter}$ and
      $\graphsemC{\classltgs}{\bllter}$ are bisimilar.
\item{} \label{methods:properties:closedness}
      %The image of $\sgraphsemC\classltgs$ 
      The class $\classltgs$ of \lambdatgs\ is closed under homomorphism. %functional bisimulation.
\item{} \label{methods:properties:readback}
      The readback $\sreadback$
      is a right inverse of $\sgraphsemC{\classltgs}$ up to isomorphism~$\siso$,
      that is,
      for all \termgraphs\ $\atg\,{\in}\,\classltgs$ it holds: % that
      $\compfuns{(\sgraphsemC{\classltgs}}{\sreadback)}{\atg} \iso \atg$.
      %$\graphsemCzero{\classltgs}{\readbackCzero\classltgs{\atg}} \iso \atg$. 
\end{enumerate}
\emph{Note}: \ref{methods:properties:closedness} and \ref{methods:properties:readback}
 will be established only for a subclass $\classeagltgs$ of $\classltgs$.
Furthermore, practicality of these methods depends on the property:
\begin{enumerate}[label=(P\arabic*)]\setcounter{enumi}{3}
\item{} \label{methods:properties:efficiency}
      Translation $\sgraphsemC{\classltgs}$ and readback $\sreadback$ are efficiently computable. 
\end{enumerate}
% \begin{enumerate}[label=(P\arabic*)]
% \item \label{methods:properties:correctness}
%       terms $\allteri{1}$ and $\allteri{2}$ have the same infinite unfolding 
%       if and only if the \termgraphs\ $\graphsemCzero{\classltgs}{\allteri{1}}$ and
%       $\graphsemCzero{\classltgs}{\allteri{2}}$ are bisimilar;
% \item \label{methods:properties:closedness}
%       the image of $\sgraphsemC\classltgs$ is closed under functional bisimulation;
% \item \label{methods:properties:readback}
%       a readback function $\sreadbackC\classltgs$ that is a right inverse of $\sgraphsemC{\classltgs}$ 
%       up to isomorphism~$\siso$ defined on the image of $\sgraphsemC{\classltgs}$:
%       for every \termgraph\ $G$ in the image of $\sgraphsemC{\classltgs}$ its readback $\readbackCzero\classltgs{\atg}$
%       is a $\lambdaletreccal$-term 
%       such that $\graphsemCzero{\classltgs}{\readbackCzero\classltgs{\atg}} \iso \atg$. 
% \end{enumerate}

%---------------------------------------
\subsection{Overview of the development}
  \label{sec:intro:subsec:overview}
%---------------------------------------

In the Preliminaries (Section~\ref{sec:prelims}) we fix basic notions and notations for first-order \termgraphs.
\lambdaletrecterms\ and their unfolding semantics are defined in Section~\ref{sec:unfsem}.
In Section~\ref{sec:lhotgs} we develop the concept of `\lambdahotg', which gives rise to the class $\classlhotgs$,
and the higher-order \termgraph\ semantics ~$\sgraphsemC{\classlhotgs}$ %in the form of a graph semantics~$\sgraphsemC{\classlhotgs}$.
                                                          for \lambdaletreccalterms.

In Section~\ref{sec:ltgs} 
we develop the concept of first-order `\lambdatg' in the class $\classltgs$,
and define the interpretation~$\slhotgstoltgs$ of \lambdahotgs\ into \lambdatgs\
as a mapping %$\slhotgstoltgs$ 
             from $\classlhotgs$ to $\classltgs$.
This %leads to the definition of 
induces the first-order \termgraph\ semantics 
$\sgraphsemC{\classltgs} \defdby \scompfuns{\slhotgstoltgs}{\sgraphsemC{\classlhotgs}}$,
for which we also provide a direct inductive definition.   

In Section~\ref{sec:readback} we define the readback $\sreadback$ with
the desired property as a function from \lambdatgs\ to \lambdaletrecterms. 
Subsequently in Section~\ref{sec:complexity}
we report on the complexity of the described methods, individually,
and in total for the methods described in Subsection~\ref{sec:intro:subsec:methods}.

% In Section~\ref{sec:applications} we describe possible extensions, and sketch
% the potential for practical applications. In the Conclusion
% (Section~\ref{sec:conclusion}) we consider adaptations and link to an implementation of our methods.

In Section~\ref{sec:implementation} we link to our implementation of the presented methods. 
Finally in Section~\ref{sec:applications} we explain easy modifications, describe possible extensions, 
and sketch potential practical applications.

%----------
\subsection{Applications and scalability }
  \label{sec:intro:subsec:applications}
%----------  

While our contribution is at first a conceptual one, it holds the promise
for a number of practical applications:
\begin{itemize}
\item 
  Increasing the efficiency of the execution of programs by transforming them
  into their maximally shared form at compile-time.
\item 
  Increasing the efficiency of the execution of programs by repeatedly
  compactifying the program at run time.
\item 
  Improving systems for recognising program equivalence.
\item 
  Providing feedback to the programmer, along the lines:
  `This code has identical fragments and can be written more compactly.' 
\end{itemize}
These and a number of other potential applications are discussed in more detail
in Section~\ref{sec:applications}.

The presented methods scale well to larger inputs,
due to the quadratic bound on their runtime complexity
(see Section~\ref{sec:complexity}).
% Our methods scale well, which follows from the upper bounds on the
% complexity presented in Section~\ref{sec:complexity}.

%----------
\subsection{Relationship with other concepts of sharing}
  \label{sec:intro:subsec:related}
%----------  

The maximal sharing method is targeted at increasing `static' sharing:
in the sense that
%in the sense that 
a program is transformed at compile time into a version with a
higher degree of sharing. It is not (at least not a priori) a method for `dynamic' %\emph{dynamic}
sharing, i.e.\ for an evaluator that maintains a certain degree of sharing
at run time, such as graph rewrite mechanisms %that implement 
                                              for fully-lazy
\cite{wads:1971} or optimal evaluation \cite{%levy:1978,
                                             aspe:guer:1998} of the
\lambdacalculus. 
% We do, however, envisage run-time collapsing of the program's
% graph representation integrated with the evaluator, as outlined in Section
% \ref{sec:applications}.
However, we envisage run-time collapsing of the program's
graph %\jan{representation} 
      interpretation integrated with the evaluator (see Section \ref{sec:applications}).

The term `maximal sharing' stems from work on the ATERM library
\cite{%bran:jog:klin:oliv:2000,
      bran:klin:2007}.
It describes a technique for minimising memory usage when representing a set of
terms in a first-order term rewrite system (\TRS). The terms are kept in an
aggregate directed acyclic graph by which their syntax trees are shared as much
as possible. 
Thereby terms are created only if they are entirely new; otherwise they are
referenced by pointers to roots of sub-dags.
%When storing a single first-order term, the data structure used corresponds to
%the application of \CSE, and yields \emph{maximal sharing of subterms}. While
%the graph representation of such a maximal-sharing data structure representing
%a first-order term is analogous to the maximally shared version of the graph
%representing a \lambdaletreccalterm\ in the approach here,
%the methods for obtaining these are very different:
Our use of the expression `maximal sharing' is inspired by that work, but our
results generalise that approach in the following ways:
\begin{itemize}%[label=(\roman*)]
  \item Instead of first-order terms we consider terms 
    in a higher-order language
    with the $\txtletrec$-construct for expressing sharing. 
    % consider a more complex language with a
    % higher-order syntax and the \txtletrec-construct 
    % to express sharing.
  \item %Since sharing expressed by using \txtletrec\ can be cyclic, 
    Since \txtletrec\ typically defines cyclic sharing dependencies, 
    we interpret terms as cyclic graphs instead of just dags. %DAGs.
  \item 
    We are interested in increasing sharing by bisimulation collapse
    instead of by identifying isomorphic sub-dags. 
    % We collapse by functional bisimulation instead of just checking for
    % isomorphism of sub-dags. %sub-DAGs.
\end{itemize}
ATERM only checks for equality of subexpressions. Therefore it only introduces %establishes
horizontal sharing and implements a form of
\emph{common subexpression elimination (\CSE)}
\cite[p.\hspace*{1.5pt}241]{peyt:jone:1987}. Our approach is stronger than
\CSE: while Ex.~\ref{ex:cse} can be handled by \CSE, this is not the case for
Ex.~\ref{ex:fix}. In contrast to CSE, our approach increases %creates
also vertical and twisted sharing
\footnote{For definitions of horizontal, vertical, and twisted sharing we refer to \cite{blom:2001}.}.
(see also \cite{blom:2001}).
%It is however possible to curtail % whittle down
%our approach to \CSE\ by adding additional nodes that fix the term structure.

% Blom introduces \emph{higher-order \termgraphs} \cite{blom:2001}, which are
% extensions of first-order \termgraphs\ by adding a scope function that assigns
% a set of vertices, its \emph{scope}, to every abstraction vertex.

%-----------
\subsection{Contribution of this paper in context}%{Context, and contribution, of this paper}
  \label{sec:intro:subsec:context:our:work}
%-----------
% 
Blom introduces \emph{higher-order \termgraphs} \cite{blom:2001}, which are
extensions of first-order \termgraphs\ by adding a scope function that assigns
a set of vertices, its \emph{scope}, to every abstraction vertex.

As a stepping stone for the methods we develop here,
we use concepts and results that we described in an earlier
paper \cite{grab:roch:2013:a:TERMGRAPH}.
There, for interpreting \lambdaletrecterms, a modification of Blom's higher-order \termgraphs\ 
(the \lambdahotgs of the class $\classlhotgs$)
in which scopes are represented by means of `abstraction prefix functions'.
We also investigated first-order \lambdatgs\ with scope-delimiter vertices (corresponding to the class $\classltgs$ here).
In particular we examined which specific class of first-order
\lambdatgs\ can faithfully represent the higher-order \lambdahotgs\ in such a
way that compactification of the latter can be realised through bisimulation
collapse of the former (this led to the \lambdatgs\ of the class $\classltgs$).

Whereas in the paper \cite{grab:roch:2013:a:TERMGRAPH}
we exclusively focused on the graph formalisms,
         and investigated them in their own right,
here we connect the results obtained there to
the language \lambdaletreccal\ for expressing sharing and
cyclicity. Since the methods presented here are based on the graph formalisms,
and rely on their properties for correctness, 
we recapitulate the concepts and the relevant results in
Sec.~\ref{sec:lhotgs} and \ref{sec:ltgs}.
%These formalisms and their interrelations were investigated in their own right,
%without an explicit link to cyclic \lambdaterm\ expressions.
%Here we connect those results to \lambdaletreccal\ for expressing sharing and
%cyclicity. Since the methods presented here are based on these graph formalisms,
%and rely on their properties for correctness, we recapitulate some of the relevant concepts and
%relevant results from \cite{grab:roch:2013:a:TERMGRAPH} in Sec.~\ref{sec:lhotgs} and \ref{sec:ltgs}.

The translation $\sgraphsemC\classltgs$ of \lambdaletrecterms\ into first-order \termgraphs\ 
was inspired by related representations that use scope delimiters to
indicate end of scopes.
Such representations are generalisations 
of a de~Bruijn index notation for \lambdaterms\ \cite{de-bruijn:72}
in which the de~Bruijn indexes are numerals of the form
$\nlvarsucc{(\ldots(\nlvarsucc{(\snlvar)})\ldots)}$. 
In the generalised form, due to Patterson and Bird~\cite{bird:patt:1999},
the symbol $\snlvarsucc$ can occur anywhere 
between a variable occurrence and its binding abstraction.
The idea to view $\snlvarsucc$ as a scope delimiter was employed by %is due to
Hendriks and van~Oostrom, who defined an end-of-scope symbol $\adbmal$ \cite{hend:oost:2003}.
This approach is also used in the translation of pure \lambdaterms\ (without
\txtletrec) into Lambdascope-graphs (interaction nets) on which van Oostrom
defines an optimal evaluator for the \lambda-calculus
\cite{oost:looi:zwit:2004}.

% In
% \cite{grab:roch:2012,grab:roch:2013:c:RTA}
% we use a similar technique to address a different problem: characterising the
% class of \inflambdaterms\ that can be expressed finitely in
% \txtlambdaletreccal\ and \txtlambdamucal. We used a higher-order rewrite system
% that is closely related to such representations of \lambdaterms\ in order to
% precisely characterise those \inflambdaterms\ that are expressed by, i.e.\
% arise as infinite unfoldings of, terms in \txtlambdaletreccal\ and
% \txtlambdamucal.

We have also used these first-order representations of \lambdaterms\ 
with scope delimiters for studying the limits of an optimising program transformation 
that, for a given \lambdaletrecterm, contracts directly visible redexes, 
and, whenever possible, also contracts such redexes that are concealed by 
recursion \cite{roch:grab:2011:TERMGRAPH}. 
The result of the optimisation should again be a \lambdaletrecterm. 
Since this program transformation can best be defined, 
for a given \lambdaletrecterm~$\allter$, on the infinite \lambdaterm~$\ater$ 
that is the unfolding semantics (see Section~\ref{sec:unfsem}) of $\allter$, 
it is crucial to know when the result of contracting a development of redexes in $\ater$ 
that corresponds to a visible or a concealed redex in $\allter$ can again be written as a \lambdaletrecterms.
 
This suggested the question: 
how can those infinite \lambdaterms\ be characterised that are expressible by 
\lambdaletrecterms\ in the sense that they arise as the unfolding semantics of  
a \lambdaletrecterm?
We answered this question for \lambdaletrecterms\ in the report \cite{grab:roch:2012}, 
and, obtaining the same answer,
for \lambdamuterms\ in the article \cite{grab:roch:2013:c:RTA} with accompanying report \cite{grab:roch:2013:d:muexpressibility:report}.
%For this purpose
We defined a rewrite system that decomposes \lambdaterms\
by steps that `observe' \lambdaabstractions, applications, variable occurrences, and end of scopes. 
We showed that infinite \lambdaterms\ that are the unfolding semantics of \lambdaletrecterms\ or of \lambdamuterms\
are precisely those that have only finitely many `generated subterms', that is,
reducts in the decomposition rewrite system.

% Our work together:
% \begin{itemize}
%   \item Avoiding Repetitive Evaluation Patterns in the Lambda Calculus with Letrec \cite{roch:grab:2011:TERMGRAPH}
%   \item Expressibility in $\lambdaletreccal$: arXiv-report \cite{grab:roch:2012} 
%   \item Expressibility in $\lambdamucal$: \cite{grab:roch:2013:c:RTA}, accompanying arXiv-report \cite{grab:roch:2013:d:muexpressibility:report}
%   \item \Termgraph\ representations for cyclic lambda terms: \cite{grab:roch:2013:a:TERMGRAPH} extended arxiv-report: \cite{grab:roch:2013:b:TGRCLT:arxiv}
%   \item Confluent Let-Floating \cite{grab:roch:2013:e:IWC}
%   \item Confluent Unfolding in $\lambdaletreccal$: \cite{roch:grab:2013:IWC}
% \end{itemize} 

%\begin{remark}
%  While the translation introduced above
%  is tailormade for producing graphs in the format of \lambdahotgs,
%  we stress that it is just a variation suited for our purposes
%  of the natural idea to represent \lambdaletreccalterms\ by cyclic (term) graphs.
%  As such it bears much similarity with translations of \lambdaterms\ into graphs
%  known in the literature:
%  and a translation of finite \lambdaterms\ into interaction nets 
%  for Lambdascope implementation \cite{oost:looi:zwit:2004} of optimal reduction in the \lambdacalculus.  
%\end{remark}

% Schmidt-Schauss;
%    - Deterministische Prozessgraphen: Bisimilarität fast linear.
%    - Alternativer Ansatz: n log n

%Bahr?

%----------------------
\section{Preliminaries}
  \label{sec:prelims}
%----------------------

By $\nats$ we denote the natural numbers including zero.
For words $w$ over an alphabet $A$, the length of $w$ is denoted by $\length{w}$.
% For a function $\safun \funin A \to B $,
% we indicate the domain of $\safun$ by $\dom{\safun}$, and the image by $\image{\safun}$;
% for $A_0\subseteq A$ we denote by $\srestrictfunto{\safun}{A_0}$ the restriction of $\safun$ to $A_0$. 

Let $\asig$ be a \TRS\nb-signature \cite{terese:2003} with arity function $\sarity \funin \asig \to \nats$.
A \emph{\termgraph\ over $\asig$} (or a \emph{$\asig$\nb-\termdashgraph})
is a tuple $\tuple{\verts,\svlab,\svargs,\sroot}$ 
where:
% \begin{itemize}
%   \item[--] $\verts$ is a set of \emph{vertices},
%   \item[--] $\svlab \funin \verts \to \asig$ the \emph{(vertex) label function},
%   \item[--] $\svargs \funin \verts \to \verts^*$ the \emph{argument function} 
%      that maps every vertex $\avert$ to the word $\vargs{\avert}$ consisting of the $\arity{\vlab{\avert}}$ successor vertices of $\avert$
%     (hence it holds $\length{\vargs{\avert}} = \arity{\vlab{\avert}}$),
%   \item[--] $\sroot$, the \emph{root} is a vertex in $\verts$. 
% \end{itemize}
$\verts$ is a set of \emph{vertices},
$\svlab \funin \verts \to \asig$ the \emph{(vertex) label function},
$\svargs \funin \verts \to \verts^*$ the \emph{argument function} 
  that maps every vertex $\avert$ to the word $\vargs{\avert}$ consisting of the $\arity{\vlab{\avert}}$ successor vertices of $\avert$
  (hence $\length{\vargs{\avert}} = \arity{\vlab{\avert}}$),
and $\sroot$, the \emph{root}, is a vertex in $\verts$.
\Termgraphs\ may have infinitely many vertices.

Let $\atg$ be a \termgraph\ over signature $\asig$. 
As useful notation for picking out an arbitrary vertex, or the $i$-th vertex, 
from among the ordered successors of a vertex $\avert$ in $\atg$,
 % described by the word $\vargs{v}$ of vertices,
we define for each $i\in\nats$ the indexed edge relation \inMath{{\stgsucci{i}} \subseteq {\verts\times\verts}},
and additionally the (not indexed) edge relation ${\stgsucc} \subseteq \verts\times\verts$,
by stipulating for all $\bvert,\bvertacc\in\verts$: 
\begin{center}
%\begin{align*}
$
\begin{aligned}
  \bvert \tgsucci{i} \bvertacc
    \,\funin&\Leftrightarrow\,
  \existsst{\bverti{0},\ldots,\bverti{n}\in\verts\!}
           {\vargs{\bvert} = \bverti{0}\cdots\bverti{n}
              \logand
            \bvertacc = \bverti{i}}
  \\
  \bvert \tgsucc \bvertacc 
    \,\funin& \Leftrightarrow\,
  \existsst{i\in\nats}{\, \bvert \stgsucci{i} \bvertacc} 
\end{aligned}
$
%\end{align*}
\end{center}
% To indicate the label at the source of an edge, we write 
% $\bvert \tgsuccis{i}{f} \bvertacc$
% if 
% $\bvert \tgsucci{i} \bvertacc 
%    \logand
%  \vlab{\bvert} = f$
% holds for $\bvert,\bvertacc\in\verts$, $i\in\nats$, $f\in\asig$
% We write 
% $\bvert \tgsuccis{i}{f} \bvertacc$
% if 
% $\bvert \tgsucci{i} \bvertacc 
%    \logand
%  \vlab{\bvert} = f$
% holds for $\bvert,\bvertacc\in\verts$, $i\in\nats$, $f\in\asig$,
% to indicate the label at the source of an edge.
%
A \emph{path} in $\atg$ 
% is a tuple 
% $\tuple{\bverti{0},k_1,\bverti{1},k_2,\ldots,k_{n},\bverti{n}}$
% where $\bverti{0},\bverti{1},\ldots,\bverti{n}\in\verts$ and $n,k_1,k_2,\ldots,k_{n}\in\nats$
is described by 
$\bverti{0} \tgsucci{k_1} \bverti{1} \tgsucci{k_2} {} \cdots {} \tgsucci{k_{n}} \bverti{n}$,
where $\bverti{0},\bverti{1},\ldots,\bverti{n}\in\verts$ and $n,k_1,k_2,\ldots,k_{n}\in\nats$.
% Paths will usually be denoted in the latter form, using indexed edge relations. 
An \emph{access path} of a vertex $\bvert$ of $\atg$ is
a path that starts at the root of $\atg$, ends in $\bvert$, and does not visit any vertex twice. 
Access paths need not be unique.
A \termgraph\ is \emph{root-connected}
if every vertex has an access path. 
% since every vertex in a \termgraph\ is reachable from the root, 
% there is a path $\apath$ from $\sroot$ to $\bvert$;
% then an access path of $\bvert$ can be obtained from $\apath$ 
% by repeatedly cutting out \emph{cycles}, that is,
% parts of the path between different visits to one and the same vertex. 
% We denote by $\classtgsover{\asig}$ 
%   %and by $\classtgsminover{\asig}$
% the class of all root-connected \termgraphs\ over $\asig$.
% %, and the class of all \termgraphs\   over $\asig$, respectively. 
% In this notation we have already anticipated the meaning in which we will use the expression `\termgraph\ ' here.

\emph{Note:}\label{note:root-connected}
  By a `\termgraph' we will, from now on, always mean a root-connected \termgraph.

Let $\atgi{1} = \tuple{\vertsi{1},\svlabi{1},\svargsi{1},\srooti{1}}$,
    $\atgi{2} = \tuple{\vertsi{2},\svlabi{2},\svargsi{2},\srooti{2}}$
be \termgraphs\ over signature $\asig$, in the sequel.

A \emph{bisimulation} between $\atgi{1}$ and $\atgi{2}$ is
a relation $\abisim \subseteq \vertsi{1}\times\vertsi{2}$ such that
the following conditions hold, for all $\pair{\bvert}{\bvertacc}\in\abisim$:
\begin{equation}\label{eq:def:bisim}
\left.
\begin{array}{cr}
  \pair{\srooti{1}}{\srooti{2}}  \in \abisim   
  &   (\text{roots})
  \\[0.3ex]
  \vlabi{1}{\bvert} = \vlabi{2}{\bvertacc} 
  &    (\text{labels})  
  \\[0.3ex]
  \pair{\vargsi{1}{\bvert}}{\vargsi{2}{\bvertacc}} \in \abisim^*    
  & \hspace*{5ex} (\text{arguments})
\end{array}
\quad\right\}
\end{equation}
where the extension $\abisim^* \subseteq \wordsover{\vertsi{1}}\times\wordsover{\vertsi{2}}$ 
of $\abisim$ to a relation between words over $\vertsi{1}$ and words over $\vertsi{2}$ is defined as:
\begin{equation*}
  \abisim^* \!\defdby\! 
         \parbox[t]{205pt}{\{$\pair{\bverti{1}\cdots\bverti{k}}{\bvertacci{1}\cdots\bvertacci{k}}
                                      \; \!\mid\!$
                                     \\[0.5ex]\hspace*{2.5em} 
                                     $
                                     \bverti{1},\ldots,\bverti{k}\in\vertsi{1},
                                     \bvertacci{1},\ldots,\bvertacci{k}\in\vertsi{2}$,
                                     \\[0.5ex]
                                     \hspace*{\fill}%
                                     $
                                     \text{for $k\in\nats$ such that~}
                                     \pair{\bverti{i}}{\bvertacci{i}}\in\abisim
                                     \text{ for all $1\le i\le k$}\}$.}
\end{equation*}                                     
We write
$\atgi{1} \bisim \atgi{2}$
if there is a bisimulation between $\atgi{1}$ and $\atgi{2}$\rule{0pt}{2.2ex},
and we say, in this case, that $\atgi{1}$ and $\atgi{2}$ are \emph{bisimilar}.
Bisimilarity $\sbisim$ is an equivalence relation on \termgraphs.

A \emph{functional bisimulation} from $\atgi{1}$ to $\atgi{2}$ is
a bisimulation that is the graph of a function from $\vertsi{1}$ to $\vertsi{2}$. 
An alternative characterisation of this concept is that of 
\emph{homomorphism} from $\atgi{1}$ to $\atgi{2}$:
a morphism from the structure $\atgi{1}$ to the structure $\atgi{2}$, 
that is, a function 
$ \sahom \funin \vertsi{1} \to \vertsi{2}$ 
such that, for all $\avert\in\vertsi{1}$ it holds:
\begin{equation}\label{eq:def:homom}
\left.
\begin{array}{cr}
  \ahom{\srooti{1}} = \srooti{2}
  &   (\text{roots})
  \\[0.3ex]
  \vlabi{1}{\avert} = \vlabi{2}{\ahom{\avert}} 
  &    (\text{labels})  
  \\[0.3ex]
  \funap{\sahomext}{\vargsi{1}{\avert}}
    = 
  \vargsi{2}{\ahom{\avert}}
  & \hspace*{5ex} (\text{arguments})
\end{array}
\quad\right\}
\end{equation}
% 
% \begin{equation}\label{eq:def:homom}
% \left.\qquad
% \begin{aligned}
%   \ahom{\srooti{1}} 
%     & = \srooti{2} 
%   & & & & &    (\text{roots})
%   \\[-0.75ex]
%     \vlabi{1}{\avert}
%     & = \vlabi{2}{\ahom{\avert}} 
%   & & & & &    (\text{labels})  
%   \\[-0.75ex]
%   \funap{\sahomext}{\vargsi{1}{\avert}}
%     & = \vargsi{2}{\ahom{\avert}}
%   & & & & &    (\text{arguments})
% \end{aligned}
% \quad\right\}
% \end{equation}
where $\sahomext$ is the homomorphic extension 
$\sahomext \funin \vertsi{1}^* \to \vertsi{2}^*$, 
$ \averti{1}\scdots\averti{n}   \mapsto \ahom{\averti{1}}\scdots\ahom{\averti{n}} $
of $\sahom$ to words over $\vertsi{1}$.
We write $\atgi{1} \funbisim \atgi{2}$
if there is a functional bisimulation (a homomorphism) from $\atgi{1}$ to $\atgi{2}$.
An \emph{isomorphism} between $\atgi{1}$ and $\atgi{2}$ is a bijective homomorphism 
$\saiso \funin \vertsi{1} \to \vertsi{2}$ from $\atgi{1}$ to $\atgi{2}$.
% (it follows from the homomorphism conditions \eqref{eq:def:homom}
%  that then also the inverse function $\saiso^{-1} \funin \vertsi{1} \to \vertsi{2}$ is a homomorphism).
If there is an isomorphism between $\atgi{1}$ and $\atgi{2}$,
we write $\atgi{1} \iso \atgi{2}$, and say that $\atgi{1}$ and $\atgi{2}$ are \emph{isomorph}. 
% The relation $\siso$ is an equivalence relation on $\classtgsover{\asig}$.
% For every \termgraph\ $\atg$ over $\asig$ we denote the bisimulation equivalence class 
% $\eqcl{\atg}{\siso}$ by (using the boldface letter) %$\atgiso$. 

Let $\atg = \tuple{\verts,\svlab,\svargs,\sroot}$ be a term graph.
A \emph{bisimulation collapse} of $\atg$ 
is a maximal element in the class $\descsetexp{\atgacc}{\atg \funbisim \atgacc}$ up to $\siso$,
that is,
a term graph $\atgacci{0}$ with \mbox{$\atg \funbisim \atgacci{0}$}
such that if \mbox{$\atgacci{0} \funbisim \atgdacci{0}$} for some term graph $\atgdacci{0}$,
then \mbox{$\atgdacci{0} \iso \atgacci{0}$}. 
The \emph{canonical bisimulation collapse}~$\coll{\atg}$ of $\atg$
is defined as the root-connected part of 
the `factor term graph' $\factor{\atg}{\abisim}$ of $\atg$
with respect to the largest bisimulation~$\abisim$ 
between $\atg$ and $\atg$ 
(the largest `self-bisimulation' on $\atg$), which is an equivalence relation on $\verts$.
% is obtained from the largest bisimulation~$\abisim$ 
% between $\atg$ and $\atg$ (the largest `self-bisimulation', an equivalence relation, on $\atg$) 
% by constructing the `factor term graph' $\factor{\atg}{\abisim}$ of $\atg$, 
% and then taking its root-connected part. 
The \emph{factor term graph}~$\factor{\atg}{\saequivrel}$ of $\atg$ with respect to an equivalence relation $\saequivrel$ on $\verts$ is defined as
$\factor{\atg}{\saequivrel} \defdby \tuple{ \factor{\verts}{\saequivrel}, \factor{\svlab}{\saequivrel}, \factor{\svargs}{\saequivrel}, \eqclofwrt{\sroot}{\saequivrel} }$
where $\factor{\verts}{\saequivrel}$ is the set of $\saequivrel$\nb-equi\-va\-lence classes of vertices in $\verts$,
$\eqclofwrt{\sroot}{\saequivrel}$ is the $\saequivrel$\nb-equi\-va\-lence class of $\sroot$,
and $\factor{\svlab}{\saequivrel}$ and $\factor{\svargs}{\saequivrel}$ are the mappings 
on $\factor{\verts}{\saequivrel}$
that are induced by $\svlab$ and $\svargs$, respectively.
Every two bisimulation collapses of $\atg$ are isomorphic.
This justifies the common abbreviation of saying that 
`the bisimulation collapse' of $\atg$ is unique up to isomorphism.

% A \emph{bisimulation collapse} $\coll{\atg}$ of a \termgraph\ $\atg$ over $\asig$ 
% is a maximal element of $\sfunbisim$ up to $\siso$
% in the set $\descsetexp{\atgacc\;\text{\termgraph}}{\atg \funbisim \atgacc}$.
% It is defined as the root-connected part of
% the factor \termgraph~$\coll{\atg} \defdby \factor{\atg}{\abisim}$ of $\atg$
% for the largest self-bisimulation on $\atg$. % (a bisimulation between $\atg$ and $\atg$).

%-------------------------------------------------------------
\section{Unfolding Semantics of \protect\lambdaletreccalterms}
  \label{sec:unfsem}
%-------------------------------------------------------------

Informally, we regard \emph{\lambdaletreccalterms} as being defined defined by the following grammar:
\begin{equation*}
\begin{array}{lllll}
  \allter    \sep{::=} \labs{\avar}{\allter}           & (\textit{abstraction}) \\
  \sep{ | } \lapp{\allter}{\allter}         & (\textit{application}) \\ 
  \sep{ | } \avar                           & (\textit{variable})    \\
  \sep{ | } \letin{\abindgroup}{\allter}   & (\textit{letrec})      \\[0.75ex]
  \mathit{\abindgroup} \sep{::=} \tfundef{\arecvari{1}=\allter}{\dots}{\arecvari{n}=\allter}   & (\textit{equations})   \\
  \sep{   } (\arecvari{1},\dots,\arecvari{n} \in \llrecvars~\text{all distinct})
\end{array}
\end{equation*}
Formally, we consider \lambdaletreccalterms\ to be defined correspondingly as
higher-order terms in the formalism of Combinatory Reduction Systems (CRS) \cite{terese:2003}.
CRSs are a higher-order term rewriting framework tailor-made for formalising and
manipulating expressions in higher-order languages (i.e.\ languages with
binding constructs like \lambda-abstractions and \txtlet-bindings). They provide a
sound basis for defining our language and for reasoning with \letrecexpression{s}.  
By formalising a system of unfolding rules 
 % our rewriting system 
as a CRS we conveniently externalise issues
like name capturing and $\alpha$\nb-re\-na\-ming, 
which otherwise would have to be handled by 
% which we would otherwise have to
% handle by using 
a calculus of explicit substitution. Also, we can lean on the
rewriting theory of CRSs for the proofs. 

As \CRS\nb-signature we use
$\siglambdaletreccalCRS = \siglambdacalCRS \cup \descsetexp{\sletCRS{n},\srecinCRS{n}}{n\in\nats}$
with $\siglambdacalCRS = \setexp{ \slabsCRS, \slappCRS } $,
% where $\slabsCRS$ is unary and represents \lambdaabstraction,
% $\slappCRS$ is binary and represents application, 
where the unary symbol $\slabsCRS$ and the binary symbol $\slappCRS$ represent
\lambdaabstraction\ and application, respectively;
the symbols $\sletCRS{n}$ of arity one, and $\srecinCRS{n}$ of arity $n+1$ %, where $n\in\nats$,
together formalise \letexpressions\ with $n$ bindings.
By $\termsize{\allter}$ we denote the size (number of symbols) of a \lambdaletrecterm~$\allter$.
By $\Ter{\txtlambdaletreccal}$ we denote the set of CRS\nb-terms over $\siglambdaletreccalCRS$.
For readability, we will rely on the informal first-order notation.

\emph{Infinite \lambdaterms} are formalised as \iCRS-terms 
(terms in an infinitary CRS \cite{kete:simo:2011}) over $\siglambdacalCRS$, forming the set $\Ter{\txtinflambdacal}$.
Informally, infinite \lambdaterms\ are generated co-inductively 
by the alternatives (\textit{abstraction}), (\textit{application}),
and (\textit{variable}) of the grammar above. 

In order to formally define the infinite unfolding of \lambdaletrecterms\ we utilise
a CRS whose rewrite rules formalise unfolding steps \cite{grab:roch:2012}. 
Every \lambdaletrecterm~$\allter$ that represents an infinite \lambdaterm~$\aiter$ 
can be rewritten by a typically infinite rewrite sequence 
that converges to $\aiter$ in the limit. 
% The intention is
% that by means of the CRS \lambdaletreccal-terms converge to
% \inflambdacal-terms, 
% This yields an unfolding mapping
% $\unfsem\cdot \funin \Ter\lambdaletreccal \rightharpoonup \Ter\inflambdacal$.
However, not every \lambdaletreccalterm\ represents an \inflambdaterm. For
instance the \lambdaletreccalterm\ 
$\cllter = \labs{\avar}{\letin{\arecvar = \arecvar}{\lapp{\arecvar}{\avar}}}$ 
with a meaningless
%\jan{not weakly guarded \cite{Endrullis2010765}}
\letbinding\ for $\arecvar$ does not unfold to a \inflambdaterm.
Therefore we introduce a constant symbol $\bh$,
called `black hole', for expressing meaningless bindings,
in order to define the unfolding operation as a total function.
The unfolding semantics of $\cllter$ will then be $\labs{\avar}{\lapp{\bh}{\avar}}$.
So we extend the signature $\siglambdacalCRS$ to $\siglambdabhcalCRS$ including $\blackhole$,
and denote the set of infinite \lambdaterms\ over $\siglambdacalCRS$ by $\Ter{\txtinflambdabhcal}$.
Similarly, the rules below are defined for terms in $\Ter{\txtlambdaletrecbhcal}$
based on signature $\siglambdaletrecbhcalCRS$ that extends $\siglambdaletreccalCRS$ by the blackhole constant.

% So this unfolding operation only defines a partial function
% $\unfsem\cdot \funin \Ter\lambdaletreccal \rightharpoonup \Ter\inflambdacal$.
% For obtaining a total unfolding function, we include a constant symbol $\bh$,
% called `black hole'. By means of this function
% $\sunfsem \funin \Ter\lambdaletreccal \to \Ter\infbhlambdacal$
% the \lambdaletreccal-term $\cllter$ has the unfolding semantics $\unfsembh{\cllter} =
% \labs{\avar}{\lapp{\bh}{\avar}}\,$, whereby $\Ter\infbhlambdacal$ denotes the
% set of iCRS-terms induced by a grammar extended with an appropriate alternative
% for black holes.

% Since the rewriting system for unfolding \lambdaletreccal-terms produces these
% black holes it operates on \lambdaletreccal-terms with black holes, denoted as
% the set $\Ter\lambdabhletreccal$, induced by the above grammar if extended
% appropriately.

\begin{definition}[unfolding CRS for \lambdaletreccal-terms]
The rules
\begin{align*}
  & (\unfrulelapp)            && \hspace*{-1.5ex}
    \letin{\abindgroup}{\lapp{\allteri{0}}{\allteri{1}}}
      \;\red\; 
    \lapp{(\letin{\abindgroup}{\allteri{0}})}{(\letin{\abindgroup}{\allteri{1}})}
    \displaybreak[0]\\%[1ex]
  & (\unfrulelabs)            && \hspace*{-1.5ex}
    \letin{\abindgroup}{\labs{\avar}{\allteri{0}}}
      \;\red\; 
    \labs{\avar}{\letin{\abindgroup}{\allteri{0}}}
    \displaybreak[0]\\%[1ex]
  & (\unfruleletin)       && \hspace*{-1.5ex} %(\text{let\_in}) &   
    \letin{\abindgroupi{0}}{\letin{\abindgroupi{1}}{\allter}}
      \;\red\; 
     \letin{\abindgroupi{0},\abindgroupi{1}}{\allter} 
    \displaybreak[0]\\%[1ex]
  & (\unfruleletrec)   && \hspace*{-1.5ex} %(\text{let-rec}) &
    \letin{\tfundef{\abindgroupi{1}}{\arecvar = \allter}{\abindgroupi{2}}}{\arecvar}
      \;\red\; 
    \letin{\tfundef{\abindgroupi{1}}{\arecvar = \allter}{\abindgroupi{2}}}{\allter} 
    \displaybreak[0]\\%[1ex]
  & (\unfrulegarbage)          && \hspace*{-1.5ex}  %(\text{garb-coll}) \hspace*{-1.5ex}  & 
      \letin{\tfundef{\arecvari{1} = \allteri{1}}{\ldots}{\arecvari{n} = \allteri{n}}}{\bllter}
        \;\red\; 
      \bllter
      \\[-0.5ex]
      &&& \hspace*{12ex}\text{(if $\arecvari{1},\ldots,\arecvari{n}$ do not occur in $\bllter$)}
    \displaybreak[0]\\%[1ex]
  & (\unfruletighten)     && \hspace*{-1.5ex}
      \letin{\tfundef{\abindgroupi{1}}{\arecvar = \brecvar}{\abindgroupi{2}}}{\allter}
      \\[-0.5ex]
      &&& \hspace*{1.5ex}
        \red\;
      \letin{\bfundef{\subst{\abindgroupi{1}}{\arecvar}{\brecvar}}{\subst{\abindgroupi{2}}{\arecvar}{\brecvar}}}{\subst{\allter}{\arecvar}{\brecvar}} 
      \\[-0.5ex]
      &&& \hspace*{2ex}\text{(where $\brecvar$ with $\brecvar \neq \arecvar$ a recursion variable in $\abindgroupi{1}$ or $\abindgroupi{2}$)}
    \displaybreak[0]\\%[1ex]  
  & (\unfruleblackhole)        && \hspace*{-1.5ex}
%     \letin{\tfundef{\abindgroupi{1}}{\arecvar = \arecvar}{\abindgroupi{2}}}{\arecvar}
%       \;\red\;
%     \blackhole 
    \letin{\tfundef{\abindgroupi{1}}{\arecvar = \arecvar}{\abindgroupi{2}}}{\allter}
      \;\red\;
    \letin{\tfundef{\abindgroupi{1}}{\arecvar = \blackhole}{\abindgroupi{2}}}{\allter}
\end{align*}
define, in informal notation, the \emph{unfolding CRS} for \lambdaletreccalterms\ with rewrite relation~$\sunfoldred$. 
See Fig.~\ref{fig:unfrules:CRSnotation} for the formulation of these rules in explicit \CRSnotation. 
\begin{figure*}
\begin{align*}
 &
 (\unfrulelapp) && \hspace*{-1.5ex}
       \letrecinCRS{n}{\vec{\arecvar}}{\ametavari{1}{\vec{\arecvar}}, \ldots, \ametavari{n}{\vec{\arecvar}}, \lappCRS{\cmetavari{0}{\vec{\arecvar}}}{\cmetavari{1}{\vec{\arecvar}}}}
       \\%[0.75ex] %\hspace*{-1.5ex} 
       &&&{} \hspace*{3ex} \red\;
       \slappCRS{((\letrecinCRS{n}{\vec{\arecvar}}{%\vec{\ametavari{1}{\arecvar}}, 
                                                                              \ldots,\ametavari{n}{\vec{\arecvar}},\cmetavari{0}{\vec{\arecvar}}}),
                 (\letrecinCRS{n}{\vec{\arecvar}}{%\ametavari{1}{\vec{\arecvar}}, 
                                                                                \ldots, \ametavari{n}{\vec{\arecvar}},\cmetavari{1}{\vec{\arecvar}}}))}                        
%        \\[-0.5ex] &&& \phantom{\hspace*{-0.3ex}\red\;\,\slappCRS(}                                                                       
%           {\;(\letrecinCRS{n}{\vec{\arecvar}}{%\ametavari{1}{\vec{\arecvar}}, 
%                                                                                 \ldots, \ametavari{n}{\vec{\arecvar}},\cmetavari{1}{\vec{\arecvar}}}))}
   \displaybreak[0]\\[0.75ex]
  &
  \scalebox{1}{(\unfrulelabs)} && \hspace*{-1.5ex}
        \scalebox{1}{$
        \letrecinCRS{n}{\vec{\arecvar}}{\ametavari{1}{\vec{\arecvar}}, \ldots, \ametavari{n}{\vec{\arecvar}}, \labsCRS{\avar}{\cmetavar{\vec{\arecvar},\avar}}}
                        $}
        \\
        &&& \hspace*{3ex} 
        {} \scalebox{1}{$\sred$}\;
        \scalebox{1}{$
        \labsCRS{\avar}{\letrecinCRS{n}{\vec{\arecvar}}{\ametavari{1}{\vec{\arecvar}}, \ldots, \ametavari{n}{\vec{\arecvar}}, \cmetavar{\vec{\arecvar},\avar}}}
                        $}
    \displaybreak[0]\\[0.75ex]
  &  
  \scalebox{1}{(\unfruleletin)} && \hspace*{-1.5ex}
        \scalebox{1}{$
        \letrecinCRS{n}{\vec{\arecvar}}{\funap{\vec{\sametavar}}{\vec{\arecvar}}, 
                                        \letrecinCRS{m}{\vec{\brecvar}}{\funap{\vec{\sbmetavar}}{\vec{\arecvar},\vec{\brecvar}}}, 
                                                                        \cmetavar{\vec{\arecvar},\vec{\brecvar}}}      
                       $}                                            
%         \letrecinCRS{n}{\vec{\arecvar}}{\ametavari{1}{\vec{\arecvar}}, \ldots, \ametavari{n}{\vec{\arecvar}}, 
%                                       \letrecinCRS{m}{\vec{\brecvar}}{\bmetavari{1}{\vec{\arecvar},\vec{\brecvar}}, \ldots, \bmetavari{m}{\vec{\arecvar},\vec{\brecvar}}
%                                                                     \cmetavar{\vec{\arecvar},\vec{\brecvar}}}}
        \\
        &&& \hspace*{3ex} 
        {} \scalebox{1}{$\sred$}\;
        \scalebox{1}{$
        \letrecinCRS{n+m}{\vec{\arecvar}\vec{\brecvar}}{\funap{\vec{\sametavar}}{\vec{\arecvar}}, 
                                                        \funap{\vec{\sbmetavar}}{\vec{\arecvar},\vec{\brecvar}}, 
                                                        \cmetavar{\vec{\arecvar},\vec{\brecvar}}}
                        $}
%           \vec{\arecvar}\vec{\brecvar}}{\ametavari{1}{\vec{\arecvar}}, \ldots, \ametavari{n}{\vec{\arecvar}},
%                                                       \bmetavari{1}{\vec{\arecvar},\vec{\brecvar}}, \ldots, \bmetavari{m}{\vec{\arecvar},\vec{\brecvar}},
%                                                       \cmetavar{\vec{\arecvar},\vec{\brecvar}}}
   \displaybreak[0]\\[0.75ex]
 &  
 (\unfruleletrec) && \hspace*{-1.5ex}
     \letrecinCRS{n}{\vec{\arecvar}}{\ametavari{1}{\vec{\arecvar}}, \ldots, \ametavari{n}{\vec{\arecvar}}, \arecvari{i}}
     %& \hspace*{-2ex} 
     %\\
     %&&& {} 
     \;\red\;
     \letrecinCRS{n}{\vec{\arecvar}}{\ametavari{1}{\vec{\arecvar}}, \ldots, \ametavari{n}{\vec{\arecvar}}, \ametavari{i}{\vec{\arecvar}}}   
   \displaybreak[0]\\[0.75ex]
 &  
 (\unfrulegarbage)          && \hspace*{-1.5ex}  %(\text{garb-coll}) \hspace*{-1.5ex}  & 
   \letrecinCRS{n}{\vec{\arecvar}}{\ametavari{1}{\vec{\arecvar}}, \ldots, \ametavari{n}{\vec{\arecvar}},\scmetavar}
     \;\red\; 
   \scmetavar 
     \displaybreak[0]\\[0.75ex]
 &    
 (\unfruletighten) && \hspace*{-1.5ex}
     \letrecinCRS{n}{\vec{\arecvar}}{\ametavari{1}{\vec{\arecvar}}, \ldots, \ametavari{i-1}{\vec{\arecvar}},
                                  \arecvari{j},                             
                                  \ametavari{i+1}{\vec{\arecvar}}, \ldots, \ametavari{n}{\vec{\arecvar}},
                                  \cmetavar{\vec{\arecvar}}} 
     \\
     &&& \hspace*{3ex} \red\; 
     \letrecinCRS{n-1}{\vec{\brecvar}}{\ametavari{1}{\vec{\brecvar}'}, \ldots, \ametavari{i-1}{\vec{\brecvar}'},
                                  \ametavari{i+1}{\vec{\brecvar}'}, \ldots, \ametavari{n}{\vec{\brecvar}'},
                                  \cmetavar{\vec{\brecvar}'}} 
     \\
     &&& \phantom{\;\hspace*{3ex}\red\;\;} \hspace*{3ex}
       \text{where: }
       \parbox[t]{300pt}{$i,j\in\setexp{1,\ldots,n}$ with $i\neq j$, and
                         $\vec{\brecvar}' = \tuple{\brecvari{1},\ldots,\brecvari{i-1},\brecvari{j},\brecvari{i+1},\ldots,\brecvari{n}}$}   
   \displaybreak[0]\\[0.75ex]
 &  
 (\unfruleblackhole) && \hspace*{-1.5ex}
      \letrecinCRS{n}{\vec{\arecvar}}{\ametavari{1}{\vec{\arecvar}}, \ldots, \ametavari{i-1}{\vec{\arecvar}},
                                      \arecvari{i},                             
                                      \ametavari{i+1}{\vec{\arecvar}}, \ldots, \ametavari{n}{\vec{\arecvar}},
                                      \arecvari{i}}
      \\
      &&& \hspace*{3ex}                                
      \red\;
      \letrecinCRS{n}{\vec{\arecvar}}{\ametavari{1}{\vec{\arecvar}}, \ldots, \ametavari{i-1}{\vec{\arecvar}},
                                      \blackhole,                             
                                      \ametavari{i+1}{\vec{\arecvar}}, \ldots, \ametavari{n}{\vec{\arecvar}},
                                      \arecvari{i}}
\end{align*}
  \caption{\label{fig:unfrules:CRSnotation}
           The rules for unfolding \lambdaletrecterms\ in explicit \CRSnotation.}
\end{figure*}
\end{definition}

\begin{example}[Unfolding derivation of $\allter$ from Ex.~\ref{ex:fix}]~\\
$\labs{f}{\letin{r=\lapp{f}{r}}{r}} ~\sred_{\sunfold}^{(\unfruleletrec)}~
\labs{f}{\letin{r=\lapp{f}{r}}{\lapp{f}{r}}} ~\sred_{\sunfold}^{(\unfrulelapp)}$\\
$\labs{f}{\lapp{(\letin{r=\lapp{f}{r}}{f})}{(\letin{r=\lapp{f}{r}}{r})}} ~\sred_{\sunfold}^{(\unfrulegarbage)}$\\
$\labs{f}{\lapp{f}{(\letin{r=\lapp{f}{r}}{r}})} ~\sred_{\sunfold}^{(\unfruleletrec)}~ \dots$
\end{example}

We say that a \lambdaletreccalterm~$\allter$ \emph{unfolds to} an \inflambdabhterm~$\aiter$,
or that $\allter$ \emph{expresses} $\aiter$, 
if there is a (typically) infinite $\sunfoldred$-rewrite sequence %of length $\le\omega$ 
                                                                  from $\allter$ that
converges to $\aiter$, symbolically $\allter \unfoldinfred \aiter$. 
Note that any such rewrite sequence is strongly convergent 
(the depth of the contracted redexes tends to infinity),
because the resulting term does not contain any \letexpressions. % $\slet$ symbols.

% \begin{lemma}
%   $\unfoldred$ is confluent.
% \end{lemma}

\begin{lemma}
  Every \lambdaletrecterm\ unfolds to precisely one \inflambdabhterm.
\end{lemma}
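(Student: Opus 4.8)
The plan is to treat $\sunfoldred$ as an (infinitary) CRS rewrite relation and to separate the claim into \emph{existence} of at least one $\sunfoldred$-reduction from $\allter$ that converges to a term of $\Ter{\txtinflambdabhcal}$ (one without any $\txtletrec$-construct), and \emph{uniqueness} of the limit of any such reduction. The remark following the definition already guarantees that every reduction converging to a $\txtletrec$-free term is strongly convergent, so these two points are all that is needed.

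For existence I would first prove a \emph{root-normalisation} lemma: every \lambdaletrecterm\ reduces in \emph{finitely many} $\sunfoldred$-steps to a term whose root symbol is a constructor of $\Ter{\txtinflambdabhcal}$ — i.e.\ an abstraction, an application, a variable, or $\bh$ — rather than a $\txtletrec$. The point is that $(\unfrulelapp)$ and $(\unfrulelabs)$ expose a constructor ($@$ resp.\ $\sslabs$) at the root and $(\unfrulegarbage)$ discards the $\txtletrec$ entirely, whereas the remaining rules keep a $\txtletrec$ at the root. The only threat to termination is an unbounded chain of variable unfoldings via $(\unfruleletrec)$; but the binding group is finite, so following the dependency chain from the body variable either reaches a binding with a constructor-headed right-hand side (whose unfolding exposes that constructor) or runs into a cycle, which $(\unfruletighten)$ compresses and $(\unfruleblackhole)$ finally collapses to $\bh$. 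A lexicographic measure on the binding group — for instance the number of bindings together with the length of the shortest dependency path from the body to a constructor-headed or self-referential binding — strictly decreases, yielding finiteness. From root-normalisation the unfolding is built coinductively: root-normalise, record the head constructor, and recurse into its finitely many argument subterms; since exposed constructors are $\txtletrec$-free and never revisited, the depth of the contracted redexes tends to infinity and the assembled reduction is strongly convergent with limit in $\Ter{\txtinflambdabhcal}$.

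For uniqueness I would show that the limit is independent of the reduction by a depth-indexed induction reducing to \emph{head determinacy}: all $\sunfoldred$-reducts of $\allter$ that carry a constructor at the root carry the \emph{same} constructor, and their immediate subterms are pairwise $\sunfoldred$-convertible. This in turn follows from confluence of the finitary relation $\sunfoldred$: local confluence is a finite check of the critical overlaps among $(\unfrulegarbage)$, $(\unfruletighten)$, $(\unfruleblackhole)$, $(\unfruleletin)$, and $(\unfruleletrec)$, and the $\txtletrec$-simplifying fragment $\{(\unfrulegarbage),(\unfruletighten),(\unfruleblackhole),(\unfruleletin)\}$ is terminating (each step decreases a measure built from the number of $\txtletrec$-nodes, bindings, and variable/self bindings), so that finite head-reductions have a common reduct. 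By induction on the depth $d$, the depth-$d$ truncation of every $\txtletrec$-free limit is then fixed by $\allter$, so all limits agree at every depth and hence coincide. Alternatively one may view the unfolding system as a weakly orthogonal iCRS and invoke infinitary confluence from the rewriting theory of (i)CRSs \cite{kete:simo:2011,grab:roch:2012}.

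I expect uniqueness — and, inside the existence argument, termination of root-normalisation — to be the main obstacles. The delicate case is cyclic self-reference, where plain unfolding via $(\unfruleletrec)$ loops and it is exactly the interplay of $(\unfruletighten)$ and $(\unfruleblackhole)$ that restores well-foundedness; and for uniqueness the collapsing character of several rules means that the passage from finitary confluence to strongly convergent limits has to be made carefully, leaning on the established infinitary rewriting theory as the text suggests.
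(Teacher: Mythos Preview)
Your overall strategy coincides with the paper's: establish existence by exhibiting a strongly convergent strategy, and derive uniqueness from finitary confluence of $\sunfoldred$. Your root-normalisation lemma is a perfectly good, and more explicit, stand-in for the paper's appeal to outermost-fair rewriting with eager application of $(\unfruletighten)$ and $(\unfruleblackhole)$; the two are doing the same work.

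Where your sketch slips is in the route to confluence. First, the critical overlaps are not confined to the five rules you name: $(\unfrulelapp)$ and $(\unfrulelabs)$ also overlap at the root with $(\unfrulegarbage)$, $(\unfruletighten)$, and $(\unfruleblackhole)$, since those rules do not constrain the body of the $\txtletrec$. More importantly, the inference ``the simplifying fragment is terminating, hence finite head-reductions have a common reduct'' is not Newman's lemma: Newman requires termination of the \emph{whole} relation, and $\sunfoldred$ is not terminating (already $(\unfruleletrec)$ on a cyclic binding group diverges unless tightening is interleaved). Termination of a subrelation does not upgrade local confluence of the full relation to confluence. Your fallback to weak orthogonality also fails: the system is left-linear but the critical pairs are not trivial (e.g.\ $(\unfrulelapp)$ versus $(\unfrulegarbage)$ yields two distinct reducts that must be joined), so it is not weakly orthogonal as an iCRS. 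The paper sidesteps all this by citing a dedicated confluence proof for the $\{(\unfrulelapp),(\unfrulelabs),(\unfruleletin),(\unfruleletrec),(\unfrulegarbage)\}$ fragment from \cite{grab:roch:2012} and noting that the argument extends to the additional critical pairs created by $(\unfruletighten)$ and $(\unfruleblackhole)$. You should do likewise, or supply an actual confluence proof (e.g.\ via decreasing diagrams), rather than the two shortcuts you propose.
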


\begin{proof}[Proof (Outline)]
  Infinite normal forms of $\sunfoldred$ are \inflambdabhterms\
  since: every occurrence of a \letexpression\ in a
  \txtlambdaletrecbhcal\ gives rise to a redex; and 
  infinite terms over $\siglambdaletrecbhcalCRS$ 
  without \letexpressions\ are %terms over $\siglambdabhcalCRS$.
                               \inflambdabhterms. 
  Also, outermost-fair rewrite sequences 
  in which the rules ($\text{tighten}$) and ($\blackhole$) are applied eagerly
  are (strongly) convergent. 

  Unique infinite normalisation of $\sunfoldred$ follows from finitary confluence of $\sunfoldred$. 
  In previous work \cite{grab:roch:2012} we proved confluence for the slightly simpler CRS 
  that does not contain the final two rules, which together introduce black holes
  in terms with meaningless bindings. 
  That confluence proof can be adapted
  by extending the argumentation to deal with the additional critical pairs. 
\end{proof}

\begin{definition}%[unfolding semantics]
                  \label{def:unfsem}
  The \emph{unfolding semantics} for \lambdaletrecterms\ is defined by the function
  $\sunfsembh \funin \Ter{\txtlambdaletreccal} \to \Ter{\txtinflambdabhcal}$,
  where 
  $ \allter \mapsto \unfsembh{\allter} \defdby\,$
  the infinite unfolding of $\allter$. 
\end{definition}

\begin{remark}[Regular and strongly regular \inflambdaterms]
  \label{rem:regular:strongly:regular}
\inflambdaterms\ that arise as infinite unfoldings of \lambdaletreccalterms\
form a proper subclass of those \inflambdaterms\ that have a regular term structure
\cite{grab:roch:2012}. \inflambdaterms\ that belong to this subclass are called
`strongly regular', and can be characterised by means of a decomposition rewrite system,
and as those that contain only finite `\bindcaptchains'
\cite{grab:roch:2012,grab:roch:2013:c:RTA}.  
\end{remark}

%----------------------------------------
\section{\Lambdahotgs}
  \label{sec:lhotgs}
%----------------------------------------

In this section we motivate the use of higher-order \termgraphs\ 
as a semantics for %\jan{the interpretation of} 
     \lambdaletreccalterms;
we introduce the class $\classlhotgs$ of `\lambdahotgs' and
define the semantics %\jan{higher-order semantics} 
                     $\sgraphsemC{\classlhotgs}$
  %$\sgraphsemC{\classlhotgs} \funin \Ter\lambdaletreccal \to \classlhotgs$
for interpreting \lambdaletreccalterms\ as \lambdahotgs. 
Finally, we sketch a proof of the correctness of $\sgraphsemC{\classlhotgs}$
with respect to unfolding equivalence (the property \ref{methods:properties:correctness}).

We start out from a natural interpretation of \lambdaletrecterms\ as
first-order \termgraphs:
%nameless (thus first-order) \termgraphs;
occurrences of abstraction variables are resolved as edges 
pointing to the corresponding abstraction;
occurrences of recursion variables as edges to the subgraph belonging to the
respective binding. 
We therefore consider \termgraphs\ over the signature $\siglambdabh =
\setexp{\sslapp,\sslabs,\snlvar,\bh}$ with arities $\arity{\sslapp} = 2$,
$\arity{\sslabs} = 1$, $\arity{\snlvar} = 1$, and $\arity\bh = 0$.
These function symbols represent 
applications, \lambdaabstractions, abstraction variables, and black holes.

We will later define a subclass of these term graphs that excludes meaningless graphs.
%A variable vertex has an outgoing edge
%pointing to the corresponding abstraction vertex.
In line with the choice to
regard all terms as higher-order terms (thus modulo \alphaconversion), we
consider a nameless graph representation, 
so that \alphaequivalence\ of two terms can be recognised 
as their graph interpretations being isomorphic.                  

For a \termgraph\ $\atg$ over $\siglambdabh$ with set $\verts$ of vertices
we will henceforth denote by
$\vertsof{\sslapp}$, $\vertsof{\sslabs}$, $\vertsof{\snlvar}$, and $\vertsof{\blackhole}$
the sets of \emph{application vertices}, \emph{abstraction vertices}, \emph{variable vertices}, and \emph{blackhole vertices},
that is, those with label $\sslapp$, $\sslabs$, $\snlvar$, $\blackhole$, respectively.

\begin{example}[Natural first-order interpretation]%
    \label{ex:naive:representation}
  The \lambdaletreccalterms\ $\allter$ and $\bllter$ in Ex.~\ref{ex:fix}
  can be represented as the \termgraphs\ in Fig.~\ref{fig:ex:compact}.
\end{example}

These two graphs are bisimilar, which suggests that $\allter$ and $\bllter$ are
unfolding equivalent. Moreover, there is a functional bisimulation from the
larger \termgraph\ to the smaller one, indicating that $\allter$ expresses more sharing than $\bllter$,
or in other words: $\allter$ is more compact. Also, there is no smaller \termgraph\
that is bisimilar to $\allter$ and $\bllter$.  We conclude that $\allter$ is
a maximally shared form of $\bllter$.

However, this translation is incorrect in the sense that bisimilarity does not
in general guarantee unfolding equivalence, the desired property~\ref{methods:properties:correctness}.
This is witnessed by the following counterexample.

\begin{example}[Incorrectness of the natural first-order interpretation\vspace{-1.5ex}]\label{ex:counternat}
\begin{equation*}
  \begin{aligned}
    \allteri{1} & \;\;=\;\;
      \letin{\arecvar = \labs{\avar}{\lapp{(\labs{\bvar}{\lapp{\arecvar}{\bvar}})}{\avar}}}
            {\arecvar}
    \\
    \allter & \;\;=\;\;
      \letin{\arecvar = \labs{\avar}{\lapp{\arecvar}{\avar}}}
            {\arecvar}            
    \\
    \allteri{2} & \;\;=\;\;
      \letin{\arecvar = \labs{\avar}{\lapp{(\labs{\bvar}{\lapp{\arecvar}{\avar}})}{\avar}}}
            {\arecvar}
  \end{aligned}
\end{equation*}

  \noindent
  While $\unfsem{\allteri{1}} \hspace{-0.6ex} = \hspace{-0.6ex} \unfsem{\allter}$
  and
  $\unfsem{\allter} \hspace{-0.6ex} \neq \hspace{-0.6ex} \unfsem{\allteri{2}}$,
  all of their \termgraphs\ $G_1$, $G$, $G$ are bisimilar (please ignore the shading for now):
  %
% The situation is analogous to that for the following three terms:
%   \begin{align*}
%     \bllteri{1} & =
%       \letrecin{\arecvar = \labs{\avar}{\strcns{\avar}{\labs{\bvar}{\strcns{\bvar}{\arecvar}}}}}
%                {\arecvar}
%     \\
%     \bllteri{2} & =
%       \letrecin{\arecvar = \labs{\avar}{\strcns{\avar}{\labs{\bvar}{\strcns{\avar}{\arecvar}}}}}
%                {\arecvar}
%     \\
%     \bllter & =
%       \letrecin{\arecvar = \labs{\avar}{\strcns{\avar}{\arecvar}}}
%                {\arecvar}            
%   \end{align*}
%   with the difference that the unfolding semantics of these terms
%   are infinite normal forms.
\mbox{}\hfill
\vcentered{\figsmall{counterex_onlyvarbl_y}}
\hfill\vcentered{\sfunbisim}\hfill
\vcentered{\figsmall{counterex_onlyvarbl_collapse}}
\hfill\vcentered{\sinvfunbisim}\hfill
\vcentered{\figsmall{counterex_onlyvarbl_x}}
\hfill\mbox{}
\\[-2ex]
\hspace*{10.25ex}
$G_1$
\hspace*{12.75ex}
$G$
\hspace*{24ex}
$G_2$
\end{example}

Consequently this interpretation 
lacks the necessary structure for correctly
modelling compactification via bisimulation collapse.

We therefore impose additional structure on the \termgraphs. This
is indicated by the shading in the picture above, and in the graphs throughout
this paper. A shaded area depicts the \emph{scope} of an abstraction: it
comprises all positions between the abstraction and its bound variable occurrences as
well as the scope of any abstraction on these positions. 
By this stipulation, scopes are properly nested.

%Extended scope is based on `scope` which denotes the
%range of
%an abstraction: all positions between the abstraction and
%The shaded areas depict the `extended scope'
%of abstractions. Extended scope is based on `scope` which denotes the range of
%an abstraction: all positions between the abstraction and
%any occurrence of the corresponding variable bound to the abstraction. 
%Extended scopes extend scopes minimally in order to yield properly nested areas.  
%In the graph (as in all the other graphs throughout the paper) on the right in
%Fig.~\ref{fig:eager_more_sharing} the shading indicates extended scopes. If it
%indicated only scope, then the light shaded areas would not encompass all the
%leave vertices.
%
%Henceforth when we write `scope', we mean `extended scope'.

Now note that the functional bisimulation on the right in the picture in Ex.~\ref{ex:counternat} 
does \emph{not} respect the scopes:
The scope of the topmost abstraction vertex 
%in the \jan{interpretation} of $\allteri{2}$  
in the \termgraph\ $G_2$ interpreting $\allteri{2}$
contains another \lambdaabstraction; hence the image of this scope 
under the functional bisimulation cannot fit into, and is not contained in,
%the single scope in the \jan{interpretation} of $\allter$. Also, the trivial scope
the single scope in %\jan{$\allter$'s \termgraph}.
                    the \termgraph\ $G$ of $\allter$.
Also, the trivial scope
of the vacuous abstraction in %\jan{$\allteri{2}$'s \\termgraph\ }
                              $G_2$ is not mapped to
%of the vacuous abstraction in the \jan{interpretation} of $\allteri{2}$ is not mapped to
a scope in $G$. %\jan{$\allter$'s \\termgraph\ }.
%a scope in the \jan{interpretation} of $\allter$.
Thus the natural first-order interpretation is incorrect, in the sense that
functional bisimulation does not preserve scopes on the first-order \termgraphs\ 
that are interpretations of \lambdaletreccalterms.

%This failure of preservation of scopes under a functional bisimulation 
%on the first-order \termgraphs\ representing \lambdaletreccalterms\
%is the reason for the incorrectness of the natural first-order interpretation. 

% the path, in the \termgraph\  left, from the topmost abstraction vertex to the bottommost
% variable occurrence vertex in the scope of that abstraction
% does not correspond under the functional bisimulation to a path, in the middle \termgraph,
% within just one scope (but to one that passes through two scopes). 
% Also the trivial scope of the nested inner abstraction of the \termgraph\  left
% is not mapped to an entire scope. 
% % while the scope originally continues when going
% % from the topmost application vertex to its left successor it is `cut' by the
% % bisimulation.

To prevent that interpretations of not unfolding-equivalent terms like
$\allteri{1}$ and $\allteri{2}$ in Ex.~\ref{ex:counternat} become bisimilar, we
enrich first-order \termgraphs\ by a formal concept of scope.
More precisely, \emph{abstraction prefixes} are added as
vertex labels. They also serve the purpose of defining the
subclass of meaningful term graphs over $\siglambdabh$ that
sensibly represent cyclic \lambdaterms.
% This motivates the enrichment of first-order \termgraph\  interpretations by a
% %formal 
% concept of scope to prevent that interpretations of not
% unfolding-equivalent terms like $\allter$ and $\allteri{1}$ in
% Ex.~\ref{ex:counternat} are bisimilar.
% We describe scopes for the enriched \termgraphs\ 
In the enriched \termgraphs, each vertex $\avert$ is annotated with a label $\abspre{\avert}$, the
\emph{abstraction prefix} of $\avert$, which is a list of vertex names
that identifies the abstraction vertices in whose scope $\avert$ resides. 
Alternatively scopes can be represented by a scope function (as in
\cite{blom:2001}) that assigns to every abstraction vertex the set of vertices in its scope. 
In the article~\cite{grab:roch:2013:a:TERMGRAPH} 
we show that higher-order \termgraphs\  with scope functions correspond bijectively to those with abstraction~prefix~functions.

Abstraction prefixes can be determined by traversing over the graph and
recording every binding encountered. When passing an abstraction vertex $\avert$
while descending into the subgraph representing the body of the abstraction,
one enters or opens the scope of $\avert$. This is recorded by appending
$\avert$ to the abstraction prefix of $\avert$'s successor. 
$\avert$ is removed from the prefix at positions under which the
abstraction variable is no longer used,
but not before any other variable that was added to the prefix in the meantime has itself been removed.  
%but not before any other abstraction variable that has been entered in the meantime has been been removed again.
%In other words, 
In other words, the abstraction prefix behaves like a stack. 
We call \termgraphs\  for representing
\lambdaletreccalterms\ that are equipped with \absprefix{es}
`\lambda-higher-order \termgraphs' % , or short, 
                                   (\lambdahotgs).

\begin{example}[The \lambdahotgs\ of the terms in Ex.~\ref{ex:counternat}]\mbox{}
\label{ex:counternat_lhotgs}
  \begin{center}
    $
    \begin{aligned}[c] 
      \fig{counterex_onlyvarbl_y_prefixed}
    \end{aligned} 
       \hfill\sfunbisim\hfill
    \begin{aligned}[c]
      \fig{counterex_onlyvarbl_collapse_prefixed}
    \end{aligned}
       \hfill\not\sinvfunbisim\hfill
    \begin{aligned}[c]  
      \fig{counterex_onlyvarbl_x_prefixed}
    \end{aligned}
    $   
  \end{center}
The superscripts of abstraction vertices indicate their names. The abstraction
prefix of a vertex is annotated to its top left. Note that abstraction vertices
themselves are not included in their own prefix.
\end{example}

We define \lambdahotgs{} as term graphs over $\siglambdabh$
together with an abstraction-prefix function that assigns to each vertex an
abstraction prefix.
%as the
%\termgraphs\ from the natural first-order interpretation,
It has to respect certain correctness conditions
restricting the \lambdahotgs{} to exclude meaningless term graphs.

%we also give 
%but we also include an
%
%We give the formal definition of \lambdahotgs{} after giving the correctness
%conditions for such a prefix function.

% Let $\atg = \tuple{\verts,\svlab,\svargs,\sroot}$ be a \termgraph\  over a signature
% extending $\siglambda$ or $\siglambdai{i}$, for $i\in\setexp{0,1}$.   
% By $\vertsof{\sslabs}$ we designate the set of 
% \emph{abstraction vertices} of $\atg$,
% that is, the subset of $\verts$ consisting of all vertices with label $\sslabs$;
% more formally,
% $\vertsof{\sslabs} \defdby \descsetexp{\avert\in\verts}{\vlab{\avert} = \sslabs}$.
% Analogously, the sets $\vertsof{\sslapp}$ and $\vertsof{\snlvar}$ 
% of \emph{application vertices} and \emph{variable vertices} of $\atg$
% are defined as the sets consisting of all vertices in $\verts$ with label $\sslapp$ or label $\snlvar$, respectively.  
% Whether the variable vertices have an outgoing edge depends on the value of $i$.
% The intention is to consider two variants of \termgraphs, one with and one
% without variable backlinks to their corresponding abstraction.

\begin{definition}
  [correct abstraction-prefix function for \termgraphs\  over $\siglambdabh$]%
    \label{def:abspre:function}\normalfont
  Let $ \atg = \tuple{\verts,\svlab,\svargs,\sroot}$ be
  a $\siglambdabh$\nb-term-graph.

  An \emph{abstraction-prefix function} for $\atg$ 
  is a function $\sabspre \funin \verts \to \verts^*$
  from vertices of $\atg$ to words of vertices.
  % A function $\sabspre \funin \verts \to \verts^*$
  % from vertices of $\atg$ to words of vertices %\emph{variable names} in a set $\Var$ 
  % is called an \emph{abstraction-prefix function} for $\atg$.
  Such a function is called \emph{correct} if
  for all $\bvert,\bverti{0},\bverti{1}\in\verts$ and $k\in\setexp{0,1}$
  it holds:
  \begin{align*}
    & \abspre{\sroot} = \emptyword 
    & (\text{root})
    \\
    & \abspre{\bh} = \emptyword 
    & \hspace*{-5ex} (\text{black hole})
    \\
    \bvert\in\vertsof{\sslabs} 
      \;\logand\;
    \bvert \tgsucci{0} \bverti{0}
      \;\; & \Rightarrow \;\;
    \abspre{\bverti{0}} \prele \abspre{\bvert} \bvert
    & (\sslabs)
    \displaybreak[0]\\
    \bvert\in\vertsof{\sslapp}
      \;\logand\;
    \bvert \tgsucci{k} \bverti{k}
      \;\; & \Rightarrow \;\;
    \abspre{\bverti{k}} \prele \abspre{\bvert} 
    & (\sslapp)
    \displaybreak[0]\\
    \bvert\in\vertsof{\snlvar}
      \;\logand\;
    \bvert \tgsucci{0} \bverti{0}
      \;\; & \Rightarrow \;\;
      \left\{\hspace*{1pt}
      \begin{aligned}[c]
        & \bverti{0}\in\vertsof{\sslabs}
        \\[-0.5ex]
        & \;\logand\;
        \abspre{\bverti{0}}\bverti{0} = \abspre{\bvert}
      \end{aligned}
      \right.
    & (\snlvar)  
  \end{align*}
  Here and later we denote by $\prele$ the `is-prefix-of' relation.
% DELETED: "admits" wird nirgendwo verwendet
%  We say that a \termgraph\  $\atg$ over $\siglambdabh$ admits a correct
%  abstraction-prefix function if such a function exists for $\atg$.    
\end{definition}  

\begin{definition}[\lambdahotg]\label{def:lhotg}\normalfont
  A \emph{\lambdahotg}
  over $\siglambdabh$
  is a five-tuple $\alhotg = \tuple{\verts,\svlab,\svargs,\sroot,\sabspre}$
  where $\atgi{\alhotg} = \tuple{\verts,\svlab,\svargs,\sroot}$ is a \termgraph\  over $\siglambdabh$,
  called the \termgraph\  \emph{underlying} $\alhotg$,
  and $\sabspre$ is a correct \absprefix\ function for $\atgi{\alhotg}$.
  The class of \lambdahotg{s} over $\siglambdabh$ 
  is denoted by $\classlhotgs$.
\end{definition}

\begin{definition}[homomorphism, bisimulation for \lambdahotgs]\label{def:homom:aplambdahotg}\normalfont
  % Let $\alhotgi{1}$ and $\alhotgi{2}$
  Let $\alhotgi{1} = \tuple{\vertsi{1},\svlabi{1},\svargsi{1},\srooti{1},\sabsprei{1}}$
  and $\alhotgi{2} = \tuple{\vertsi{2},\svlabi{2},\svargsi{2},\srooti{2},\sabsprei{2}}$
  be \lambdahotg{s} over $\siglambdabh$. % with, for $k\in\setexp{1,2}$,
  % $\alhotgi{k} = \tuple{\vertsi{k},\svlabi{k},\svargsi{k},\srooti{k},\sabsprei{k}}$.

  A \emph{bisimulation} between $\alaphotgi{1}$ and $\alaphotgi{2}$ 
  is a relation $\abisim \subseteq \vertsi{1}\times\vertsi{2}$ 
  such that for all $\pair{\bvert}{\bvertacc}\in\abisim$ 
  the conditions \eqref{eq:def:bisim}, and additionally:
  A \emph{bisimulation} between $\alaphotgi{1}$ and $\alaphotgi{2}$ is a
  relation $\abisim \subseteq \vertsi{1}\times\vertsi{2}$ that is a
  bisimulation between the term graphs $\atgi{\alaphotgi{1}}$ and $\atgi{\alaphotgi{2}}$
  underlying $\alaphotgi{1}$ and $\alaphotgi{2}$, respectively, and for which also the
  following condition:
  \begin{equation}\label{eq:def:bisim:lambdahotg}
    \begin{aligned}
      \pair{\absprei{1}{\bvert}}
          {\absprei{2}{\bvertacc}}
       & \in \abisim^*
       & & & (\text{abstraction-prefix functions})
    \end{aligned}
  \end{equation}
  (for $\abisim^*$ see p.~\pageref{eq:def:bisim} below \eqref{eq:def:bisim})
  is satisfied for all $\bvert\in\vertsi{1}$ and all $\bvertacc\in\vertsi{2}$.
  If there is a such bisimulation, % between $\alhotgi{1}$ and $\alhotgi{2}$,
  then $\alhotgi{1}$ and $\alhotgi{2}$ are \emph{bisimilar},
  denoted by $\alhotgi{1} \bisim \alhotgi{2}$. 

  A \emph{homomorphism} (a \emph{functional bisimulation})
  from $\alaphotgi{1}$ to $\alaphotgi{2}$ 
  is a morphism from the structure~$\alhotgi{1}$ to the structure $\alhotgi{2}$,
  or more explicitly, it is a homomorphism $\sahom \funin \vertsi{1} \to \vertsi{2}$
  from $\atgi{\alaphotgi{1}}$ to $\atgi{\alaphotgi{2}}$ that additionally satisfies, for all $\bvert\in\vertsi{1}$, the following condition:
  \begin{equation}\label{eq:def:homom:lambdahotg}
    \begin{aligned}
      \funap{\bar{\sahom}}{\absprei{1}{\bvert}}
        & = \absprei{2}{\ahom{\bvert}}
        & & & (\text{abstraction-prefix functions})
    \end{aligned}
  \end{equation}
  for all $\bvert\in\vertsi{1}$,
  where ${\bar{\sahom}}$ is the homomorphic extension of $\sahom$ to words over $\vertsi{1}$.
  We write %$\alaphotgi{1} \funbisimi{\sahom} \alaphotgi{2}$
         % or $\alaphotgi{2} \convfunbisimi{\sahom} \alaphotgi{1}$,
         % and when dropping the homomorphism~$\sahom$, 
  $\alhotgi{1} \funbisim \alhotgi{2}$
  if there is a homomorphism between $\alhotgi{1}$ and~$\alhotgi{2}$.
         % or $\alaphotgi{2} \convfunbisim \alaphotgi{1}$.
\end{definition}

% We will now argue that \lambdahotgs\ are indeed a suited format to represent
% \lambdaletreccal-terms and to reason about maximal sharing and about unfolding
% equivalence. We provide an translation $\sgraphsemC\classlhotgs$ and a readback
% function $\sreadbackC\classlhotgs$ which satisfy conditions
% \ref{methods:properties:correctness}-\ref{methods:properties:readback}.

%----------
\subsection{Interpretion of \lambdaletrecterms\ as \lambdahotgs} 
  \label{sec:lhotgs:subsec:trans}
%----------

\begin{figure*}[htb]
\input{trans-lambdaletreccal-lhotgs.tex}
\caption{\label{fig:def:graphsem:lhotgs}
  Translation rules $\rulestranslambdaletreccaltolhotgs$
  for interpreting \lambdaletreccalterms\ as \lambdahotgs.
  See Section~\ref{sec:lhotgs:subsec:trans} for explanations.
}
\end{figure*}

In order to interpret a \lambdaletreccalterm\ $\allter$ as \lambdahotg,
the translation rules $\rulestranslambdaletreccaltolhotgs$ 
from Fig.~\ref{fig:def:graphsem:lhotgs} are applied to a `translation box'
\adjustbox{fbox={\fboxrule} 1pt 0.5pt}{$\femptylabs{\allter}$}.
It contains $\allter$ furnished with a prefix 
consisting of a dummy variable $*$, and an empty set $[]$ of binding equations.
The translation process proceeds by 
induction on the syntactical structure of the prefixed \lambdaletreccalexpression's body.
Ultimately, %the translation steps produce 
a \termgraph\ $\atg$ over $\siglambdabh$ is produced, together with a correct~\absprefixfunction~for~$\atg$. 

For reading the rules $\rulestranslambdaletreccaltolhotgs$ in Fig.~\ref{fig:def:graphsem:lhotgs} correctly,
observe the details as described here below.                   
%Please refer to Appendix~\ref{app:translation} for
For illustration of their application,
please refer to Appendix~\ref{app:translation} where %you can find 
                                                     several \lambdaletrecterms\ 
are translated into \lambdahotgs. %in a step-by-step~fashion.
                                  %step by step.
%application of
%$\rulestranslambdaletreccaltolhotgs$.

%For a translation
%step-by-step translation of exemplative \lambdaletrecterms, please refer to Appendix~\ref{app:translation}.

% In order to understand the translation also
% consider Appendix~\ref{app:translation} for an exemplary step-by-step
% translation.}
% \jan{Hereby mind a number of technical details concerning
% Fig.~\ref{fig:def:graphsem:lhotgs}. In order to understand the translation also
% consider Appendix~\ref{app:translation} for an exemplary step-by-step
% translation.}

\begin{itemize}
 \item 
      A translation box \adjustbox{fbox={\fboxrule} 1pt 0.5pt}{$\flabs{\vec{p}}{\allter}$} 
      contains a prefixed, partially decomposed \lambdaletrecterm~$\allter$.  
      The prefix contains a vector $\vec{p}$ of annotated \lambdaabstractions\ that have already
      been translated and whose scope typically extends into $\allter$. 
%       A term $\allter$ in a translation box \adjustbox{fbox={\fboxrule} 1pt 0.5pt}{$\flabs{\vec{p}}{\allter}$} 
%       is furnished with a prefix in parentheses. 
%       It is a vector $\vec{p}$ of (annotated) \lambdavariables\ that have already been
%       encountered in the translation process and that still \jan{have a scope in, and typically bind variables in}, $\allter$.
%       % It binds the variables introduced by \lambdaabstractions. 
      Every prefix abstraction is annotated with
      a set of binding equations that are defined at its level. % of the corresponding abstraction. 
      There is special dummy variable denoted by $*$ at the left of the prefix
      that carries top-level function
      bindings, i.e.\ binding equations that are not defined under any enclosing \lambdaabstraction.
      The $\sslabs$\nb-rule strips off an abstraction from the body of the expression, and pushes the abstraction variable into the prefix,  
      which initially contains an empty set of function bindings.
\item Names of abstraction vertices are indicated to the right,
      and \absprefixes\ to the left of the created vertices.
      In order to refer to the vertices in the prefix we use the following notation:
      $\vs{\vec{\apre}} = \averti{1}\,\cdots\,\averti{n}$
      \mbox{} if \mbox{} $\vec{\apre} = *[B_0]\;x_1^{\averti{1}}[B_1]\;\dots\;x_n^{\averti{n}}[B_n]$.
\item Vertices drawn with dashed lines have been created in earlier translation steps,
      and in the current step are referenced by edges in the current step.
      % and % are just referenced in the present step.
      %    now are just referenced by edges. 
\item In the $\snlvarsucc$\nb-rule, which takes care of closing scopes,
      $FV(\allter)$ stands for the set of free variables in $\allter$.
%     It is applicable if the leftmost variable in the prefix does not occur
%     free in $\allter$, and none of the binding identifiers at this prefix
%     variable occurs free in $\allter$.
\item The $\sslet$\nb-rule for translating \letexpressions\
      creates a box for the $\txtin$\nb-part as well as
      for each binding equation. 
      The translation of each of the bindings starts with an \emph{indirection vertex}.
      These vertices guarantee the well-definedness of the process when it translates 
      meaningless bindings such as $\arecvar = \arecvar$, or $\brecvar = \crecvar,\, \crecvar = \brecvar$,
      which would otherwise give rise to loops without vertices.
      The $\sslet$\nb-rule pushes the function bindings into the abstraction
      prefix, associating each function binding with one of the variables in
      the abstraction prefix. There is some freedom as to which variable a
      function binding is assigned to. This freedom is limited by scoping
      conditions that ensure that the prefixed term is a valid \CRS-term:
      function bindings may only depend on variables and functions that occur
      further to the left in the prefix.
      The chosen association also directly determines the prefix lengths used
      in the translation boxes for the function bindings.
      % For the translation of each binding an \emph{indirection vertex} is created.
      % Such vertices are needed to treat meaningless bindings that
      % give rise to blackholes. Without the use of indirection vertices, cyclic
      % empty bindings as in $\letin{f=f}{f}$ would give rise to loops in the
      % graph without vertices; the result would not be a \termgraph. 
\item Indirection vertices %created by $\slet$\nb-rules are eliminated
      are eliminated by an erasure process at the end:
      Every indirection vertex that does not point to itself
      is removed, redirecting all incoming edges to the successor vertex.
      Finally every loop on a single indirection vertex  
      is replaced by a \emph{black hole} vertex that represents a meaningless binding.
      Abstraction prefixes for such black holes are defined to be empty.
\end{itemize}
\begin{figure}[htb]
\fig{let_prefixes_same}
\hfill
\fig{let_prefixes_shorter}
\caption{\label{fig:rigid_let_rule}
Translation of 
$\labs{\aavar}{\labs{\bbvar}{\letin{\arecvar=\aavar}{\lapp{\lapp{\lapp{\aavar}{\aavar}}{(\lapp{\arecvar}{\aavar})}}{\bbvar}}}}$ 
with equal (left) and with minimal prefix lengths (right) in the \txtlet-rule.}
\end{figure} 
\begin{definition}\label{def:rulestranslambdaletreccaltolhotgsgenerated}
We say that a \termgraph\ $\atg$ over $\siglambdabh$ and an \absprefixfunction~$\sabspre$
is \emph{\rulestranslambdaletreccaltolhotgsgenerated\ from} a
\lambdaletrecterm~$\allter$ if $\atg$ and $\sabspre$ are obtained by applying
the rules~$\rulestranslambdaletreccaltolhotgs$ from
Fig.~\ref{fig:def:graphsem:lhotgs} to
\adjustbox{fbox={\fboxrule} 1pt 0.5pt}{$\femptylabs{\allter}$}.
\end{definition}

\begin{remark}[Inference rule formulation of $\rulestranslambdaletreccaltolhotgs$]
  See also Fig.~\ref{fig:trans-lambdaletreccal-lhotgs-proof-system} for inference
  rules that correspond to the deconstruction of prefixed terms in
  $\rulestranslambdaletreccaltolhotgs$.
\end{remark}

\begin{figure*}[tbh]
  \input{trans-lambdaletreccal-lhotgs-proof-system.tex} 
\caption{\label{fig:trans-lambdaletreccal-lhotgs-proof-system}
  Alternative formulation as inference rules
  of the translation rules in Fig.~\ref{fig:def:graphsem:lhotgs}
  for the interpretation of \lambdaletreccalterms\ as \lambdahotgs.
  }
\end{figure*}

\begin{proposition}%[\lambdahotg\ translation of \lambdaletreccalterms]
\protect
  \label{prop:trans_lhotgs_correct} \normalfont Let $\allter$ be a
  \lambdaletreccalterm. Suppose that a \termgraph\  $\atg$ over
  $\siglambdabh$, and an \absprefix\ function~$\sabspre$ are
  \rulestranslambdaletreccaltolhotgsgenerated\ from $\allter$.
Then $\sabspre$ is a correct \absprefix\ function for $\atg$, and consequently,
$\atg$ and $\sabspre$ together form a \lambdahotg\ in $\classlhotgs$.
\end{proposition}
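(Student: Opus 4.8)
The plan is to proceed by induction on the structure of the translation process, i.e.\ on the derivation in the inference system of Fig.~\ref{fig:trans-lambdaletreccal-lhotgs-proof-system}. Since $\allter$ is finite and every rule strictly decreases the prefixed body being decomposed, this induction is well-founded. The one real subtlety is that the rules first build an \emph{intermediate} graph that still contains indirection vertices, whereas the conditions of Def.~\ref{def:abspre:function} are stated for the final graph $\atg$ obtained \emph{after} indirection elimination. I would therefore split the argument into (a)~a strengthened invariant maintained throughout the inductive construction of the intermediate graph, and (b)~a separate lemma showing that indirection elimination preserves the correctness conditions.

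For part~(a), the invariant I would carry along every translation box $\flabs{\vec p}{\ater}$ in the derivation, with $\vec p = *[B_0]\,\avari{1}^{\averti{1}}[B_1]\cdots\avari{n}^{\averti{n}}[B_n]$, is: each $\averti{i}$ is an already-created abstraction vertex with $\abspre{\averti{i}} = \averti{1}\cdots\averti{i-1}$; each binding $\arecvar^{\bvert}$ recorded in $B_i$ is an already-created indirection vertex with $\abspre{\bvert} = \averti{1}\cdots\averti{i}$; and the subgraph that will replace the box has a root whose abstraction prefix is $\prele \vs{\vec p} = \averti{1}\cdots\averti{n}$. With this in hand each rule discharges one clause of Def.~\ref{def:abspre:function}. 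The initial box $\femptylabs{\allter}$ has $\vs{*[]} = \emptyword$, so the overall root gets prefix $\emptyword$, giving (root). The $\sslabs$-rule creates an abstraction vertex $\avert$ with prefix $\vs{\vec p}$ and feeds the body into a box with prefix $\vec p\,\avar^{\avert}[]$; the successor root therefore has prefix $\prele \vs{\vec p}\,\avert = \abspre{\avert}\,\avert$, which is $(\sslabs)$. The $\sslapp$-rule keeps the prefix on both children, so each successor root has prefix $\prele \vs{\vec p} = \abspre{(\sslapp\text{-vertex})}$, which is $(\sslapp)$. The $\snlvar$-rule fires only on the innermost prefix variable (the $\snlvarsucc$-rule has already closed every inner scope, since a bare variable occurrence keeps no inner abstraction free), creating a variable vertex with prefix $\averti{1}\cdots\averti{n}$ pointing to $\averti{n}\in\vertsof{\sslabs}$; by the invariant $\abspre{\averti{n}}\,\averti{n} = \averti{1}\cdots\averti{n}$, which is exactly $(\snlvar)$. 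Closing a scope via the $\snlvarsucc$-rule only drops a suffix of the prefix, preserving the word-prefix clause of the invariant.

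For the recursion variables and part~(b) I would argue as follows. An occurrence of $\arecvar$ bound at level $l$ produces an edge to the indirection vertex $\bvert$ with $\abspre{\bvert}=\averti{1}\cdots\averti{l}$, and the $\sslet$-rule's scoping side conditions guarantee that $\averti{1}\cdots\averti{l}$ is a word-prefix of the abstraction prefix present at that occurrence. The binding box for $\arecvar$ has prefix $\prele \averti{1}\cdots\averti{l}$, so the root $\cvert$ of its translation satisfies $\abspre{\cvert}\prele\abspre{\bvert}$. Indirection elimination redirects each edge into $\bvert$ to $\cvert$ (chains being handled by transitivity of $\prele$), and the source $\avert$ of such an edge is an application or abstraction vertex (never a variable vertex, which by $(\snlvar)$ points only to an abstraction), so $\abspre{\cvert}\prele\averti{1}\cdots\averti{l}\prele \abspre{\avert}$ shows the redirected edge still satisfies $(\sslapp)$ or $(\sslabs)$. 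A cyclic chain of indirections is replaced by a $\blackhole$-vertex with empty prefix, fulfilling (black hole); its incoming edges satisfy $(\sslapp)/(\sslabs)$ because $\emptyword\prele$ everything, and no $(\snlvar)$-edge can target it since variable vertices point to abstraction vertices, which are never eliminated.

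The main obstacle is precisely this interplay between the indirection vertices and the correctness conditions: the bulk of the work is choosing the intermediate invariant so that it is simultaneously (i)~preserved by every translation rule — in particular that the $\sslet$-rule's level-assignment constraints $l_1,\dots,l_k$ really do force $\averti{1}\cdots\averti{l_j}$ to be a prefix of the ambient abstraction prefix at every occurrence of $\arecvar_j$ — and (ii)~strong enough that eliminating indirections (including collapsing cyclic chains to black holes) yields the conditions of Def.~\ref{def:abspre:function} verbatim for the final graph $\atg$. Once the invariant is pinned down, the per-rule checks and the $\prele$-transitivity argument for elimination are routine.
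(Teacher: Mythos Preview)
Your approach is sound, and in fact the paper states this proposition without proof, so there is nothing to compare against. The two-phase structure you propose---first an invariant on the intermediate graph with indirection vertices, maintained rule by rule along the derivation of Fig.~\ref{fig:trans-lambdaletreccal-lhotgs-proof-system}, then a separate argument that indirection elimination preserves the clauses of Def.~\ref{def:abspre:function}---is exactly the natural decomposition, and your choice of invariant (recording that each prefix entry $\averti{i}$ satisfies $\abspre{\averti{i}}=\averti{1}\cdots\averti{i-1}$, that each recorded indirection $\bvert$ at level $i$ satisfies $\abspre{\bvert}=\averti{1}\cdots\averti{i}$, and that whatever eventually replaces a box has root prefix $\prele\vs{\vec p}$) is precisely strong enough to discharge each clause.

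Two small points worth tightening. First, the source of an edge into an indirection vertex need not be an application or abstraction vertex: it can also be the implicit root pointer (e.g.\ for $\letin{\arecvar=\ldots}{\arecvar}$) or another indirection vertex (e.g.\ for $\arecvar=\brecvar$). You already cover the former via the (root) clause of your invariant and the latter via your transitivity-of-$\prele$ remark for chains, but the sentence ``the source $\avert$ of such an edge is an application or abstraction vertex'' is literally too strong as written. Second, for the $\sslet$-rule you should note explicitly that the invariant on the \emph{new} prefixes $\vvb{\avari{0}}{\averti{0}}{\abindgroupacci{0}}\cdots\vvb{\avari{l_j}}{\averti{l_j}}{\abindgroupacci{l_j}}$ of the binding boxes is inherited: the abstraction entries $\averti{i}$ are unchanged, and each newly inserted $\arecvari{j}^{\bverti{j}}$ lands in $\abindgroupacci{l_j}$ with $\abspre{\bverti{j}}=\averti{1}\cdots\averti{l_j}$ by construction, so the invariant holds for the spawned boxes. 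With these clarifications the argument goes through as you describe.
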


There are two sources of \nondeterminism\ in this translation: The
$\snlvarsucc$\nb-rule for shortening prefixes can be applicable at the same
time as other rules. And the \txtlet-rule does not fix the lengths
$l_1,\dots,l_k$ of the abstraction prefixes for the translations of the binding
equations, but admits various choices of prefixes that are shorter than the prefix of the left-hand side. 
Neither kind of \nondeterminism\ affects 
the term graph that is produced, but
in general several \absprefixfunctions,
and thus different \lambdahotg{s}, can be obtained.% 
% typically the obtained \absprefixfunction\ is not unique. often more than one
% \absprefixfunction\ can be obtained. In the translation
% (Fig.~\ref{fig:def:graphsem:lhotgs}) there are two sources of
% \nondeterminism: The $\snlvarsucc$-rule for shortening prefixes is applicable
% coincidentally with other rules. The other source for \nondeterminism\ is the
% $\slet$-rule which does not fix the lengths $l_1,\dots,l_k$ of the
% abstraction prefixes but leaves them to be an arbitrary prefix of the
% left-hand side's abstraction prefix. This \nondeterminism\ allows for a
% number of possible translations with differently sized abstraction prefixes.

\begin{figure} \fig{eager_more_sharing_eager} \hfill
  \fig{eager_more_sharing_lazy}
%\fig{eager_more_sharing_collapse}
\caption{\label{fig:eager_more_sharing} Translation of
$\labs{\aavar}{\lapp{(\labs{\bbvar}{\labs{\ccvar}{\lapp{\aavar}{\ccvar}}})}{(\labs{\ddvar}{\lapp{\aavar}{\ddvar}})}}$
with eager scope-closure (left), and with lazy scope-closure (right). While in
the left \termgraph\  four vertices can be shared, 
  %through functional bisimulation 
with as result the translation of the term
$\labs{\aavar}{\letin{\arecvar=\labs{\ccvar}{\lapp{\aavar}{\ccvar}}}{\lapp{(\labs{\bbvar}{\arecvar})}{\arecvar}}}$,
in the right \termgraph\ only a single variable occurrence can be shared.}
\end{figure}

%----------
\subsection{Interpretation as \eagscope\ \lambdahotgs}
\label{sec:lhotgs:subsec:eagscope}
%----------

  Of the different translations of a \lambdaletrecterm\ into \lambdahotgs\
  we are most interested in the one with the shortest possible abstraction prefixes.
We say that such a term graph has `eager scope-closure`, or that it is `\eagscope'.%
The reason for this choice is illustrated in
Fig.~\ref{fig:eager_more_sharing}: eager-scope closure allows for more sharing.
% We resolve the \nondeterminism\ such that prefixes are as short as possible
% and call the translation as well as the resulting graphs to have `eager
% scope-closure'. The reason for this choice is illustrated in
% Fig.~\ref{fig:eager_more_sharing}, showing that eager-scope closure allows
% for more sharing.

\begin{definition}[eager scope]\label{def:eagscope}                
  Let
  $\alhotg = \tuple{\verts,\svlab,\svargs,\sroot,\sabspre}$ be a \lambdahotg.
  $\alhotg$ is called \emph{eager-scope} if for every $\bvert\in\verts$ with
  $\abspre{\bvert} = \apre\avert$ for $\apre\in\verts^*$ and
  $\avert\in\verts$, there is a path $\bvert = \bverti{0} \tgsucc \bverti{1}
  \tgsucc \cdots \tgsucc \bverti{m} \tgsucci{0} \avert$ in $\alhotg$ from
  $\bvert$ to $\avert$ with $\abspre{\bvert} \prele \abspre{\bverti{i}}$ for
  all $i\in\setexp{1,\ldots,m}$, and (this follows)
  $\bverti{m}\in\vertsof{\snlvar}$ and $\avert \in \vertsof\sslabs$.
\end{definition}

Hence if a \lambdahotg\ is not eager-scope, then it contains a vertex $\bvert$ with \absprefix\
$\averti{1} \dots \averti{n}$ from which $\averti{n}$ %is not reachable via vertices in the scope of $\averti{n}$ alone.
is only reachable, if at all,
by leaving the scope of $\averti{n}$.
It can be shown that in this case another \absprefixfunction\ with shorter prefixes exists,
and in which $\averti{n}$ has been removed from the prefix of $\bvert$.  

% If a graph is not eager-scope it has a vertex with an abstraction-prefix
% label $\averti1 \dots \averti{n}$ from which $\averti{n}$ is not reachable.
% But then an alternative correct abstraction-prefix function can be found by
% removing $\averti{n}$ from the prefix, a change that may propagate other such
% removals in adjacent vertices. We claim that in this way a correct
% abstraction-prefix function can be found such that the prefixes are minimal.

\begin{proposition}[\eagscope\ = minimal scope; uniqueness of \eagscope\
  \lambdahotgs]\normalfont Let $\alhotgi{i} =
  \tuple{\verts,\svlab,\svargs,\sroot,\sabsprei{i}}$ for $i\in\setexp{1,2}$ be
  \lambdahotgs\ with the same underlying \termgraph. If  $\alhotgi{1}$ is
  \eagscope, then $\length{\absprei{1}{\bvert}} \prele
  \length{\absprei{2}{\bvert)}}$ for all $\bvert\in\verts$. If, in addition,
  also $\alhotgi{2}$ is \eagscope, then $\sabsprei{1} = \sabsprei{2}$. Hence
  \eagscope\ \lambdahotgs\ over the same underlying \termgraph\ are unique.
\end{proposition}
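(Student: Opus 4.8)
The plan is to prove the single sharper statement that, whenever $\alhotgi{1}$ is \eagscope, the prefix relation $\absprei{1}{\bvert} \prele \absprei{2}{\bvert}$ holds for every $\bvert\in\verts$. This at once yields the stated length comparison $\length{\absprei{1}{\bvert}} \le \length{\absprei{2}{\bvert}}$; and if $\alhotgi{2}$ is \eagscope\ as well, then applying the sharper statement with the roles of $\alhotgi{1}$ and $\alhotgi{2}$ interchanged gives $\absprei{2}{\bvert} \prele \absprei{1}{\bvert}$, so the two prefixes coincide and $\sabsprei{1} = \sabsprei{2}$, which is the uniqueness claim. Note that merely the length inequality would \emph{not} suffice for uniqueness, so going through $\prele$ is essential. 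Throughout I would use the stack-discipline facts, inherited from the framework of \cite{grab:roch:2013:a:TERMGRAPH}, that an abstraction vertex occurs at most once in any abstraction prefix and never occurs in its own prefix.

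I would prove $\absprei{1}{\bvert} \prele \absprei{2}{\bvert}$ by induction on $n = \length{\absprei{1}{\bvert}}$. The base case $n=0$ is immediate since $\emptyword \prele \absprei{2}{\bvert}$. For the step, write $\absprei{1}{\bvert} = \apre\avert$ with $\avert$ the topmost abstraction. Since $\alhotgi{1}$ is \eagscope, there is a path $\bvert = \bverti{0} \tgsucc \bverti{1} \tgsucc \cdots \tgsucc \bverti{m} \tgsucci{0} \avert$ in the underlying \termgraph\ with $\absprei{1}{\bvert} \prele \absprei{1}{\bverti{i}}$ for all $i$, where $\bverti{m}\in\vertsof{\snlvar}$ is a variable bound by $\avert\in\vertsof{\sslabs}$. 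The variable condition $(\snlvar)$ in $\alhotgi{1}$ forces $\absprei{1}{\bverti{m}} = \absprei{1}{\avert}\avert$; together with $\absprei{1}{\bvert}\prele\absprei{1}{\bverti{m}}$ and the single occurrence of $\avert$, this gives $\apre = \absprei{1}{\avert}$, so $\length{\absprei{1}{\avert}} = n-1$. Hence the induction hypothesis applies to $\avert$ and yields $\absprei{1}{\avert} \prele \absprei{2}{\avert}$.

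It remains to transfer membership of $\avert$ to $\alhotgi{2}$ along the same path and propagate it backwards. At $\bverti{m}$ the condition $(\snlvar)$ in $\alhotgi{2}$ gives $\absprei{2}{\bverti{m}} = \absprei{2}{\avert}\avert$, so $\absprei{1}{\bvert} = \absprei{1}{\avert}\avert \prele \absprei{2}{\avert}\avert = \absprei{2}{\bverti{m}}$ by the induction hypothesis. I would then establish $\absprei{1}{\bvert} \prele \absprei{2}{\bverti{i}}$ by backward induction along the path, from $\bverti{m}$ down to $\bverti{0} = \bvert$, the value at $i=0$ being exactly the desired conclusion. The backward step across an edge $\bverti{i} \tgsucc \bverti{i+1}$ splits on the label of $\bverti{i}$: for an $\sslapp$-vertex or a $\snlvar$-vertex the relevant correctness condition yields $\absprei{2}{\bverti{i+1}} \prele \absprei{2}{\bverti{i}}$ outright, so transitivity finishes the step; for an $\sslabs$-vertex one has only $\absprei{2}{\bverti{i+1}} \prele \absprei{2}{\bverti{i}}\bverti{i}$, and here one first notes $\bverti{i} \neq \avert$ (otherwise $\absprei{1}{\bvert}\prele\absprei{1}{\bverti{i}} = \absprei{1}{\avert}$ would put $\avert$ into $\absprei{1}{\avert}$, contradicting that $\avert\notin\absprei{1}{\avert}$), so the trailing letter $\avert$ of $\absprei{1}{\bvert}$ differs from the trailing $\bverti{i}$ and the relation descends to $\absprei{2}{\bverti{i}}$.

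The main obstacle I anticipate is precisely this backward propagation: the two prefix functions may disagree at the intermediate vertices $\bverti{i}$, so the positional bookkeeping — guaranteeing that $\avert$ appears at the \emph{correct} place in $\absprei{2}{\bverti{i}}$ rather than merely somewhere in it — has to be driven entirely by the single-occurrence property of abstractions and by the in-scope guarantee inherited from the \eagscope\ graph $\alhotgi{1}$. Once the case analysis of the $\sslabs$-step is discharged, both assertions of the proposition follow as explained in the first paragraph.
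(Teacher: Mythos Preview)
The paper states this proposition without proof, deferring (implicitly) to the companion article \cite{grab:roch:2013:a:TERMGRAPH} on which the present \lambdahotg\ framework is based. So there is no in-paper argument to compare against.

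Your proposal is correct and is in fact the natural argument. A few remarks. First, your decision to prove the stronger prefix relation $\absprei{1}{\bvert}\prele\absprei{2}{\bvert}$ rather than just the length inequality is exactly right: the uniqueness conclusion needs antisymmetry of $\prele$, not merely coincidence of lengths. Second, the two auxiliary facts you invoke --- that an abstraction vertex never occurs in its own prefix, and occurs at most once in any prefix --- are indeed standard consequences of the correctness conditions (the ``stack discipline'' the paper alludes to informally); citing \cite{grab:roch:2013:a:TERMGRAPH} for them is appropriate. Third, your backward-propagation step is sound in all three cases: for $\sslapp$- and $\snlvar$-vertices the correctness conditions give $\absprei{2}{\bverti{i+1}}\prele\absprei{2}{\bverti{i}}$ directly, and for an $\sslabs$-vertex your exclusion of $\bverti{i}=\avert$ via $\absprei{1}{\bvert}\prele\absprei{1}{\bverti{i}}$ (valid also at $i=0$, trivially) is the right move, since then the trailing letter $\avert$ of $\absprei{1}{\bvert}$ cannot coincide with the appended $\bverti{i}$ and the prefix relation descends to $\absprei{2}{\bverti{i}}$. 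You might note for completeness that $\blackhole$-vertices cannot occur among the $\bverti{i}$ (they have no outgoing edges), so your case split is exhaustive.
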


% \begin{proposition}[Minimal scopes]\label{prop:minimal_scopes} \normalfont
% Let $\alhotg$ be an eager-scope \lambdahotg\ with underlying \termgraph\
% $\atg$ and abstraction-prefix function $\sabspre$. Then it holds for every
% correct abstraction-prefix function $\sabspre'$ for $\atg$ that $|\abspre{v}|
% \le |\sabspre'(v)|$ for all vertices $v$ of $\atg$. \end{proposition}

Also, we will call a translation process `\eagscope' if it resolves the \nondeterminism\ in
$\rulestranslambdaletreccaltolhotgs$ in such a way that
it always yields eager-scope \lambdahotgs. 
In order to obtain an \eagscope\
translation we have to consider the following aspects.

\myparagraphbf{Garbage removal}
  In the presence of \emph{garbage}, unused function bindings, a translation cannot be \eagscope.
Consider the term
$\labs{\avar}{\labs{\bvar}{\letin{\arecvar=\avar}{\bvar}}}$. The expendable
binding $\arecvar=\avar$ prevents the application of the $\snlvarsucc$\nb-rule,
and hence the closure of the scope of $\slabs{\avar}$, directly below
$\slabs{\avar}$. Therefore we henceforth assume that \emph{all unused function
bindings are removed} prior to applying the rules
$\rulestranslambdaletreccaltolhotgs$.
A \lambdaletrecterm\ without garbage will be called \emph{garbage-free}.
% Note that in the presence of unused function bindings (`garbage')
% translations defined by the translation rules
% $\rulestranslambdaletreccaltolhotgs$ from Fig.~\ref{fig:def:graphsem:lhotgs}
% cannot in general be eager-scope. Consider for instance
% $\labs{x}{\labs{y}{\letin{f=x}{y}}}$. The binding $f=x$ prevents the scope of
% $\slabs{x}$ from be closed immediately underneath the abstraction. Therefore
% henceforth we assume that all unused function bindings are removed prior to
% applying $\rulestranslambdaletreccaltolhotgs$.

\myparagraphbf{Short enough prefix lengths in the $\sslet$-rule}
For obtaining an \eagscope\ translation, we will usually stipulate
that the $\snlvarsucc$\nb-rule is applied eagerly, i.e.\ it is given
precedence over the other rules.  This is clearly necessary for keeping the
abstraction prefixes minimal. But how do we choose the prefix lengths $l_1,\dots,l_k$
in the \txtlet-rule? The prefix lengths $l_i$ determine at which position a
binding $\arecvari{i}=\allteri{i}$ is inserted into the abstraction prefixes.
Therefore $l_i$ may not be chosen too short; otherwise a function $\arecvar$
depending on a function $\brecvar$ may end up to the right of $\brecvar$, and
hence may be removed from the prefix by the $\snlvarsucc$-rule prematurely.
preventing completion of the translation. 
Yet simply choosing $l_i = n$ may prevent scopes from being minimal.
For example, when translating the
term~$\labs{\aavar}{\labs{\bbvar}{\letin{\arecvar=\aavar}{\lapp{\lapp{\lapp{\aavar}{\aavar}}{(\lapp{\arecvar}{\aavar})}}{\bbvar}}}}$,
it is crucial to allow shorter prefixes for the binding than for the \inpart.
As shown in Fig.~\ref{fig:rigid_let_rule} the graph on the left does not have
eager scope-closure even if the $\snlvarsucc$-rule is applied eagerly.
Consequently the opportunity for sharing the lower application vertices is
lost.

% In order to obtain eager-scope closure graphs we have to stipulate the
% translation to apply the $\snlvarsucc$-rule eagerly, i.e.\ we apply it
% whenever possible at the expense of the other rules. This is an obvious
% choice for keeping the prefix lengths minimal. But how do we choose the
% prefix lengths $l_1,\dots,l_k$ in the \txtlet-rule? One might consider to
% always choose $l_i = n, \forall i \in {0,\dots,k}$, but that would not always
% allow for an eager-scope translation. For the \lambdaletrec-term
% $\labs{x}{\labs{y}{\letin{f=x}{\lapp{\lapp xx}{(\lapp{f}{x})}}}}$, for
% example, it is integral to allow shorter prefixes for the binding than for
% the \ssin-part. As shown in Fig.~\ref{fig:rigid_let_rule} the graph on the
% left does not have eager scope-closure even if the $\snlvarsucc$-rule is
% applied eagerly. Consequently the opportunity for sharing the lower
% application vertices is lost.

% So we now focus on the translation with minimal prefixes
% in the $\txtlet$-rule, that applies the $\snlvarsucc$-rule eagerly; we will denote
% it as $\sgraphsemC{\classlhotgs}$. To be able to choose the prefixes correctly,
% such a translation must know for each function binding which lambda variables
% are `required' on the right-hand side of its definition.

\myparagraphbf{Required variable analysis}
For choosing the prefixes in the \txtlet-rule correctly,
the translation process must know for each function binding which \lambdavariables\
are `required' on the right-hand side of its definition.
For this we use an analysis obtaining the required variables for positions in a \lambdaletrecterm\
as employed by algorithms for \lambdalifting\
\cite{john:1985,danv:schu:2004}. The term `required variables' was coined
by Moraz\'an and Schultz \cite{mora:schu:2008}.
A \lambdavariable~$\avar$ is called \emph{required at a position $p$} in a \lambdaletrecterm~$\allter$
if $\avar$ is bound by an abstraction above $p$,
and has a free occurrence in the complete unfolding of $\allter$ below $p$
(also recursion variables from above $p$ are unfolded).  
The required variables at position $p$ in $\allter$ can be computed
as those \lambdavariables\ with free occurrences that are reachable from $p$ by a
downwards traversal with the stipulations: on encountering a
\txtlet-binding the $\txtin$-part is entered; when encountering a recursion
variable the traversal continues at the right-hand side of the corresponding
function binding (even if it is defined above $p$).
\vspace{0.5ex}

With the result of the required variable analysis at hand,
  we now define properties of the translation process
  that can guarantee that the resulting \lambdahotg\ is \eagscope.%
%the prefix
%lengths~$l_i$ in the \txtlet-rule are now chosen minimally such that the
%side-condition described in Fig.~\ref{fig:def:graphsem:lhotgs} is fulfilled (in
%particular, the prefix of a binding group still contains all of its required
%variables).

%\todo{
%Once more we refer to Appendix~\ref{app:translation} that illustrate the
%translation of a term according to $\sgraphsemC{\classlhotgs}$ as well as
%$\sgraphsemCmin{\classlhotgs}$ in a step-by-step manner.
%}

\begin{definition}[eager-scope and minimal-prefix generated]\label{def:eagscope:minprefix:generated}
  Let $\allter$ be a \lambdaletrecterm, and let $\alhotg$ be a \lambdahotg.
 
  We say that $\alhotg$ 
  is \emph{\eagscope} $\rulestranslambdaletreccaltolhotgs$\nb-ge\-ne\-ra\-ted 
  from $\allter$
  if $\alhotg$ is $\rulestranslambdaletreccaltolhotgs$\nb-ge\-ne\-ra\-ted from $\allter$ 
  by a  translation process with the following property:
  for every translation box reached during the process with label     
  $\flabs{\ps\;\vvb{\avar}{\avert}{\abindgroup}}{\bllter}$, 
  where $\bllter$ is a subterm of $\allter$ at position $\bpos$,
  it holds that if $\avar$ is not a required variable at $\bpos$ in $\allter$,
  then in the next translation step performed to this box
  either one of the rules $\arecvar$ or $\sslet$ is applied,
  or the prefix is shortened by the $\snlvarsucc$\nb-rule.% 
  
  We say that $\alhotg$ 
  is $\rulestranslambdaletreccaltolhotgs$\nb-ge\-ne\-ra\-ted \emph{with minimal prefixes}
  from $\allter$
  if $\alhotg$ is $\rulestranslambdaletreccaltolhotgs$\nb-ge\-ne\-ra\-ted from $\allter$ 
  by a translation process 
  in which minimal prefix lengths are achieved 
  by giving applications of the $\snlvarsucc$\nb-rule precedence 
  over applications of all other rules, 
  and by 
  always choosing prefixes minimally in applications of the $\sslet$-rule.
\end{definition}

\begin{proposition}\normalfont\label{prop:eagscope:generated:minimal:prefixes}
  Let $\alhotg$ be a \lambdahotg\ that is $\rulestranslambdaletreccaltolhotgs$\nb-ge\-ne\-ra\-ted 
  from a garbage-free \lambdaletrecterm~$\allter$. The following statements hold:\vspace{-0.5ex}
  \begin{enumerate}[label=(\roman*)]\setlength{\itemsep}{0.25ex}
    \item{}\label{prop:eagscope:generated:minimal:prefixes:item:i}
      If $\alhotg$ is \eagscope~$\rulestranslambdaletreccaltolhotgs$\nb-ge\-ne\-ra\-ted from $\allter$,
      then $\alhotg$ is \eagscope.
    \item{}\label{prop:eagscope:generated:minimal:prefixes:item:ii}
      If $\alhotg$ is $\rulestranslambdaletreccaltolhotgs$\nb-ge\-ne\-ra\-ted with minimal prefixes from $\allter$,
      then $\alhotg$ is \eagscope~$\rulestranslambdaletreccaltolhotgs$\nb-ge\-ne\-ra\-ted from $\allter$,
      hence by~\ref{prop:eagscope:generated:minimal:prefixes:item:i} $\alhotg$ is \eagscope.
  \end{enumerate}
\end{proposition}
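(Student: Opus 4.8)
The plan is to prove the two parts in the order stated, because \ref{prop:eagscope:generated:minimal:prefixes:item:ii} reduces to \ref{prop:eagscope:generated:minimal:prefixes:item:i} once we know that the minimal-prefix process meets the \eagscope-generated condition of Def.~\ref{def:eagscope:minprefix:generated}. Throughout I would use the characterisation of required variables by the downward traversal (entering $\txtin$-parts and following recursion variables to the right-hand sides of their bindings), the side conditions of the $\sslet$- and $\snlvarsucc$-rules in Fig.~\ref{fig:def:graphsem:lhotgs}, and the standing assumption that $\allter$ is garbage-free.

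For \ref{prop:eagscope:generated:minimal:prefixes:item:i} I would first establish the key local fact: in an \eagscope~$\rulestranslambdaletreccaltolhotgs$-generated translation, \emph{every vertex $\bvert$ of the resulting \lambdahotg\ whose abstraction prefix $\abspre{\bvert}$ ends in some $\avert$ (the abstraction binding $\avar$) is produced at a position $\bpos$ at which $\avar$ is a required variable}. The argument is by inspection of the rule that created $\bvert$: a vertex surviving the final erasure of indirections is produced only by $\sslabs$, $\sslapp$, or $\snlvar$, and in each case the creating box has $\avar$ as the last variable of its prefix. If $\bvert\in\vertsof{\snlvar}$ then its body is the occurrence $\avar$ itself, so $\avar$ is trivially required; if $\bvert\in\vertsof{\sslabs}\cup\vertsof{\sslapp}$, then were $\avar$ not required at $\bpos$, the \eagscope-generated property would force the next step at this box to be $\arecvar$, $\sslet$, or a $\snlvarsucc$-shortening, contradicting that the step actually taken was $\sslabs$ or $\sslapp$. (Indirection vertices may legitimately be created while $\avar$ is not required, but they are erased and so do not occur in the final \lambdahotg; black holes carry the empty prefix and satisfy Def.~\ref{def:eagscope} vacuously.)

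The second step of \ref{prop:eagscope:generated:minimal:prefixes:item:i} turns requiredness into the path demanded by Def.~\ref{def:eagscope}. Since $\avar$ is required at $\bpos$, the downward traversal from $\bpos$ reaches a free occurrence of $\avar$; reading this traversal off in the underlying \termgraph\ yields a path $\bvert = \bverti{0}\tgsucc\cdots\tgsucc\bverti{m}\tgsucci{0}\avert$ whose last vertex $\bverti{m}\in\vertsof{\snlvar}$ is an occurrence of $\avar$, hence is bound by $\avert$. It remains to check prefix-monotonicity $\abspre{\bvert}\prele\abspre{\bverti{i}}$ along the path. Writing $\abspre{\bvert} = \apre\avert$: descending through $\sslabs$/$\sslapp$ edges can only grow the prefix, and whenever the traversal detours into a binding $\brecvar=\cllter$ the side condition of the $\sslet$-rule guarantees that every variable required by $\cllter$ — in particular the $\avar$ that the traversal will still reach — lies in that binding's prefix, so $\apre\avert$ persists; by the stack discipline of prefixes, $\avert$, and with it all of $\apre\avert$, cannot be dropped before the occurrence of $\avar$ is passed. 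This is exactly the condition of Def.~\ref{def:eagscope}, so $\alhotg$ is \eagscope. I expect this prefix-monotonicity bookkeeping, in particular the interaction of recursion-variable detours with the stack discipline, to be the main obstacle.

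For \ref{prop:eagscope:generated:minimal:prefixes:item:ii} it suffices to verify that a translation process giving $\snlvarsucc$ precedence and choosing $\sslet$-prefixes minimally is \eagscope~$\rulestranslambdaletreccaltolhotgs$-generated; part~\ref{prop:eagscope:generated:minimal:prefixes:item:i} then applies. So consider a box $\flabs{\ps\;\vvb{\avar}{\avert}{\abindgroup}}{\bllter}$ at position $\bpos$ with $\avar$ not required at $\bpos$, and let the $\arecvari{1},\dots,\arecvari{n}$ in $\abindgroup$ be the bindings assigned to $\avar$'s level. Because $\allter$ is garbage-free, a variable is free in a subterm exactly when it is free in that subterm's unfolding; hence $\avar$ not required at $\bpos$ gives $\avar\notin\freevars{\bllter}$. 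Moreover, minimal prefix choice forces each $\arecvari{i}$ at $\avar$'s level to have $\avar$ among its required variables (otherwise its level could be shortened), so $\avar$ occurs free in the unfolding of $\allteri{i}$; were some $\arecvari{i}\in\freevars{\bllter}$, unfolding that occurrence would expose a free $\avar$ and make $\avar$ required at $\bpos$, a contradiction. Thus both side conditions of the $\snlvarsucc$-rule hold, the rule is applicable, and by its precedence it is the step actually performed — precisely a $\snlvarsucc$-shortening as required by Def.~\ref{def:eagscope:minprefix:generated}. Hence the process is \eagscope-generated, and \ref{prop:eagscope:generated:minimal:prefixes:item:i} completes the proof. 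Here the garbage-freeness hypothesis is essential: a dead binding $\arecvar=\avar$ would keep $\avar$ syntactically free in $\bllter$ without being required, blocking $\snlvarsucc$ and breaking the equivalence used above.
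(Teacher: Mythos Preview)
The paper states this proposition without proof, so there is no reference argument to compare against; your proposal has to be judged on its own merits.

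Your overall strategy for both parts is sound and is essentially the natural way to prove the proposition. For~\ref{prop:eagscope:generated:minimal:prefixes:item:i} the two-step plan (first, each surviving vertex has its last prefix variable required; second, the required-variable traversal yields the path demanded by Def.~\ref{def:eagscope}) is correct, and you correctly identify the two mechanisms by which the prefix can shrink along the path (an $\snlvarsucc$-shortening and a jump to a binding via the $\arecvar$-rule) and why both preserve the prefix $\apre\avert$: the $\snlvarsucc$-rule is blocked because $\avar$ (or a recursion variable at $\avar$'s level that reaches it) is syntactically free, and a binding jump lands in a prefix that contains $\avar$ by the second side condition of the $\sslet$-rule. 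This is the substantive content.

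The one place where your justification is too quick is in~\ref{prop:eagscope:generated:minimal:prefixes:item:ii}, the parenthetical ``otherwise its level could be shortened''. A binding~$\arecvari{i}$ can be pinned to $\avar$'s level by the \emph{first} side condition of the $\sslet$-rule (a dependency on another binding at that level), not only by the second (required $\lambda$-variables). So the immediate reason $\arecvari{i}$ sits at $\avar$'s level need not be that $\avar$ is required by $\arecvari{i}$. The claim is nevertheless true: following the dependency chain among bindings at $\avar$'s level, minimality forces at least one of them, say $\arecvari{j}$, to require $\avar$ directly (otherwise the entire dependency-closed group could be moved one level up, contradicting minimality), and since $\arecvari{i}$ reaches $\arecvari{j}$, the unfolding of $\arecvari{i}$ contains a free $\avar$ as well. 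You should make this transitive argument explicit; as written, the justification does not yet cover the case where condition~1 rather than condition~2 is binding.

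A small remark: your sentence ``Because $\allter$ is garbage-free, a variable is free in a subterm exactly when it is free in that subterm's unfolding'' is used only in the direction $\avar\notin\text{required}\Rightarrow\avar\notin\freevars{\bllter}$. That direction genuinely needs garbage-freeness (a dead binding $\arecvar=\avar$ makes $\avar$ syntactically free without being required), and you rightly flag this at the end; the converse direction is where garbage-freeness is irrelevant.
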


% \begin{proposition}[$\sgraphsemC{\classlhotgs}$ is $\snlvarsucc$-eager]
% \label{prop:trans_eager}\normalfont Let $\allter$ be a \lambdaletrec-term.
% Then the \lambdahotg\ $\graphsemC{\classlhotgs}{\allter}$ is eager-scope.
% \end{proposition}

\begin{proposition}\label{prop:trans_eager}\normalfont For every
  \lambdaletreccalterm~$\allter$, $\graphsemC{\classlhotgs}{\allter}$ is
  \eagscope. \end{proposition}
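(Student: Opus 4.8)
The plan is to peel off the definition of the semantics $\sgraphsemC{\classlhotgs}$ and then invoke Proposition~\ref{prop:eagscope:generated:minimal:prefixes}, in which the real work is done. Recall that $\graphsemC{\classlhotgs}{\allter}$ is computed in two stages: first $\allter$ is turned into its garbage-free form $\allter^{-}$ by deleting all unused function bindings (the standing assumption of the garbage-removal paragraph above, justified by the rule $(\unfrulegarbage)$), and then the deterministic minimal-prefix translation is run on $\allter^{-}$, i.e.\ the process that gives the $\snlvarsucc$\nb-rule precedence over all other rules and always chooses the prefix lengths in the $\sslet$\nb-rule minimally (Definition~\ref{def:eagscope:minprefix:generated}). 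First I would make this chain explicit, so that $\graphsemC{\classlhotgs}{\allter}$ is, by construction, the \lambdahotg\ that is \rulestranslambdaletreccaltolhotgsgenerated\ with minimal prefixes from the garbage-free term $\allter^{-}$.

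With this identification the statement is immediate: $\allter^{-}$ is garbage-free and $\graphsemC{\classlhotgs}{\allter}$ is \rulestranslambdaletreccaltolhotgsgenerated\ with minimal prefixes from it, so Proposition~\ref{prop:eagscope:generated:minimal:prefixes}\,\ref{prop:eagscope:generated:minimal:prefixes:item:ii} applies directly and yields that $\graphsemC{\classlhotgs}{\allter}$ is \eagscope. (Had the semantics instead only been fixed to be \eagscope~\rulestranslambdaletreccaltolhotgsgenerated, part~\ref{prop:eagscope:generated:minimal:prefixes:item:i} would deliver the same conclusion.) Thus the present argument borrows its substance from the earlier proposition, whose part~\ref{prop:eagscope:generated:minimal:prefixes:item:i} carries out the induction on the translation that matches minimal prefixes against the eager-scope path condition of Definition~\ref{def:eagscope}.

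Two points deserve attention rather than calculation, and I regard them as the only real obstacles. The first is that garbage removal genuinely cannot be skipped: for the term $\labs{\avar}{\labs{\bvar}{\letin{\arecvar=\avar}{\bvar}}}$ the unused binding $\arecvar=\avar$ pins $\avar$ in the abstraction prefix below $\slabs{\avar}$ and blocks the $\snlvarsucc$\nb-rule, so a direct translation would yield a non-\eagscope\ \lambdahotg\ and the statement would fail; hence I must verify that the garbage-removal stage is well defined, namely that iterated deletion of unused bindings terminates (each step strictly decreases size) and is confluent, so that $\allter^{-}$ is uniquely determined, which is inherited from the corresponding fragment of the unfolding CRS of Section~\ref{sec:unfsem}. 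The second, minor, point is that the minimal-prefix process is deterministic up to isomorphism --- eager application of $\snlvarsucc$ together with the minimal $\sslet$\nb-choices leaves no genuine freedom in the resulting abstraction-prefix function --- so that $\graphsemC{\classlhotgs}{\allter}$ is a well-defined \lambdahotg\ and the assertion `$\graphsemC{\classlhotgs}{\allter}$ is \eagscope' is unambiguous. Granting Proposition~\ref{prop:eagscope:generated:minimal:prefixes}, what remains is exactly this definitional unfolding plus the appeal to part~\ref{prop:eagscope:generated:minimal:prefixes:item:ii}.
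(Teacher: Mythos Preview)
Your proposal is correct and matches the paper's intent: the paper states Proposition~\ref{prop:trans_eager} without proof, treating it as an immediate consequence of the (implicit) definition of $\sgraphsemC{\classlhotgs}$ via garbage removal followed by minimal-prefix $\rulestranslambdaletreccaltolhotgs$\nb-generation, together with Proposition~\ref{prop:eagscope:generated:minimal:prefixes}\,\ref{prop:eagscope:generated:minimal:prefixes:item:ii}. Your side remarks on well-definedness (termination and confluence of garbage removal, determinacy of the minimal-prefix process) are sound sanity checks but go beyond what the paper spells out; the core step is exactly the definitional unfolding plus the appeal to the preceding proposition, as you identify.
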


%----------
\subsection{Correctness of $\sgraphsemC{\classlhotgs}$ with respect to unfolding semantics}
  \label{subsec:correctness:translation:into:lambdahotgs} 
%----------

In preparation of establishing the desired property~\ref{methods:properties:correctness} 
in Sect.~\ref{sec:ltgs}, we formulate, and outline the proof of, 
the fact that the semantics~$\sgraphsemC{\classlhotgs}$ is correct with respect
to the unfolding semantics on \lambdaletrecterms.

\begin{theorem}\label{thm:graphrep:classlhotgs}
  $\unfsem{\allteri{1}} = \unfsem{\allteri{2}}$ 
    if and only if
  $\graphsemC{{\classlhotgs}}{\allteri{1}} \bisim \graphsemC{{\classlhotgs}}{\allteri{2}}$,
  for all \lambdaletreccalterms~$\allteri{1}$ and $\allteri{2}$. 
\end{theorem}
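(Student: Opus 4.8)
The plan is to route the equivalence through the \emph{infinite unravelling} of a \lambdahotg. I would first introduce an unravelling map $\mathcal{U}$ that assigns to every \lambdahotg~$\alhotg\in\classlhotgs$ an \inflambdaterm~$\mathcal{U}(\alhotg)\in\Ter{\txtinflambdabhcal}$, obtained by reading off the (generally infinite) syntax tree spanned by all access paths from the root of $\alhotg$: an $\sslapp$\nb-vertex becomes an application, an $\sslabs$\nb-vertex an abstraction, a $\blackhole$\nb-vertex a black hole, and each $\snlvar$\nb-vertex becomes a bound-variable occurrence whose binder is the $\sslabs$\nb-vertex singled out by the last entry of its abstraction prefix. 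Condition~$(\snlvar)$ of Def.~\ref{def:abspre:function} guarantees that this binder exists and lies above the occurrence, so $\mathcal{U}(\alhotg)$ is a well-formed \inflambdaterm. With $\mathcal{U}$ available, the theorem splits into two lemmas. Lemma~(A): $\mathcal{U}(\graphsemC{\classlhotgs}{\allter}) = \unfsem{\allter}$ for every \lambdaletreccalterm~$\allter$. Lemma~(B): $\alhotgi{1}\bisim\alhotgi{2}$ iff $\mathcal{U}(\alhotgi{1}) = \mathcal{U}(\alhotgi{2})$, for all $\alhotgi{1},\alhotgi{2}\in\classlhotgs$. Chaining them settles both directions at once, since $\unfsem{\allteri{1}} = \unfsem{\allteri{2}}$ iff $\mathcal{U}(\graphsemC{\classlhotgs}{\allteri{1}}) = \mathcal{U}(\graphsemC{\classlhotgs}{\allteri{2}})$ iff $\graphsemC{\classlhotgs}{\allteri{1}}\bisim\graphsemC{\classlhotgs}{\allteri{2}}$, which is the claim~\ref{methods:properties:correctness}.

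Lemma~(B) is the \termgraph-theoretic part, and I expect it to be essentially routine. For the forward direction, a bisimulation relates root to root and respects labels and ordered successors, so an induction on positions shows that $\mathcal{U}(\alhotgi{1})$ and $\mathcal{U}(\alhotgi{2})$ carry equal labels at corresponding positions; the prefix clause~\eqref{eq:def:bisim:lambdahotg} of Def.~\ref{def:homom:aplambdahotg} ensures that related $\snlvar$\nb-vertices are bound at corresponding depths, so the two \inflambdaterms\ coincide. For the converse I would take the relation pairing $\bvert_1$ with $\bvert_2$ whenever some access path to $\bvert_1$ and some access path to $\bvert_2$ reach equal subtrees of the common unravelling, and check the clauses of a bisimulation: the label and argument clauses are immediate, while the abstraction-prefix clause~\eqref{eq:def:bisim:lambdahotg} follows because the binding structure of the unravelling is already pinned down by the prefixes through condition~$(\snlvar)$.

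The genuine obstacle is Lemma~(A), which must reconcile the rewriting-based unfolding semantics with the graph construction of Fig.~\ref{fig:def:graphsem:lhotgs}. I would establish it in two moves. First a \emph{step-invariance} lemma: whenever $\allter\unfoldred\bllter$ by one of the seven rules, $\graphsemC{\classlhotgs}{\allter}\bisim\graphsemC{\classlhotgs}{\bllter}$. This is a case analysis: the distribution rules $(\unfrulelapp)$ and $(\unfrulelabs)$ duplicate a binding group, which on the graph side merely unshares a subgraph and hence yields bisimilarity (the two copies collapse) rather than isomorphism, whereas $(\unfruleletin)$, $(\unfruleletrec)$, $(\unfrulegarbage)$, $(\unfruletighten)$, and $(\unfruleblackhole)$ leave the root-connected \termgraph\ unchanged up to isomorphism. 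By the easy half of Lemma~(B), bisimilar \lambdahotgs\ have equal unravellings, so $\mathcal{U}(\graphsemC{\classlhotgs}{\cdot})$ stays constant along the entire sequence $\allter\unfoldinfred\unfsem{\allter}$. Second, since that sequence is strongly convergent, for every depth $d$ some reduct $\bllter$ agrees with $\unfsem{\allter}$ down to depth $d$ and is already \letexpression-free there; on such a \letexpression-free prefix $\graphsemC{\classlhotgs}{\cdot}$ is the plain syntax tree, so $\mathcal{U}(\graphsemC{\classlhotgs}{\bllter})$ reproduces it verbatim. A compactness argument over all $d$ then forces $\mathcal{U}(\graphsemC{\classlhotgs}{\allter}) = \unfsem{\allter}$. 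The fiddly points I expect to have to nail down are the precise bookkeeping of abstraction prefixes across the distribution rules (so that the bisimulations in step-invariance really satisfy~\eqref{eq:def:bisim:lambdahotg}), and the verification that the translation of a \letexpression-free term has unravelling equal to itself.
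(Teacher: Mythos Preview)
Your architecture mirrors the paper's closely. The paper also routes through tree-form \lambdahotgs\ (its class $\classlhotreetgs$ and the map $\sTree$), which play the role of your $\mathcal{U}$; its step lemma \eqref{eq:correctness:llter:unfoldred:llter:then:graphsem:funbisim:graphsem} is your step-invariance (though the paper records a \emph{functional} bisimulation, not merely bisimilarity); and its limit statement \eqref{eq:correctness:graphsem:unfsem:llter:funbisim:graphsem:llter} is what your compactness argument would establish.

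There is, however, a genuine gap in your Lemma~(B). As stated for arbitrary $\alhotgi{1},\alhotgi{2}\in\classlhotgs$, the converse direction is false. Take the underlying \termgraph\ of $\labs{x}{\labs{y}{y}}$ with vertices $\averti{0}$ (outer $\sslabs$), $\averti{1}$ (inner $\sslabs$), $\averti{2}$ (the variable). Definition~\ref{def:abspre:function} admits both $\abspre{\averti{1}}=\emptyword$ and $\abspre{\averti{1}}=\averti{0}$, giving two \lambdahotgs\ with identical unravelling. Yet any bisimulation between them must pair $\averti{1}$ with $\averti{1}$, and then clause~\eqref{eq:def:bisim:lambdahotg} requires a length-$0$ and a length-$1$ word to be related by $\abisim^*$, which is impossible. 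Your justification via condition~$(\snlvar)$ constrains prefixes only at $\snlvar$-vertices, not at $\sslabs$- or $\sslapp$-vertices, so it cannot force the prefix clause at $\averti{1}$. The repair is to restrict Lemma~(B) to eager-scope \lambdahotgs; this suffices because $\sgraphsemC{\classlhotgs}$ lands there by Prop.~\ref{prop:trans_eager}, and for eager-scope graphs the prefix function is uniquely determined by the underlying \termgraph, so your candidate relation does satisfy~\eqref{eq:def:bisim:lambdahotg}. The paper sidesteps the issue altogether: for ``$\Rightarrow$'' it exhibits explicit functional bisimulations $\graphsemC{\classlhotgs}{\aiter}\funbisim\graphsemC{\classlhotgs}{\allteri{i}}$ from the common tree via \eqref{eq:correctness:graphsem:unfsem:llter:funbisim:graphsem:llter}, and bisimilarity of $\graphsemC{\classlhotgs}{\allteri{1}}$ and $\graphsemC{\classlhotgs}{\allteri{2}}$ follows by composing through the middle, without ever invoking a general converse of your Lemma~(B).
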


\begin{proof}[Sketch of Proof]
  Central for the proof are \lambdahotgs\ that have tree form and only contain variable backlinks, 
  but no recursive backlinks.
  They form the class $\classlhotreetgs\subsetneqq\classlhotgs$.
  Every $\alhotg\in\classlhotgs$ has a unique `tree unfolding' $\Tree{\alhotg}\in\classlhotreetgs$.
  We make use of the following statements. 
  For all $\allter,\allteri{1},\allteri{2}\in\Ter{\txtlambdaletreccal}$,
  $\aiter,\aiteri{1},\aiteri{2}\in\Ter{\txtinflambdabhcal}$,
  $\alhotg,\alhotgi{1},\alhotgi{2}\in\classlhotgs$, and
  $\alhotreetg,\alhotreetgi{1},\alhotreetgi{2}\in\classlhotreetgs$ it can be shown that:
\begin{gather}
  \allteri{1} \unfoldred \allteri{2}
    \;\;\Rightarrow\;\;
  \graphsemC{\classlhotgs}{\allteri{1}} \convfunbisim \graphsemC{\classlhotgs}{\allteri{2}}
  \label{eq:correctness:llter:unfoldred:llter:then:graphsem:funbisim:graphsem}
  \displaybreak[0]\\[-0.25ex]
  %
%   \allteri{1} \unfoldmred \allteri{2}
%     \;\;\Rightarrow\;\;
%   \graphsemC{\classlhotgs}{\allteri{1}} \convfunbisim \graphsemC{\classlhotgs}{\allteri{2}}
%   \label{eq:correctness:llter:unfoldmred:llter:then:graphsem:funbisim:graphsem}
%   \displaybreak[0]\\
  %
%   \allter \unfoldinfred \aiter (\text{i.e.\ }
%     \;\;\Rightarrow\;\;
%   \graphsemC{\classlhotgs}{\allter} \convfunbisim \graphsemC{\classlhotgs}{\aiter}
%   \label{eq:correctness:llter:unfoldinfred:llter:then:graphsem:funbisim:graphsem}
%   \displaybreak[0]\\
  %
  \allter \unfoldinfred \aiter \;\;
    (\text{hence}\; \unfsem{\allter} = \aiter )
    \;\;\Rightarrow\;\;
  \graphsemC{\classlhotgs}{\allter} \convfunbisim \graphsemC{\classlhotgs}{\aiter}
  \label{eq:correctness:graphsem:unfsem:llter:funbisim:graphsem:llter}
  \displaybreak[0]\\[-0.25ex]
  \graphsemC{\classlhotgs}{\aiter} \in \classlhotreetgs %{for all $\aiter\in\Ter{\inflambdacal}$} 
  \label{eq:correctness:graphsem:aiter:in:lhotreetgs}
  \displaybreak[0]\\[-0.25ex]
  \graphsemC{\classlhotgs}{\aiteri{1}} 
    \iso
  \graphsemC{\classlhotgs}{\aiteri{2}}
    \;\;\Rightarrow\;\;
  \aiteri{1} 
    =
  \aiteri{2}      
  \label{eq:correctness:graphsem:iter:equality}
  \displaybreak[0]\\[-0.25ex]
  \alhotg \convfunbisim \Tree{\alhotg}
  \label{eq:correctness:treeunfolding:funbisim}
  \displaybreak[0]\\[-0.25ex]
%   %
%   \alhotreetg \funbisim \alhotg
%     \;\;\Rightarrow\;\;
%   \alhotreetg \iso \Tree{\alhotg} 
%   \label{eq:correctness:tree:funbisim:lhotg:tree:then:iso:treeunfolding}
%   \displaybreak[0]\\
  % 
  \alhotreetgi{1} \bisim \alhotreetgi{2}
    \;\;\Rightarrow\;\;
  \alhotreetgi{1} \iso \alhotreetgi{2}
  \label{eq:correctness:tree:bisim:tree:then:tree:iso:tree}
  \displaybreak[0]\\[-0.25ex]
  \alhotgi{1} \bisim \alhotgi{2}
    \;\;\Rightarrow\;\;
  \Tree{\alhotgi{1}} \iso \Tree{\alhotgi{2}} 
  \label{eq:correctness:lhotg:bisim:lhotg:then:treeunfolding:iso:treeunfolding}
\end{gather}
%
%   \label{eq:correctness:llter:unfoldred:llter:then:graphsem:funbisim:graphsem}
%   \label{eq:correctness:llter:unfoldmred:llter:then:graphsem:funbisim:graphsem}
%   \label{eq:correctness:llter:unfoldinfred:llter:then:graphsem:funbisim:graphsem}
%   \label{eq:correctness:graphsem:unfsem:llter:funbisim:graphsem:llter}
%   \label{eq:correctness:graphsem:aiter:in:lhotreetgs}
%   \label{eq:correctness:graphsem:iter:equality}
%   \label{eq:correctness:treeunfolding:funbisim}
%   \label{eq:correctness:tree:funbisim:lhotg:tree:then:iso:treeunfolding}
%   \label{eq:correctness:tree:bisim:tree:then:tree:iso:tree}
%   \label{eq:correctness:lhotg:bisim:lhotg:then:treeunfolding:iso:treeunfolding}
%
Hereby \eqref{eq:correctness:llter:unfoldred:llter:then:graphsem:funbisim:graphsem}
is used for proving \eqref{eq:correctness:graphsem:unfsem:llter:funbisim:graphsem:llter}, 
and \eqref{eq:correctness:treeunfolding:funbisim} with
    \eqref{eq:correctness:tree:bisim:tree:then:tree:iso:tree}
for \eqref{eq:correctness:lhotg:bisim:lhotg:then:treeunfolding:iso:treeunfolding}.
Now for proving the theorem, let $\allteri{1}$ and $\allteri{2}$ be arbitrary \lambdaletreccalterms.
\begin{description}\setlength{\itemsep}{0.5ex}
  \item[``$\Rightarrow$'':] 
    %For showing ``$\Rightarrow$'', 
    Suppose 
      $\unfsem{\allteri{1}} = \unfsem{\allteri{2}}$.
    Let $\aiter$ be the infinite unfolding of $\allteri{1}$ and $\allteri{2}$,
    i.e., 
      $ \graphsemC{\classlhotgs}{\allteri{1}} 
          = \aiter 
          = \graphsemC{\classlhotgs}{\allteri{2}}$.
    Then by \eqref{eq:correctness:graphsem:unfsem:llter:funbisim:graphsem:llter} it follows
       $\graphsemC{\classlhotgs}{\allteri{1}} 
          \convfunbisim
        \graphsemC{\classlhotgs}{\aiter}
          \funbisim
        \graphsemC{\classlhotgs}{\allteri{2}}$,
    and hence
    $\graphsemC{\classlhotgs}{\allteri{1}}
       \bisim
     \graphsemC{\classlhotgs}{\allteri{2}}$. 
  \item[``$\Leftarrow$'':]
    % For showing ``$\Leftarrow$'', assume that
    Suppose
      $\graphsemC{\classlhotgs}{\allteri{1}}
         \bisim
       \graphsemC{\classlhotgs}{\allteri{2}}$.  
    Then by \eqref{eq:correctness:lhotg:bisim:lhotg:then:treeunfolding:iso:treeunfolding} it follows that
      $\Tree{\graphsemC{\classlhotgs}{\allteri{1}}} 
         \iso
       \Tree{\graphsemC{\classlhotgs}{\allteri{2}}}$.
    Let $\aiteri{1},\aiteri{2}\in\Ter{\txtinflambdabhcal}$
    be the infinite unfoldings of $\allteri{1}$ and $\allteri{2}$,
    i.e.\
      $\aiteri{1} = \unfsem{\allteri{1}}$, and
      $\aiteri{2} = \unfsem{\allteri{2}}$. 
    Then \eqref{eq:correctness:graphsem:unfsem:llter:funbisim:graphsem:llter} %entails
%       $\graphsemC{\classlhotgs}{\aiteri{1}}
%          \funbisim
%        \graphsemC{\classlhotgs}{\allteri{1}}$
%     and      
%       $\graphsemC{\classlhotgs}{\aiteri{2}}
%          \funbisim
%        \graphsemC{\classlhotgs}{\allteri{2}}$,
    %which 
    together with the assumption entails
      $\graphsemC{\classlhotgs}{\aiteri{1}}
         \bisim
       \graphsemC{\classlhotgs}{\aiteri{2}}$.  
    Since 
       $\graphsemC{\classlhotgs}{\aiteri{1}},
                   \graphsemC{\classlhotgs}{\aiteri{2}}\in\classlhotreetgs$
    by \eqref{eq:correctness:graphsem:aiter:in:lhotreetgs},
    it follows by \eqref{eq:correctness:tree:bisim:tree:then:tree:iso:tree} that
      $\graphsemC{\classlhotgs}{\aiteri{1}}
         \iso
       \graphsemC{\classlhotgs}{\aiteri{2}}$.
    Finally, by using \eqref{eq:correctness:graphsem:iter:equality} we get
      $\aiteri{1}
         = 
       \aiteri{2}$, and hence
      $\unfsem{\allteri{1}}
         = 
       \aiteri{1}
         = 
       \aiteri{2}
         = 
       \unfsem{\allteri{2}}$.\hspace*{\fill}\qed
\end{description} 
\renewcommand{\qed}{} 
\end{proof}

% DELETED
% \begin{figure}[ht]
% %\input{unf_trans_bisim}
% \caption{Correctness of the translation: Translation after unfolding induces a functional bisimulation}
% \end{figure}

%---------------------------
\section{\Lambdatgs}
  \label{sec:ltgs} 
%---------------------------

While modelling sharing expressed by \lambdaletrecterms\ through
\lambdahotgs\ %is conceptually satisfying,
              facilitates comparisons via bisimilarity,  
it is not immediately clear how 
the compactification of \lambdahotgs\ via the bisimulation collapse~$\scoll$
for \lambdahotgs\ (which has to respect scopes in the form of the \absprefix\ functions)
can be computed efficiently.
We therefore develop an implementation as first-order \termgraphs, 
for which standard methods are available.
% While \lambdahotgs{} models sharing in \lambdaletrec{} in the way that we
% require, the notion of homomorphism is rather involved on these graphs.
% Therefore we seek to find a representation as pure first-order \termgraphs\  such
% that known efficient algorithms can be applied to these graphs without
% adaptation.

Due to Ex.~\ref{ex:counternat}, the
scoping information cannot just be discarded,
as functional bisimilarity on the underlying \termgraphs\ 
does not faithfully implement functional bisimilarity on \lambdahotgs.
Therefore the scoping information has to be incorporated 
in the first-order interpretation
in some way. % or other.
We accomplish this by extending $\siglambdabh$ with
\snlvarsuccvertices, scope delimiters, that signify the end of scopes.
When translating a \lambdahotg{} into 
 % its first-order interpretation
a first-order \termgraph, 
\snlvarsuccvertices\ are placed along those edges %at those positions
in the underlying \termgraph\  at which the abstraction prefix decreases in the \lambdahotg.
% We know from Ex.~\ref{ex:counternat} that we cannot just discard the
% scoping information as functional bisimulation on the underlying \termgraphs\  as
% does not faithfully implement functional bisimulation on \lambdahotgs.
% Therefore we must find a way to include scoping information as part of the
% first-order representation. To this end we extend $\siglambdabh$ to include
% \snlvarsucc-vertices. They signify the end of a scope, thus we call them scope
% delimiters. When translating a \lambdahotgs{} into this first-order
% representation, the idea is to place \snlvarsucc-vertex at those positions
% where the abstraction prefix decreases in the original \lambdahotg. 

\begin{example}[Adding \snlvarsuccvertices]
Consider the terms in Ex.~\ref{ex:counternat} and their \lambdahotgs\ in
Ex.~\ref{ex:counternat_lhotgs}. In the first-order interpretation below, the
shading is just for illustration purposes; it is \emph{not} part of the
structure, and does \emph{not directly} impair functional bisimulation.
  \begin{center}
    $
    \begin{aligned}[c] 
      \figsmall{counterex_onlyvarbl_y_scoped}
    \end{aligned} 
       \hspace*{2ex}\sfunbisim\hspace*{2ex}
    \begin{aligned}[c]
      \figsmall{counterex_onlyvarbl_collapse_scoped}
    \end{aligned}
       \hspace*{2ex}\not\sinvfunbisim\hspace*{2ex}
    \begin{aligned}[c]  
      \figsmall{counterex_onlyvarbl_x_scoped}
    \end{aligned}
    $   
  \end{center}
\end{example}
\noindent
The addition of scope delimiters resolves the problem of
Ex.\ref{ex:counternat}. They adequately represent the scoping information.

As for \lambdahotgs, we will define correctness conditions by means
of an \absprefixfunction. However, the current approach with unary delimiter vertices
leads to a problem.
% As for the \lambdahotgs\ we intend to provide correctness conditions by means
% of an abstraction-prefix labelling to define the set of \lambdatgs. The current
% approach is however not entirely suitable for this, as the following example
% shows.

\begin{example}[$\snlvarsucc$-backlinks]\label{ex:S-backlinks}
The \termgraph\  with scope delimiters on the left admits
a functional bisimulation that fuses two \snlvarsuccvertices\ that
close different scopes. We cannot hope to find a unique abstraction
prefix for the resulting fused $\snlvarsucc$-vertex. This is remedied on the right
by using a variant representation that requires backlinks from each
\snlvarsuccvertex\ to the abstraction vertex whose scope it closes.
Then \snlvarsuccvertices\ can only be fused 
if the corresponding abstractions have already been merged.
Hence in the presence of $\snlvarsucc$\nb-backlinks,   
as in the right illustration below, 
only the variable vertex can be shared. 
% If this is not done, as in the left illustration below, 
% then only the variable vertex can be shared.
% On the left we see a first-order \termgraph\  with scope delimiters, that admits
% a functional bisimulation such that two $\snlvarsucc$-vertices are fused which
% belong to different scopes. We cannot hope to find a distinct abstraction
% prefix label for such a fused $\snlvarsucc$-vertex. This is remedied on the right
% by an amended variant of the graph which also includes backlinks from
% $\snlvarsucc$-vertices to their respective $\lambda$-vertices. It only allows
% for a fusion of the $\snlvarsucc$-vertices if the respective abstraction
% vertices are also merged. Otherwise, as depicted below, only the variable
% vertex can be shared.
  \begin{center}
    $
    \begin{aligned}[c] 
      \figsmall{ltgs_without_sbacklinks_g1}
    \end{aligned} 
       \hspace*{0.5ex}\sfunbisim\hspace*{-2.5ex}
    \begin{aligned}[c]
      \figsmall{ltgs_without_sbacklinks_g0}
    \end{aligned}
    \hfill
    \begin{aligned}[c] 
      \figsmall{ltgs_with_sbacklinks_g1}
    \end{aligned} 
       \hspace*{0.5ex}\sfunbisim\hspace*{-2.5ex}
    \begin{aligned}[c]
      \figsmall{ltgs_with_sbacklinks_g0}
    \end{aligned}
    $   
  \end{center}
\end{example}
\noindent
Therefore we consider \termgraphs\  over the extension $\siglambdaSbh$
of $\siglambdabh$ with a symbol $\snlvarsucc$ of arity $2$; one
edge targets the successor vertex, the other is a backlink.
We give correctness conditions, similar as for \lambdahotgs,
and define the arising class of `\lambdatgs'. 
% \todo{intuition}
% We will henceforth consider \termgraphs\  over the $\siglambdaSbh$, which is the
% extension of $\siglambdabh$ by the symbol $\snlvarsucc$ with arity $2$; one
% edge for the successor of the vertex, and one backlink.
% As for the \lambdahotgs\ we give correctness conditions and call the class of
% \termgraphs\  that meet these conditions \lambdatgs.

\begin{definition}[correct abstraction-prefix function for \termgraphs\ over $\siglambdaSbh$]%
    \label{def:abspre:function:siglambdaij}\normalfont
  Let $\atg = \tuple{\verts,\svlab,\svargs,\sroot}$ be a $\siglambdaSbh$\nb-term-graph.

%   A function $\sabspre \funin \verts \to \verts^*$
%   from vertices of $\atg$ to words of vertices %\emph{variable names} in a set $\Var$ 
%   is called an \emph{abstraction-prefix function} for $\atg$.
%   Such a function is called \emph{correct} if
%   for all $\bvert,\bverti{0},\bverti{1}\in\verts$ and $k\in\setexp{0,1}$ it holds:

  An \emph{\absprefixfunction} $\sabspre \funin \verts \to \verts^*$ on $\atg$ 
  is called \emph{correct} if
  for all $\bvert,\bverti{0},\bverti{1}\in\verts$ and $k\in\setexp{0,1}$ it holds:
  \begin{align*}
    & \abspre{\sroot} = \emptyword 
    & 
    (\text{root})
    \displaybreak[0]\\
    & \abspre{\bh} = \emptyword 
    &
    \hspace*{-5ex}
    (\text{black hole})
    \\
    %\vlab{\bvert} = \sslabs
      \bvert\in\vertsof{\sslabs}
        \,\logand\,
      \bvert \tgsucci{0} \bvertbp{0}{}
      \; & \Rightarrow \;
    \abspre{\bvertbp{0}{}} = \abspre{\bvert} \bvert %x \;\; \text{for some $x\in\Var$}  
    & 
    (\sslabs)
  \displaybreak[0]\\
    %\vlab{\bvert} = \sslapp
    \bvert\in\vertsof{\sslapp}
      \,\logand\,
    \bvert \tgsucci{k} \bvertbp{k}{}
      \; & \Rightarrow \;\;
    \abspre{\bvertbp{k}{}} = \abspre{\bvert} 
    & 
    (\sslapp)  
%   \displaybreak[0]\\[0.4ex]  
%     \bvert\in\vertsof{\snlvar}
%       %\phantom{\;\,\logand\,\bvert \tgsucci{0} \bvertacc\;}
%         \;\; & \Rightarrow \;\;
%      \abspre{\bvert} \neq \emptyword   
%     & 
%     (\snlvar)_0
  \displaybreak[0]\\[0.4ex]
    %\vlab{\bvert} = \snlvar
    \bvert\in\vertsof{\snlvar}
      \,\logand\,
    \bvert \tgsucci{0} \bvertbp{0}{}
      \; & \Rightarrow \;\;
    \left\{\hspace*{1pt}   
    \begin{aligned}[c]  
      & %\vlab{\bvertacc} = \sslabs
        \bvertbp{0}{}\in\vertsof{\sslabs}
      \\[-0.5ex]
      & 
      \,\logand\,
      \abspre{\bvertbp{0}{}} \bvertbp{0}{} = \abspre{\bvert} 
    \end{aligned}
    \right.      
    & 
    (\snlvar)_1 
    \displaybreak[0]\\
    %\vlab{\bvert} = \snlvarsucc
    \bvert\in\vertsof{\snlvarsucc}
      \,\logand\,
    \bvert \tgsucci{0} \bvertbp{0}{}
      \; & \Rightarrow \;\;
    \left\{\hspace*{1pt} 
    \begin{aligned}[c]
      &  
      \abspre{\bvertbp{0}{}} \avert
        =
      \abspre{\bvert}
      \\[-0.5ex]
      & \hspace*{6ex}  
      \text{for some $\avert\in\verts$} 
    \end{aligned}
    \right.
    &
    (\snlvarsucc)_1 
  \displaybreak[0]\\
    \bvert\in\vertsof{\snlvarsucc}
        \,\logand\,
    \bvert \tgsucci{1} \bvertbp{1}{}  
        \; & \Rightarrow \;\;
    \left\{\hspace*{1pt}    
    \begin{aligned}[c]    
      &
      \bvertbp{1}{} \in \vertsof{\sslabs}
      \\[-0.5ex]
      &
        \,\logand\,        
      \abspre{\bvertbp{1}{}} \bvertbp{1}{}
        =
      \abspre{\bvert}
    \end{aligned}
    \right.
    &
    (\snlvarsucc)_2  
   \end{align*}   
\end{definition}
\noindent
While in \lambdahotgs\ the abstraction prefix can shrink by several vertices along
an edge (cf.\ Def.~\ref{def:abspre:function}), here the situation is strictly regulated:
the prefix can only shrink by one variable, and only along the outgoing edge of a delimiter vertex.
% In \lambdahotgs\ the abstraction prefix can shrink by a number of vertices along
% an edge (Def.~\ref{def:abspre:function}). Here, we have a more controlled
% situation; every decrease of the prefix by one variable is made explicit by a
% dedicated $\snlvarsucc$-vertex.

\begin{proposition}[uniqueness of the abstraction prefix function]
\label{prop:unique_abspre}
\normalfont
  Let $\atg$ be a \termgraph\  over the signature $\siglambdaSbh$. If
  $\sabspre_1$ and $\sabspre_2$ are correct abstraction prefix functions of
  $\atg$, then $\sabspre_1 = \sabspre_2$.
\end{proposition}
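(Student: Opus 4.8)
The plan is to exploit a single structural feature of Definition~\ref{def:abspre:function:siglambdaij}: each of its correctness conditions relates the abstraction prefix of a vertex to that of a successor by an \emph{equation}, rather than by the inequality $\prele$ that appears in Definition~\ref{def:abspre:function} for $\lambdahotgs$. Because of this, a correct prefix function over $\siglambdaSbh$ is forced to be deterministic along edges: once the prefix of a vertex is fixed, the prefixes of its successors are uniquely determined. Since $\atg$ is root-connected and the (root) condition fixes $\abspre{\sroot} = \emptyword$, uniqueness of $\sabspre$ will follow by propagating this determinacy outward along access paths.

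First I would record the structural observation that drives everything. Fix a vertex $\bvert$ with a successor via $\bvert \tgsucci{k} \bverti{k}$, and suppose $\abspre{\bvert}$ is known. Then each applicable correctness condition pins down $\abspre{\bverti{k}}$ as a single-valued function of $\abspre{\bvert}$, the label $\vlab{\bvert}$, and the index $k$: condition $(\sslabs)$ appends, giving $\abspre{\bverti{0}} = \abspre{\bvert}\bvert$; condition $(\sslapp)$ copies, giving $\abspre{\bverti{k}} = \abspre{\bvert}$; and each of $(\snlvar)_1$, $(\snlvarsucc)_1$, $(\snlvarsucc)_2$ presents $\abspre{\bvert}$ as $\abspre{\bverti{k}}$ followed by exactly one further vertex, so that $\abspre{\bverti{k}}$ equals $\abspre{\bvert}$ with its last symbol deleted. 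The case $\bvert\in\vertsof{\bh}$ cannot arise, as black-hole vertices have no successors. In every case the value of $\abspre{\bverti{k}}$ is uniquely determined; the only point needing a word of care, that last-symbol deletion is well defined, is immediate because the governing equation already exhibits $\abspre{\bvert}$ as a nonempty word ending in a vertex.

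With this in hand I would prove the proposition by induction along access paths. Let $\sabspre_1$ and $\sabspre_2$ be correct abstraction-prefix functions of $\atg$, and let $\bvert$ be an arbitrary vertex. By root-connectedness $\bvert$ has an access path $\sroot = \bverti{0} \tgsucci{k_1} \bverti{1} \tgsucci{k_2} \cdots \tgsucci{k_m} \bverti{m} = \bvert$, and I would show $\absprei{1}{\bverti{j}} = \absprei{2}{\bverti{j}}$ by induction on $j$. For $j = 0$ both sides equal $\emptyword$ by the (root) condition. For the step, the induction hypothesis gives $\absprei{1}{\bverti{j}} = \absprei{2}{\bverti{j}}$; applying the observation to the edge $\bverti{j} \tgsucci{k_{j+1}} \bverti{j+1}$ — with the \emph{same} label $\vlab{\bverti{j}}$ and index $k_{j+1}$ for both functions, since $\atg$ is a single fixed graph — expresses both $\absprei{1}{\bverti{j+1}}$ and $\absprei{2}{\bverti{j+1}}$ as the \emph{same} function of the common value $\absprei{1}{\bverti{j}}$, hence they coincide. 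Taking $j = m$ yields $\absprei{1}{\bvert} = \absprei{2}{\bvert}$, and since $\bvert$ was arbitrary, $\sabspre_1 = \sabspre_2$.

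I do not expect a genuine obstacle: this proposition is essentially the payoff of having designed the correctness conditions with equality and with scope decrease by a single vertex restricted to delimiter edges. The only places deserving explicit attention are the well-definedness of last-symbol deletion (handled above) and the remark that both $\sabspre_1$ and $\sabspre_2$ read identical graph data—labels, edge indices, and backlink targets—along the access path, which holds because they range over the one fixed term graph $\atg$.
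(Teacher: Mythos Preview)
Your proposal is correct and is precisely the argument the paper has in mind. The paper does not spell out a proof but immediately precedes the proposition with the key observation you identified: unlike in Def.~\ref{def:abspre:function}, here the prefix can only change by exactly one symbol along an edge, and each correctness condition is an \emph{equation}; your induction along access paths from the root is the natural way to make that remark into a proof.
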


% \begin{proof}
% Induction on the conditions of Def.~\ref{def:abspre:function:siglambdaij}.
% \end{proof}

\begin{definition}[\lambdatg]\label{def:ltg}\normalfont
  A \emph{\lambdatg} is a \termgraph\  $\altg =
  \tuple{\verts,\svlab,\svargs,\sroot}$ over $\siglambdaSbh$ that has a
  correct \absprefix\ function 
  (which is not a part of $\altg$). 
  The class of \lambdatg{s} is $\classltgs$.
\end{definition}

\begin{definition}[eager scope] 
  A \lambdatg\ $\altg$ is called \emph{\eagscope}
  if together with its \absprefixfunction\ it meets the condition %(analogous to that) 
                                                                  in Def.~\ref{def:eagscope}.
  $\classeagltgs$ denotes the class of \eagscope\ graphs.

\end{definition}

%----------
\subsection{Correspondence between $\lambda$-ho- and \lambdatgs} % {$\classlhotgs$ and $\classltgs$}
  \label{sec:ltgs:subsec:trans}
%----------

The correspondences between \lambdahotgs\ and \lambdatgs:
\begin{align*}
  \slhotgstoltgs \funin \classlhotgs \to \classltgs &
%    \;\,
%  \alhotg \mapsto \lhotgstoltgs{\alhotg}
  &  \hspace{3ex}
  \sltgstolhotgs \funin \classltgs \to \classlhotgs &
%    \;\, 
%  \altg \mapsto \ltgstolhotgs{\altg}   
\end{align*}
are defined as follows: For obtaining $\lhotgstoltgs{\alhotg}$ for a $\alhotg\in\classlhotgs$,
insert scope-delimiters wherever the prefix decreases, as illustrated in Fig.~\ref{fig:insertS}.
For obtaining $\ltgstolhotgs{\altg}$ for a $\altg\in\classltgs$,
retain the \absprefixfunction, and
remove every delimiter vertex from $\altg$,
thereby connecting its incoming edge with its outgoing edge. 
% cut out all scope delimiters from $\altg$ such
% that the incoming edge and the outgoing edge of such a $\snlvarsucc$-vertex are
% connected.
For formal definitions and well-definedness of $\sltgstolhotgs$ and
$\slhotgstoltgs$, see \cite{grab:roch:2013:a:TERMGRAPH}.
% Since every \lambdatg\ has a unique \absprefixfunction, it is easy to
% define translations to and from \lambdahotgs. To obtain a \lambdahotg\ from a
% \lambdatg\ $\altg$ we can define a mapping $\sltgstolhotgs$ that retains the
% abstraction prefix function and cuts out all scope delimiters from $\altg$ such
% that the incoming edge and the outgoing edge of such a $\snlvarsucc$-vertex are
% connected. For the other direction we define a mapping $\slhotgstoltgs$ that
% inserts scope-delimiters wherever the prefix decreases (Fig.~\ref{fig:insertS}).
% \[ \sltgstolhotgs \funin \classltgs \to \classlhotgs \hspace{1.5cm}
%    \slhotgstoltgs \funin \classlhotgs \to \classltgs \]

\begin{figure}[th]
\transpicture{
  \ltgnode{a}{$a$};
  \addPrefix{a}{v_1 \dots v_n};
  \ltgnode[node distance=9mm,below=of a]{b}{$b$};
  \addPrefix{b}{v_1 \dots v_m};
  \draw[->](a) to (b);
}
\vcentered{$\overset{m<n}{%\underset{a \not= \snlvarsucc \not= b}
                          \implies}$}
\hspace{-4.5ex}
\transpicture{
  \ltgnode{a}{$a$};
  \addPrefix{a}{v_1 \dots v_n};
  \ltgnode[below=of a]{s1}{\snlvarsucc}; \addPrefix{s1}{v_1 \dots v_{n-1}}; \draw[->](a) to (s1);
  \node[node distance=6mm,below=of s1](si){$\vdots$}; \draw[->](s1) to (si);
  \ltgnode[node distance=6mm,below=of si]{sn}{\snlvarsucc}; \addPrefix{sn}{v_1 \dots v_{m+1}}; \draw[->](si) to (sn);
  \ltgnode[below=of sn]{b}{$b$};
  \addPrefix{b}{v_1 \dots v_m};
  \draw[->](sn) to (b);
}
\hfill
\vcentered{\figsmall{eager_more_sharing_ltg}}
\caption{\label{fig:insertS}
  Left: definition of $\slhotgstoltgs$ by inserting \snlvarsuccvertices,
        between edge-connected vertices of a \lambdahotg.
  Right: interpretation $\lhotgstoltgs{\alhotg}$ of the \eagscope\ \lambdahotg~$\alhotg$
         in Fig.~\ref{fig:eager_more_sharing}.}
\end{figure}

Note that a \lambdahotg\ may have multiple corresponding \lambdatgs\ that
differ only with respect to their `degree' of $\snlvarsucc$\nb-sha\-ring
(the extent to which $\snlvarsucc$\nb-ver\-ti\-ces occur shared). 
% (that is, %in the number of predecessors of \snlvarsucc-vertices
%           in the in-degree of $\snlvarsucc$\nb-ver\-ti\-ces that correspond to each other via bisimulation). 
$\slhotgstoltgs$ maps to a \lambdatg\ with no sharing of $\snlvarsucc$\nb-ver\-ti\-ces at all.

The proposition below guarantees the usefulness of the translation~$\slhotgstoltgs$
for implementing functional bisimulation on \lambdahotgs.
In particular, this is due to items~\ref{prop:corr:laphotgs:ltgs:item:preserve} and \ref{prop:corr:laphotgs:ltgs:item:eagerness}.
As formulated by item~\ref{prop:corr:laphotgs:ltgs:item:retraction:section},
$\sltgstolhotgs$ is a retraction of $\slhotgstoltgs$
(and $\slhotgstoltgs$ a section of $\sltgstolhotgs$).
The converse is not the case, 
yet it holds up to $\snlvarsucc$\nb-sharing by item~\ref{prop:corr:laphotgs:ltgs:item:retraction:section:up:to}.
For the proof, we refer to our article \cite{grab:roch:2013:a:TERMGRAPH}.% 

\begin{samepage}
\begin{proposition}[correspondence with \lambdahotg{s}]\label{prop:corr:laphotgs:ltgs}\normalfont
  \mbox{}\vspace*{-0.5ex}\nopagebreak
%   The mappings $\slhotgstoltgs$ from \ldots\
%   and $\sltgstolhotgs$ from \ldots\
%   define a correspondence 
%   between the class $\classlhotgs$ of \lambdahotgs\ 
%   and the class $\classltgs$ of \lambdatgs\ 
%   with the following properties:
\begin{enumerate}[label=(\roman*)]\setlength{\itemsep}{0.25ex} 
  \item{}\label{prop:corr:laphotgs:ltgs:item:retraction:section} 
    $\scompfuns{\sltgstolhotgs}{\slhotgstoltgs} = \sidfunon{\classlhotgs}$.
  \item{}\label{prop:corr:laphotgs:ltgs:item:retraction:section:up:to}.
    $(\compfuns{\slhotgstoltgs}{\sltgstolhotgs)}{\altg}
           \Sfunbisim
         \altg$
    holds for all $\altg\in\classltgs\,$.
  \item{}\label{prop:corr:laphotgs:ltgs:item:preserve}
    $\sltgstolhotgs$ and $\slhotgstoltgs$ 
    preserve and reflect functional bisimulation~$\sfunbisim$ and bisimulation~$\sbisim$
    on $\classlhotgs$ and $\classltgs$.
  \item{}\label{prop:corr:laphotgs:ltgs:item:eagerness}
    $\sltgstolhotgs$ and $\slhotgstoltgs$ preserve and reflect the property eager-scope.
%   \begin{align*}
%     \forallstzero{\alhotgi{1},\alhotgi{2}\in\classlhotgs} \;\;  \hspace*{-2.75ex} &&
%        \alhotgi{1} \funbisim \alhotgi{2}
%        \; & \Leftrightarrow\;
%      \lhotgstoltgs{\alhotgi{1}} \funbisim \lhotgstoltgs{\alhotgi{2}}
%     \\  
%     \forallstzero{\altgi{1},\altgi{2}\in\classltgs} \;\; \hspace*{-2.75ex} && 
%        \ltgstolhotgs{\altgi{1}} \funbisim \ltgstolhotgs{\altgi{2}}
%           \; & \Leftrightarrow\;
%        \altgi{1} \funbisim \altgi{2}
%   \end{align*}
  \item{}\label{prop:corr:laphotgs:ltgs:item:closed}
    $\classltgs$ is closed under $\sfunbisimS$, $\sconvfunbisimS$, and $\sbisimS$.
  \item{}\label{prop:corr:laphotgs:ltgs:item:isomorphism}  
    $\slhotgstoltgs$ and $\sltgstolhotgs$
    induce isomorphisms between $\classlhotgs$ and $\factorset{\classltgs}{\sbisimsubscriptS}$. %, and vice versa. 
\end{enumerate}
\end{proposition}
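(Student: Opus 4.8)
The plan is to follow the route of~\cite{grab:roch:2013:a:TERMGRAPH}, establishing items~\ref{prop:corr:laphotgs:ltgs:item:retraction:section} and~\ref{prop:corr:laphotgs:ltgs:item:retraction:section:up:to} first, then the preservation/reflection item~\ref{prop:corr:laphotgs:ltgs:item:preserve}, and finally deriving \ref{prop:corr:laphotgs:ltgs:item:eagerness}--\ref{prop:corr:laphotgs:ltgs:item:isomorphism} as corollaries. Recall that $\slhotgstoltgs$ splices into each edge of a \lambdahotg\ a chain of \snlvarsuccvertices---one for each unit by which the \absprefix\ drops along that edge, each carrying a backlink to the abstraction vertex whose scope it closes (Fig.~\ref{fig:insertS})---while $\sltgstolhotgs$ removes all delimiter vertices, splicing each incoming into its outgoing edge, and retains the (by Prop.~\ref{prop:unique_abspre} unique) \absprefixfunction. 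For item~\ref{prop:corr:laphotgs:ltgs:item:retraction:section} I would argue directly: inserting delimiters along an edge and then contracting them restores precisely the original edge and the original vertex set, and since the prefixes of the original vertices are untouched by the round trip, also the original \absprefixfunction; hence $\scompfuns{\sltgstolhotgs}{\slhotgstoltgs}=\sidfunon{\classlhotgs}$ holds on the nose. For item~\ref{prop:corr:laphotgs:ltgs:item:retraction:section:up:to}, removing and reinserting delimiters yields the $\snlvarsucc$-fully-unshared variant of $\altg$, and the map that sends each reinserted delimiter to the (possibly shared) delimiter of $\altg$ closing the same scope is a homomorphism touching only \snlvarsuccvertices, which gives $\lhotgstoltgs{\ltgstolhotgs{\altg}}\Sfunbisim\altg$.

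The technical heart is item~\ref{prop:corr:laphotgs:ltgs:item:preserve}. The key observation is that a homomorphism $\sahom$ of \lambdahotgs\ respects \emph{prefix lengths} along edges: if the prefix drops from $\averti{1}\cdots\averti{n}$ to $\averti{1}\cdots\averti{m}$ along an edge $\avert\tgsucc\bvert$, then by the \absprefix\ clause~\eqref{eq:def:homom:lambdahotg} it drops from $\ahom{\averti{1}}\cdots\ahom{\averti{n}}$ to $\ahom{\averti{1}}\cdots\ahom{\averti{m}}$ along $\ahom{\avert}\tgsucc\ahom{\bvert}$, so the two spliced-in delimiter chains have the \emph{same} length $n-m$. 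This lets me extend $\sahom$ delimiter-by-delimiter to a homomorphism $\lhotgstoltgs{\alhotgi{1}}\funbisim\lhotgstoltgs{\alhotgi{2}}$, the backlinks being respected because $\sahom$ sends each closed abstraction vertex $\averti{i}$ to $\ahom{\averti{i}}$. Conversely, any homomorphism $\sbhom$ of the \lambdatgs\ preserves labels, hence maps \snlvarsuccvertices\ to \snlvarsuccvertices\ and every other vertex to one of its own kind; its restriction to the non-delimiter vertices is therefore a candidate homomorphism of the underlying \lambdahotgs, and the \absprefix\ clause~\eqref{eq:def:homom:lambdahotg} for it is recovered from the delimiter-and-backlink pattern together with uniqueness of the \absprefixfunction\ (Prop.~\ref{prop:unique_abspre}). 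The same correspondence, dropping the functionality requirement, handles plain bisimulation $\sbisim$; the statement for $\sltgstolhotgs$ follows symmetrically, using that by items~\ref{prop:corr:laphotgs:ltgs:item:retraction:section} and~\ref{prop:corr:laphotgs:ltgs:item:retraction:section:up:to} the two maps are mutually inverse up to $\snlvarsucc$-bisimilarity.

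The remaining items are corollaries. For item~\ref{prop:corr:laphotgs:ltgs:item:eagerness} I would note that the eager-scope condition of Def.~\ref{def:eagscope} is phrased purely in terms of the \absprefixfunction\ and downward paths, both of which are carried across faithfully: a witnessing path in a \lambdahotg\ corresponds to the path obtained by splicing in delimiters (and, the other way round, by contracting them), and the prefix-inclusion requirement $\abspre{\bvert}\prele\abspre{\bverti{i}}$ along it is unaffected, so eager-scope is both preserved and reflected. For item~\ref{prop:corr:laphotgs:ltgs:item:closed} the point is that $\sfunbisimS$, $\sconvfunbisimS$, and $\sbisimS$ only merge, split, or relate \snlvarsuccvertices\ \emph{with matching backlinks}; such operations keep the clauses $(\sslabs)$, $(\sslapp)$, $(\snlvar)_1$, $(\snlvarsucc)_1$, $(\snlvarsucc)_2$ of Def.~\ref{def:abspre:function:siglambdaij} satisfiable, so the result still admits a correct \absprefixfunction\ and hence lies in $\classltgs$. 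Finally, item~\ref{prop:corr:laphotgs:ltgs:item:isomorphism} assembles the picture: by items~\ref{prop:corr:laphotgs:ltgs:item:retraction:section} and~\ref{prop:corr:laphotgs:ltgs:item:retraction:section:up:to} the map $\sltgstolhotgs$ is constant on $\sbisimsubscriptS$-classes and two-sided inverse to $\slhotgstoltgs$ up to $\snlvarsucc$-bisimilarity, and by item~\ref{prop:corr:laphotgs:ltgs:item:preserve} both maps preserve and reflect $\sfunbisim$ and $\sbisim$; hence they descend to a mutually inverse, structure-preserving bijection between $\classlhotgs$ and $\factorset{\classltgs}{\sbisimsubscriptS}$.

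I expect the main obstacle to be the reflection half of item~\ref{prop:corr:laphotgs:ltgs:item:preserve}: recovering the prefix-respecting clause~\eqref{eq:def:homom:lambdahotg} for the restricted map out of a purely first-order graph homomorphism. Since the \absprefixfunction\ is \emph{not} part of the \lambdatg\ structure, it must first be reconstructed from the graph alone and the homomorphism then checked to be compatible with it. This is exactly where the binary \snlvarsuccvertices\ with their backlinks earn their keep (cf.\ Ex.~\ref{ex:S-backlinks}): without them, delimiters closing different scopes could be fused by a first-order homomorphism, and the prefix correspondence---and with it reflection---would break.
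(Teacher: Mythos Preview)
Your proposal is sound and, in fact, goes well beyond what the paper itself offers: the paper does not give a proof here at all but simply writes ``For the proof, we refer to our article~\cite{grab:roch:2013:a:TERMGRAPH}.'' Your sketch follows precisely the natural line of that referenced development---direct verification of~\ref{prop:corr:laphotgs:ltgs:item:retraction:section} and~\ref{prop:corr:laphotgs:ltgs:item:retraction:section:up:to}, the prefix-length-preservation argument for~\ref{prop:corr:laphotgs:ltgs:item:preserve}, and the remaining items as corollaries---and you correctly identify the reflection direction of~\ref{prop:corr:laphotgs:ltgs:item:preserve} (reconstructing the \absprefix\ clause from a first-order homomorphism via the $\snlvarsucc$-backlinks and Prop.~\ref{prop:unique_abspre}) as the place where the real work lies.
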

\end{samepage}

%\todo{Point out: Zuordnung von \lambdatgs\ zu \lambdahotgs\ ist nicht injektiv}

%----------
\subsection{Closedness of $\classltgs$ under functional bisimulation}
%----------

While preservation of $\sfunbisim$ by $\slhotgstoltgs$ is necessary
for its implementation via $\sfunbisim$ on $\classltgs$,
the practicality of the interpretation $\slhotgstoltgs$ 
also depends on the closedness of $\classltgs$ under $\sfunbisim$. 
Namely, if the bisimulation collapse $\altg = \coll{\lhotgstoltgs{\alhotg}}$ 
of the interpretation of some $\alhotg\in\classlhotgs$
were not contained in $\classltgs$, then the converse interpretation $\sltgstolhotgs$ 
could not be applied to $\altg$ in order to obtain the bisimulation collapse of $\alhotg$.  

A subclass $\aclass$ of the \termgraphs\  over a signature $\asig$
is called \emph{closed under functional bisimulation} %(\emph{closed under bisimulation})
% if for all $\atg\in\aclass$ and $\atgacc\in\classtgsover{\asig}$ 
% with $\atg \funbisim \atgacc$  (respectively, $\atg \bisim \atgacc$) it holds that $\atgacc\in\aclass$.%
if, for all \termgraphs\  $\atg$, $\atgacc$ over $\asig$,
whenever $\atg\in\aclass$ and $\atg \funbisim \atgacc$,
then also $\atgacc\in\aclass$.
% We say that $\aclass$ is \emph{closed under bisimulation}
% if $\atg \bisim \atgacc$
%    for $\atg,\atgacc\in\classtgsover{\asig}$ with $\atg\in\aclass$
%    implies $\atgacc\in\aclass$.
% Note these concepts are invariant under considering other signatures $\asigacc$
% with $\aclass\subseteq\classtgsover{\asigacc}$. 

Note that for obtaining this property 
the use of variable backlinks, and backlinks for delimiter vertices 
is crucial (cf.\ Ex.~\ref{ex:S-backlinks}).

% Obtaining this property necessitated the use of 
% variable backlinks, and backlinks for delimiter vertices (cf.\ Ex.~\ref{ex:S-backlinks})
% in the \termgraphs.

%Yet not even the class $\classltgs$ is closed under $\sfunbisim$.
Yet the class $\classltgs$ is actually not closed under $\sfunbisim\hspace*{1pt}$:
See Fig.~\ref{fig:ltgs_not_closed_under_funbisim} at the top for 
a homomorphism from a non-eager-scope \lambdatg\ to a \termgraph\ 
over $\siglambdaSbh$ that is not a \lambdatg\ (as suggested by the
overlapping scopes). The use of eager scope-closure remedies the situation,
see at the bottom: then the bisimulation collapse is a \lambdatg.
% Unlike the class of \lambdahotgs\ $\classlhotgs$, the class of \lambdatgs\
% $\classltgs$ is not closed under functional bisimulation.
% It does however hold for the class of eager-scope \lambdatgs\ $\classeagltgs$,
% which therefore fulfils condition~\ref{methods:properties:closedness} as we
% required. In Fig.~\ref{fig:ltgs_not_closed_under_funbisim} we have a
% non-eager-scope \lambdatg\ on the left which has an homomorphism to a term
% graph over $\siglambdabh$ which is not a \lambdatg\ (as suggested by the
% overlapping scopes). On the right we see that eager-scope closure remedies the
% situation as the bisimulation collaps is still a \lambdatg.
%
This motivates the following theorem,
which is proved in the extended report of \cite{grab:roch:2013:a:TERMGRAPH}.
It justifies property~\ref{methods:properties:closedness} with $\classeagltgs$ for $\classltgs$.

\begin{figure}[ht]
\hspace{1cm}
\vcentered{\figsmall{lambdatgs_over_siglambda12_not_closed_under_funcbisim_g1}}
\hfill\vcentered{$\funbisim$}\hfill
\vcentered{\figsmall{lambdatgs_over_siglambda12_not_closed_under_funcbisim_g0}}
\hspace{1cm}
\\\mbox{}
\hspace{1cm}
\vcentered{\figsmall{lambdatgs_over_siglambda12_not_closed_under_funcbisim_eager_g1}}
\hfill\vcentered{$\funbisim$}\hfill
\vcentered{\figsmall{lambdatgs_over_siglambda12_not_closed_under_funcbisim_eager_g0}}
\hspace{1cm}
\caption{\label{fig:ltgs_not_closed_under_funbisim}
$\classltgs$ is not closed under functional bisimulation, yet~$\classeagltgs$~is.}
\end{figure}

\begin{theorem}\label{thm:classeagltgs:closed:under:funbisim}
  The class $\classeagltgs$ of eager-scope \lambdatgs\ is closed under
  functional bisimulation $\sfunbisim$.   
\end{theorem}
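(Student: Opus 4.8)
The plan is to equip $G'$ with an abstraction-prefix function obtained by pushing the (unique, by Prop.~\ref{prop:unique_abspre}) abstraction-prefix function of $G$ forward along the homomorphism, and then to verify that the result is correct and eager-scope. So let $h \funin G \to G'$ be a homomorphism with $G \in \classeagltgs$, where $G'$ is a priori merely a term graph over $\siglambdaSbh$; since $G$ and $G'$ are root-connected and $h$ preserves the root and all successors, $h$ is surjective. Write $\abspre{\cdot}$ for the correct, eager-scope abstraction-prefix function of $G$ and $\bar h$ for the homomorphic extension of $h$ to words of vertices. I would define the candidate function on $G'$ by $\abspreacc{w} \defdby \bar h(\abspre{v})$ for some $v$ with $h(v) = w$. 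By uniqueness of correct abstraction-prefix functions it then suffices to show that $\abspreacc{\cdot}$ is \emph{one} correct function and that it is eager-scope.

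First I would settle \emph{well-definedness} of $\abspreacc{\cdot}$, which is the heart of the matter: whenever $h(v_1) = h(v_2)$ one must have $\bar h(\abspre{v_1}) = \bar h(\abspre{v_2})$. The kernel relation of $h$ is a self-bisimulation on $G$, so $v_1$ and $v_2$ are bisimilar, and the claim to establish is that this relation respects prefixes in the stated sense. This is where eager-scope is indispensable (and where the property fails in general, cf.\ Fig.~\ref{fig:ltgs_not_closed_under_funbisim} top): by Def.~\ref{def:eagscope} every abstraction occurring in $\abspre{v}$ is witnessed by a scope-respecting downward path from $v$ to a variable vertex bound by that abstraction, and such paths are matched step-by-step by the bisimulation. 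Crucially, the backlinks carried by the \snlvarsuccvertices\ and \snlvarvertices\ force consistency: since $h$ preserves \emph{all} successors, it fuses two $\snlvarsucc$-vertices only if it also fuses the abstraction vertices their backlinks point to (and likewise for variable vertices), which is exactly what rules out the illegitimate fusion of Ex.~\ref{ex:S-backlinks}. Tracking these witnessing paths under $h$ and invoking the backlink constraints, I would prove by induction on the prefix length $\length{\abspre{v}}$ that $\bar h(\abspre{v_1})$ and $\bar h(\abspre{v_2})$ coincide.

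With well-definedness in hand, \emph{correctness} of $\abspreacc{\cdot}$ is routine. Each clause of Def.~\ref{def:abspre:function:siglambdaij} (root, black hole, and the $\sslabs$-, $\sslapp$-, $(\snlvar)$-, $(\snlvarsucc)_1$-, $(\snlvarsucc)_2$-clauses) is a local constraint relating a vertex to its successors through $\prele$ and concatenation. Since $h$ maps the root to the root, preserves labels, and preserves the indexed edge relations $\tgsucci{0}$ and $\tgsucci{1}$ (including backlinks), I would lift a representative $v \in h^{-1}(w)$, read off the corresponding clause for $\abspre{\cdot}$ at $v$ in $G$, and apply $\bar h$ (which commutes with concatenation and is monotone for $\prele$) to obtain the clause for $\abspreacc{\cdot}$ at $w$; for instance the $\sslabs$-clause $\abspre{v_0} = \abspre{v}\,v$ becomes $\abspreacc{h(v_0)} = \abspreacc{h(v)}\,h(v)$. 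This shows $\abspreacc{\cdot}$ is a correct abstraction-prefix function, hence $G' \in \classltgs$.

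Finally, \emph{eager scope} transfers along the same lines. Given $w \in G'$ with $\abspreacc{w} = q\,a$, pick $v \in h^{-1}(w)$; then $\abspre{v} = q_0\,b$ with $\bar h(q_0) = q$, $b \in \vertsof{\sslabs}$, and $h(b) = a$, and the eager-scope property of $G$ supplies a path $v = v_0 \tgsucc \cdots \tgsucc v_k \tgsucci{0} b$ with $v_k \in \vertsof{\snlvar}$ and $\abspre{v} \prele \abspre{v_i}$ throughout. Its image under $h$ is a path from $w$ to a variable vertex bound by $a$ that stays in scope, since $\bar h$ respects $\prele$ and $h$ sends \snlvarvertices\ and \slabsvertices\ to vertices of the same kind. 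Hence $(G', \abspreacc{\cdot})$ is eager-scope, i.e.\ $G' \in \classeagltgs$. The main obstacle is precisely the well-definedness step of the second paragraph; everything else is a controlled transfer of local data along the homomorphism, and it is exactly eager-scope together with the backlink discipline that makes that step go through.
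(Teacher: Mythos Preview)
The paper does not give its own proof of this theorem: immediately before the statement it says the result ``is proved in the extended report of \cite{grab:roch:2013:a:TERMGRAPH}'', and no argument is supplied in the present paper. So there is nothing in the paper to compare your proposal against directly.

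That said, your approach is the natural one and is correct in outline. Pushing the abstraction-prefix function forward along $h$ and then verifying the clauses of Def.~\ref{def:abspre:function:siglambdaij} and the eager-scope condition is exactly how one expects such a closure result to go, and your identification of the well-definedness step as the crux is right. One small sharpening: Def.~\ref{def:eagscope} only directly witnesses the \emph{last} entry of $\abspre{v}$ by a scope-respecting path to a variable vertex; the earlier entries are recovered by following that variable's backlink to the corresponding $\sslabs$-vertex $u$ (which then has strictly shorter prefix $\abspre{u}$) and reapplying eager-scope there. This is precisely what makes your induction on $\length{\abspre{v}}$ work: from $h(v_1)=h(v_2)$ and the eager-scope path for $v_1$ one obtains, via the matched path from $v_2$ (same label sequence, hence same prefix-length deltas, hence also scope-respecting), that the last entries $u_1,u_2$ of $\abspre{v_1},\abspre{v_2}$ satisfy $h(u_1)=h(u_2)$; the induction hypothesis applied to $u_1,u_2$ then yields both $\length{\abspre{v_1}}=\length{\abspre{v_2}}$ and $\bar h(\abspre{v_1})=\bar h(\abspre{v_2})$. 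The remaining transfer of the correctness clauses and of eager-scope along $h$ is, as you say, routine.
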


%DELETED
%\jan{
%Note that for $\classlhotgs$ the variable-backlinks (the outgoing edges of the
%$\snlvar$-vertices are dispensable since the abstraction-prefix function
%sufficiently determines the affiliation of each $\snlvar$-vertex to its
%abstraction vertex. For $\classltgs$ however they are crucial.
%}

%----------
\subsection%{Translating \lambdaletrecterms\ directly into \lambdatgs}
           {\lambdatg\ semantics for \lambdaletrecterms}
%----------

We will consider in fact two interpretations of \lambdaletrecterms\ as \lambdatgs:
first we define $\sgraphsemCmin{\classltgs}$ %which corresponds to 
as the composition of $\sgraphsemC{\classlhotgs}$ and $\slhotgstoltgs$; %thereafter we
then we define %another 
       the semantics $\sgraphsemC{\classltgs}$ with more fine-grained
\snlvarsucc-sharing, which is necessary for defining a readback with the property~\ref{methods:properties:readback}.

By composing the interpretation $\slhotgstoltgs$ of \lambdahotgs\ as \lambdatgs\
with the \lambdahotg\ semantics $\sgraphsemC{\classlhotgs}$,
a semantics of \lambdaletrecterms\ as \lambdatgs\ is obtained. 
There is, however, a more direct way to define this semantics:
by using an adaptation of the translation rules $\rulestranslambdaletreccaltolhotgs$
in Fig.~\ref{fig:def:graphsem:lhotgs}, on which $\sgraphsemC\classlhotgs$ is based.
For this, let $\rulestranslambdaletreccaltoltgs$ be the result of replacing
the rule~$\snlvarsucc$ in $\rulestranslambdaletreccaltolhotgs$ by the version
in Fig.~\ref{fig:def:graphsem:ltgs}.
While applications of this variant of the $\snlvarsucc$\nb-rule also shorten the \absprefix, 
they additionally produce a delimiter vertex. 

\begin{figure}[bth]
  \input{trans-lambdaletreccal-ltgs.tex}
  \caption{\label{fig:def:graphsem:ltgs}
           Delimiter-vertex producing version of the $\snlvarsucc$-rule in Fig.~\ref{fig:def:graphsem:lhotgs}
           %Modification of the $\snlvarsucc$\nb-rule in Fig.~\ref{fig:def:graphsem:lhotgs}
%           for a direct definition of a \lambdatg-semantics for \lambdaletrecterms.
%            $\rulestranslambdaletreccaltoltgs$
%            substituting the definition of rule $\snlvarsucc$ in
%            $\rulestranslambdaletreccaltolhotgs$ .
   }
\end{figure}
  Here, at the end of the translation process, 
  every loop on an indirection vertex with a prefix of length $n$ 
  is replaced by a chain of $n$ $\snlvarsucc$\nb-vertices 
  followed by a black hole vertex.%
Note that, while the system $\rulestranslambdaletreccaltoltgs$ inherits all of the non-determinism of $\rulestranslambdaletreccaltolhotgs$,
the possible degrees of freedom have additional impact on the result, 
because now they also determine the precise degree of $\snlvarsucc$\nb-vertex sharing.
% \jan{Note that $\rulestranslambdaletreccaltoltgs$ inherits the non-determinism
% of $\rulestranslambdaletreccaltolhotgs$, but the degrees of freedom have more
% impact on the result: they now also determine the degree of
% $\snlvarsucc$-sharing.}

By analogous stipulations as in Def.~\ref{def:eagscope:minprefix:generated}
we define the conditions under which 
a \lambdatg\ is called \emph{\eagscope} $\rulestranslambdaletreccaltoltgs$\nb-ge\-ne\-ra\-ted,
or $\rulestranslambdaletreccaltoltgs$\nb-ge\-ne\-ra\-ted \emph{with minimal prefixes}, %respectively,
from a \lambdaletrecterm.
For these notions, statements entirely analogous to Prop.~\ref{prop:eagscope:generated:minimal:prefixes} hold.%

\begin{definition}{}\normalfont
    \label{def:graphsem-min:ltgs}
  The \emph{semantics} $\sgraphsemCmin{\classltgs}$ 
  for \lambdaletrecterms\ as \lambdatgs\ is defined as
  $\sgraphsemCmin{\classltgs} \funin \Ter{\txtlambdaletreccal} \to \classeagltgs$,
  $ \allter \mapsto \graphsemCmin{\classltgs}{\allter} \defdby\,$
  the \eagscope\ \termgraph\  that is
  $\rulestranslambdaletreccaltoltgs$\nb-ge\-ne\-ra\-ted with minimal prefixes
  from a garbage-free version $\allteracc$ of $\allter$.
\end{definition}
For an example, see Ex.~\ref{ex:translations} below.
In $\sgraphsemCmin{\classltgs}$, 
`\noSsh' also indicates %the fact 
that 
\lambdatgs\ obtained via this semantics % in $\sgraphsemCmin{\classltgs}$.
exhibit minimal (in fact no) sharing (two or more incoming edges) of \snlvarsuccvertices.
% This follows from the definition of $\slhotgstoltgs$ and the next proposition.
This is substantiated by the next proposition, in the light of the fact that
$\slhotgstoltgs$ does not
create any shared $\snlvarsucc$\nb-vertices.
\begin{proposition}\label{prop:graphsem:ltgs:min:graphsem:ltgs}
  $\sgraphsemCmin{\classltgs} = \scompfuns{\slhotgstoltgs}{\sgraphsemC{\classlhotgs}}$. 
\end{proposition}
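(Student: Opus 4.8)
The plan is to exploit that the two semantics are built from translation rule systems that are \emph{identical except for the $\snlvarsucc$-rule}. Recall that by its definition together with Prop.~\ref{prop:trans_eager}, $\graphsemC{\classlhotgs}{\allter}$ is the \eagscope\ \lambdahotg\ that is $\rulestranslambdaletreccaltolhotgs$\nb-ge\-ne\-ra\-ted with minimal prefixes from a garbage-free version $\allteracc$ of $\allter$, while by Def.~\ref{def:graphsem-min:ltgs}, $\graphsemCmin{\classltgs}{\allter}$ is the \eagscope\ \lambdatg\ that is $\rulestranslambdaletreccaltoltgs$\nb-ge\-ne\-ra\-ted with minimal prefixes from the same $\allteracc$ (assuming the same garbage-removal procedure is used on both sides, so that the garbage-free version agrees). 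Since $\rulestranslambdaletreccaltoltgs$ arises from $\rulestranslambdaletreccaltolhotgs$ only by replacing the prefix-shortening $\snlvarsucc$-rule of Fig.~\ref{fig:def:graphsem:lhotgs} with its delimiter-producing variant of Fig.~\ref{fig:def:graphsem:ltgs}, and since both processes resolve the \nondeterminism\ by the same strategy (eager application of the $\snlvarsucc$\nb-rule, minimal prefix lengths in the $\sslet$\nb-rule), the two translations proceed in lockstep. I would therefore reduce the claim to establishing this lockstep correspondence and checking that the discrepancy between the two outputs is exactly the operation $\slhotgstoltgs$.

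First I would make the lockstep statement precise and prove it by induction on the length of the translation process. The invariant states that, after corresponding numbers of steps, the two configurations agree on (a) the multiset of translation boxes and their prefixed-term labels, and (b) the graph constructed so far \emph{up to the inserted $\snlvarsucc$-chains}: the $\rulestranslambdaletreccaltoltgs$-graph is obtained from the $\rulestranslambdaletreccaltolhotgs$-graph by subdividing each edge along which the $\snlvarsucc$\nb-rule was applied with one $\snlvarsucc$-vertex per prefix-decrement, each carrying a backlink to the abstraction vertex whose scope is being closed. In the induction step, applications of the shared rules ($\sslabs$, $\sslapp$, $\arecvar$, $\snlvar$, $\sslet$) add the same vertices, edges, prefixes, and boxes on both sides, since the prefix-length choices and the free-variable side conditions (which determine where the $\snlvarsucc$-rule fires) are identical; the only rule that differs, $\snlvarsucc$, produces on the $\rulestranslambdaletreccaltoltgs$ side exactly one delimiter vertex at precisely the edge where the prefix shrinks by one in the \lambdahotg. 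Because the eager strategy fires the $\snlvarsucc$-rule repeatedly at a box whenever the prefix can be shortened by several vertices at once, consecutive firings build a chain of $\snlvarsucc$-vertices in the order of the removed prefix entries. This is verbatim the rule defining $\slhotgstoltgs$ in Fig.~\ref{fig:insertS}: insert $n-m$ delimiters along an edge whose source has prefix of length $n$ and target has prefix of length $m<n$.

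It then remains to handle the terminal indirection-elimination phase. On the \lambdahotg\ side, each loop on an indirection vertex is replaced by a \bh-vertex with empty \absprefix; on the \lambdatg\ side, $\rulestranslambdaletreccaltoltgs$ replaces such a loop on an indirection vertex of prefix length $n$ by a chain of $n$ $\snlvarsucc$-vertices followed by a \bh-vertex. I would check that this matches $\slhotgstoltgs$ applied to the \bh-vertex: its incoming edges come from predecessors whose prefixes have the length-$n$ prefix of the indirection vertex as a prefix, so $\slhotgstoltgs$ inserts exactly the $n$ delimiters down to the empty prefix of the black hole, yielding the same chain. Combining the body and terminal phases gives $\graphsemCmin{\classltgs}{\allter} = \lhotgstoltgs{\graphsemC{\classlhotgs}{\allter}}$; eager-scope-ness is preserved on both sides (Prop.~\ref{prop:corr:laphotgs:ltgs}\ref{prop:corr:laphotgs:ltgs:item:eagerness} together with Prop.~\ref{prop:trans_eager}), and $\slhotgstoltgs$ introduces no shared $\snlvarsucc$-vertices, consistent with the $\noSsh$-label. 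I expect the \emph{main obstacle} to be the bookkeeping in the lockstep invariant: verifying that the two nondeterminism-resolution strategies genuinely make identical choices at every box (in particular that the minimal-prefix choices in the $\sslet$\nb-rule and the eager $\snlvarsucc$-firings coincide, so the processes never diverge), and that the backlink targets and the ordering of delimiters in each $\snlvarsucc$-chain agree on the nose with the edge-subdivision prescribed by $\slhotgstoltgs$, including in the terminal black-hole case where the indirection vertex's prefix must be reconciled with the empty prefix of the \bh-vertex.
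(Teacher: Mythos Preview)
The paper states this proposition without proof; it is treated as essentially evident from the construction (the remark preceding it only notes that $\slhotgstoltgs$ creates no shared $\snlvarsucc$\nb-vertices). Your lockstep-induction plan is a sound way to actually justify the statement: the two rule systems $\rulestranslambdaletreccaltolhotgs$ and $\rulestranslambdaletreccaltoltgs$ differ only in the $\snlvarsucc$\nb-rule, both resolve the non-determinism identically (eager $\snlvarsucc$, minimal prefixes in $\sslet$), and the delimiter chains produced by iterated applications of the modified $\snlvarsucc$\nb-rule coincide with the edge-subdivision that defines $\slhotgstoltgs$ in Fig.~\ref{fig:insertS}. So your proposal supplies what the paper omits, following the intended construction closely.

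One point to tighten in the black-hole phase: you argue that $\slhotgstoltgs$ inserts $n$ delimiters above a $\bh$\nb-vertex because the predecessors' prefixes have the indirection vertex's length-$n$ prefix as a prefix. But the predecessors may have prefixes strictly longer than $n$ (they reach the indirection vertex via the $\arecvar$\nb-rule, possibly before any $\snlvarsucc$\nb-shortening). What actually happens on the $\rulestranslambdaletreccaltoltgs$ side is that those extra delimiters are produced \emph{above} the indirection vertex by earlier $\snlvarsucc$\nb-steps, and the terminal replacement adds exactly the remaining $n$ below; on the $\rulestranslambdaletreccaltolhotgs$ side the whole drop to the empty prefix of $\bh$ happens at once and $\slhotgstoltgs$ inserts the full chain on each incoming edge. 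You should therefore check that these two decompositions yield isomorphic graphs---in particular that no spurious $\snlvarsucc$\nb-sharing is introduced on the $\rulestranslambdaletreccaltoltgs$ side when several predecessors reach the same indirection vertex, since a shared indirection followed by a single $n$\nb-chain would differ from $\slhotgstoltgs$'s per-edge chains. Under the minimal-prefix/eager-$\snlvarsucc$ strategy this works out, but it deserves an explicit sentence rather than being folded into the general invariant.
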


Hence $\sgraphsemCmin{\classltgs}$ only yields \lambdatgs\ without sharing of \snlvarsuccvertices, 
and therefore its image cannot be all of $\classeagltgs$.
% Hence the image of $\sgraphsemCmin{\classltgs}$ consists only of \lambdatgs\
% with direct sharing of \snlvarsuccvertices, 
% and therefore this image cannot be all of $\classltgs$. 
% Due to this non-surjectivity of $\sgraphsemCmin{\classltgs}$,
As a consequence,
we cannot hope to define a readback function $\sreadback$ with respect to $\sgraphsemCmin{\classltgs}$
%such that $\sgraphsemCmin{\classltgs}$ is a \todo{retraction} of $\sreadback$,
that has the desired property \ref{methods:properties:readback}, 
because that requires %$\classeagltgs$ to be surjective.
that the image of the semantics is $\classeagltgs$ in its entirety. 

Therefore we modify the definition of $\sgraphsemCmin{\classltgs}$
to obtain another \lambdatg\ semantics~$\sgraphsemC{\classltgs}$ %$\sgraphsemLS$.
with image $\image{\sgraphsemC{\classltgs}} = \classeagltgs$. 
This is achieved by
letting the \txtlet-bin\-ding-struc\-ture of the
  \lambdaletrecterm\ influence the degree of $\snlvarsucc$-sharing as much as possible,
  while staying \eagscope.  

We say that a \lambdahotg~$\alhotg$ 
is \emph{\eagscope\ $\rulestranslambdaletreccaltolhotgs$\nb-ge\-ne\-ra\-ted with maximal prefixes}
from a \lambdaletrecterm~$\allter$
if $\alhotg$ is $\rulestranslambdaletreccaltolhotgs$\nb-ge\-ne\-ra\-ted from $\allter$ 
by a translation process
in which in applications of the $\sslet$-rule 
the prefixes are chosen maximally, but so that the \eagscope\ property of the process is not compromised.
It can be shown that this condition fixes the prefix lengths per application of the $\sslet$\nb-rule.%

\begin{definition}{}\normalfont
    \label{def:graphsem:ltgs}
  The \emph{semantics} $\sgraphsemC{\classltgs}$ 
  for \lambdaletrecterms\ as \lambdatgs\ is defined as
  $\sgraphsemC{\classltgs} \funin \Ter{\txtlambdaletreccal} \to \classeagltgs$,
  $ \allter \mapsto \graphsemC{\classltgs}{\allter} \defdby\,$
  the \lambdatg\ 
  that is \eagscope\ $\rulestranslambdaletreccaltoltgs$\nb-ge\-ne\-ra\-ted 
           with maximal prefixes from a garbage-free version $\allteracc$ of $\allter$.
\end{definition}

\begin{figure}
\begin{minipage}[b]{2.6cm}
$  \graphsemC\classltgs{\allteri1}$\\
$= \graphsemCmin{\classltgs}{\allteri1}
 = \graphsemCmin{\classltgs}{\allteri2}$\\
$= \graphsemCmin{\classltgs}{\allteri3}
 = \graphsemCmin{\classltgs}{\allteracci3}$
\\[0.7cm]
  \hspace*{\fill}
$\graphsemC\classltgs{\allteri{2}}$
\\[1.2cm]
  \hspace*{\fill}
$\graphsemC\classltgs{\allteri{3}} = \graphsemC\classltgs{\allteracci3}$
\\[6mm]
\end{minipage}
\hfill
\figsmall{good_trans}
%\caption{\label{fig:good_trans}
%$\graphsemC\classltgs{\labs{x}{\letin{id=\labs{z}{z}}{\labs{y}{\letin{f=x}{\lapp{(\lapp{(\lapp{y}{id})}{(\lapp{id}{y})})}{(\lapp{f}{f})}}}}}}$}
%}
\caption{\label{fig:translations}
Translation of the \protect\lambdaletrecterms\ from Ex.~\ref{ex:translations}
with the semantics $\sgraphsemCmin{\classltgs}$ and $\sgraphsemC{\classltgs}$. 
For legibility some backlinks are merged.
}
\end{figure}

\begin{proposition}\label{prop:graphsem:ltgs:min:graphsem:ltgs:S}\normalfont
  $\graphsemCmin{\classltgs}{\allter} \funbisimS \graphsemC{\classltgs}{\allter}$
  holds for all \lambdaletrecterms~$\allter\,$.
\end{proposition}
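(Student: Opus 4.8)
The plan is to route both \lambdatg\ semantics through the \eagscope\ \lambdahotg\ semantics $\sgraphsemC{\classlhotgs}$ and then to read off the desired $\snlvarsucc$-functional bisimulation directly from Prop.~\ref{prop:corr:laphotgs:ltgs}. Fix a \lambdaletrecterm~$\allter$, and let $\allteracc$ be the garbage-free version of $\allter$ that underlies both $\graphsemCmin{\classltgs}{\allter}$ and $\graphsemC{\classltgs}{\allter}$. By Prop.~\ref{prop:graphsem:ltgs:min:graphsem:ltgs} we have $\graphsemCmin{\classltgs}{\allter} = \lhotgstoltgs{\graphsemC{\classlhotgs}{\allter}}$; applying $\sltgstolhotgs$ and using item~(i) of Prop.~\ref{prop:corr:laphotgs:ltgs} (that is, $\scompfuns{\sltgstolhotgs}{\slhotgstoltgs} = \sidfunon{\classlhotgs}$) gives $\ltgstolhotgs{\graphsemCmin{\classltgs}{\allter}} = \graphsemC{\classlhotgs}{\allter}$, while $\graphsemCmin{\classltgs}{\allter}$ itself carries no sharing of \snlvarsuccvertices. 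The whole proposition then hinges on the single identity
\begin{equation*}
  \ltgstolhotgs{\graphsemC{\classltgs}{\allter}} = \graphsemC{\classlhotgs}{\allter}. \tag{$\ast$}
\end{equation*}

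Granting $(\ast)$, I would finish as follows. Instantiate item~(ii) of Prop.~\ref{prop:corr:laphotgs:ltgs} at $\altg \defdby \graphsemC{\classltgs}{\allter}$, obtaining $\lhotgstoltgs{\ltgstolhotgs{\graphsemC{\classltgs}{\allter}}} \funbisimS \graphsemC{\classltgs}{\allter}$. Rewriting the left-hand side first with $(\ast)$ and then with Prop.~\ref{prop:graphsem:ltgs:min:graphsem:ltgs} yields the chain $\graphsemCmin{\classltgs}{\allter} = \lhotgstoltgs{\graphsemC{\classlhotgs}{\allter}} = \lhotgstoltgs{\ltgstolhotgs{\graphsemC{\classltgs}{\allter}}} \funbisimS \graphsemC{\classltgs}{\allter}$, which is exactly the claimed statement. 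The functional bisimulation supplied by item~(ii) is already of the required $\snlvarsucc$-only kind, so no further adjustment is needed.

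To establish $(\ast)$ I would show that both sides are \eagscope\ \lambdahotgs\ sharing the same underlying \termgraph\ over $\siglambdabh$, and then invoke the uniqueness of \eagscope\ \lambdahotgs\ over a fixed underlying \termgraph\ (the proposition following Def.~\ref{def:eagscope}). That $\graphsemC{\classltgs}{\allter}$ is \eagscope\ is immediate from Def.~\ref{def:graphsem:ltgs}; since $\sltgstolhotgs$ preserves the \eagscope\ property (item~(iv) of Prop.~\ref{prop:corr:laphotgs:ltgs}), the \lambdahotg\ $\ltgstolhotgs{\graphsemC{\classltgs}{\allter}}$ is \eagscope, and $\graphsemC{\classlhotgs}{\allter}$ is \eagscope\ by Prop.~\ref{prop:trans_eager}. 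It thus remains to check that these two \lambdahotgs\ have the same underlying \termgraph.

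The main obstacle is precisely this last coincidence of underlying \termgraphs. The key observation is that the rule system $\rulestranslambdaletreccaltoltgs$ differs from $\rulestranslambdaletreccaltolhotgs$ only in its $\snlvarsucc$-rule, which emits an extra delimiter vertex instead of merely shortening the abstraction prefix. Hence a $\rulestranslambdaletreccaltoltgs$-generation of $\graphsemC{\classltgs}{\allter}$ and the $\rulestranslambdaletreccaltolhotgs$-generation obtained from it by making the same rule choices and the same \txtlet-rule prefix-length choices, but dropping the emitted \snlvarsuccvertices, run in lockstep on the application, abstraction, variable, and black-hole vertices. These vertices, together with all edges except those mediated by \snlvarsuccvertices, are fixed by the syntactic structure of $\allteracc$ and are insensitive to the minimal-versus-maximal choice of \txtlet-rule prefix lengths; that choice only redistributes and (in the case of $\sgraphsemC{\classltgs}$) shares the \snlvarsuccvertices. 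Deleting the \snlvarsuccvertices\ via $\sltgstolhotgs$ therefore recovers from $\graphsemC{\classltgs}{\allter}$ exactly the $\siglambdabh$-term-graph produced by the corresponding $\rulestranslambdaletreccaltolhotgs$-run, which by the \eagscope\ property and uniqueness must be the one underlying $\graphsemC{\classlhotgs}{\allter}$. Making this lockstep correspondence precise — that is, formalising the projection of a $\rulestranslambdaletreccaltoltgs$-generation onto a $\rulestranslambdaletreccaltolhotgs$-generation and verifying that the prefix-length freedom leaves the non-delimiter structure intact — is the technical heart of the argument; everything else is bookkeeping with the cited propositions.
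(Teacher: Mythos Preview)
The paper states this proposition without proof, so there is nothing to compare your argument against directly. Your approach is sound and, in fact, the paper's footnote in Appendix~\ref{app:translation} essentially uses the same circle of ideas in the reverse direction: it derives the equality of the two \lambdahotgs\ from Prop.~\ref{prop:corr:laphotgs:ltgs}\ref{prop:corr:laphotgs:ltgs:item:isomorphism}, Prop.~\ref{prop:graphsem:ltgs:min:graphsem:ltgs}, and the present proposition, whereas you derive the present proposition from an independent argument for $(\ast)$ together with Prop.~\ref{prop:corr:laphotgs:ltgs}\ref{prop:corr:laphotgs:ltgs:item:retraction:section:up:to} and Prop.~\ref{prop:graphsem:ltgs:min:graphsem:ltgs}. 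Both routes are legitimate; yours has the advantage of being self-contained.

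Your reduction to $(\ast)$ via Prop.~\ref{prop:corr:laphotgs:ltgs}\ref{prop:corr:laphotgs:ltgs:item:retraction:section:up:to} is clean, and your plan for $(\ast)$---showing that both sides are \eagscope\ over the same underlying $\siglambdabh$\nb-\termgraph\ and invoking uniqueness---is the right one. One point to tighten: you should be explicit that ``same underlying \termgraph'' means isomorphic rather than literally equal, since the two constructions name vertices differently; the uniqueness proposition then transports the \absprefix\ function across that isomorphism. The lockstep argument you sketch for the coincidence of the non-$\snlvarsucc$ structure is correct in spirit: the rules $\sslabs$, $\sslapp$, $\snlvar$, $\arecvar$, $\blackhole$ of $\rulestranslambdaletreccaltoltgs$ act identically to those of $\rulestranslambdaletreccaltolhotgs$, the modified $\snlvarsucc$\nb-rule only inserts delimiter vertices along existing edges, and the $\sslet$\nb-rule's prefix-length parameter affects only which $\snlvarsucc$\nb-applications occur and where the (subsequently erased) indirection vertices sit. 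A fully formal treatment would set up a bijection between the translation-box occurrences in the two runs and verify by induction on rule applications that the generated $\sslapp$-, $\sslabs$-, $\snlvar$-, and $\blackhole$\nb-vertices and their mutual edges coincide; you rightly identify this as the technical core.
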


Now due to this, and due to Prop.~\ref{prop:corr:laphotgs:ltgs},~\ref{prop:corr:laphotgs:ltgs:item:preserve},
the statement of Thm.~\ref{thm:graphrep:classlhotgs} can be transferred to $\classltgs$,
yielding property~\ref{methods:properties:correctness} for $\sgraphsemC{\classltgs}$. 

\begin{theorem}\label{thm:graphrep:classltgs}
  For all \lambdaletreccalterms~$\allteri{1}$ and $\allteri{2}$
  the following holds:
  $\unfsem{\allteri{1}} = \unfsem{\allteri{2}}$ 
    if and only if
  $\graphsemC{{\classltgs}}{\allteri{1}} \bisim \graphsemC{{\classltgs}}{\allteri{2}}$.
\end{theorem}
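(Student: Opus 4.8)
The plan is to reduce the claim to Theorem~\ref{thm:graphrep:classlhotgs}, which already settles the analogous equivalence for the higher-order semantics $\sgraphsemC{\classlhotgs}$, and then to transport bisimilarity along the interpretation $\slhotgstoltgs$ and bridge the gap between the two first-order semantics $\sgraphsemCmin{\classltgs}$ and $\sgraphsemC{\classltgs}$. First I would fix arbitrary \lambdaletreccalterms~$\allteri{1}$ and $\allteri{2}$ and aim to establish the chain of equivalences
\begin{align*}
  \unfsem{\allteri{1}} = \unfsem{\allteri{2}}
    &\iff \graphsemC{\classlhotgs}{\allteri{1}} \bisim \graphsemC{\classlhotgs}{\allteri{2}} \\
    &\iff \graphsemCmin{\classltgs}{\allteri{1}} \bisim \graphsemCmin{\classltgs}{\allteri{2}} \\
    &\iff \graphsemC{\classltgs}{\allteri{1}} \bisim \graphsemC{\classltgs}{\allteri{2}},
\end{align*}
whose outer ends are exactly the statement to be proved.

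The first equivalence is precisely Theorem~\ref{thm:graphrep:classlhotgs}. For the second, I would invoke Proposition~\ref{prop:graphsem:ltgs:min:graphsem:ltgs}, which identifies $\sgraphsemCmin{\classltgs}$ with $\scompfuns{\slhotgstoltgs}{\sgraphsemC{\classlhotgs}}$, so that $\graphsemCmin{\classltgs}{\allteri{i}} = \lhotgstoltgs{\graphsemC{\classlhotgs}{\allteri{i}}}$ for $i\in\setexp{1,2}$; then the fact that $\slhotgstoltgs$ both preserves and reflects bisimilarity $\sbisim$ (Proposition~\ref{prop:corr:laphotgs:ltgs}, item~\ref{prop:corr:laphotgs:ltgs:item:preserve}) turns a bisimulation between the higher-order graphs into one between their first-order images, and conversely. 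For the third equivalence I would use Proposition~\ref{prop:graphsem:ltgs:min:graphsem:ltgs:S}, which gives $\graphsemCmin{\classltgs}{\allter} \funbisimS \graphsemC{\classltgs}{\allter}$ for every term $\allter$; since a ($\snlvarsucc$-)functional bisimulation is in particular a bisimulation, this yields $\graphsemCmin{\classltgs}{\allteri{i}} \bisim \graphsemC{\classltgs}{\allteri{i}}$ for $i\in\setexp{1,2}$, and combining these two facts with transitivity and symmetry of $\sbisim$ (an equivalence relation on \termgraphs, as recorded in the Preliminaries) converts bisimilarity of the $\sgraphsemCmin{\classltgs}$-images into bisimilarity of the $\sgraphsemC{\classltgs}$-images.

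There is no genuine obstacle: the theorem is essentially a bookkeeping composition of results already established. The one point that needs to be stated cleanly is the second step, where I must use both directions of the correspondence under $\slhotgstoltgs$ — preservation to push a higher-order bisimulation down to the first-order images, reflection to pull a first-order one back up — so that the full equivalence, and not merely one implication, is obtained. A secondary point worth making explicit is that $\funbisimS$ entails $\bisim$, so that the $\snlvarsucc$-functional-bisimilarity supplied by Proposition~\ref{prop:graphsem:ltgs:min:graphsem:ltgs:S} can legitimately be spliced into the equivalence relation $\sbisim$ used in the last step. With these two remarks in place the three equivalences chain directly into the assertion.
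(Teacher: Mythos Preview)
Your proposal is correct and follows essentially the same route as the paper: the paper justifies this theorem in one sentence by pointing to Proposition~\ref{prop:graphsem:ltgs:min:graphsem:ltgs:S} together with Proposition~\ref{prop:corr:laphotgs:ltgs}\,\ref{prop:corr:laphotgs:ltgs:item:preserve} as the means of transferring Theorem~\ref{thm:graphrep:classlhotgs} to $\classltgs$, and your chain of equivalences spells out exactly this transfer (with Proposition~\ref{prop:graphsem:ltgs:min:graphsem:ltgs} made explicit as the bridge to $\slhotgstoltgs$). The only remark worth adding is that your observation ``$\funbisimS$ entails $\bisim$'' is indeed the intended reading, so the third equivalence goes through as you describe.
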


\begin{example}\label{ex:translations}
Consider the following four \lambdaletrecterms:
\[
\begin{aligned}
  \allteri1 &= \letin{I=\labs{z}{z}}{\labs{x}{\labs{y}{\letin{f=x}{\lapp{(\lapp{(\lapp{y}{I})}{(\lapp{I}{y})})}{(\lapp{f}{f})}}}}}\\
  \allteri2 &= \labs{x}{\letin{I=\labs{z}{z}}{\labs{y}{\letin{f=x}{\lapp{(\lapp{(\lapp{y}{I})}{(\lapp{I}{y})})}{(\lapp{f}{f})}}}}}\\
  \allteri{3} &= \labs{x}{\labs{y}{\letin{I=\labs{z}{z},\,f=x}{\lapp{(\lapp{(\lapp{y}{I})}{(\lapp{I}{y})})}{(\lapp{f}{f})}}}}\\
  \allteracci{3} &= \labs{x}{\letin{I=\labs{z}{z}}{\labs{y}{\letin{f=x,\ g=I}{\lapp{(\lapp{(\lapp{y}{g})}{(\lapp{g}{y})})}{(\lapp{f}{f})}}}}}\\
\end{aligned}
\]
The three possible fillings of the dashed area in Fig.~\ref{fig:translations}
depict the translations
$\graphsemC\classltgs{\allteri1}$,
$\graphsemC\classltgs{\allteri2}$, and
$\graphsemC\classltgs{\allteri3} = \graphsemC\classltgs{\allteracci3}$.
The translations of the four terms with
$\inMath{\sgraphsemCmin}$ are identical:\\
$\graphsemCmin{\classltgs}{\allteri1} =
 \graphsemCmin{\classltgs}{\allteri2} =
 \graphsemCmin{\classltgs}{\allteri3} =
 \graphsemCmin{\classltgs}{\allteracci3} =
 \graphsemC{\classltgs}{\allteri1}$.
\end{example}

%---------------------
\section{Readback of \lambdatgs}
  \label{sec:readback}
%---------------------

In this section we describe how from a given \lambdatg~$\altg$ a \lambdaletrecterm~$\allter$
that represents $\altg$ (i.e.\ for which $\graphsemC{\classltgs}{\allter} = \altg$ holds) can be `read back'.
For this purpose we define a process based on term synthesis rules. 
It defines a readback function from \lambdatgs\ to \lambdaletreccalterms. 
We illustrate this process by an example, formulate its most important properties,
and sketch the proof of~\ref{methods:properties:readback}.

The idea underlying the definition of the readback procedure is the following:
For a given \lambdatg~$\altg$, a spanning tree $\aspantree$ for $\altg$
(augmented with a dedicated root node) is constructed
that severs cycles of $\altg$ at (some) recursive bindings or variable $\snlvarsucc$\nb-back\-links.
% , which are caused by recursive bindings or variable backlinks.
% cutting through all cycles 
% (caused by recursive bindings or variable backlinks) in the graph. 
%By that it
Now the spanning tree $\aspantree$
facilitates an inductive bottom-up (from the leafs upwards) synthesis process
along $\aspantree$, which labels the edges of $\atg$ (except for variable backlinks) with
prefixed \lambdaletrecterms. For this process we use local rules (see
Fig.~\ref{fig:readback:rules}) that synthesise labels for incoming edges of a
vertex from the labels of
%(typically) already existent
its outgoing edges. Eventually the readback of $\altg$ is obtained as
the label for the edge that singles out the root of \termgraph. 

% Note that in the readback rules there is built in a decision on where to place
The design of the readback rules is based on a decision about where 
\letbindings\ are placed in the synthesised term. %A priori
Namely there exists some freedom for
these placements, as certain kind of shifts of \letexpressions\ (\letfloating\ steps
\cite{grab:roch:2013:e:IWC}) preserve the \lambdatg\ interpretation.
Here, \letbindings\ will always be declared in a \letexpression\ that is
placed as high up in the term as possible: a binding arising from the term synthesised for a
shared vertex $\bvert$ is placed in a \letexpression\ that is created at the
enclosing \lambdaabstraction\ of $\bvert$ (the leftmost vertex in the
\absprefix\ $\abspre{\bvert}$ of~$\bvert$).  

% The readback procedure defined below 
% \begin{itemize}
%   \item 
%     get hold of the abstraction prefix of $\altg$
%   \item
%     identify all vertices in $\altg$ that are shared, i.e.\ have more than two
%     incoming (non-variable backlink) edges,
%     and introduce, like in definition of the graph translation, indirection
%     vertices to represent that sharing
%   \item 
%     construct a spanning tree $\aspantree$ for $\altg$ 
%     that cuts cycles in the graph (recursive bindings, and variable backlinks)
%   \item
%     use the spanning tree $\aspantree$ for $\altg$ to carry out a bottom up synthesis 
%     of a (suitably) prefixed \lambdaletrecterm\ along $\aspantree$.
%     In this process,   
% \end{itemize}

\begin{figure}[tb]
\mbox{}\hfill
\rbpicture{
\node(rootrb){$\femptylabs{\labs{x}{\letin{f=\labs{y}{\lapp{\lapp{f}{x}}{y}}}{\lapp{f}{f}}}}$};
\ltgnode[node distance=4mm,below=of rootrb]{root}{$\top$}; \draw[->] (rootrb) to (root);
\ltgnode[below=of root]{absx}{$\sslabs$};
\addPos{absx}{x};
\rbedge{$\flabs{*[]}{\labs{x}{\letin{f=\labs{y}{\lapp{\lapp{f}{x}}{y}}}{\lapp{f}{f}}}}$}{root}{absx};
\addPrefix[node distance=2mm]{root}{};
\ltgnode[below=of absx]{app1}{$\sslapp$};
\rbedge{$\flabs{*[]\;x[f=\labs{y}{\lapp{\lapp{f}{x}}{y}}]}{\lapp{f}{f}}$}{absx}{app1};
\addPrefix[node distance=2mm]{absx}{};
\ltgnode[below=of app1]{indir}{\indir};
\rbedge[bend right,left]{$\flabs{*[]\;x[f=\labs{y}{\lapp{\lapp{f}{x}}{y}}]}{f}$}{app1}{indir};
\addPrefix[node distance=2mm]{app1}{x};
\addPos{indir}{\;\;f};
\rbedge[bend left,right,dotted]{$\flabs{*[]\;x[f=\nodef]}{f}$}{app1}{indir};
\ltgnode[below=of indir]{absy}{$\sslabs$};
\addPos{absy}{y};
\rbedge{$\flabs{*[]\;x[f=\nodef]}{\labs{y}{\lapp{\lapp{f}{x}}{y}}}$}{indir}{absy};
\addPrefix[node distance=2mm]{indir}{x};
\ltgnode[below=of absy]{app2}{$\sslapp$};
\rbedge{$\flabs{*[]\;x[f=\nodef]\;y[]}{\lapp{\lapp{f}{x}}{y}}$}{absy}{app2};
\addPrefix[node distance=2mm]{absy}{x};
\node[below=of app2](bapp2){};
\ltgnode[left=of bapp2]{s}{$\snlvarsucc$};
\rbedge[left]{$\flabs{*[]\;x[f=\nodef]\;y[]}{\lapp{f}{x}}$}{app2}{s};
\addPrefix[node distance=2mm]{app2}{x\;y};
\ltgnode[below=of s]{app3}{$\sslapp$};
\rbedge{$\flabs{*[]\;x[f=\nodef]}{\lapp{f}{x}}$}{s}{app3};
\addPrefix[node distance=2mm]{s}{x\;y};
\draw[dotted,->,bend right=80](s) to (absy);
\node[below=of app3](bapp3){};
\ltgnode[right=of bapp3]{x}{$\snlvar$};
\addPrefix[node distance=3mm]{x}{x};
\draw[dotted,->,bend right=90](x) to (absx);
\rbedge{$\flabs{*[]\;x[]}{x}$}{app3}{x};
\addPrefix[node distance=2mm]{app3}{x};
\ltgnode[right=of bapp2]{y}{$\snlvar$};
\addPrefix[node distance=3mm]{y}{x\;y};
\draw[dotted,->,bend right=80](y) to (absy);
\rbedge{$\flabs{*[]\;x[]\;y[]}{y}$}{app2}{y};
\node[left=of app3,yshift=-1mm,xshift=6mm](lapp3){$\flabs{*[]\;x[f=\nodef]}{f}$};
\draw[dotted,bend left](app3) to (lapp3);
\draw[dotted,->,bend left=80](lapp3) to (indir);
}
\mbox{}\hfill
\caption{\label{fig:ex:readback}%
         Example of the readback synthesis from a \lambdatg. 
%          Illustration of the readback synthesis from an exemplary \lambdatg.
%          Full line (dotted line) edges are (not) in the spanning tree. 
%          % are (non-)spanning tree edges. 
%          % Edges of the spanning tree are indicated by full lines, non-spanning tree edges by dotted lines.
%          On top no new \letbinding\ is created by a variant $\top$-vert.\ rule. 
%          %The variant $\top$\nb-vertex rule applied on top does not create a \letbinding.
%          %On top the variant $\top$\nb-vertex rule is applied, not creating a \letbinding.
          }
\end{figure}

\begin{definition}[readback of \lambdatgs]%
  \label{def:readback}
  Let $\altg\in\classltgs$ be a \lambdatg.
  The process of computing the readback of $\altg$ (a \lambdaletreccalterm)
  consists of the following five steps, starting on $\altg\,$: 
  \begin{description}\setlength{\itemsep}{0.2ex}
    \item[\namedlabel{readback:step:absprefix}{(Rb-1)}]
      Determine the \absprefixfunction~$\sabspre$ for $\altg$
      by performing a traversal over $\altg$, 
      and associate with every vertex~$\bvert$ of $\altg$ its \absprefix~$\abspre{\bvert}$.
    \item[\namedlabel{readback:step:top}{(Rb-2)}]
      Add a new vertex on top with label $\top$, arity~1, and empty abstraction prefix.
      Let $\altgacc$ be the resulting \termgraph, and $\sabspreacc$ its \absprefixfunction. 
    \item[\namedlabel{readback:step:indirection}{(Rb-3)}] 
      Introduce indirection vertices to organise sharing: 
      % Represent all sharing of vertices by introducing and sharing indirection vertices:
      For every vertex $\bvert$ of $\altgacc$ with two or more incoming non-variable-backlink edges,
      add an in\-di\-rec\-tion vertex $\bverti{0}$, redirect the incoming edges of $\bvert$
      that are not variable backlinks to $\bverti{0}$,
      and direct the outgoing edge from $\bverti{0}$ to $\bvert$.
      In the resulting \termgraph\  $\altgdacc$ only indirection vertices are shared%
        \footnote{Incoming variable backlinks are not counted as sharing here.};
      their names will be used. 
      Extend $\sabspreacc$ to an \absprefixfunction~$\sabspredacc$ for $\altgdacc$
      so that every indirection vertex $\bverti{0}$ gets the prefix of its successor~$\bvert$.
    \item[\namedlabel{readback:step:spantree}{(Rb-4)}] 
      Construct a spanning tree $\aspantreedacc$ of $\altgdacc$
      by using a depth-first search (DFS) on $\altgdacc$.
      Note that all variable backlinks, and
      $\snlvarsucc$\nb-back\-links, and some of the recursive
      back-bindings, of $\altgdacc$, are not contained in $\aspantreedacc$,
      because they are back-edges of the DFS.  
    \item[\namedlabel{readback:step:synthesis}{(Rb-5)}] 
      Apply the readback synthesis rules from Fig.~\ref{fig:readback:rules}
      to $\altgdacc$ with respect to $\aspantreedacc$.
      By this a complete labelling of the edges of $\altgdacc$ by prefixed \lambdaletreccalterms\ is constructed.
      The rules define how the labelling for an incoming edge (on top) of a vertex~$\bvert$ is synthesised
      under the assumption of an already determined labelling of an outgoing edge of (and below) $\bvert$.
      If the outgoing edge in the rule does not carry a label, then the labelling of the incoming edge can
      happen regardless.
      Note that in these rules: % Fig.~\ref{fig:readback:rules}:
      \begin{itemize}\setlength{\itemsep}{-0.1ex}
        \item 
          full line (dotted line) edges indicate spanning tree (non-spanning tree) edges,
          broken line edges either of these sorts;
        \item 
          abstraction prefixes of vertices are crucial for the \snlvarvertex, 
          and the second indirection vertex rule, where the prefixes in the synthesised terms are created;  
          in the other rules the prefix of the assumed term is used;
          for indicating a correspondence between a term's and a
          vertex's abstraction prefix we denote by $\vszero{\ps}$ the word
          of vertices occurring in a term's prefix $\ps$;
       \item
         the rule for indirection vertices with incoming non-spanning tree edge
         introduces an unfinished binding $\arecvar \mathrel = {?}$ for $\arecvar$;     
         unfinished bindings are completed in the course of the~process;   
       \item
         the \slappvertex\ rule applies only if
         $\vszero{\psix{0}} = \vszero{\psix{1}}$; the operation $\spcup$ used
         in the synthesised term's prefix builds the union per prefix variable of the
         pertaining bindings; 
         if the prefixed terms $\flabs{\vec{\apre}_0}{\allteri{0}}$ and $\flabs{\vec{\apre}_1}{\allteri{1}}$ assumed in this rule contain 
         both a yet unfinished binding equation $\arecvar = {?}$ and a completed equation $\arecvar = \bllter$ %for $\arecvar$ 
         at a \lambdavariable~$\cvar$, 
         then the synthesised term contains the completed binding $\arecvar = \bllter$ for $\arecvar$ at $\cvar\,$;
       \item
         not depicted in Fig.~\ref{fig:readback:rules} are variants
         of $\top$- and \slabsvertices\ rules for the cases with empty $\abindgroup$: 
         then no \letbinding\ is introduced in the synthesised term,
         but the term from the \inpart\ is used.
      \end{itemize}
  \end{description}
  \begin{figure}[tb]
    \input{readback}
    \caption{\label{fig:readback:rules}
      Readback synthesis rules for computing a representing \lambdaletreccalterm\
      from a \lambdatg.
      The rules for $\top$- and $\sslabs$-vertices have variants for the case that $B$ is empty.
      For explanations, see Def.~\ref{def:readback},~\ref{readback:step:synthesis}.}
  \end{figure}
  If this process yields the label $\femptylabs{\allter}$ for the (root-)edge pointing
  to the new top vertex of $\altgdacc$, where $\allter$ is a \lambdaletrecterm, then we 
  call $\allter$ the \emph{readback of $\altg$}. 
\end{definition}

  Note that firing of the rules in step (Rb-5) of the readback process 
  proceeds in bottom-up direction in the spanning tree, starting from 
  the back-edges, with some room for parallelism concerning work in 
  different subtrees.
  Furthermore observe that on all directed edges~$e$ (spanning tree edges or back edges)
  the rule applied to derive the edge label is uniquely determined by (is tied to) the label of
  the target vertex $\avert$ of $e$, with the single exception of $\avert$ being an 
  indirection vertex. In that case one of the two indirection vertex rules
  applies, depending on whether $e$ is a spanning-tree edge or a back-edge.

\begin{proposition}\protect\label{prop:readback}\normalfont
  Let $\altg$ be a \lambdatg.
  The process described in Def.~\ref{def:readback} produces a complete edge labelling
  of the (modified) \termgraph, with label $\femptylabs{\allter}$ for the topmost edge,
  where $\allter$ is a \lambdaletrecterm. Hence it yields $\allter$ as the readback of $\altg$. 
  Thus Def.~\ref{def:readback} defines a function 
  $\sreadback \funin \classltgs \to \Ter{\txtlambdaletreccal}$,
  the \emph{readback function}.
\end{proposition}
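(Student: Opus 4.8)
The plan is to show, step by step, that the procedure of Def.~\ref{def:readback} is well-defined and that its final step terminates with a complete, consistent edge labelling whose topmost label is a genuine \lambdaletrecterm. Steps~\ref{readback:step:absprefix}--\ref{readback:step:spantree} are routine: the \absprefixfunction\ computed in \ref{readback:step:absprefix} exists and is uniquely determined by $\altg$ (Prop.~\ref{prop:unique_abspre}), so the traversal producing it is well-defined; inserting the $\top$\nb-vertex (\ref{readback:step:top}) and splitting every shared vertex by an indirection vertex (\ref{readback:step:indirection}) are deterministic operations, and the prescribed extensions of the prefix function to $\altgacc$ and $\altgdacc$ are forced; and a depth-first search on the finite, root-connected graph $\altgdacc$ (\ref{readback:step:spantree}) yields a spanning tree $\aspantreedacc$ together with a partition of the directed edges into tree edges and back-edges. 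Here I use that $\altg$ is finite, which is what ultimately makes the synthesized term finite and the process terminating.

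The substance of the proof lies in step~\ref{readback:step:synthesis}. I would fix an invariant on edge labels and prove by induction along $\aspantreedacc$, working upwards from the leaves and back-edges, that firing the rules of Fig.~\ref{fig:readback:rules} maintains it. For an edge $e$ with target vertex $\bvert$ I want the synthesized label $\flabs{\vec{p}}{\allter}$ to satisfy: (i)~$\vszero{\vec{p}} = \abspre{\bvert}$, i.e.\ the vertex word underlying the prefix coincides with the \absprefix\ of $\bvert$; (ii)~$\allter$ is a \lambdaletrecterm\ up to finitely many \emph{unfinished} bindings $\arecvar = \nodef$, each belonging to an indirection vertex whose completed definition has not yet propagated to $e$; and (iii)~every free \lambdavariable\ of $\allter$ occurs in $\vszero{\vec{p}}$. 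The base cases are the $\snlvar$\nb-vertex and black-hole rules, whose labels are read off directly and satisfy (i)--(iii) by condition $(\snlvar)_1$ and the black-hole clause of Def.~\ref{def:abspre:function:siglambdaij}. Since, as noted after Def.~\ref{def:readback}, the rule applied at any edge is determined by its target vertex (and, for indirection vertices, by whether $e$ is a tree or a back-edge), the labelling is produced deterministically and without conflict.

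For the inductive step I would check the remaining rules. The decisive one is the \slappvertex\ rule with its side condition $\vszero{\psix{0}} = \vszero{\psix{1}}$: by the correctness clause $(\sslapp)$ both arguments of an application carry the same \absprefix\ as the application itself (this persists through the inserted indirection vertices, which inherit their successor's prefix), so invariant~(i) forces $\vszero{\psix{0}} = \abspre{\bvert} = \vszero{\psix{1}}$; hence the prefix union $\psix{0}\pcup\psix{1}$, merging binding groups level by level, is well-defined, and the prescribed reconciliation of an unfinished $\arecvar = \nodef$ against a completed $\arecvar = \bllter$ keeps (ii) intact. The $\top$\nb-, $\sslabs$\nb-, $\snlvarsucc$\nb-, and variable rules each change the prefix by at most one vertex, exactly matching the one-step changes that the correctness conditions $(\sslabs)$, $(\snlvarsucc)_1$, and $(\snlvar)_1$ impose on the \absprefix; this preserves (i). Property (iii) is preserved because the variable popped at an $\sslabs$\nb-vertex is captured by the abstraction it creates and so ceases to be free, while the variable added to the prefix at a $\snlvarsucc$\nb-vertex does not occur in the term synthesized below it, that vertex being the point at which its scope is closed.

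The main obstacle will be the bookkeeping of the two indirection-vertex rules: I must show that every unfinished binding $\arecvar = \nodef$ is eventually completed and declared so as to bind precisely the intended occurrences. The back-edge rule emits a reference $\arecvar$ and an unfinished $\arecvar = \nodef$, while the tree-edge rule supplies the definition and (following the placement decision in Def.~\ref{def:readback}) attaches $\arecvar$ to the binding group at the enclosing \lambdaabstraction\ of the indirection vertex, i.e.\ at the leftmost vertex of its \absprefix. The key lemma I would prove, from the proper nesting of scopes guaranteed by the correctness conditions of Def.~\ref{def:abspre:function:siglambdaij}, is that this enclosing abstraction dominates, in $\aspantreedacc$, every edge that references $\arecvar$; consequently the $\spcup$\nb-merging carries the binding upward exactly to the level at which the $\sslabs$\nb- (or $\top$\nb-) rule emits the surrounding \letexpression, where $\nodef$ has by then been replaced by the propagated definition. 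Thus each recursion variable is declared once, with all its uses inside its scope, and --- $\altgdacc$ being finite --- only finitely many bindings are opened and all are closed. Applying the invariant to the topmost $\top$\nb-edge, whose target has empty \absprefix, then yields a label $\femptylabs{\allter}$ with empty prefix and no surviving $\nodef$, so that $\allter\in\Ter{\txtlambdaletreccal}$. This shows the process defines a total function $\sreadback \funin \classltgs \to \Ter{\txtlambdaletreccal}$, as claimed.
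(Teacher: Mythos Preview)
The paper does not supply a proof of this proposition: it is stated and then immediately followed by an example and by Theorem~\ref{thm:readback}. The only justification the paper offers is the remark just before the proposition, namely that rule firing in step~\ref{readback:step:synthesis} proceeds bottom-up along the spanning tree starting from the back-edges, and that the applicable rule at each edge is uniquely determined by the label of its target vertex (with tree- versus back-edge status disambiguating only the indirection case). Your proposal is therefore not so much comparable to a paper proof as a fleshing-out of what the paper leaves implicit, and in that role it is sound: the invariant (i)--(iii) is the right one, checking it rule by rule against the correctness conditions of Def.~\ref{def:abspre:function:siglambdaij} is exactly how the $\sslapp$\nb-rule side condition and the prefix bookkeeping are justified, and you correctly isolate the unfinished-binding mechanism and the needed domination lemma as the crux.

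One minor slip, inherited from the paper's prose: you say the binding is attached ``at the leftmost vertex of its \absprefix'', but the indirection-vertex rules in Fig.~\ref{fig:readback:rules} place the binding $\arecvar = \allter$ at the \emph{rightmost} position $\averti{n+1}$ of the prefix, i.e.\ at the innermost enclosing abstraction (the paper's sentence about ``the leftmost vertex'' appears to be an error). Your domination claim should accordingly concern $\averti{n+1}$; it then follows directly, since every edge into the indirection vertex has a target with \absprefix\ ending in $\averti{n+1}$, so all references to $\arecvar$ lie inside the scope of $\averti{n+1}$ and hence below the corresponding $\sslabs$\nb-vertex in $\aspantreedacc$.
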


\begin{example}{}\label{ex:readback}
  See Fig.~\ref{fig:ex:readback} for the illustration
  of the synthesis of the readback from an exemplary \lambdatg. 
  Full line edges are in the spanning tree, dotted line edges are not. 
  Note that at the top vertex, no empty \letbinding\ is created
  since the variant of the $\top$\nb-ver\-tex rule for empty binding groups is applied.  
\end{example}

The following theorem validates property~\ref{methods:properties:readback}, with $\classeagltgs$ for $\classltgs$.

\begin{theorem}{}\label{thm:readback}
  For all $\altg\in\classeagltgs$: % it holds:
    $\compfuns{(\sgraphsemC{\classltgs}}{\sreadback)}{\altg} = \graphsemC{\classltgs}{\readback{\altg}} 
                                                      \iso \atg$,
  i.e., $\sreadback$ is a right-inverse of $\sgraphsemC\classltgs$, and
  $\sgraphsemC{\classltgs}$ a left-inverse of $\sreadback$, up~to~$\siso$.                                             
  Hence $\sreadback$ is injective, and $\sgraphsemC{\classltgs}$ is surjective,
  thus $\image{\sgraphsemC{\classltgs}} = \classeagltgs$.
%  that is, $\sreadback$ is a right-inverse of $\sgraphsemC\classltgs$.
%DELETED: Gilt nur unter einem sehr starken modulo-Begriff (Einsetzungen)
%  and $\sgraphsemC{\classltgs}$ is a left-inverse of $\sreadback$.
\end{theorem}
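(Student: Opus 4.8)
The plan is to exhibit, for each $\altg \in \classeagltgs$ with readback $\allter = \readback{\altg}$, an explicit isomorphism between the \lambdatg~$\graphsemC{\classltgs}{\allter}$ and $\altg$. The guiding observation is that the readback synthesis rules in Fig.~\ref{fig:readback:rules} are designed to be \emph{local inverses} of the translation rules $\rulestranslambdaletreccaltoltgs$ on which $\sgraphsemC{\classltgs}$ is based: the rules for $\top$-, $\sslabs$-, $\sslapp$-, $\snlvar$-, $\snlvarsucc$-, and indirection vertices undo, respectively, the box-initialisation step, the $\sslabs$-, $\sslapp$-, $\snlvar$-, and $\snlvarsucc$-steps, and the $\sslet$/recursion-variable steps of the translation. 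In each case the readback labels the incoming edge of a vertex with a prefixed \lambdaletrecterm\ whose leading constructor is exactly the one stripped off by the matching translation rule, and the prefix-union operation $\pcup$ in the $\sslapp$-rule mirrors the way the translation splits a box over the two arguments of an application.

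First I would prove a local-inverseness lemma by induction along the spanning tree $\aspantreedacc$ (built in step~\ref{readback:step:spantree}), in the same bottom-up direction in which the synthesis fires. The invariant to maintain is: for every edge $e$ with target vertex $\avert$, the synthesised prefixed term \flabs{\ps}{\bllter_e} is such that re-applying $\rulestranslambdaletreccaltoltgs$ to the translation box \flabs{\ps}{\bllter_e} reconstructs, up to isomorphism, the sub-\termgraph\ of $\altg$ accessed through $e$, with indirection vertices corresponding to shared vertices and with $\snlvarsucc$-vertices reappearing precisely along the edges where the synthesised prefix shrinks. The base cases are the $\snlvar$-rule, the $\bh$-rule, and the back-edge indirection rule that closes a recursion cycle with an unfinished binding $\arecvar = {?}$; the induction steps are the remaining rules, where the hypotheses for the outgoing edges are recombined exactly as the corresponding translation rule would split a box. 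Running this invariant up to the topmost $\top$-edge yields a label-, root-, and edge-preserving bijection between the vertices of $\graphsemC{\classltgs}{\allter}$ and those of $\altg$. For the abstraction prefixes, note that the readback labels carry the prefixes of $\altg$ determined in step~\ref{readback:step:absprefix} (extended to indirection vertices in step~\ref{readback:step:indirection}), while $\graphsemC{\classltgs}{\allter}$ is \eagscope\ by Def.~\ref{def:graphsem:ltgs} and $\altg$ is \eagscope\ by assumption; since both graphs then admit correct abstraction-prefix functions, uniqueness of the correct abstraction-prefix function (Prop.~\ref{prop:unique_abspre}) forces these to agree, so the bijection respects prefixes and is an isomorphism of \lambdatgs.

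The hard part will be to verify that the \emph{degree of sharing} is reconstructed exactly, rather than merely up to bisimulation: both the placement of \letbindings\ and the sharing of $\snlvarsucc$-vertices must match on the nose. This is precisely the point at which $\sgraphsemC{\classltgs}$ (maximal prefixes, hence maximal \eagscope\ $\snlvarsucc$-sharing) is needed in place of $\sgraphsemCmin{\classltgs}$, which produces no $\snlvarsucc$-sharing at all. I would argue that the indirection vertices introduced in step~\ref{readback:step:indirection} stand in bijective correspondence with the shared vertices of $\altg$, and that the readback policy of declaring each binding in a \letexpression\ placed at the enclosing \lambdaabstraction\ (the leftmost vertex of the relevant abstraction prefix) forces, under re-translation with maximal prefixes, exactly the $\snlvarsucc$-vertex sharing present in $\altg$; the compatibility of ``\letbindings\ as high up as possible'' with ``maximal prefixes while staying \eagscope'' must be checked case by case on the $\sslet$- and $\snlvarsucc$-rules, and is what pins the bijection down to a genuine isomorphism. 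Once $\graphsemC{\classltgs}{\readback{\altg}} \iso \altg$ is established, the remaining claims are immediate corollaries: injectivity of $\sreadback$ follows since $\readback{\altgi{1}} = \readback{\altgi{2}}$ gives $\altgi{1} \iso \graphsemC{\classltgs}{\readback{\altgi{1}}} = \graphsemC{\classltgs}{\readback{\altgi{2}}} \iso \altgi{2}$; and since $\sgraphsemC{\classltgs}$ maps into $\classeagltgs$ while every $\altg \in \classeagltgs$ lies in its image up to $\siso$, we conclude $\image{\sgraphsemC{\classltgs}} = \classeagltgs$.
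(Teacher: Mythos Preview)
Your approach is essentially the paper's: establish that translation steps reverse readback steps locally, then assemble these into a global isomorphism by induction. One technical point you miss is that this correspondence does not go through for the original rules~$\rulestranslambdaletreccaltoltgs$. The $\sslet$-rule in Fig.~\ref{fig:def:graphsem:lhotgs} spawns translation boxes for all bindings simultaneously, whereas the readback discovers bindings lazily via indirection vertices encountered along the spanning tree; moreover, reversing a single $\sslabs$-readback step in fact requires both a $\sslabs$- and a $\sslet$-translation step. The paper therefore first replaces the $\sslet$- and $\arecvar$-rules by locally-operating variants (Fig.~\ref{fig:trans-local}) in which bindings are pushed into the rightmost prefix slot (i.e.\ $l_i = n$) and a binding is only translated when the first call to it is reached, and adds an initiating $\top$-rule. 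With these modified rules the stepwise reversal goes through cleanly (Figs.~\ref{fig:1:proof:thm:readback}--\ref{fig:2:proof:thm:readback}). The paper also notes that while the fixed choice $l_i = n$ does not yield an eager-scope translation in general, it does for all terms in the image of $\sreadback$, which is precisely what is needed here.

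Beyond this, the paper phrases its induction on access paths rather than bottom-up on the spanning tree, and its stated invariant ties the eager-scope property localised at a vertex to the applicability of the $\snlvarsucc$-rule to the readback term synthesised there; your formulation in terms of reconstructing the sub-\termgraph\ accessed through an edge is compatible with this. Your appeal to Prop.~\ref{prop:unique_abspre} to dispatch the prefix-agreement issue is a clean observation that the paper leaves implicit.
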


\begin{figure*}
\input{trans-local-2}
\caption{\label{fig:trans-local}%
         Modification of (two of) the translation rules in Fig.~\ref{fig:def:graphsem:lhotgs}
         for a variant definition of the \lambdatg\ interpretation of \lambdaletreccalterms.
         Here the translation of a \letexpression\ does not directly spawn translations
         for the binding equations, but the \inpart\ has to be translated first.}
\end{figure*}

\begin{figure*}
  \input{trans-local-1-proof-system.tex} 
\caption{\label{fig:trans-local-proof-system}
  Formulation of the local translation rules in Fig.~\ref{fig:trans-local}
  in the form of inference rules. 
  }
\end{figure*}

\begin{proof}[Sketch of the Proof]
  Graph translation steps can be linked with corresponding readback steps 
  in order to establish that the former roughly reverse the latter.
  Roughly, because e.g.\ reversing a $\lambda$\nb-read\-back step necessitates
  both a $\lambda$- and a $\txtlet$\nb-trans\-la\-tion step. 
  However, this holds only for a modification of the translation rules~$\rulestranslambdaletreccaltoltgs$
  from Fig.~\ref{fig:def:graphsem:lhotgs}, Fig.~\ref{fig:def:graphsem:ltgs}
  where the rules $\sslet$ (for \letexpressions) and $\arecvar$ (for occurrences of recursion variables) 
  are replaced by the locally-operating versions in Fig.~\ref{fig:trans-local},
  and a initiating rule:
\begin{center}
\translation
{$\top$}
{\node(empty){$\rule{2ex}{0pt}$};}
{
  \ltgnode{root}{$\top$};
  %\addPrefix{root}{\niks};
  \addPos{root}{*};
  %\node[above=of root](rb){$\femptylabs{\allter}$}; \draw[->](rb) to (root);
  \node[below=of root,shape=transbox,draw](body){$\femptylabs{\allter}$}; 
  \draw[->](root) to (body.north);
  \draw[<-](root.north) to +(0mm,4mm); 
}
\hspace*{3ex}
(start of translation of \lambdaletrecterm~$\allter$)
\end{center}

  for creating a top vertex is added.
  Now the translation of a \letexpression\ does no longer directly spawn translations of the bindings,
  but the bindings will only be translated later once their calls have been reached during
  the translation process of the \inpart, or of the definitions of other already translated bindings.
  Note that in the $\sslet$\nb-rule in Fig.~\ref{fig:trans-local}
  function bindings are associated with the rightmost variable in the prefix,
  which corresponds to choosing $l_i = n$ in the $\sslet$\nb-rule in Fig.~\ref{fig:def:graphsem:lhotgs}.
  While such a stipulation does not guarantee the \eagscope\ translation of every term,
  it actually does so for all \lambdaletrecterms\ that are obtained by the readback

Please find 
in Fig.~\ref{fig:1:proof:thm:readback} on page~\pageref{fig:1:proof:thm:readback}
and in Fig.~\ref{fig:2:proof:thm:readback}) on page~\pageref{fig:2:proof:thm:readback}
graphical arguments for the stepwise
reversal of readback steps through translation steps.
This establishes that graph translations steps reverse readback steps, 
which is the crucial step in the proof of the theorem.
The proof uses induction on access paths, and an invariant that relates
the \eagscope\ property localised for a vertex $\avert$ %of the given \eagscope\ \lambdatg\
with the applicability of the $\snlvarsucc$\nb-rule to the readback term synthesised at $\avert$.
\end{proof}  

\begin{figure*}
\vspace*{-2.5ex}  
 \input{fusion-steps-1}
\caption{\label{fig:1:proof:thm:readback}
 Reversal of the readback steps for top vertices, abstraction vertices, and application vertices
 by corresponding translation steps.
}
\end{figure*}

\begin{figure*}
\vspace*{-2.5ex}  
 \input{fusion-steps-2}
\caption{\label{fig:2:proof:thm:readback}
 Reversal of readback steps for variable vertices, black-hole vertices, and indirection vertices
 by corresponding translation steps.
}
\end{figure*}

%   \termgraph\  translation steps are, roughly speaking, converse to readback steps. 
%   This, however, is not true for the \termgraph\  translation defined in Fig.~\ref{fig:def:graphsem:lhotgs}
%   because the readback synthesises \letexpressions\ by local rule application during a post-order traversal over the spanning tree,
%   whereas the translation of a \letexpression\ spawns binding translations immediately.
%   The solution is to define a variant of the graph translation
%   in which bindings in \letexpressions\ are only translated when their calls have been reached during
%   the translation of the \inpart\ or the definitions of other already translated bindings. 
%   See Fig.~\ref{fig:trans-local} for modifications.

%   The idea is that there is a close relationship
%   between readback steps and \termgraph\  translations steps
%   in the sense that they are each others converse. 
%   However, this is not true for the \termgraph\  translation defined in Fig.~\ref{fig:def:graphsem:lhotgs}
%   because the readback synthesises \letexpressions\ by local rule application during a post-order traversal over the spanning tree,
%   whereas the translation of a \letexpression\ spawns binding translations immediately.
%   The solution is to define a variant of the graph translation
%   in which bindings in \letexpressions\ are only translated when their calls have been reached during
%   the translation of the \inpart\ or the definitions of other already translated bindings. 
%   See Fig.~\ref{fig:trans-local} for modifications.
%

%----------------------------------------
\section{Complexity analysis}
  \label{sec:complexity} 
%----------------------------------------

Here we report on a complexity analysis for the individual operations
from the previous sections, % (Lem.~\ref{lem:complexity}),
for the used standard algorithms,
and overall, for compactification and unfolding-equivalence.
% Here we report on a complexity analysis for the individual operations
% from the previous sections, % (Lem.~\ref{lem:complexity}),
% for the used standard algorithms,
% and for the compactification and unfolding-equivalence, overall.
%(Thm.~\ref{thm:methods:complexity}).  

In the lemma below,
\ref{lem:complexity:item:graphsem} and \ref{lem:complexity:item:readback}
justify the property~\ref{methods:properties:efficiency} of our methods.
Items~\ref{lem:complexity:item:collapse} and \ref{lem:complexity:item:bisimilarity}
detail the complexity of standard methods when used for computing 
bisimulation collapse and bisimilarity of \lambdatgs.   
For this
note that first-order \termgraphs\  can be modelled by deterministic process graphs,
and hence by DFAs. Therefore
bisimilarity of term graphs
can be computed via language equivalence
of corresponding
   DFAs \cite{hopc:karp:1971}
(in time $\bigO{n \invAck{n}}$ \cite{nort:2009}, where $\sinvAck$ is the quasi-constant \emph{inverse Ackermann function})
and bisimulation collapse %of such \termgraphs\ 
through state minimisation of DFAs (in time $\bigO{n \log n}$) \cite{hopc:1971}.

\begin{lemma}{}\label{lem:complexity}
  \begin{enumerate}[label=(\roman*)]\setlength{\itemsep}{0.25ex}
    \item{}\label{lem:complexity:item:length:graphsem} 
      $\graphsize{\graphsemC{\classltgs}{\allter}\!} \in\bigO{\termsize{\allter}^2}$ for $\allter\in\Ter{\txtlambdaletreccal}$. 
    \item{}\label{lem:complexity:item:graphsem} 
      Translating $\allter\in\Ter{\txtlambdaletreccal}$ into $\graphsemC{\classltgs}{\allter}\in\classltgs$ takes time $\bigO{\termsize{\allter}^2}$. 
    \item{}\label{lem:complexity:item:collapse}
      Collapsing $\altg\in\classltgs$ to $\coll{\altg}$ is in $\bigO{ \graphsize{\altg} \log \graphsize{\altg} }$.
    \item{}\label{lem:complexity:item:bisimilarity} 
      Deciding bisimilarity of $\altgi{1},\altgi{2}\in\classltgs$ 
      requires time $\bigO{n \invAck{n}}$ for $n = \max \setexp{\graphsize{\altgi{1}}, \graphsize{\altgi{2}}}$. %,
      % and $\sinvAck$ the quasi-constant inverse Ackermann function. 
      %
    \item\protect\label{lem:complexity:item:readback}
      Computing the readback 
                $\readback{\altg}$ for a given $\altg\in\classltgs$ requires time
      $\bigO{n \log n}$, for $n = \graphsize{\altg}$.
  \end{enumerate}
\end{lemma}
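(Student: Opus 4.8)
The plan is to handle the five items by three different kinds of argument: \ref{lem:complexity:item:length:graphsem} and \ref{lem:complexity:item:graphsem} by a structural count of the vertices that the translation $\rulestranslambdaletreccaltoltgs$ produces; \ref{lem:complexity:item:collapse} and \ref{lem:complexity:item:bisimilarity} by encoding $\siglambdaSbh$-term-graphs as DFAs and invoking the standard minimisation and equivalence algorithms already cited before the lemma; and \ref{lem:complexity:item:readback} by a step-by-step analysis of the readback procedure of Def.~\ref{def:readback}.

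For \ref{lem:complexity:item:length:graphsem} I would count the vertices of $\graphsemC{\classltgs}{\allter}$ by kind. The \slappvertices, \slabsvertices, \snlvarvertices, and black-hole vertices are in one-to-one correspondence with the applications, abstractions, variable occurrences, and meaningless bindings of a garbage-free version of $\allter$, hence number $\bigO{\termsize{\allter}}$; the same holds for the forward edges, since every vertex has out-degree at most two. The only remaining vertices lie in $\vertsof{\snlvarsucc}$: each is produced by the delimiter-emitting $\snlvarsucc$-rule of Fig.~\ref{fig:def:graphsem:ltgs}, and the number emitted along a single edge equals the amount by which the abstraction prefix shrinks there. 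Since a prefix has length at most the abstraction-nesting depth of $\allter$, which is $\bigO{\termsize{\allter}}$, and there are only $\bigO{\termsize{\allter}}$ (core) edges along which a prefix can shrink, the total number of delimiters is $\bigO{\termsize{\allter}^2}$, giving $\graphsize{\graphsemC{\classltgs}{\allter}} \in \bigO{\termsize{\allter}^2}$. For \ref{lem:complexity:item:graphsem} the translation performs one rule application per produced vertex, at $\bigO{1}$ amortised bookkeeping cost each if the abstraction prefix is maintained as a shared stack; the only global preprocessing is the required-variable analysis, which amounts to a free-variable set of size $\bigO{\termsize{\allter}}$ at each of $\bigO{\termsize{\allter}}$ positions together with a reachability closure through recursion variables, and so fits within the $\bigO{\termsize{\allter}^2}$ budget.

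Items \ref{lem:complexity:item:collapse} and \ref{lem:complexity:item:bisimilarity} I would reduce to automata algorithms. A first-order term graph over the finite signature $\siglambdaSbh$ is turned, in linear time, into a DFA whose states are its vertices and whose fixed, finite alphabet encodes the label of a vertex together with the index of each outgoing edge; under this encoding bisimilarity of term graphs coincides with language equivalence of the DFAs, and the bisimulation collapse coincides with the minimal DFA. Hence \ref{lem:complexity:item:collapse} follows from Hopcroft's state-minimisation in time $\bigO{\graphsize{\altg}\log\graphsize{\altg}}$ \cite{hopc:1971}, and \ref{lem:complexity:item:bisimilarity} from the Hopcroft--Karp equivalence test \cite{hopc:karp:1971} implemented with a near-linear union--find structure, giving $\bigO{n\invAck{n}}$ \cite{nort:2009}.

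The principal obstacle is \ref{lem:complexity:item:readback}, for which I would walk through the five steps of Def.~\ref{def:readback}. Step \ref{readback:step:absprefix} computes the (by Prop.~\ref{prop:unique_abspre} unique) abstraction-prefix function; because along every edge the prefix changes by at most one vertex, it can be obtained by a single traversal and stored as a persistent stack shared between vertices, in time and space $\bigO{n}$. Steps \ref{readback:step:top}, \ref{readback:step:indirection}, and \ref{readback:step:spantree} — adding the top vertex, inserting indirection vertices at the $\bigO{n}$ vertices with two or more incoming non-backlink edges, and building a DFS spanning tree — are each $\bigO{n}$. The delicate step is the bottom-up synthesis \ref{readback:step:synthesis}: each of the $\bigO{n}$ edges receives its label by exactly one rule application, and every rule costs $\bigO{1}$ except the \slappvertex\ rule, which merges the two argument subterms' prefix-indexed binding environments via the per-variable union $\spcup$. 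Since each binding is created once and then travels upward to the abstraction named by the leftmost vertex of its prefix, naive merging could cost $\bigO{n^2}$; the crux is to organise these environments as balanced maps (equivalently, to merge the smaller environment into the larger), which bounds the total merge cost by $\bigO{n\log n}$ and yields the claim. I expect the technical heart to be verifying that this discipline remains compatible with the rule in which a completed binding $\arecvar=\bllter$ overrides an unfinished one $\arecvar=\nodef$, and that the synthesised term together with all edge labels stays of size $\bigO{n}$ under the shared-prefix representation.
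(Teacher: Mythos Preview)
Your proposal is correct and aligns with the paper's approach, though it is considerably more detailed than what the paper actually provides. The paper does not give a formal proof of this lemma: for items~\ref{lem:complexity:item:collapse} and~\ref{lem:complexity:item:bisimilarity} it simply records, in the paragraph preceding the lemma, the reduction of \termgraphs\ to DFAs and invokes Hopcroft's minimisation and the Hopcroft--Karp equivalence test (exactly as you do); for item~\ref{lem:complexity:item:length:graphsem} it only points to Fig.~\ref{fig:ex:translation:quadratic} as a witness that the quadratic bound is tight; and items~\ref{lem:complexity:item:graphsem} and~\ref{lem:complexity:item:readback} receive no explicit argument at all.

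Your vertex-counting argument for \ref{lem:complexity:item:length:graphsem}--\ref{lem:complexity:item:graphsem} and your step-by-step analysis of the readback in \ref{lem:complexity:item:readback}, including the smaller-into-larger merging discipline for the $\spcup$ operation, are thus a genuine elaboration beyond the paper. They are sound and supply the justification the paper leaves implicit.
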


See Fig.~\ref{fig:ex:translation:quadratic} for an example that
the size bound in item~\ref{lem:complexity:item:length:graphsem} of the lemma is tight. 

 \begin{proposition}\normalfont
   $\graphsize{\graphsemC{\classltgs}{\allter}\!} \in\bigO{\termsize{\allter}^2}$
   for \lambdaletreccalterms~$\allter$.
   % for all $\allter\in\Ter{\lambdaletreccal}$.
 \end{proposition}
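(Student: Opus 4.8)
The plan is to partition the vertices of $\graphsemC{\classltgs}{\allter}$ into two kinds---the \emph{structural} vertices (those labelled $\sslabs$, $\sslapp$, $\snlvar$, or $\bh$, which arise directly from the syntax of $\allter$) and the \emph{delimiter} vertices (those labelled $\snlvarsucc$)---and to bound the number of each separately. Since the signature $\siglambdaSbh$ has maximal arity $2$, the number of edges is proportional to the number of vertices, so it suffices to bound the vertex count. By Definition~\ref{def:graphsem:ltgs} the translation is carried out on a garbage-free version $\allteracc$ of $\allter$, and since garbage removal does not increase term size we have $\termsize{\allteracc}\le\termsize{\allter}$; hence I may work throughout with $\allteracc$.

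First I would bound the structural vertices by inspecting the translation rules $\rulestranslambdaletreccaltoltgs$ (equivalently, the inductive definition of $\sgraphsemC{\classltgs}$). Each application of the $\sslabs$-, $\sslapp$-, and $\snlvar$-rule consumes one occurrence of an abstraction, an application, or a variable in the prefixed body and produces exactly one vertex; the $\sslet$-rule produces only indirection vertices, which by the erasure step are either removed (redirecting their incoming edges) or collapsed into a single $\bh$-vertex per meaningless binding. Thus each syntactic symbol of $\allteracc$ contributes at most one structural vertex, giving at most $\termsize{\allteracc}\le\termsize{\allter}$ structural vertices, and therefore $\bigO{\termsize{\allter}}$ edges among them.

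Next I would bound the delimiter vertices. By the definition of $\slhotgstoltgs$ (Fig.~\ref{fig:insertS}) and of the $\snlvarsucc$-variant rule (Fig.~\ref{fig:def:graphsem:ltgs}), $\snlvarsucc$-vertices are inserted precisely along the edges of the structural graph at which the \absprefix\ strictly decreases, and the number inserted along such an edge is at most the decrease in prefix length. The crucial quantitative input is that the length of any \absprefix\ is bounded by the number of $\sslabs$-vertices in the graph---because a correct \absprefixfunction\ (Definition~\ref{def:abspre:function:siglambdaij}) records each enclosing abstraction vertex at most once, behaving like a stack---and the number of $\sslabs$-vertices is at most $\termsize{\allteracc}\le\termsize{\allter}$. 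Consequently each of the $\bigO{\termsize{\allter}}$ structural edges carries at most $\bigO{\termsize{\allter}}$ delimiter vertices, so the total number of $\snlvarsucc$-vertices is $\bigO{\termsize{\allter}^2}$.

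Summing the two bounds gives $\graphsize{\graphsemC{\classltgs}{\allter}}\in\bigO{\termsize{\allter}}+\bigO{\termsize{\allter}^2}=\bigO{\termsize{\allter}^2}$, as required. I expect the main obstacle to be the delimiter count: one must argue carefully that the per-edge contribution is governed by the prefix-length decrease, and that the prefix length never exceeds the number of enclosing abstractions---this is exactly what the correctness conditions of Definition~\ref{def:abspre:function:siglambdaij} guarantee, by forcing each prefix to be a list of \emph{distinct} abstraction vertices. The remaining steps are routine bookkeeping, and the tightness of the quadratic bound (so that it cannot be improved to linear) is already witnessed by the family in Fig.~\ref{fig:ex:translation:quadratic}.
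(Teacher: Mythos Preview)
The paper does not actually provide a proof of this proposition (nor of the identical item~(i) of Lemma~\ref{lem:complexity}); it merely states the bound and supplies the tightness witness in Fig.~\ref{fig:ex:translation:quadratic}. Your argument is correct and is precisely the natural way to fill in the missing details: count the structural vertices as $\bigO{\termsize{\allter}}$ (one per rule application of type $\sslabs$, $\sslapp$, $\snlvar$, with indirection vertices either erased or turned into at most $\termsize{\allter}$ many $\bh$-vertices), and bound the number of $\snlvarsucc$-vertices by (number of structural edges) $\times$ (maximal prefix length) $\in \bigO{\termsize{\allter}} \cdot \bigO{\termsize{\allter}}$.

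One small remark: you invoke $\slhotgstoltgs$ (Fig.~\ref{fig:insertS}) for the delimiter count, which strictly speaking yields the bound for $\sgraphsemCmin{\classltgs} = \scompfuns{\slhotgstoltgs}{\sgraphsemC{\classlhotgs}}$ rather than for $\sgraphsemC{\classltgs}$. This is harmless, since by Proposition~\ref{prop:graphsem:ltgs:min:graphsem:ltgs:S} we have $\graphsemCmin{\classltgs}{\allter} \funbisimS \graphsemC{\classltgs}{\allter}$, so $\graphsemC{\classltgs}{\allter}$ has at most as many vertices as $\graphsemCmin{\classltgs}{\allter}$; but it would be worth making this step explicit. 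Alternatively, your parallel appeal to the $\snlvarsucc$-variant rule (Fig.~\ref{fig:def:graphsem:ltgs}) handles $\sgraphsemC{\classltgs}$ directly: between any two consecutive non-$\snlvarsucc$ rule applications in the translation process there are at most prefix-length many $\snlvarsucc$-rule applications, and the number of non-$\snlvarsucc$ rule applications is $\bigO{\termsize{\allter}}$.
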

 
\begin{figure*}[t!]                         
Consider the finite \lambdaterms~$\ateri{n}$ with $n$ occurrences of bindings $\slabs{\avari{2}}$:
\begin{equation*}
  \labs{\avari{0}\avari{1}}{\lapp{\lapp{\avari{0}}{\avari{1}}}
                                 {\labs{\avari{2}}{\lapp{\lapp{\avari{0}}{\avari{1}}}
                                                  {\labs{\avari{1}}{\lapp{\lapp{\avari{0}}{\avari{2}}}
                                                                   {\labs{\avari{2}}{\lapp{\lapp{\avari{0}}{\avari{1}}}
                                                                         {\ldots
                                                                          {\labs{\avari{2}}{\lapp{\lapp{\avari{0}}{\avari{1}}}{\avari{2}}}}}}}}}}}}
\end{equation*} 
Then $\termsize{\ateri{n}}\in\bigO{n}$.                                                
But both the transformation of $\ateri{n}$ into a de-Bruijn index representation:
\begin{equation*}
  \nllabsfo{\nllabsfo{\nllappfo{\nllappfo{\nlvarsuccfo{\nlvar}}
                                         {\nlvar}}
                               {\nllabsfo{\nllappfo{\nllappfo{\nlvarsuccfoi{2}{\nlvar}}
                                                             {\nlvarsuccfo{\nlvar}}}
                                                   {\nllabsfo{\nllappfo{\nllappfo{\nlvarsuccfoi{3}{\nlvar}}
                                                                                 {\nlvarsuccfo{\nlvar}}}
                                                                       {\nllabsfo{\nllappfo{\nllappfo{\nlvarsuccfoi{4}{\nlvar}}
                                                                                                     {\nlvarsuccfo{\nlvar}}}
                                                                                 {\ldots
                                                                                  \nllabsfo{\nllappfo{\nllappfo{\nlvarsuccfoi{2n}{\nlvar}}
                                                                                                               {\nlvarsuccfo{\nlvar}}}
                                                                                           {\nlvar}}}}}}}}}}}                                                                    
\end{equation*}
and the rendering of $\ateri{n}$ with respect to the
eager scope-delimiting strategy:
\begin{equation*}
  \nllabsfo{\nllabsfo{\nllappfo{\nllappfo{\nlvarsuccfo{\nlvar}}
                                         {\nlvar}}
                               {\nllabsfo{\nllappfo{\nlvarsuccfo{\nllappfo{\nlvarsuccfoi{1}{\nlvar}}
                                                                          {\nlvar}}}
                                                   {\nllabsfo{\nllappfo{\nlvarsuccfo{\nllappfo{\nlvarsuccfoi{2}{\nlvar}}
                                                                                               {\nlvar}}}
                                                                       {\nllabsfo{\nllappfo{\nlvarsuccfo{\nllappfo{\nlvarsuccfoi{3}{\nlvar}}
                                                                                                                  {\nlvar}}}                       
                                                                                           {\ldots
                                                                                            \nllabsfo{\nllappfo{\nlvarsuccfo{\nllappfo{\nlvarsuccfoi{2n-1}{\nlvar}}
                                                                                                                                      {\nlvar}}}
                                                                                                     {\nlvar}}}}}}}}}}}                                                                   
\end{equation*}
have size $\bigO{n^2}$.
\caption{\label{fig:ex:translation:quadratic}
         Example of a sequence $\{\aiteri{n}\}_{n}$ of finite \lambdaterms\ $\ateri{n}$
         whose translation into \lambdatgs\ grows quadratically in the size of $\ateri{n}$.} 
\end{figure*}
    
Based on this lemma, and on further considerations, we obtain the following
complexity statements for our methods.

\begin{theorem}{}\label{thm:methods:complexity}
  \begin{enumerate}
      \setlength{\itemsep}{0.25ex}
        \renewcommand{\labelenumi}{(\roman{enumi})}
    \item
      The computation for a \lambdaletreccalterm~$\allter$ with $\termsize{\allter} = n$, of a maximally compactified form  
      $\funap{(\sreadback \sfuncomp \scoll \sfuncomp \sgraphsemC{\classltgs})}{\allter}$
      of a \lambdaletreccalterm~$\allter$
      requires time $\bigO{n^2 \log n}$.
      By using an \Sunsharing\ operation $\sSunsh$,
      a (typically smaller) \lambdaletrecterm\ $\compfuns{(\scompfuns{\sreadback}{\scompfuns{\sSunsh}{\scoll}}}{\sgraphsemC{\classltgs})}{\allter}$
      of size $\bigO{n \log n}$ can be obtained,
      with the same time complexity.%
    \item
      The decision of whether two \lambdaletreccalterms~$\allteri{1}$ and $\allteri{2}$ are unfolding equivalent
      requires time $\bigO{n^2 \invAck{n}}$ for $n = \max \setexp{\termsize{\allteri{1}},\termsize{\allteri{2}}}$. 
  \end{enumerate}
\end{theorem}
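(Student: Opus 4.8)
The plan is to prove both parts purely by \emph{composition}, assembling the per-operation bounds of Lemma~\ref{lem:complexity} and tracking how the quadratic blow-up of the \termgraph\ interpretation propagates through the near-linear standard algorithms. No new algorithmic ideas are needed for the two headline bounds; the only genuinely nontrivial ingredient is the output-size estimate for the \Sunsharing\ variant.

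First I would treat part~(i). For a \lambdaletreccalterm~$\allter$ with $\termsize{\allter} = n$, set $\atg \defdby \graphsemC{\classltgs}{\allter}$. By Lemma~\ref{lem:complexity}\ref{lem:complexity:item:graphsem} the translation runs in time $\bigO{n^2}$, and by~\ref{lem:complexity:item:length:graphsem} it produces a graph of size $m \defdby \graphsize{\atg} \in \bigO{n^2}$. The collapse $\coll{\atg}$ is computed in time $\bigO{m \log m}$ by~\ref{lem:complexity:item:collapse}; since $m \in \bigO{n^2}$ and hence $\log m \in \bigO{\log n}$, this is $\bigO{n^2 \log n}$, and as a homomorphic image of $\atg$ the collapse again has size $\bigO{n^2}$. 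The readback on a graph of that size costs $\bigO{n^2 \log n}$ by~\ref{lem:complexity:item:readback}. Summing the three contributions $\bigO{n^2} + \bigO{n^2 \log n} + \bigO{n^2 \log n}$ gives the claimed $\bigO{n^2 \log n}$ overall bound. The \Sunsharing\ variant $\sreadback \sfuncomp \sSunsh \sfuncomp \scoll \sfuncomp \sgraphsemC{\classltgs}$ inherits the same time bound, since $\sSunsh$ is linear in the size of the collapsed graph.

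For part~(ii), given $\allteri{1},\allteri{2}$ with $n = \max\setexp{\termsize{\allteri{1}},\termsize{\allteri{2}}}$, I would compute $\graphsemC{\classltgs}{\allteri{1}}$ and $\graphsemC{\classltgs}{\allteri{2}}$ (each in time $\bigO{n^2}$, each of size $\bigO{n^2}$ by Lemma~\ref{lem:complexity}\ref{lem:complexity:item:graphsem},\ref{lem:complexity:item:length:graphsem}), and then decide bisimilarity by~\ref{lem:complexity:item:bisimilarity}, which on graphs of size $N \in \bigO{n^2}$ runs in $\bigO{N \invAck{N}}$. Using that the inverse Ackermann function satisfies $\invAck{n^2} \in \bigO{\invAck{n}}$, this yields $\bigO{n^2 \invAck{n}}$. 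That the bisimilarity test actually decides unfolding equivalence is exactly Theorem~\ref{thm:graphrep:classltgs}, so for part~(ii) only the arithmetic of assembling the timing remains.

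The step I expect to be the main obstacle is the size bound $\bigO{n \log n}$ on the term returned by the \Sunsharing\ variant. The quadratic factor in the raw interpretation is caused entirely by shared $\snlvarsucc$\nb-chains (cf.\ Fig.~\ref{fig:ex:translation:quadratic}); I would argue that after collapsing, the non-delimiter core of $\coll{\atg}$ consists of $\bigO{n}$ vertices, and that $\sSunsh$ re-expands the delimiter structure into unshared chains whose lengths are governed by the nesting of abstractions, so that their total contribution to the readback \lambdaletrecterm\ is only $\bigO{n \log n}$. Making this precise requires bounding both the number of distinct abstraction prefixes occurring in $\coll{\atg}$ and the per-edge chain lengths, and then checking that the linear-time $\sSunsh$ leaves the $\bigO{n^2 \log n}$ time bound untouched. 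Once this size estimate is in place, the remaining work is routine arithmetic on the exponents and logarithms.
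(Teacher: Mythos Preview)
Your proposal is correct and matches the paper's approach: the paper gives no explicit proof of this theorem, only the remark that it follows ``based on this lemma, and on further considerations,'' where ``this lemma'' is Lemma~\ref{lem:complexity}. Your composition argument for the headline bounds in (i) and (ii) is exactly the intended derivation, and you have also correctly isolated the one genuinely nontrivial piece---the $\bigO{n \log n}$ output-size bound for the \Sunsharing\ variant---that the paper hides behind the phrase ``further considerations'' without spelling out.
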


\section{Implementation}
  \label{sec:implementation} 
We have implemented our methods in Haskell using the
\href{http://www.cs.uu.nl/wiki/bin/view/HUT/AttributeGrammarSystem}{Utrecht
University Attribute Grammar System}.
The implementation is available at
\url{http://hackage.haskell.org/package/maxsharing/}.
Output produced for three examples from this paper, and explanations for it,
can be found in Appendix~\ref{app:impl:showcase};
for all examples in \cite{grab:roch:2014:maxsharing-arxiv}.

%-----------------------------------------
\section{%Conclusion:
         Modifications, 
         extensions and %potential for 
                        applications}%{Potential practical applications}
  \label{sec:applications}
%-----------------------------------------

We have described an adaptation of the bisimulation proof method 
for %the
    %higher-order 
\lambdaletrecterms.    
  %terms in $\lambdaletreccal$. 
Recognising unfolding equivalence and
increasing sharing %of higher-order terms 
%compactification 
are reduced to problems involving first-order \termgraphs. 
The principal idea is to use the nested scope structure
of higher-or\-der terms
for an interpretation by \termgraphs\  with scope delimiters. 
% as \termgraphs\  which use
% scope delimiters to implement the scoping structure.

We conclude by describing easy modifications, rather direct extensions,
and finally, promising areas of application for~our~methods.  

% We conclude by listing and discussing a number of modifications and extensions of the
% presented methods, with an eye on their applicability in general, 
% but also for specific areas of interest.

%-------------------------
\subsection{Modifications}
  \label{sec:conclusion:subsec:modifications}
%------------------------- 

\myparagraph{Implicit sharing of \lambdavariables}
Multiple occurrences of the same \lambdavariable\ in a \lambdaletrecterm~$\allter$
are not shared (represented by a shared variable vertex) in the graph interpretation~$\graphsemC{\classlhotgs}{\allter}$.   
% By the \termgraph\  semantics~%$\sgraphsemC{\classlhotgs}$ and 
%                             $\sgraphsemC{\classltgs}$, % for \lambdaletreccalterms,
%  %and the readback aspresented, 
% multiple occurrences of the same \lambdavariable\ %\lambdavariable\ 
% in a \lambdaletrecterm\ are not shared, i.e.\
% are not interpreted by a single shared variable vertex.  
%  %implicitly considered to be shared.
Consequently, 
our method compactifies the term $\labs{x}{\lapp{x}{x}}$ 
into $\labs{x}{\letin{f=x}{\lapp ff}}$.
%Therefore, the term $\labs{x}{\lapp{x}{x}}$ is compactified to
%$\labs{x}{\letin{f=x}{\lapp ff}}$. 
Such explicit sharing of variables is excessive
for many applications. 
It can be remedied easily, namely 
by unsharing variable vertices before applying the readback,
or by preventing the readback
from introducing \letbindings\ when only a variable vertex is shared. 
% either by a special
% treatment of sharing of \lambdavariables\ in the definition of the readback function, 
% or by unsharing variable vertices before applying the unmodified readback.

\myparagraph{Avoiding aliases produced by the readback}
The readback function in Section~\ref{sec:readback}
is sensitive to the degree of sharing of $\snlvarsucc$\nb-vertices in the given \lambdatg: 
it maps two \lambdatgs\ that only differ in what concerns sharing of $\snlvarsucc$\nb-vertices
to different \lambdaletreccalterms. 
Typically, for \lambdatgs\ with maximal sharing of $\snlvarsucc$\nb-vertices
this can produce \letbindings\ that are just `aliases', such as $g$ is alias for $I$ in
$\allteracci{3}$ from Ex.~\ref{ex:translations}.
This can be avoided in two ways: by slightly adapting the readback function, or
by performing maximal unsharing of $\snlvarsucc$\nb-vertices before applying
the readback as defined.  

\myparagraph{Preventing disadvantageous sharing}
  Introducing sharing at compile-time can cause `space leaks', 
  i.e.\ a needlessly high memory footprint, at run-time, because
  `a large data structure becomes shared [\ldots], and therefore its space
  which before was reclaimed by garbage collection now cannot be reclaimed
  until its last reference is used'~\cite{sant:1995}. 
  For this reason, realisations of \CSE~\cite{chit:1997:CS-uncommon} %prevent the introduction of such undesired sharing  
  restrict the locally operating rewrite rules employed for introducing sharing 
  by suitable conditions that account for the type of potentially shared subexpressions, and their strictness in the program.  
  For our global method of introducing sharing via the bisimulation collapse, a different approach is needed.%

  Here the bisimulation collapse can be restricted so that sharing is not introduced 
  at vertices that should not be shared. 
  % at a set of specified vertices. 
  %, as identified by a preceding syntactical analysis of the original \lambdaletrecterm.  
%   The bisimulation collapse can be restricted by modifying the graph interpretation
%   of \lambdaletrecterms\ with subexpressions that should not be shared,
%   as identified by a preceding syntactical analysis.
  More precisely, it can be prevented that  
  any unshared vertex (in-degree one) from a pre-de\-ter\-mined set of `sharing-unfit' vertices 
  would have a shared vertex (in-degree greater than one) 
  as its image in the bisimulation collapse. 
  This can be achieved by modifying the graph interpretation $\graphsemC{\classltgs}{\cdot}$.
  %Via the graph interpretation $\graphsemC{\classltgs}{\cdot}$
  Any set of sharing-unfit positions in $\allter$ gives rise to 
  a set of sharing-unfit vertices in $\graphsemC{\classltgs}{\allter}$.
  In the modification of $\graphsemC{\classltgs}{\allter}\,$,
  special back-links are added from every 
  sharing-unfit vertex with in-degree one to its immediate successor. 
  These back-links prevent that such a sharing-unfit vertex $\avert$ can collapse with 
  another vertex $\avertacc$ without that also the predecessors of $\avert$ and $\avertacc$ 
  would collapse as well. 
%   This reasoning extends upwards until
%   the nearest sharing-fit, or already shared, vertices are reached. 

\myparagraph{A more general notion of readback}
Condition \ref{methods:properties:readback} is rather rigorous in that it
imposes sharing structure on \lambdaletreccal\ that is specific to
\lambdatgs\ (degrees of $\snlvarsucc$-sharing). For a weaker version of
\ref{methods:properties:readback} with $\sbisimsubscriptS$ in
place of isomorphism, a readback does not have to be injective,
and, independently of how much $\snlvarsucc$\nb-sharing a translation into \lambdatgs\ introduces, 
a readback function always exists.
%For a given \lambdatg\ $\altg$ there exists a number of corresponding
%\lambdaletrecterms\ which could be related by a convertibility relation.

% \myparagraph{Scope-closure strategies}
% We focused on eager-scope translations since they maximise sharing. However,
% every scope-closure strategy \cite{grab:roch:2012} induces a translation
% and its own notion of maximal sharing.

%----------------------
\subsection{Extensions}
  \label{sec:conclusion:subsec:extensions}
%----------------------

\myparagraph{Full functional languages}  
In order to support programming languages that are based on \lambdaletreccal\
like Haskell, additional language constructs need to be supported. Such
languages can typically be desugared into a core language, which comprises
only a small subset of language constructs such as constructors, case
statements, and primitives. These constructs can be represented in an
extension of \lambdaletreccal\ by additional function symbols. %Therefore 
In conjunction with a desugarer our methods are applicable to full programming
languages.

\myparagraph{%Non-functional languages,
             Other programming languages, 
             and calculi with binding constructs}
  Most programming languages feature constructs for grouping definitions that are similar to \txtletrec.
  We therefore expect that our methods can be adapted to many imperative languages in particular, 
  and may turn out to be fruitful for optimising compilers.
  Our methods for achieving maximal sharing certainly generalise to
  theoretical frameworks, and calculi with binding constructs,
  such as the \picalculus\ \cite{miln:1999}, and higher-order rewrite systems 
  (e.g.\ \CRSs\ and \HRSs, \cite{terese:2003}) as used here for the formalisation of \lambdaletreccal.

\myparagraph{Fully-lazy \lambdalifting}
    There is a close connection between our methods and fully-lazy \lambdalifting~\cite{hugh:1982,peyt:jone:1987}. 
    In particular, the required-variable and scope analysis of a \lambdaletrecterm~$\allter$ 
    on which the \lambdatg-trans\-la\-tion~$\graphsemC{\classltgs}{\allter}$ is based 
    is closely analogous 
    to the one needed for extracting from $\allter$ the supercombinators in the result $\Hat{\allter}$ of fully-lazy \lambdalifting\ $\allter$.
    Moreover, the fully-lazy \lambdalifting\ transformation can even be implemented
    in a natural way on the basis of our methods.
    Namely as the composition $\scompfuns{\sreadbackC{\text{\it LL}}}{\graphsemC{\classltgs}{\cdot}}$ 
    of the translation $\graphsemC{\classltgs}{\cdot}$ into \lambdatgs,
    where $\sreadbackC{\text{\it LL}}$ is a variant readback function that,
      %with a variant readback function~$\sreadbackC{\text{\it LL}}$ that,
    for a given \lambdatg, 
    synthesises the system $\Hat{\allter}$ of supercombinators,
    instead of the \lambdaletrecterm~$\readback{\allter}$.%

\myparagraph{Maximal sharing on supercombinator translations of \lambdaletreccalterms}
  \lambdaletrecterms~$\allter$ correspond to supercombinator systems $\Hat{\allter}$, %$\ascter$, 
                                                                     the result of fully-lazy \lambdalifting\ $\allter$:
  the combinators in $\Hat{\allter}$ %$\ascter$ 
                     correspond to `extended scopes' \cite{grab:roch:2012} (or `skeletons' \cite{bala:2012})
  in $\allter$, and supercombinator reduction steps on $\Hat{\allter}$ % $\ascter$
                                                                       correspond to weak \betareduction\ steps $\allter$.
  In the case of \lambdacalculus\ this has been established by Balabonski \cite{bala:2012}.
  Via this correspondence
  the maximal-sharing method for \lambdaletrecterms\ can be lifted to obtain
  a maximal-sharing method systems of supercombinators obtained by fully-lazy \lambdalifting. 

\myparagraph{\mbox{Non-eager scope-closure strategies}}
We focused on eager-scope trans\-la\-tions, because they facilitate maximal sharing,
and guarantee that interpretations of unfolding-equivalent \lambdaletrecterms\ are bisimilar. 
%of the term graph interpretations of unfolding-equivalent \lambdaletrecterms. 
Yet every scope-clo\-sure strategy \cite{grab:roch:2012} induces a translation
and its own notion of maximal sharing.
For adapting our maximal sharing method it is %, however, 
necessary to modify the translation into first-order term graphs in such a way
that the image class obtained is closed under homomorphism         
%(unlike for its subclass $\classeagltgs$, this does not hold for $\classltgs$).
($\classltgs$ is not closed under $\sfunbisim$, unlike its subclass $\classeagltgs$).
This can be achieved by using delimiter vertices also below variable vertices 
to close scopes that are still open~\cite[report]{grab:roch:2013:a:TERMGRAPH}.%  

\myparagraph{Weaker notions of sharing} 
The presented methods deal with sharing as expressed by \txtletrec\ 
that is horizontal, vertical, or twisted \cite{blom:2001}. % (the superposition of the former) \cite{blom:2001}. 
By contrast, the construct $\mu$ \cite{blom:2001,grab:roch:2013:c:RTA} expresses only vertical, and the
non-recursive \txtlet\ only horizontal, sharing. By restricting
bisimulation, our methods can be adapted to the \lambdacalculus\ with $\mu$, or with $\txtlet$.
% As the $\sletrec$-construct does, our method supports vertical sharing,
% horizontal sharing, as well as a superimposition of the two (`twisted sharing')
% \cite{blom:2001}. In contrast, $\mu$ supports only vertical sharing, and the
% traditional $\slet$ supports only horizontal sharing. By restricting
% bisimulation our method can be adapted to languages with $\mu$ or $\slet$.

\myparagraph{Nested term graphs}
  The nested scope structure of a \lambdaletrecterm\ 
  can also be represented by a nested structure of term graphs. 
  The representation of a \lambdaletrecterm\ as a `nested term graph'~\cite{grab:oost:2014}
  starts with an ordinary term graph
  in which some of the vertices are labelled by `nested' symbols
  that designate outermost bindings together with their scope.
  Any such vertex is additionally associated with a usual term graph 
  that specifies the subterm context describing the scope, 
  where any inner scopes are again expressed by nested symbols.
  The association between nested symbols and their term graph specifications
  is required to be tree-like.
  The implementation result developed here can be generalised to show 
  that nested term graphs can be implemented faithfully by first-order term graphs~\cite{grab:oost:2014}.%
%   is an ensemble
%   of term graphs in which every vertex~$\avert$ with a `nested' symbol $f$
%   is associated with a unique appertaining term graph context $F$ 
%   with $\arity{f}$ many `output vertices'. 

%------------------------
\subsection{Applications}
  \label{sec:conclusion:subsec:applications}
%------------------------

\myparagraph{Maximal sharing at run-time}
Maximal sharing can be applied repeatedly at run-time 
in order to regain a maximally shared form,
thereby speeding up evaluation.
This is reminiscent of `collapsed tree rewriting' \cite{plum:1993}
for evaluating first-order \termgraphs\ represented as maximally shared dags. 
Since the state of a program in the memory at run-time is 
typically represented as a supercombinator graph, compactification by bisimulation collapse
can take place directly on that graph (see Sec.~\ref{sec:conclusion:subsec:extensions}), no translation is needed. 
Compactification can be coupled with garbage collection as bisimulation
collapse subsumes some of the work required for a mark and sweep garbage collector.
However, a compromise needs to be found between
the costs for the optimisation and the gained efficiency.

\myparagraph{Compile-time optimisation phase}
Increasing sharing facilitates potential gains in efficiency. Our method generalises
common subexpression elimination, but therefore it also inherits its
shortcomings: the cost of sharing (e.g.\ of very small functions) might exceed
the gain. In non-strict functional languages, sharing can cause `memory
leaks' \cite{chit:1997:CS-uncommon}. Therefore, similar as for \CSE, additional
dynamic analyses like binding-time analysis \cite{pals:schw:1994}, and
heuristics to restrict sharing in cases when it is disadvantageous
\cite{peyt:jone:1987,gold:huda:1987} are advisable.

\myparagraph{Additional prevention of disadvantageous sharing}
While static analysis methods for preventing sharing that may be disadvantageous at run-time
can be adapted from \CSE\ to the maximal-sharing method (see Sec.~\ref{sec:conclusion:subsec:modifications}),
this has yet to be investigated for binding-time analysis \cite{pals:schw:1994}
and a sharing analysis of partial applications~\cite{gold:huda:1987}.% 
% Examples are binding-time analysis \cite{pals:schw:1994}, and
% heuristics to restrict sharing in cases when it is disadvantageous
% \cite{peyt:jone:1987,gold:huda:1987}.%

\myparagraph{Code improvement} 
In programming it is generally desirable to avoid duplication of code. 
As extension of \CSE, our method is able to detect code duplication. 
The bisimulation collapse of the \termgraph\  interpretation of a program can,
together with the readback, provide guidance on how code can be
refactored into a more compact form.
%written more compactly.
This application requires some fine-tuning to avoid excessive
behaviour like the explicit sharing of variable occurrences (see Sec.~\ref{sec:conclusion:subsec:modifications}). 
%To make this practical, 
Yet for this only lightweight additional machinery is needed, such as
size constraints or annotations to restrict the bisimulation collapse.

\myparagraph{Function equivalence}  
Recognising whether two programs %are equivalent in the sense that they
implement the same function is undecidable. Still, this problem is tackled by
proof assistants, and by automated theorem provers used in type-checkers of
compilers for dependently-typed programming languages such as Agda. For such
systems co-inductive proofs are more difficult to find than
inductive ones, and require more effort by the user. Our method for
deciding unfolding-equivalence could help to develop new approaches to finding
co-inductive proofs.

\vspace{0.5ex}
\paragraph{Acknowledgment}
    We want to thank Vincent van Oostrom for extensive feedback on a draft,
    Doaitse Swierstra and Dimitri Hendriks for helpful comments,
    and Jeroen Keiren for a suggestion concerning restricting the bisimulation collapse.
    We also thank the anonymous reviewers for their comments,
    and a number of stimulating questions.

\bibliographystyle{abbrvnat}
\bibliography{maxsharing}

\appendix

\onecolumn
\section{Example for the translation \lambdaletrecterms\ into \lambdahotgs\ and into \lambdatgs}\label{app:translation}
\begin{multicols}{2}
  For two terms from the paper 
  we provide the stepwise translation
  of \lambdaletrecterms\ into \lambdahotgs. 
  
  We start off by a simple example, namely the translation
  of the term $\bllter$ from Example~\ref{ex:fix} on page \pageref{ex:fix}. 
  There is no application of the $\snlvarsucc$-rule, thus it yields the same sequence of
  graphs regardless of whether the rules $\rulestranslambdaletreccaltolhotgs$
  from Fig.~\ref{fig:def:graphsem:lhotgs} are used,
  by which a \lambdahotg\ is produced,
  or the modified rules
  $\rulestranslambdaletreccaltoltgs$ (the result of dropping the
  $\snlvarsucc$\nb-rule from $\rulestranslambdaletreccaltolhotgs$, and using
  the $\snlvarsucc$\nb-rule in Fig.~\ref{fig:def:graphsem:ltgs} instead),
  by which \lambdatgs\ with an \absprefixfunction\ are produced.
\end{multicols}

\vspace{1cm}

\transinit{
	\node[draw,shape=transbox](t){$\femptylabs{\labs{f}{\letin{r=\lapp{f}{(\lapp fr)}}{r}}}$};
	\draw[<-](t.north) to +(0mm,6mm);
}
\transstep{\sslabs\hspace{6mm}}{
	\node[draw,shape=transbox](t){$\flabs{*[]\;f^v[]}{\letin{r=\lapp{f}{(\lapp fr)}}{r}}$};
	\ltgnode[above=of t.north]{f}{$\sslabs$}; \addPos{f}{v}; \addPrefix{f}{};
	\draw[<-](f.north) to +(0mm,4mm);
	\draw[->](f) to (t.north);
}
\transstep{\sslet\hspace{6mm}}{
	\node[draw,shape=transbox](in){$\flabs{*[]\;f^v[r^w=\lapp{f}{(\lapp fr)}]}{r}$};
	\ltgnode[above=of in.north]{f}{$\sslabs$}; \addPos{f}{v}; \addPrefix{f}{};
	\draw[->](f) to (in.north);
	\draw[<-](f.north) to +(0mm,4mm);
	\node[node distance=2mm,draw,shape=transbox,left=of in](ffr){$\flabs{*[]\;f^v[r^w=\lapp{f}{(\lapp fr)}]}{\lapp{f}{(\lapp fr)}}$};
	\ltgnode[above=of ffr]{indir}{\indir}; \addPos{indir}{w};
	\draw[->](indir) to (ffr.north);
}
\transbreak
\transstep{\sslapp,\arecvar}{
	\ltgnode{f}{$\sslabs$}; \addPos{f}{v}; \addPrefix{f}{};
	\draw[<-](f.north) to +(0mm,4mm);
	\node[below=of f](in){};
	\node[left=of in](inwest){};
	\draw(f) |- (inwest.center);
	\node[left=of f](indireast){};
	\ltgnode[left=of indireast]{indir}{\indir}; \addPos{indir}{w};
	\draw[->](indireast.center) to (indir.east);
	\draw(indireast.center) -| (inwest.center);
	\ltgnode[below=of indir]{f@fr}{$\sslapp$}; \addPrefix{f@fr}{v};
	\draw[->](indir) to (f@fr);
	\node[below=of f@fr](ffr){};
	\node[node distance=0mm, left=of ffr,draw,shape=transbox](l){$\flabs{*[]\;f^v[r^w=\lapp{f}{(\lapp fr)}]}{f}$};
	\node[node distance=0mm,right=of ffr,draw,shape=transbox](ffr){$\flabs{*[]\;f^v[r^w=\lapp{f}{(\lapp fr)}]}{\lapp fr}$};
	\draw[->](f@fr) to (l.north);
	\draw[->](f@fr) to (ffr.north);
}
\transstep{\snlvar,\sslapp}{
	\ltgnode{f}{$\sslabs$}; \addPos{f}{v}; \addPrefix{f}{};
	\draw[<-](f.north) to +(0mm,4mm);
	\node[below=of f](in){};
	\ltgnode[below=of f]{indir}{\indir}; \addPos{indir}{w};
	\draw[->](f) to (indir.north);
	\ltgnode[below=of indir]{f@fr}{$\sslapp$}; \addPrefix{f@fr}{v};
	\draw[->](indir) to (f@fr);
	\node[below=of f@fr](ffr){};
	\ltgnode[left=of ffr]{l}{$\snlvar$}; \addPrefix{l}{v};
	\draw[->](f@fr) to (l);
	\node[left=of l](lwest){}; \draw(l) to (lwest.center); \draw[->](lwest.center) |- (f);
	\ltgnode[right=of ffr]{f@r}{$\sslapp$}; \addPrefix{f@r}{v};
	\node[below=of f@r](fr){};
	\draw[->](f@fr) to (f@r.north);
	\node[node distance=0mm, left=of fr,draw,shape=transbox](f1){$\flabs{*[]\;f^v[r^w=\lapp{f}{(\lapp fr)}]}{f}$};
	\draw[->](f@r) to (f1.north);
	\node[node distance=0mm,right=of fr,draw,shape=transbox](r){$\flabs{*[]\;f^v[r^w=\lapp{f}{(\lapp fr)}]}{r}$};
	\draw[->](f@r) to (r.north);
}
\transbreak
\transstep{\snlvar,\arecvar}{
	\ltgnode{f}{$\sslabs$}; \addPos{f}{v}; \addPrefix{f}{};
	\draw[<-](f.north) to +(0mm,4mm);
	\node[below=of f](in){};
	\ltgnode[below=of f]{indir}{\indir}; \addPos{indir}{w};
	\draw[->](f) to (indir.north);
	\ltgnode[below=of indir]{f@fr}{$\sslapp$}; \addPrefix{f@fr}{v};
	\draw[->](indir) to (f@fr);
	\node[below=of f@fr](ffr){};
	\ltgnode[left=of ffr]{l}{$\snlvar$}; \addPrefix{l}{v};
	\draw[->](f@fr) to (l);
	\node[left=of l](lwest){}; \draw(l) to (lwest.center); \draw[->](lwest.center) |- (f);
	\ltgnode[right=of ffr]{f@r}{$\sslapp$}; \addPrefix{f@r}{v};
	\node[below=of f@r](fr){};
	\draw[->](f@fr) to (f@r.north);
	\ltgnode[left=of fr]{f1}{$\snlvar$}; \addPrefix{f1}{v};
	\draw[->](f@r) to (f1);
	\draw(f1) -| (lwest.center);
	\node[right=of fr](r){};
	\draw(f@r) to (r.center);
	\node[node distance=1mm,right=of r](reast){};
	\draw(r.center) to (reast.center);
	\draw[->](reast.center) |- (indir.east);
}
\transstep{\text{eliminate indirection vertices}}{
	\ltgnode{f}{$\sslabs$}; \addPos{f}{v}; \addPrefix{f}{};
	\draw[<-](f.north) to +(0mm,4mm);
	\node[below=of f](in){};
	\ltgnode[below=of f]{f@fr}{$\sslapp$}; \addPrefix{f@fr}{v};
	\draw[->](f) to (f@fr);
	\node[below=of f@fr](ffr){};
	\ltgnode[left=of ffr]{l}{$\snlvar$}; \addPrefix{l}{v};
	\draw[->](f@fr) to (l);
	\node[left=of l](lwest){}; \draw(l) to (lwest.center); \draw[->](lwest.center) |- (f);
	\ltgnode[right=of ffr]{f@r}{$\sslapp$}; \addPrefix{f@r}{v};
	\node[below=of f@r](fr){};
	\draw[->](f@fr) to (f@r.north);
	\ltgnode[left=of fr]{f1}{$\snlvar$}; \addPrefix{f1}{v};
	\draw[->](f@r) to (f1);
	\draw(f1) -| (lwest.center);
	\node[right=of fr](r){};
	\draw(f@r) to (r.center);
	\node[node distance=1mm,right=of r](reast){};
	\draw(r.center) to (reast.center);
	\draw[->](reast.center) |- (f@fr.east);
}

\newpage

\begin{multicols}{2}
  We continue with the term  $\allteri{2}$ from Example~\ref{ex:translations} on
  page~\pageref{ex:translations}. We translate it in two different ways
  correspoding to the first-order term graph semantics
  $\sgraphsemCmin{\classltgs}$ from Def.~\ref{def:graphsem-min:ltgs} and
  $\sgraphsemC{\classltgs}$ from Def.~\ref{def:graphsem:ltgs}, respectively.
  Both sequences of translation steps yield
  the same \lambdahotg~$\graphsemC{\classlhotgs}{\allteri{2}}$, but obtain
  the different \lambdatgs\ $\graphsemCmin{\classltgs}{\allteri{2}}$ and
  $\graphsemC{\classltgs}{\allteri{2}}$ in Fig.~\ref{fig:translations}.
%   with respect to the first-order term graph semantics
%   $\sgraphsemCmin{\classltgs}$ from Def.~\ref{def:graphsem-min:ltgs} and
%   $\sgraphsemC{\classltgs}$ from Def.~\ref{def:graphsem:ltgs}, respectively.
  
%   by applying the rules $\rulestranslambdaletreccaltolhotgs$ (from Fig.~\ref{fig:def:graphsem:lhotgs}), 
%   into  \lambdatgs\ by applying the rules $\rulestranslambdaletreccaltoltgs$
%   (dropping the $\snlvarsucc$\nb-rule from $\rulestranslambdaletreccaltolhotgs$,
%    and using the $\snlvarsucc$\nb-rule in Fig.~\ref{fig:def:graphsem:ltgs} instead)
  
  These two stepwise translation processes exemplify, on the one hand (when
  ignoring the dotted $\snlvarsucc$\nb-vertices that do not occur in
  \lambdahotgs) the translation into \lambdahotgs\ according to the rules
  $\rulestranslambdaletreccaltolhotgs$ from Fig.~\ref{fig:def:graphsem:lhotgs},
  and on the other hand (now taking the dotted delimiter $\snlvarsucc$\nb-vertices into account)
  the translation into \lambdatgs\ according to the modified rules
  $\rulestranslambdaletreccaltoltgs$.
  
  In each step, one or more translation rules (whose
  names are indicated as subscripts in the steps) are applied to the
  translation boxes in the graph.
  When no more rules are applicable, indirection vertices are erased.
  %The two translations are both eager-scope, 
  Both translations are \eagscope\
  (i.e. applications of $\snlvarsucc$\nb-rules are given priority, 
  and prefixes lengths are chosen too be small enough in the $\sslet$-rule, see Sec.~\ref{sec:lhotgs:subsec:eagscope})
  but differ in how they resolve the non-determinism due to different choices
  for the prefix lengths $l_1,\dots,l_k$ in the $\sslet$-rule. 

  First we consider the translations of $\allteri{2}$ with
  $\sgraphsemC{\classlhotgs}$ and $\sgraphsemCmin{\classltgs}$, i.e.\ the
  translation process in which prefix lengths are chosen minimally when
  applying the $\sslet$-rule.
\end{multicols}

\vspace{2cm}

\newcommand\vx{u}
\newcommand\vy{v}
\newcommand\vz{w}
\newcommand\vI{s}
\newcommand\vf{t}

\transinit{
	\node[draw,shape=transbox](t){$\femptylabs{\labs{x}{\labs{y}{\letin{I=\labs{z}{z},f=x}{\lapp{\lapp{\lapp{y}{I}}{(\lapp{I}{y})}}{(\lapp{f}{f})}}}}}$};
	\draw[<-](t.north) to +(0mm,6mm);
}
\transbreak
\transstep{\sslabs}{
	\node[draw,shape=transbox](t){$\flabs{*[]\;x^\vx[]}{\labs{y}{\letin{I=\labs{z}{z},f=x}{\lapp{\lapp{\lapp{y}{I}}{(\lapp{I}{y})}}{(\lapp{f}{f})}}}}$};
	\ltgnode[above=of t.north]{x}{$\sslabs$}; \addPos{x}{\vx}; \addPrefix{x}{};
	\draw[<-](x.north) to +(0mm,4mm);
	\draw[->](x) to (t.north);
}
\transbreak
\transstep{\sslabs}{
	\node[draw,shape=transbox](t){$\flabs{*[]\;x^\vx[]\;y^\vy[]}{\letin{I=\labs{z}{z},f=x}{\lapp{\lapp{\lapp{y}{I}}{(\lapp{I}{y})}}{(\lapp{f}{f})}}}$};
	\ltgnode[above=of t.north]{y}{$\sslabs$}; \addPos{y}{\vy}; \addPrefix{y}{\vx};
	\draw[->](y) to (t.north);
	\ltgnode[above=of y.north]{x}{$\sslabs$}; \addPos{x}{\vx}; \addPrefix{x}{};
	\draw[->](x) to (y.north);
	\draw[<-](x.north) to +(0mm,4mm);
}
\transbreak
\transstep{\sslet}{
	\node[draw,shape=transbox](yIIyff){$\flabs{*[I^\vI=\labs{z}{z}]\;x^\vx[f^\vf=x]\;y^\vy[ ]}{\lapp{\lapp{\lapp{y}{I}}{(\lapp{I}{y})}}{(\lapp{f}{f})}}$};
	\ltgnode[above=of yIIyff.north]{y}{$\sslabs$}; \addPos{y}{\vy}; \addPrefix{y}{\vy};
	\draw[->](y) to (yIIyff.north);
	\ltgnode[above=of y.north]{x}{$\sslabs$}; \addPos{x}{\vx}; \addPrefix{x}{};
	\draw[->](x) to (y.north);
	\draw[<-](x.north) to +(0mm,4mm);
	\ltgnode[left=of yIIyff]{Iindir}{\indir}; \addPos{Iindir}{\vI};
	\node[draw,shape=transbox,below=of Iindir](I){$\flabs{*[I^\vI={\labs{z}{z}}]}{\labs{z}{z}}$};
	\draw[->](Iindir) to (I.north);
	\ltgnode[right=of yIIyff]{findir}{\indir}; \addPos{findir}{\vf};
	\node[draw,shape=transbox,below=of findir](f){$\flabs{*[I^\vI={\labs{z}{z}}]\;x^\vx[f^\vf=x]}{x}$};
	\draw[->](findir) to (f.north);
}
\transbreak
\transstep{\sslabs,\sslapp,\snlvar}{
	\node(yIIyff){};
	\node[node distance=0mm,draw,shape=transbox,left=of yIIyff](yIIy){$\flabs{*[I^\vI=\labs{z}{z}]\;x^\vx[f^\vf=x]\;y^\vy[ ]}{\lapp{\lapp{y}{I}}{(\lapp{I}{y})}}$};
	\node[node distance=0mm,draw,shape=transbox,right=of yIIyff](ff){$\flabs{*[I^\vI=\labs{z}{z}]\;x^\vx[f^\vf=x]\;y^\vy[ ]}{\lapp{f}{f}}$};
	\ltgnode[above=of yIIyff.north]{yIIy@ff}{$\sslapp$}; \addPrefix{yIIy@ff}{\vx\;\vy};
	\draw[->](yIIy@ff) to (yIIy.north); \draw[->](yIIy@ff) to (ff.north);
	\ltgnode[above=of yIIy@ff.north]{y}{$\sslabs$}; \addPos{y}{\vy}; \addPrefix{y}{\vx};
	\draw[->](y) to (yIIy@ff.north);
	\ltgnode[above=of y.north]{x}{$\sslabs$}; \addPos{x}{\vx}; \addPrefix{x}{};
	\draw[->](x) to (y.north);
	\draw[<-](x.north) to +(0mm,4mm);
	\ltgnode[node distance=8mm,left=of yIIy]{z}{$\sslabs$}; \addPos{z}{\vz}; \addPrefix{z}{};
	\ltgnode[above=of z]{Iindir}{\indir}; \addPos{Iindir}{\vI};
	\draw[->](Iindir) to (z.north);
	\node[draw,shape=transbox,below=of z](I){$\flabs{*[I^\vI={\labs{z}{z}}]\;z^\vz[]}{z}$};
	\draw[->](z) to (I.north);
	\ltgnode[node distance=8mm,right=of ff]{f}{$\snlvar$}; \addPrefix{f}{\vx};
	\ltgnode[above=of f]{findir}{\indir}; \addPos{findir}{\vf};
	\draw[->](findir) to (f.north);
	\node[right=of f](feast){}; \draw(f.east) to (feast.center); \draw[->] (feast.center) |- (x.east);
}
\transbreak
\raggedbottom
\transstep{\snlvar, \sslapp, \sslapp}{
	\node(yIIyff){};
	\ltgnode[node distance=2cm,left=of yIIyff]{yI@Iy}{$\sslapp$}; \addPrefix{yI@Iy}{\vx\;\vy}; \node[node distance=8mm,below=of yI@Iy](yIIy){};
	\node[node distance=0mm,draw,shape=transbox, left=of yIIy](yI){$\flabs{*[I^\vI=\labs{z}{z}]\;x^\vx[f^\vf=x]\;y^\vy[ ]}{\lapp{y}{I}}$};
	\draw[->](yI@Iy) to (yI.north);
	\node[node distance=0mm,draw,shape=transbox,right=of yIIy](Iy){$\flabs{*[I^\vI=\labs{z}{z}]\;x^\vx[f^\vf=x]\;y^\vy[ ]}{\lapp{I}{y}}$};
	\draw[->](yI@Iy) to (Iy.north);
	\ltgnode[dotted,node distance=2cm,right=of Iy]{Sff}{$\snlvarsucc$}; \addPrefix{Sff}{\vx\;\vy};
	\node[draw,shape=transbox,below=of Sff](ff){$\flabs{*[I^\vI=\labs{z}{z}]\;x^\vx[f^\vf=x]}{\lapp{f}{f}}$};
	\draw[->](Sff) to (ff.north);
	\ltgnode[above=of yIIyff.north]{yIIy@ff}{$\sslapp$}; \addPrefix{yIIy@ff}{\vx\;\vy};
	\draw[->](yIIy@ff) to (yI@Iy.north); \draw[->](yIIy@ff) to (Sff.north);
	\ltgnode[above=of yIIy@ff.north]{y}{$\sslabs$}; \addPos{y}{\vy}; \addPrefix{y}{\vx};
	\draw[->](y) to (yIIy@ff.north);
	\ltgnode[above=of y.north]{x}{$\sslabs$}; \addPos{x}{\vx}; \addPrefix{x}{};
	\draw[->](x) to (y.north);
	\draw[<-](x.north) to +(0mm,4mm);
	\ltgnode[node distance=7cm,left=of y]{Iindir}{\indir}; \addPos{Iindir}{\vI};
	\ltgnode[below=of Iindir]{z}{$\sslabs$}; \addPos{z}{\vz}; \addPrefix{z}{};
	\draw[->](Iindir) to (z.north);
	\ltgnode[below=of z]{I}{$\snlvar$}; \addPrefix{I}{\vz};
	\node[left=of I](Iwest){}; \draw(I.west) to (Iwest.center); \draw[->] (Iwest.center) |- (z.west);
	\draw[->](z) to (I.north);
	\ltgnode[node distance=5cm,right=of y]{findir}{\indir}; \addPos{findir}{\vf};
	\ltgnode[below=of findir]{f}{$\snlvar$}; \addPrefix{f}{\vx};
	\draw[->](findir) to (f.north);
	\node[right=of f](feast){}; \draw(f.east) to (feast.center); \draw[->] (feast.center) |- (x.east);
}
\transbreak
\transstep{\sslapp,\sslapp,\sslapp}{
\hspace{-5mm}
	\node(yIIyff){};
	\ltgnode[node distance=3cm,left=of yIIyff]{yI@Iy}{$\sslapp$}; \addPrefix{yI@Iy}{\vx\;\vy}; \node[node distance=8mm,below=of yI@Iy](yIIy){};
	\ltgnode[node distance=1cm,right=of yIIy]{I@y}{$\sslapp$}; \addPrefix{I@y}{\vx\;\vy}; \node[below=of I@y](Iy){};
	\draw[->](yI@Iy) to (I@y.north);
	\node[node distance=0mm,draw,shape=transbox, left=of Iy](I2){$\flabs{*[I^\vI=\labs{z}{z}]\;x^\vx[f^\vf=x]\;y^\vy[ ]}{I}$};
	\draw[->](I@y) to (I2.north);
	\node[node distance=0mm,draw,shape=transbox,right=of Iy](y2){$\flabs{*[I^\vI=\labs{z}{z}]\;x^\vx[f^\vf=x]\;y^\vy[ ]}{y}$};
	\draw[->](I@y) to (y2.north);
	\ltgnode[node distance=5mm, left=of I2]{y@I}{$\sslapp$}; \addPrefix{y@I}{\vx\;\vy}; \node[node distance=1cm,below=of y@I](yI){};
	\draw[->](yI@Iy) to (y@I.north);
	\node[node distance=0mm,draw,shape=transbox, left=of yI](y1){$\flabs{*[I^\vI=\labs{z}{z}]\;x^\vx[f^\vf=x]\;y^\vy[ ]}{y}$};
	\draw[->](y@I) to (y1.north);
	\node[node distance=0mm,draw,shape=transbox, right=of yI](I1){$\flabs{*[I^\vI=\labs{z}{z}]\;x^\vx[f^\vf=x]\;y^\vy[ ]}{I}$};
	\draw[->](y@I) to (I1.north);
	\ltgnode[dotted,node distance=4.5cm,right=of I@y]{Sff}{$\snlvarsucc$}; \addPrefix{Sff}{\vx\;\vy};
	\ltgnode[node distance=1cm,below=of Sff]{f@f}{$\sslapp$}; \addPrefix{f@f}{\vx\;\vy}; \node[below=of f@f](ff){};
	\draw[->](Sff) to (f@f.north);
	\node[node distance=0mm,draw,shape=transbox, left=of ff](f1){$\flabs{*[I^\vI=\labs{z}{z}]\;x^\vx[f^\vf=x]}{f}$};
	\draw[->](f@f) to (f1.north);
	\node[node distance=0mm,draw,shape=transbox,right=of ff](f2){$\flabs{*[I^\vI=\labs{z}{z}]\;x^\vx[f^\vf=x]}{f}$};
	\draw[->](f@f) to (f2.north);
	\ltgnode[above=of yIIyff.north]{yIIy@ff}{$\sslapp$}; \addPrefix{yIIy@ff}{\vx\;\vy};
	\draw[->](yIIy@ff) to (yI@Iy.north); \draw[->](yIIy@ff) to (Sff.north);
	\ltgnode[above=of yIIy@ff.north]{y}{$\sslabs$}; \addPos{y}{\vy}; \addPrefix{y}{\vx};
	\draw[->](y) to (yIIy@ff.north);
	\ltgnode[above=of y.north]{x}{$\sslabs$}; \addPos{x}{\vx}; \addPrefix{x}{};
	\draw[->](x) to (y.north);
	\draw[<-](x.north) to +(0mm,4mm);
	\ltgnode[node distance=6cm,left=of y]{Iindir}{\indir}; \addPos{Iindir}{\vI};
	\ltgnode[below=of Iindir]{z}{$\sslabs$}; \addPos{z}{\vz}; \addPrefix{z}{};
	\draw[->](Iindir) to (z.north);
	\ltgnode[below=of z]{I}{$\snlvar$}; \addPrefix{I}{\vz};
	\node[left=of I](Iwest){}; \draw(I.west) to (Iwest.center); \draw[->] (Iwest.center) |- (z.west);
	\draw[->](z) to (I.north);
	\ltgnode[node distance=4cm,right=of y]{findir}{\indir}; \addPos{findir}{\vf};
	\ltgnode[below=of findir]{f}{$\snlvar$}; \addPrefix{f}{\vx};
	\draw[->](findir) to (f.north);
	\node[right=of f](feast){}; \draw(f.east) to (feast.center); \draw[->] (feast.center) |- (x.east);
}
\transbreak
\transstep{\snlvar, \snlvarsucc, \snlvarsucc, \snlvar, \arecvar, \arecvar}{
	\node(yIIyff){};
	\ltgnode[node distance=1.5cm,left=of yIIyff]{yI@Iy}{$\sslapp$}; \addPrefix{yI@Iy}{\vx\;\vy}; \node[below=of yI@Iy](yIIy){};
	\ltgnode[node distance=2cm,right=of yIIy]{I@y}{$\sslapp$}; \addPrefix{I@y}{\vx\;\vy}; \node[below=of I@y](Iy){};
	\draw[->](yI@Iy) to (I@y.north);
	\ltgnode[dotted,left=of Iy]{SI2}{$\snlvarsucc$}; \addPrefix{SI2}{\vx\;\vy};
	\draw[->](I@y) to (SI2.north);
	\node[draw,shape=transbox, below=of SI2](I2){$\flabs{*[I^\vI=\labs{z}{z}]\;x^\vx[f^\vf=x]}{I}$};
	\draw[->](SI2) to (I2.north);
	\ltgnode[right=of Iy]{y2}{$\snlvar$}; \addPrefix{y2}{\vx\;\vy};
	\node[right=of y2](y2l){}; \draw(y2) to (y2l.center);
	\draw[->](I@y) to (y2.north);
	\ltgnode[node distance=3cm, left=of SI2]{y@I}{$\sslapp$}; \addPrefix{y@I}{\vx\;\vy}; \node[below=of y@I](yI){};
	\draw[->](yI@Iy) to (y@I.north);
	\ltgnode[left=of yI]{y1}{$\snlvar$}; \addPrefix{y1}{\vx\;\vy};
	\node[left=of y1](y1l){}; \draw(y1) to (y1l.center);
	\draw[->](y@I) to (y1.north);
	\ltgnode[dotted,right=of yI]{SI1}{$\snlvarsucc$}; \addPrefix{SI1}{\vx\;\vy};
	\draw[->](y@I) to (SI1.north);
	\node[draw,shape=transbox, below=of SI1](I1){$\flabs{*[I^\vI=\labs{z}{z}]\;x^\vx[f^\vf=x]}{I}$};
	\draw[->](SI1) to (I1.north);
	\ltgnode[dotted,node distance=2cm,right=of I@y]{Sff}{$\snlvarsucc$}; \addPrefix{Sff}{\vx\;\vy};
	\ltgnode[below=of Sff]{f@f}{$\sslapp$}; \addPrefix{f@f}{\vx\;\vy}; \node[below=of f@f](ff){};
	\draw[->](Sff) to (f@f.north);
	\node[node distance=1.5cm,right=of ff](fmerge){};
	\node[ left=of ff](f1){}; \draw(f@f) to (f1.south); \draw(f1.south) |- (fmerge.south);
	\node[right=of ff](f2){}; \draw(f@f) to (f2.center); \draw(f2.center) to (fmerge.center);
	\ltgnode[above=of yIIyff.north]{yIIy@ff}{$\sslapp$}; \addPrefix{yIIy@ff}{\vx\;\vy};
	\draw[->](yIIy@ff) to (yI@Iy.north); \draw[->](yIIy@ff) to (Sff.north);
	\ltgnode[above=of yIIy@ff.north]{y}{$\sslabs$}; \addPos{y}{\vy}; \addPrefix{y}{\vx};
	\draw[->](y2l.center) |- (y.east); \draw[->](y1l.center) |- (y.west);
	\draw[->](y) to (yIIy@ff.north);
	\ltgnode[above=of y.north]{x}{$\sslabs$}; \addPos{x}{\vx}; \addPrefix{x}{};
	\draw[->](x) to (y.north);
	\draw[<-](x.north) to +(0mm,4mm);
	\ltgnode[node distance=6cm,left=of y]{Iindir}{\indir}; \addPos{Iindir}{\vI};
	\ltgnode[below=of Iindir]{z}{$\sslabs$}; \addPos{z}{\vz}; \addPrefix{z}{};
	\draw[->](Iindir) to (z.north);
	\ltgnode[below=of z]{I}{$\snlvar$}; \addPrefix{I}{\vz};
	\node[left=of I](Iwest){}; \draw(I.west) to (Iwest.center); \draw[->] (Iwest.center) |- (z.west);
	\draw[->](z) to (I.north);
	\ltgnode[node distance=5cm,right=of y]{findir}{\indir}; \addPos{findir}{\vf};
	\ltgnode[below=of findir]{f}{$\snlvar$}; \addPrefix{f}{\vx};
	\draw[->](findir) to (f.north);
	\node[right=of f](feast){}; \draw(f.east) to (feast.center); \draw[->] (feast.center) |- (x.east);
	\draw[->](fmerge.south) |- (findir.west);
}
\transbreak
\transstep{\snlvarsucc,\snlvarsucc}{
	\node(yIIyff){};
	\ltgnode[node distance=1.5cm,left=of yIIyff]{yI@Iy}{$\sslapp$}; \addPrefix{yI@Iy}{\vx\;\vy}; \node[below=of yI@Iy](yIIy){};
	\ltgnode[node distance=2cm,right=of yIIy]{I@y}{$\sslapp$}; \addPrefix{I@y}{\vx\;\vy}; \node[below=of I@y](Iy){};
	\draw[->](yI@Iy) to (I@y.north);
	\ltgnode[dotted,left=of Iy]{SI2}{$\snlvarsucc$}; \addPrefix{SI2}{\vx\;\vy};
	\draw[->](I@y) to (SI2.north);
	\ltgnode[dotted,below=of SI2]{SSI2}{$\snlvarsucc$}; \addPrefix{SSI2}{\vx};
	\draw[->](SI2) to (SSI2.north);
	\node[draw,shape=transbox, below=of SSI2](I2){$\flabs{*[I^\vI=\labs{z}{z}]}{I}$};
	\draw[->](SSI2) to (I2.north);
	\ltgnode[right=of Iy]{y2}{$\snlvar$}; \addPrefix{y2}{\vx};
	\node[right=of y2](y2l){}; \draw(y2) to (y2l.center);
	\draw[->](I@y) to (y2.north);
	\ltgnode[node distance=2.5cm, left=of SI2]{y@I}{$\sslapp$}; \addPrefix{y@I}{\vx\;\vy}; \node[below=of y@I](yI){};
	\draw[->](yI@Iy) to (y@I.north);
	\ltgnode[left=of yI]{y1}{$\snlvar$}; \addPrefix{y1}{\vx\;\vy};
	\node[left=of y1](y1l){}; \draw(y1) to (y1l.center);
	\draw[->](y@I) to (y1.north);
	\ltgnode[dotted,right=of yI]{SI1}{$\snlvarsucc$}; \addPrefix{SI1}{\vx\;\vy};
	\draw[->](y@I) to (SI1.north);
	\ltgnode[dotted,below=of SI1]{SSI1}{$\snlvarsucc$}; \addPrefix{SSI1}{\vx};
	\draw[->](SI1) to (SSI1.north);
	\node[draw,shape=transbox, below=of SSI1](I1){$\flabs{*[I^\vI=\labs{z}{z}]}{I}$};
	\draw[->](SSI1) to (I1.north);
	\ltgnode[dotted,node distance=2cm,right=of I@y]{Sff}{$\snlvarsucc$}; \addPrefix{Sff}{\vx};
	\ltgnode[below=of Sff]{f@f}{$\sslapp$}; \addPrefix{f@f}{\vx\;\vy}; \node[below=of f@f](ff){};
	\draw[->](Sff) to (f@f.north);
	\node[node distance=1.5cm,right=of ff](fmerge){};
	\node[ left=of ff](f1){}; \draw(f@f) to (f1.south); \draw(f1.south) |- (fmerge.south);
	\node[right=of ff](f2){}; \draw(f@f) to (f2.center); \draw(f2.center) to (fmerge.center);
	\ltgnode[above=of yIIyff.north]{yIIy@ff}{$\sslapp$}; \addPrefix{yIIy@ff}{\vx\;\vy};
	\draw[->](yIIy@ff) to (yI@Iy.north); \draw[->](yIIy@ff) to (Sff.north);
	\ltgnode[above=of yIIy@ff.north]{y}{$\sslabs$}; \addPos{y}{\vy}; \addPrefix{y}{\vx};
	\draw[->](y2l.center) |- (y.east); \draw[->](y1l.center) |- (y.west);
	\draw[->](y) to (yIIy@ff.north);
	\ltgnode[above=of y.north]{x}{$\sslabs$}; \addPos{x}{\vx}; \addPrefix{x}{};
	\draw[->](x) to (y.north);
	\draw[<-](x.north) to +(0mm,4mm);
	\ltgnode[node distance=6cm,left=of y]{Iindir}{\indir}; \addPos{Iindir}{\vI};
	\ltgnode[below=of Iindir]{z}{$\sslabs$}; \addPos{z}{\vz}; \addPrefix{z}{};
	\draw[->](Iindir) to (z.north);
	\ltgnode[below=of z]{I}{$\snlvar$}; \addPrefix{I}{\vz};
	\node[left=of I](Iwest){}; \draw(I.west) to (Iwest.center); \draw[->] (Iwest.center) |- (z.west);
	\draw[->](z) to (I.north);
	\ltgnode[node distance=5cm,right=of y]{findir}{\indir}; \addPos{findir}{\vf};
	\ltgnode[below=of findir]{f}{$\snlvar$}; \addPrefix{f}{\vx};
	\draw[->](findir) to (f.north);
	\node[right=of f](feast){}; \draw(f.east) to (feast.center); \draw[->] (feast.center) |- (x.east);
	\draw[->](fmerge.south) |- (findir.west);
}
\transbreak
\transstep{\arecvar,\arecvar}{
	\node(yIIyff){};
	\ltgnode[node distance=1.5cm,left=of yIIyff]{yI@Iy}{$\sslapp$}; \addPrefix{yI@Iy}{\vx\;\vy}; \node[below=of yI@Iy](yIIy){};
	\ltgnode[node distance=1cm,right=of yIIy]{I@y}{$\sslapp$}; \addPrefix{I@y}{\vx\;\vy}; \node[below=of I@y](Iy){};
	\draw[->](yI@Iy) to (I@y.north);
	\ltgnode[dotted,left=of Iy]{SI2}{$\snlvarsucc$}; \addPrefix{SI2}{\vx\;\vy};
	\draw[->](I@y) to (SI2.north);
	\ltgnode[dotted,below=of SI2]{SSI2}{$\snlvarsucc$}; \addPrefix{SSI2}{\vx};
	\draw[->](SI2) to (SSI2.north);
	\ltgnode[right=of Iy]{y2}{$\snlvar$}; \addPrefix{y2}{\vx};
	\node[right=of y2](y2l){}; \draw(y2) to (y2l.center);
	\draw[->](I@y) to (y2.north);
	\ltgnode[node distance=1cm, left=of yIIy]{y@I}{$\sslapp$}; \addPrefix{y@I}{\vx\;\vy}; \node[below=of y@I](yI){};
	\draw[->](yI@Iy) to (y@I.north);
	\ltgnode[left=of yI]{y1}{$\snlvar$}; \addPrefix{y1}{\vx\;\vy};
	\node[left=of y1](y1l){}; \draw(y1) to (y1l.center);
	\draw[->](y@I) to (y1.north);
	\ltgnode[dotted,right=of yI]{SI1}{$\snlvarsucc$}; \addPrefix{SI1}{\vx\;\vy};
	\draw[->](y@I) to (SI1.north);
	\ltgnode[dotted,below=of SI1]{SSI1}{$\snlvarsucc$}; \addPrefix{SSI1}{\vx};
	\draw[->](SI1) to (SSI1.north);
	\node[below=of SSI1](Imerge){};
	\ltgnode[dotted,node distance=2cm,right=of I@y]{Sff}{$\snlvarsucc$}; \addPrefix{Sff}{\vx};
	\ltgnode[below=of Sff]{f@f}{$\sslapp$}; \addPrefix{f@f}{\vx\;\vy}; \node[below=of f@f](ff){};
	\draw[->](Sff) to (f@f.north);
	\node[node distance=1.5cm,right=of ff](fmerge){};
	\node[ left=of ff](f1){}; \draw(f@f) to (f1.south); \draw(f1.south) |- (fmerge.south);
	\node[right=of ff](f2){}; \draw(f@f) to (f2.center); \draw(f2.center) to (fmerge.center);
	\ltgnode[above=of yIIyff.north]{yIIy@ff}{$\sslapp$}; \addPrefix{yIIy@ff}{\vx\;\vy};
	\draw[->](yIIy@ff) to (yI@Iy.north); \draw[->](yIIy@ff) to (Sff.north);
	\ltgnode[above=of yIIy@ff.north]{y}{$\sslabs$}; \addPos{y}{\vy}; \addPrefix{y}{\vx};
	\draw[->](y2l.center) |- (y.east); \draw[->](y1l.center) |- (y.west);
	\draw[->](y) to (yIIy@ff.north);
	\ltgnode[above=of y.north]{x}{$\sslabs$}; \addPos{x}{\vx}; \addPrefix{x}{};
	\draw[->](x) to (y.north);
	\draw[<-](x.north) to +(0mm,4mm);
	\ltgnode[node distance=6cm,left=of y]{Iindir}{\indir}; \addPos{Iindir}{\vI};
	\ltgnode[below=of Iindir]{z}{$\sslabs$}; \addPos{z}{\vz}; \addPrefix{z}{};
	\draw[->](Iindir) to (z.north);
	\ltgnode[below=of z]{I}{$\snlvar$}; \addPrefix{I}{\vz};
	\node[left=of I](Iwest){}; \draw(I.west) to (Iwest.center); \draw[->] (Iwest.center) |- (z.west);
	\draw[->](z) to (I.north);
	\ltgnode[node distance=4cm,right=of y]{findir}{\indir}; \addPos{findir}{\vf};
	\ltgnode[below=of findir]{f}{$\snlvar$}; \addPrefix{f}{\vx};
	\draw[->](findir) to (f.north);
	\node[right=of f](feast){}; \draw(f.east) to (feast.center); \draw[->] (feast.center) |- (x.east);
	\draw[->](fmerge.south) |- (findir.west);
	\draw(SSI1) to (Imerge.center);
	\draw(SSI2) |- (Imerge.center);
	\node[right=of Iindir](Iindireast){};
	\draw[->](Iindireast.center) to (Iindir.east);
	\draw(Imerge.center) -| (Iindireast.center);
}
\transbreak
\transstep{\text{eliminate indirection vertices}}{
	\node(yIIyff){};
	\ltgnode[node distance=1.5cm,left=of yIIyff]{yI@Iy}{$\sslapp$}; \addPrefix{yI@Iy}{\vx\;\vy}; \node[below=of yI@Iy](yIIy){};
	\ltgnode[node distance=1cm,right=of yIIy]{I@y}{$\sslapp$}; \addPrefix{I@y}{\vx\;\vy}; \node[below=of I@y](Iy){};
	\draw[->](yI@Iy) to (I@y.north);
	\ltgnode[dotted,left=of Iy]{SI2}{$\snlvarsucc$}; \addPrefix{SI2}{\vx\;\vy};
	\draw[->](I@y) to (SI2.north);
	\ltgnode[dotted,below=of SI2]{SSI2}{$\snlvarsucc$}; \addPrefix{SSI2}{\vx};
	\draw[->](SI2) to (SSI2.north);
	\ltgnode[right=of Iy]{y2}{$\snlvar$}; \addPrefix{y2}{\vx};
	\node[right=of y2](y2l){}; \draw(y2) to (y2l.center);
	\draw[->](I@y) to (y2.north);
	\ltgnode[node distance=1cm, left=of yIIy]{y@I}{$\sslapp$}; \addPrefix{y@I}{\vx\;\vy}; \node[below=of y@I](yI){};
	\draw[->](yI@Iy) to (y@I.north);
	\ltgnode[left=of yI]{y1}{$\snlvar$}; \addPrefix{y1}{\vx\;\vy};
	\node[left=of y1](y1l){}; \draw(y1) to (y1l.center);
	\draw[->](y@I) to (y1.north);
	\ltgnode[dotted,right=of yI]{SI1}{$\snlvarsucc$}; \addPrefix{SI1}{\vx\;\vy};
	\draw[->](y@I) to (SI1.north);
	\ltgnode[dotted,below=of SI1]{SSI1}{$\snlvarsucc$}; \addPrefix{SSI1}{\vx};
	\draw[->](SI1) to (SSI1.north);
	\ltgnode[dotted,node distance=1.5cm,right=of yIIyff]{Sff}{$\snlvarsucc$}; \addPrefix{Sff}{\vx};
	\ltgnode[below=of Sff]{f@f}{$\sslapp$}; \addPrefix{f@f}{\vx\;\vy}; \node[below=of f@f](ff){};
	\draw[->](Sff) to (f@f.north);
	\node[node distance=1.5cm,right=of ff](fmerge){};
	\ltgnode[above=of yIIyff.north]{yIIy@ff}{$\sslapp$}; \addPrefix{yIIy@ff}{\vx\;\vy};
	\draw[->](yIIy@ff) to (yI@Iy.north); \draw[->](yIIy@ff) to (Sff.north);
	\ltgnode[above=of yIIy@ff.north]{y}{$\sslabs$}; \addPos{y}{\vy}; \addPrefix{y}{\vx};
	\draw[->](y2l.center) |- (y.east); \draw[->](y1l.center) |- (y.west);
	\draw[->](y) to (yIIy@ff.north);
	\ltgnode[above=of y.north]{x}{$\sslabs$}; \addPos{x}{\vx}; \addPrefix{x}{};
	\draw[->](x) to (y.north);
	\draw[<-](x.north) to +(0mm,4mm);
	\ltgnode[node distance=2.5cm,below=of yIIy]{z}{$\sslabs$}; \addPos{z}{\vz}; \addPrefix{z}{};
	\ltgnode[below=of z]{I}{$\snlvar$}; \addPrefix{I}{\vz};
	\node[left=of I](Iwest){}; \draw(I.west) to (Iwest.center); \draw[->] (Iwest.center) |- (z.west);
	\draw[->](z) to (I.north);
	\ltgnode[below=of ff]{f}{$\snlvar$}; \addPrefix{f}{\vx};
	\draw[->, bend right](f@f) to (f.north);
	\draw[->, bend left](f@f) to (f.north);
	\node[right=of f](feast){}; \draw(f.east) to (feast.center); \draw[->] (feast.center) |- (x.east);
	\draw[->](SSI1) to (z.north);
	\draw[->](SSI2) to (z.north);
}

\newpage

\vspace*{-1.5ex}
\begin{multicols}{2}
Second, we give the translation of the same term $\allteri{2}$ (from Example~\ref{ex:translations} on page~\pageref{ex:translations}) 
with the process needed for the first-order \termgraph\ semantics $\sgraphsemC{\classltgs}$,
yielding $\graphsemC{\classltgs}{\allteri{2}}$ in Fig.~\ref{fig:def:graphsem:lhotgs},
and, at first sight incidentally\footnotemark\addtocounter{footnote}{-1}, 
the corresponding \lambdahotg~$\graphsemC{\classlhotgs}{\allteri{2}}$. 
Note that the resulting \lambdahotg\ (ignore the dotted delimiter
$\snlvarsucc$-vertices) is again $\graphsemC{\classlhotgs}{\allteri{2}}$, that
is, it is identical%
  \footnote{The fact that this is actually not just a coincidence in this specific example 
            is an easy consequence of Prop.~\ref{prop:corr:laphotgs:ltgs}, \ref{prop:corr:laphotgs:ltgs:item:isomorphism},
            Prop.~\ref{prop:graphsem:ltgs:min:graphsem:ltgs}, and Prop.~\ref{prop:graphsem:ltgs:min:graphsem:ltgs:S}.}
with the one that was produced by the translation process
above. Yet the obtained \lambdatg\ (now taking the dotted
$\snlvarsucc$-vertices into account) $\graphsemC{\classltgs}{\allteri{2}}$
differs from the \lambdatg~$\graphsemCmin{\classltgs}{\allteri{2}}$ obtained
above by exhibiting a higher degree of $\snlvarsucc$-sharing.
\end{multicols}

\transinit{
	\node[draw,shape=transbox](t){$\femptylabs{\labs{x}{\labs{y}{\letin{I=\labs{z}{z},f=x}{\lapp{\lapp{\lapp{y}{I}}{(\lapp{I}{y})}}{(\lapp{f}{f})}}}}}$};
	\draw[<-](t.north) to +(0mm,6mm);
}
\transstep{\sslabs\hspace{1cm}}{
	\node[draw,shape=transbox](t){$\flabs{*[]\;x^\vx[]}{\labs{y}{\letin{I=\labs{z}{z},f=x}{\lapp{\lapp{\lapp{y}{I}}{(\lapp{I}{y})}}{(\lapp{f}{f})}}}}$};
	\ltgnode[above=of t.north]{x}{$\sslabs$}; \addPos{x}{\vx}; \addPrefix{x}{};
	\draw[<-](x.north) to +(0mm,4mm);
	\draw[->](x) to (t.north);
}
\transbreak
\transstep{\sslabs}{
	\node[draw,shape=transbox](t){$\flabs{*[]\;x^\vx[]\;y^\vy[]}{\letin{I=\labs{z}{z},f=x}{\lapp{\lapp{\lapp{y}{I}}{(\lapp{I}{y})}}{(\lapp{f}{f})}}}$};
	\ltgnode[above=of t.north]{y}{$\sslabs$}; \addPos{y}{\vy}; \addPrefix{y}{\vx};
	\draw[->](y) to (t.north);
	\ltgnode[above=of y.north]{x}{$\sslabs$}; \addPos{x}{\vx}; \addPrefix{x}{};
	\draw[->](x) to (y.north);
	\draw[<-](x.north) to +(0mm,4mm);
}
\transbreak
\transstep{\sslet}{
	\node[draw,shape=transbox](yIIyff){$\flabs{*[]\;x^\vx[f^\vf=x]\;y^\vy[I^\vI=\labs{z}{z}]}{\lapp{\lapp{\lapp{y}{I}}{(\lapp{I}{y})}}{(\lapp{f}{f})}}$};
	\ltgnode[above=of yIIyff.north]{y}{$\sslabs$}; \addPos{y}{\vy}; \addPrefix{y}{\vy};
	\draw[->](y) to (yIIyff.north);
	\ltgnode[above=of y.north]{x}{$\sslabs$}; \addPos{x}{\vx}; \addPrefix{x}{};
	\draw[->](x) to (y.north);
	\draw[<-](x.north) to +(0mm,4mm);
	\ltgnode[left=of yIIyff]{Iindir}{\indir}; \addPos{Iindir}{\vI};
	\node[draw,shape=transbox,below=of Iindir](I){$\flabs{*[]\;x^\vx[f^\vf=x]\;y^\vy[I^\vI=\labs{z}{z}]}{\labs{z}{z}}$};
	\draw[->](Iindir) to (I.north);
	\ltgnode[right=of yIIyff]{findir}{\indir}; \addPos{findir}{\vf};
	\node[draw,shape=transbox,below=of findir](f){$\flabs{*[]\;x^\vx[f^\vf=x]}{x}$};
	\draw[->](findir) to (f.north);
}
\transbreak
\transstep{\snlvarsucc,\sslapp,\snlvar}{
	\node(yIIyff){};
	\node[node distance=0mm,draw,shape=transbox,left=of yIIyff](yIIy){$\flabs{*[]\;x^\vx[f^\vf=x]\;y^\vy[I^\vI=\labs{z}{z}]}{\lapp{\lapp{y}{I}}{(\lapp{I}{y})}}$};
	\node[node distance=0mm,draw,shape=transbox,right=of yIIyff](ff){$\flabs{*[]\;x^\vx[f^\vf=x]\;y^\vy[I^\vI=\labs{z}{z}]}{\lapp{f}{f}}$};
	\ltgnode[above=of yIIyff.north]{yIIy@ff}{$\sslapp$}; \addPrefix{yIIy@ff}{\vx\;\vy};
	\draw[->](yIIy@ff) to (yIIy.north); \draw[->](yIIy@ff) to (ff.north);
	\ltgnode[above=of yIIy@ff.north]{y}{$\sslabs$}; \addPos{y}{\vy}; \addPrefix{y}{\vx};
	\draw[->](y) to (yIIy@ff.north);
	\ltgnode[above=of y.north]{x}{$\sslabs$}; \addPos{x}{\vx}; \addPrefix{x}{};
	\draw[->](x) to (y.north);
	\draw[<-](x.north) to +(0mm,4mm);
	\ltgnode[dotted,node distance=8mm,left=of yIIy]{z}{$\snlvarsucc$}; \addPrefix{z}{\vx\;\vy};
	\ltgnode[above=of z]{Iindir}{\indir}; \addPos{Iindir}{\vI};
	\draw[->](Iindir) to (z.north);
	\node[draw,shape=transbox,below=of z](I){$\flabs{*[]\;x^\vx[f^\vf=x]}{\labs{z}{z}}$};
	\draw[->](z) to (I.north);
	\ltgnode[node distance=8mm,right=of ff]{f}{$\snlvar$}; \addPrefix{f}{\vx};
	\ltgnode[above=of f]{findir}{\indir}; \addPos{findir}{\vf};
	\draw[->](findir) to (f.north);
	\node[right=of f](feast){}; \draw(f.east) to (feast.center); \draw[->] (feast.center) |- (x.east);
}
\transbreak
\transstep{\snlvarsucc, \sslapp, \snlvarsucc}{
	\node(yIIyff){};
	\ltgnode[node distance=1.5cm,left=of yIIyff]{yI@Iy}{$\sslapp$}; \addPrefix{yI@Iy}{\vx\;\vy}; \node[node distance=8mm,below=of yI@Iy](yIIy){};
	\node[node distance=0mm,draw,shape=transbox, left=of yIIy](yI){$\flabs{*[]\;x^\vx[f^\vf=x]\;y^\vy[I^\vI=\labs{z}{z}]}{\lapp{y}{I}}$};
	\draw[->](yI@Iy) to (yI.north);
	\node[node distance=0mm,draw,shape=transbox,right=of yIIy](Iy){$\flabs{*[]\;x^\vx[f^\vf=x]\;y^\vy[I^\vI=\labs{z}{z}]}{\lapp{I}{y}}$};
	\draw[->](yI@Iy) to (Iy.north);
	\ltgnode[dotted,node distance=1cm,right=of Iy]{Sff}{$\snlvarsucc$}; \addPrefix{Sff}{\vx\;\vy};
	\node[draw,shape=transbox,below=of Sff](ff){$\flabs{*[]\;x^\vx[f^\vf=x]}{\lapp{f}{f}}$};
	\draw[->](Sff) to (ff.north);
	\ltgnode[above=of yIIyff.north]{yIIy@ff}{$\sslapp$}; \addPrefix{yIIy@ff}{\vx\;\vy};
	\draw[->](yIIy@ff) to (yI@Iy.north); \draw[->](yIIy@ff) to (Sff.north);
	\ltgnode[above=of yIIy@ff.north]{y}{$\sslabs$}; \addPos{y}{\vy}; \addPrefix{y}{\vx};
	\draw[->](y) to (yIIy@ff.north);
	\ltgnode[above=of y.north]{x}{$\sslabs$}; \addPos{x}{\vx}; \addPrefix{x}{};
	\draw[->](x) to (y.north);
	\draw[<-](x.north) to +(0mm,4mm);
	\ltgnode[dotted,node distance=5cm,left=of yI@Iy]{z}{$\snlvarsucc$}; \addPrefix{z}{\vx\;\vy};
	\ltgnode[above=of z]{Iindir}{\indir}; \addPos{Iindir}{\vI};
	\draw[->](Iindir) to (z.north);
	\ltgnode[dotted,below=of z]{zS}{$\snlvarsucc$}; \addPrefix{zS}{\vx};
	\node[draw,shape=transbox,below=of zS](I){$\flabs{*[]}{\labs{z}{z}}$};
	\draw[->](z) to (zS.north);
	\draw[->](zS) to (I.north);
	\ltgnode[node distance=5cm,right=of y]{findir}{\indir}; \addPos{findir}{\vf};
	\ltgnode[below=of findir]{f}{$\snlvar$}; \addPrefix{f}{\vx};
	\draw[->](findir) to (f.north);
	\node[right=of f](feast){}; \draw(f.east) to (feast.center); \draw[->] (feast.center) |- (x.east);
}
\transbreak
\transstep{\sslabs,\sslapp,\sslapp,\sslapp}{
	\node(yIIyff){};
	\ltgnode[node distance=3cm,left=of yIIyff]{yI@Iy}{$\sslapp$}; \addPrefix{yI@Iy}{\vx\;\vy}; \node[node distance=8mm,below=of yI@Iy](yIIy){};
	\ltgnode[node distance=1cm,right=of yIIy]{I@y}{$\sslapp$}; \addPrefix{I@y}{\vx\;\vy}; \node[below=of I@y](Iy){};
	\draw[->](yI@Iy) to (I@y.north);
	\node[node distance=0mm,draw,shape=transbox, left=of Iy](I2){$\flabs{*[]\;x^\vx[f^\vf=x]\;y^\vy[I^\vI=\labs{z}{z}]}{I}$};
	\draw[->](I@y) to (I2.north);
	\node[node distance=0mm,draw,shape=transbox,right=of Iy](y2){$\flabs{*[]\;x^\vx[f^\vf=x]\;y^\vy[I^\vI=\labs{z}{z}]}{y}$};
	\draw[->](I@y) to (y2.north);
	\ltgnode[node distance=5mm, left=of I2]{y@I}{$\sslapp$}; \addPrefix{y@I}{\vx\;\vy}; \node[node distance=1cm,below=of y@I](yI){};
	\draw[->](yI@Iy) to (y@I.north);
	\node[node distance=0mm,draw,shape=transbox, left=of yI](y1){$\flabs{*[]\;x^\vx[f^\vf=x]\;y^\vy[I^\vI=\labs{z}{z}]}{y}$};
	\draw[->](y@I) to (y1.north);
	\node[node distance=0mm,draw,shape=transbox, right=of yI](I1){$\flabs{*[]\;x^\vx[f^\vf=x]\;y^\vy[I^\vI=\labs{z}{z}]}{I}$};
	\draw[->](y@I) to (I1.north);
	\ltgnode[dotted,node distance=5cm,right=of I@y]{Sff}{$\snlvarsucc$}; \addPrefix{Sff}{\vx\;\vy};
	\ltgnode[node distance=1cm,below=of Sff]{f@f}{$\sslapp$}; \addPrefix{f@f}{\vx\;\vy}; \node[below=of f@f](ff){};
	\draw[->](Sff) to (f@f.north);
	\node[node distance=0mm,draw,shape=transbox, left=of ff](f1){$\flabs{*[]\;x^\vx[f^\vf=x]}{f}$};
	\draw[->](f@f) to (f1.north);
	\node[node distance=0mm,draw,shape=transbox,right=of ff](f2){$\flabs{*[]\;x^\vx[f^\vf=x]}{f}$};
	\draw[->](f@f) to (f2.north);
	\ltgnode[above=of yIIyff.north]{yIIy@ff}{$\sslapp$}; \addPrefix{yIIy@ff}{\vx\;\vy};
	\draw[->](yIIy@ff) to (yI@Iy.north); \draw[->](yIIy@ff) to (Sff.north);
	\ltgnode[above=of yIIy@ff.north]{y}{$\sslabs$}; \addPos{y}{\vy}; \addPrefix{y}{\vx};
	\draw[->](y) to (yIIy@ff.north);
	\ltgnode[above=of y.north]{x}{$\sslabs$}; \addPos{x}{\vx}; \addPrefix{x}{};
	\draw[->](x) to (y.north);
	\draw[<-](x.north) to +(0mm,4mm);
	\ltgnode[node distance=10cm,left=of y]{Iindir}{\indir}; \addPos{Iindir}{\vI};
	\ltgnode[dotted,below=of Iindir]{SSz}{$\snlvarsucc$}; \addPrefix{SSz}{\vx\;\vy};
	\draw[->](Iindir) to (SSz.north);
	\ltgnode[dotted,below=of SSz]{Sz}{$\snlvarsucc$}; \addPrefix{Sz}{\vx};
	\draw[->](SSz) to (Sz.north);
	\ltgnode[below=of Sz]{z}{$\sslabs$}; \addPos{z}{\vz}; \addPrefix{z}{};
	\draw[->](Sz) to (z.north);
	\node[draw,shape=transbox,below=of z](I){$\flabs{*[]\;z^\vz[]}{z}$};
	\draw[->](z) to (I.north);
	\ltgnode[node distance=4cm,right=of y]{findir}{\indir}; \addPos{findir}{\vf};
	\ltgnode[below=of findir]{f}{$\snlvar$}; \addPrefix{f}{\vx};
	\draw[->](findir) to (f.north);
	\node[right=of f](feast){}; \draw(f.east) to (feast.center); \draw[->] (feast.center) |- (x.east);
}
\transbreak
\transstep{\snlvar, \snlvar, \arecvar, \arecvar, \snlvar, \arecvar, \arecvar}{
	\node(yIIyff){};
	\ltgnode[node distance=1.2cm,left=of yIIyff]{yI@Iy}{$\sslapp$}; \addPrefix{yI@Iy}{\vx\;\vy}; \node[below=of yI@Iy](yIIy){};
	\ltgnode[right=of yIIy]{I@y}{$\sslapp$}; \addPrefix{I@y}{\vx\;\vy}; \node[below=of I@y](Iy){};
	\draw[->](yI@Iy) to (I@y.north);
	\node[left=of Iy](SI2){};
	\draw(I@y) to (SI2.center);
	\ltgnode[right=of Iy]{y2}{$\snlvar$}; \addPrefix{y2}{\vx\;\vy};
	\node[right=of y2](y2l){}; \draw(y2) to (y2l.center); \draw[->](y2l.center) |- (y.east);
	\draw[->](I@y) to (y2.north);
	\ltgnode[left=of yIIy]{y@I}{$\sslapp$}; \addPrefix{y@I}{\vx\;\vy}; \node[below=of y@I](yI){};
	\draw[->](yI@Iy) to (y@I.north);
	\ltgnode[left=of yI]{y1}{$\snlvar$}; \addPrefix{y1}{\vx\;\vy};
	\node[left=of y1](y1l){}; \draw(y1) to (y1l.center); \draw[->](y1l.center) |- (y.west);
	\draw[->](y@I) to (y1.north);
	\node[right=of yI](SI1){};
	\draw(y@I) to (SI1.center);
	\node[below=of SI1](Imerge){};
	\draw(SI1.center) to (Imerge.center);
	\draw(SI2.center) |- (Imerge.center);
	\ltgnode[dotted,node distance=1.3cm,right=of yIIyff]{Sff}{$\snlvarsucc$}; \addPrefix{Sff}{\vx\;\vy};
	\ltgnode[below=of Sff]{f@f}{$\sslapp$}; \addPrefix{f@f}{\vx\;\vy}; \node[below=of f@f](ff){};
	\draw[->](Sff) to (f@f.north);
	\node[node distance=1cm,right=of ff](fmerge){};
	\node[ left=of ff](f1){}; \draw(f@f) to (f1.south); \draw(f1.south) |- (fmerge.south);
	\node[right=of ff](f2){}; \draw(f@f) to (f2.center); \draw(f2.center) to (fmerge.center);
	\ltgnode[above=of yIIyff.north]{yIIy@ff}{$\sslapp$}; \addPrefix{yIIy@ff}{\vx\;\vy};
	\draw[->](yIIy@ff) to (yI@Iy.north); \draw[->](yIIy@ff) to (Sff.north);
	\ltgnode[above=of yIIy@ff.north]{y}{$\sslabs$}; \addPos{y}{\vy}; \addPrefix{y}{\vx};
	\draw[->](y) to (yIIy@ff.north);
	\ltgnode[above=of y.north]{x}{$\sslabs$}; \addPos{x}{\vx}; \addPrefix{x}{};
	\draw[->](x) to (y.north);
	\draw[<-](x.north) to +(0mm,4mm);
	\ltgnode[node distance=5cm,left=of y]{Iindir}{\indir}; \addPos{Iindir}{\vI};
	\node[right=of Iindir](Ieast){};
	\draw[->](Ieast.center) to (Iindir);
	\ltgnode[dotted,below=of Iindir]{SSz}{$\snlvarsucc$}; \addPrefix{SSz}{\vx\;\vy};
	\draw[->](Iindir) to (SSz.north);
	\ltgnode[dotted,below=of SSz]{Sz}{$\snlvarsucc$}; \addPrefix{Sz}{\vx};
	\draw[->](SSz) to (Sz.north);
	\ltgnode[below=of Sz]{z}{$\sslabs$}; \addPos{z}{\vz}; \addPrefix{z}{};
	\draw[->](Sz) to (z.north);
	\ltgnode[below=of z]{I}{$\snlvar$}; \addPrefix{I}{\vz};
	\node[left=of I](Iwest){}; \draw(I.west) to (Iwest.center); \draw[->] (Iwest.center) |- (z.west);
	\draw[->](z) to (I.north);
	\ltgnode[node distance=3cm,right=of y]{findir}{\indir}; \addPos{findir}{\vf};
	\ltgnode[below=of findir]{f}{$\snlvar$}; \addPrefix{f}{\vx};
	\draw[->](findir) to (f.north);
	\node[right=of f](feast){}; \draw(f.east) to (feast.center); \draw[->] (feast.center) |- (x.east);
	\draw[->](fmerge.south) |- (findir.west);
	\draw(Imerge.center) -| (Ieast.center);
}
\transbreak
\transstep{\text{erasure of indirection vertices}}{
	\node(yIIyff){};
	\ltgnode[node distance=1.2cm,left=of yIIyff]{yI@Iy}{$\sslapp$}; \addPrefix{yI@Iy}{\vx\;\vy}; \node[below=of yI@Iy](yIIy){};
	\ltgnode[right=of yIIy]{I@y}{$\sslapp$}; \addPrefix{I@y}{\vx\;\vy}; \node[below=of I@y](Iy){};
	\draw[->](yI@Iy) to (I@y.north);
	\node[left=of Iy](SI2){};
	\ltgnode[right=of Iy]{y2}{$\snlvar$}; \addPrefix{y2}{\vx\;\vy};
	\node[right=of y2](y2l){}; \draw(y2) to (y2l.center); \draw[->](y2l.center) |- (y.east);
	\draw[->](I@y) to (y2.north);
	\ltgnode[left=of yIIy]{y@I}{$\sslapp$}; \addPrefix{y@I}{\vx\;\vy}; \node[below=of y@I](yI){};
	\draw[->](yI@Iy) to (y@I.north);
	\ltgnode[left=of yI]{y1}{$\snlvar$}; \addPrefix{y1}{\vx\;\vy};
	\node[left=of y1](y1l){}; \draw(y1) to (y1l.center); \draw[->](y1l.center) |- (y.west);
	\draw[->](y@I) to (y1.north);
	\node[right=of yI](SI1){};
	\ltgnode[dotted,node distance=1.3cm,right=of yIIyff]{Sff}{$\snlvarsucc$}; \addPrefix{Sff}{\vx\;\vy};
	\ltgnode[below=of Sff]{f@f}{$\sslapp$}; \addPrefix{f@f}{\vx\;\vy}; \node[below=of f@f](ff){};
	\draw[->](Sff) to (f@f.north);
	\ltgnode[above=of yIIyff.north]{yIIy@ff}{$\sslapp$}; \addPrefix{yIIy@ff}{\vx\;\vy};
	\draw[->](yIIy@ff) to (yI@Iy.north); \draw[->](yIIy@ff) to (Sff.north);
	\ltgnode[above=of yIIy@ff.north]{y}{$\sslabs$}; \addPos{y}{\vy}; \addPrefix{y}{\vx};
	\draw[->](y) to (yIIy@ff.north);
	\ltgnode[above=of y.north]{x}{$\sslabs$}; \addPos{x}{\vx}; \addPrefix{x}{};
	\draw[->](x) to (y.north);
	\draw[<-](x.north) to +(0mm,4mm);
	\ltgnode[distance=2.5cm,dotted,below=of yIIy]{SSz}{$\snlvarsucc$}; \addPrefix{SSz}{\vx\;\vy};
	\draw[->](y@I) to (SSz);
	\draw[->](I@y) to (SSz);
	\ltgnode[dotted,below=of SSz]{Sz}{$\snlvarsucc$}; \addPrefix{Sz}{\vx};
	\draw[->](SSz) to (Sz.north);
	\ltgnode[below=of Sz]{z}{$\sslabs$}; \addPos{z}{\vz}; \addPrefix{z}{};
	\draw[->](Sz) to (z.north);
	\ltgnode[below=of z]{I}{$\snlvar$}; \addPrefix{I}{\vz};
	\node[left=of I](Iwest){}; \draw(I.west) to (Iwest.center); \draw[->] (Iwest.center) |- (z.west);
	\draw[->](z) to (I.north);
	\ltgnode[below=of ff]{f}{$\snlvar$}; \addPrefix{f}{\vx};
	\node[right=of f](feast){}; \draw(f.east) to (feast.center); \draw[->] (feast.center) |- (x.east);
	\draw[->,bend right](f@f) to (f);
	\draw[->,bend left](f@f) to (f);
}

\newpage
\section{Implementation Showcase}
  \label{app:impl:showcase}

\begin{multicols}{2}
To demonstrate the realisability of our method, and for further illustration,
we include the output of our implementation for the examples used in the paper.
The implementation is called
\href{http://hackage.haskell.org/package/maxsharing/}{maxsharing} and is
available on \href{http://hackage.haskell.org/package/maxsharing/}{Hackage}.
It is written in Haskell and therefore requires the
\href{http://www.haskell.org/platform/}{Haskell Platform} to be installed.

Then, maxsharing can be installed via
\href{http://www.haskell.org/haskellwiki/Cabal-Install}{cabal-install} using
the commands \texttt{cabal update} and and \texttt{cabal install maxsharing}
from the terminal. Invoke the executable \texttt{maxsharing} in your
cabal-directory with a file as an argument that contains a
\txtlambdaletreccal-term. Run \texttt{maxsharing -h} for help on run-time flags.
\end{multicols}

\subsection{Example~\ref{ex:cse}}
\tooltip{(5.525,0)}{6cm}{the original term as recognised by the parser}
\tooltip{(5.2,0.5)}{11cm}{
    The user can specify which \termgraph\  semantics shall be used in processing the term.
    All further output is with respect to that translation. The two options are:
    \begin{itemize} 
      \item $\sgraphsemC{\classltgs}$
        is indicated by:
        {\tt maximal prefix lengths while maintaining eager scope-closure}
      \item $\sgraphsemCmin{\classltgs}$
        is indicated by: 
        {\tt minimal prefix lengths}.
    \end{itemize}
    For this example both semantics yield the same output.
 }
\tooltip{(5.2,2.3)}{6cm}{term with scope delimiters; see Example~\ref{ex:counternat} for a more interesting case and more explanations}
\tooltip{(5.2,3.2)}{11cm}{
     Derivation according to the proof system in
     Fig.~\ref{fig:trans-lambdaletreccal-lhotgs-proof-system}
     that shows the translation as a stepwise process and includes all subterms
     with their abstraction prefixes.
     Note that in the notation of the prefixed terms the binding annotation of
     a variable is omitted if it is empty. Also, it only includes
     the names of the function variables, not their entire definition.

     Even though the correspondence between the derivations in the proof system
     in Fig.~\ref{fig:trans-lambdaletreccal-lhotgs-proof-system} and the
     translations $\sgraphsemC{\classltgs}$ and
     $\sgraphsemCmin{\classltgs}$ is not provided here, we think that the
     derivation can help as an illustration of the translation process and its
     result.
}
\tooltip{(5.2,4.9)}{11cm}{
     The implementation produces a graphical depiction for the term's
     \lambdatg\ in DFA form, as well as for the minimised form of the DFA;
     see below for pictures.
}
\tooltip{(5.2,5.7)}{11cm}{
     A textual representation of the minimised DFA's spanning tree, used for
     readback. It is displayed in first-order term rewriting notation, i.e.\ with a
     unary function symbols {\tt L}, {\tt S} for abstraction and scope
     delimiters, a binary function symbol {\tt A} for application and a nullary
     symbol {\tt 0} for abstraction variable occurrences. Furthermore there is a
     class of function symbols written as a vertical bar followed by an
     upper-case variable name {\tt |F}, {\tt |G}, etc.\ which signify a vertex
     with multiple incoming non-backlink edges, and therefore a vertex that will
     be the root of a shared subterm. There is also a class of corresponding
     nullary function symbols {\tt F}, {\tt G}, etc.\ which represent
     non-backlink, non-spanning-tree edges to these shared vertices.
}
\tooltip{(5.2,7.8)}{6cm}{
    Readback of the minimised DFA.
}
\showcase{01}{ex-cse}{minprefix}{(0.5,9.0)}{(2.5,9.0)}
\newpage

\subsection{Example~\ref{ex:fix}}
Also for the terms $\allter$ and $\bllter$ of Ex.~\ref{ex:fix} on page
\pageref{ex:fix} the translations are identical for $\sgraphsemC{\classlhotgs}$
and $\sgraphsemCmin{\classltgs}$.\\
\begin{multicols}{2}
\showcase{02}{ex-fix-L}{minprefix}{(4.5,0)}{(4.5,2)}
\columnbreak
\showcase{02}{ex-fix-P}{minprefix}{(4.7,0)}{(4.7,3)}
\end{multicols}

\newpage

\subsection{Example~\ref{ex:counternat}}
Again, for the terms $\allteri1$ and $\allteri2$ of Ex.~\ref{ex:counternat}
on page \pageref{ex:counternat} the translations are identical for
$\sgraphsemC{\classlhotgs}$ and $\sgraphsemCmin{\classltgs}$.\\\\
\begin{multicols}{2}
\tooltip{(5.0,0.9)}{11cm}{
    The term enriched by abdmals \cite{hend:oost:2003}. The adbmal
    (\reflectbox{$\lambda$}) is to be read as a scope delimiter that
    explicitly includes the name of the \lambda-variable whose scope it
    delimits. The adbmals are placed in accordance to the translation used.
}
\tooltip{(5.0,1.7)}{11cm}{
     A nameless scoped representation, where the names of
     abstraction variables are omitted for lambdas as well as for abstraction
     variable occurrences, shown as a $\svar$-symbol. The scoping
     is expressed by scope-delimiters in the shape of an $\snlvarsucc$-symbol.
     Note that these
     terms can be obtained by a simple syntactical transformation from the adbmal-terms.
}
\showcase{03}{ex-counternat-L1}{minprefix}{(0,7.5)}{(3,7.5)}

\columnbreak

\mbox{}\vspace{6cm}
\showcase{03}{ex-counternat-L2}{minprefix}{(4.4,0)}{(4.5,4.8)}
\end{multicols}
\newpage

\subsection{Figure~\ref{fig:rigid_let_rule}}
Also for the term
$\labs{\aavar}{\labs{\bbvar}{\letin{\arecvar=\aavar}{\lapp{\lapp{\lapp{\aavar}{\aavar}}{(\lapp{\arecvar}{\aavar})}}{\bbvar}}}}$ 
from Fig.~\ref{fig:rigid_let_rule} on page
\pageref{fig:rigid_let_rule} the translations are identical for
$\sgraphsemC{\classlhotgs}$ and $\sgraphsemCmin{\classltgs}$.\\
\showcase{04}{fig-rigid_let_rule}{minprefix}{(9,0)}{(9,5)}
\newpage

 \subsection{Example~\ref{ex:translations}}
 For the terms $\allteri1$, $\allteri2$, and $\allteri3$ from
 Ex.~\ref{ex:translations} on page \pageref{ex:translations} the
 translations differ for $\sgraphsemC{\classlhotgs}$ and
 $\sgraphsemCmin{\classltgs}$. Thus on the follwing pages we provide the output
 for both translations for each of the terms.\\
\showcase{05}{ex-translations-L1}{minprefix}{(3.6,7.7)}{(8.7,6.3)}
\newpage
\showcase{05}{ex-translations-L1}{maxeagpre}{(4,7.2)}{(9,6.3)}
\newpage
\showcase{05}{ex-translations-L2}{minprefix}{(3,8)}{(9,6.2)}
\newpage
\showcase{05}{ex-translations-L2}{maxeagpre}{(4,7)}{(9,6)}
\newpage
\showcase{05}{ex-translations-L3}{minprefix}{(5,8)}{(10,7.25)}
\newpage
\showcase{05}{ex-translations-L3}{maxeagpre}{(5,8)}{(9.5,7)}
\newpage
\noindent
For the term $\allter'$ from the same example (Ex.~\ref{ex:translations}, page
\pageref{ex:translations}) the translations for $\sgraphsemC{\classlhotgs}$ and
$\sgraphsemCmin{\classltgs}$ are identical again.\\
\showcase{05}{ex-translations-Lp}{minprefix}{(3,7.9)}{(9,7)}
\newpage

\newpage

\subsection{Term $\allteri2$ from Example~\ref{ex:translations} (page \pageref{ex:translations})}
This term has different translations for $\sgraphsemC{\classltgs}$ and
$\sgraphsemCmin{\classltgs}$. Thus on the following two pages we provide the output
for both translations. See also, and compare with, the stepwise translations in Appendix~\ref{app:translation}.\\
\showcase{05}{ex-translations-L2}{maxeagpre}{(4,6.8)}{(9,5.8)}
\newpage
\showcase{05}{ex-translations-L2}{minprefix}{(3,7.5)}{(9,6.2)}

\end{document}